\DeclareFontFamily{U}{eus}{\skewchar\font'60}
\DeclareFontShape{U}{eus}{m}{n}{%
	<-6>eusm5%
	<6-8>eusm7%
	<8->eusm10%
}{}
\DeclareMathAlphabet\EuScript{U}{eus}{m}{n}
\newcommand{\C}{\mathbb{C}}
\newcommand{\R}{\mathbb{R}}
\newcommand{\Z}{\mathbb{Z}}
\newcommand{\pa}{\partial}
\newcommand{\pacr}{\bar{\partial}_{b}}
\newcommand{\pab}{\bar{\partial}}
\newcommand{\CE}{Chevalley-Eilenberg }
\newcommand{\MC}{Maurer-Cartan }
\newcommand\mycdots{\makebox[1em][c]{$\cdot\hss\cdot\hss\cdot$}}
\newcommand\sbullet[1][.5]{\mathbin{\vcenter{\hbox{\scalebox{#1}{$\bullet$}}}}}
\DeclareMathOperator{\Tr}{Tr}
\DeclareMathOperator{\Hom}{Hom}
\numberwithin{equation}{section}
\newtheorem{defn}{Definition}[section]
\newtheorem{theorem}{Theorem}[section]
\newtheorem{cor}{Corollary}[section]
\newtheorem{lem}{Lemma}[section]
\newtheorem{prop}{Proposition}[section]
\newtheorem{conj}{Conjecture}[section]
\newtheorem{remark}{Remark}[section]
\def\curved{\tikz[baseline=.1ex]{
		\draw[ ->] (0,0.32) arc (30:330:0.5);}
}
\tikzset{cross/.style={cross out, draw=black, fill=none, minimum size=2*(#1-\pgflinewidth), inner sep=0pt, outer sep=0pt}, cross/.default={3pt}}
\tikzset{crosss/.style={cross out, draw=black, fill=none, minimum size=2*(#1-\pgflinewidth), inner sep=0pt, outer sep=0pt}, crosss/.default={2pt}}
\title{Twisted Holography and Celestial Holography from Boundary Chiral Algebra}
\author{Keyou Zeng}
\date{\vspace{-3ex}}
\newcommand{\Addresses}{{
		\bigskip
		\footnotesize
		
		Keyou Zeng, \textsc{Perimeter Institute for Theoretical Physics, Waterloo, Canada }\par\nopagebreak
		\textit{E-mail address}: \texttt{kzeng@perimeterinstitute.ca}
		
}}
\date{\vspace{-3ex}}
\begin{document}
	\maketitle
	\begin{abstract}
We study the Kaluza-Klein reduction of various $6d$ holomorphic theories. The KK reduction is analyzed in the BV formalism, resulting in theories that come from the holomorphic topological twist of $3d$ $\mathcal{N} = 2$ supersymmetric field theories. Effective interactions of the KK theories at the classical level can be obtained at all orders using homotopy transfer theorem. We also analyze a deformation of the theories that comes from deforming the spacetime geometry to $SL_2(\C)$ due to the brane back-reaction. We study the boundary chiral algebras for the various KK theories. Using Koszul duality, we argue that by properly choosing a boundary condition, the boundary chiral algebra coincides with the universal defect chiral algebra of the original theory. This perspective provides a unified framework for accessing the chiral algebras that arise from both twisted holography and celestial holography programs.
		
	\end{abstract}
	\tableofcontents
	\section{Introduction}
	The AdS/CFT correspondence \cite{Maldacena:1997re,Witten:1998qj} is perhaps one of the greatest achievements of string theory. It has a deep influence on many branches of theoretic physics that goes far beyond string theory. A bridge between the AdS/CFT correspondence and mathematics had been missing until the recent formulation of twisted supergravity. In \cite{Costello:2016mgj}, Costello and Li proposed a twisted form of AdS/CFT correspondence that relates a twist of a supersymmetric gauge theory \cite{Witten:1988ze} with a twist of supergravity. This program throws light upon the mathematical structure hidden behind (part of) the AdS/CFT correspondence. The notion of Koszul duality is shown to play a crucial part in this story. In \cite{Costello:2017fbo}, Costello formulated the following twisted form of holography principle (or conjecture):
	\begin{conj}
		Consider a stack of branes in twisted string theory or M-theory. We can consider the following two algebras
		\begin{itemize}
			\item The algebras $\mathcal{A}_N$ of local operators of the twisted supersymmetric gauge theory living on a stack of $N$ branes, after sending  $N\to \infty$.
			\item The algebra $\mathcal{B}$ of local operators of the twisted supergravity restricted alone the location of the branes.
		\end{itemize}
	Then, these two algebras are Koszul dual:
	\begin{equation}
		\lim_{N \to \infty} \mathcal{A}_N = \mathcal{B}^!.
	\end{equation}
	\end{conj}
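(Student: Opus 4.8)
This is a conjecture rather than a theorem, so what follows is a strategy for establishing it in the holomorphically twisted setting. The plan is to exploit the fact that Koszul duality is the algebraic shadow of a \emph{universal coupling}. Concretely, the data of a consistent coupling of the worldvolume theory on a stack of $N$ branes to a bulk supergravity theory with algebra of local operators $\mc B$ is the same as a morphism of chiral algebras $\mc B \to \mc A_N$, compatible as $N$ varies, hence a morphism $\mc B \to \mc A$ into the large-$N$ limit $\mc A$. The content of the conjecture is that this particular morphism is the \emph{universal} one, i.e.\ that it realizes $\mc A$ as $\mc B^{!}$. So the two things I would have to do are: present $\mc B$ explicitly enough to write down the bar / \CE complex that computes $\mc B^{!}$, and then match it with an equally explicit description of $\mc A$.

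First I would place the brane stack inside one of the $6d$ holomorphic theories discussed above, perform the Kaluza-Klein reduction, and thereby describe $\mc A_N$ as the algebra of local operators of the resulting $3d$ holomorphic-topological theory --- the holomorphic-topological twist of a $3d$ $\mc N = 2$ theory --- on a stack of $N$ branes; concretely this is an algebra of $\mf{gl}_N$-invariant currents whose OPEs are pinned down to all orders by the homotopy transfer theorem, and whose $N \to \infty$ limit is the large-$N$ chiral algebra $\mc A$ (of $W_{1+\infty}$/Yangian type). In parallel I would compute $\mc B$ from the bulk: the \BV quantized Kodaira-Spencer / BCOV-type gravitational theory has an algebra of local operators, which one restricts to the brane locus. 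The essential geometric input is the brane back-reaction, which deforms the transverse space to $SL_2(\C)$, and the effective interactions of the reduced gravitational theory on this geometry are again obtained to all orders by homotopy transfer; this presents $\mc B$ as an explicit, generally curved, $L_\infty$ algebra --- exactly the input needed for the Koszul-dual complex.

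The comparison step is where the boundary condition enters. After the reduction the brane stack becomes a \emph{boundary condition} for the $3d$ holomorphic-topological theory, and $\mc A$ is computed as the associated boundary chiral algebra by \BV reduction at the boundary. The claim I would want to establish is that, for the boundary condition dictated by the brane, this boundary chiral algebra is the \emph{universal defect chiral algebra} of the original $6d$ theory, which by the very definition of that object is Koszul dual to $\mc B$. Granting this, one only has to check that generators and OPEs of the boundary chiral algebra agree with those of the Koszul dual computed above, which in concrete examples is a finite calculation.

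The hard parts are two. The first is anomaly cancellation for the brane-bulk coupling: at one loop the coupling can fail to satisfy the \MC equation --- equivalently, a gauge anomaly can appear on the brane worldvolume --- and the argument only closes if this obstruction is cancelled against the one-loop term of the reduced gravitational theory, which again has to be extracted by homotopy transfer. The second, and to my mind the more serious one, is control of the $N \to \infty$ limit: one must show that the morphism $\mc B \to \mc A$ is not merely well defined but an isomorphism onto $\mc B^{!}$, i.e.\ that every bulk-permitted deformation is actually realized on the infinite stack of branes. Surjectivity is the delicate point, and it is precisely there that the explicit analysis of the boundary condition --- rather than any formal manipulation of Koszul duality --- is indispensable.
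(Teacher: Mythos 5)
The statement you are asked to prove is a conjecture (Costello's twisted holography principle), and the paper offers no proof of it: it is quoted from \cite{Costello:2017fbo} as the organizing principle, and the body of the paper supplies \emph{evidence} in one model by matching OPEs on the two sides. Your proposal is therefore rightly a strategy rather than a proof, and in outline it is the same strategy the paper pursues: convert the bulk--defect system into a bulk--boundary system by removing the brane locus and reducing on the transverse sphere, use homotopy transfer to control the all-order effective interactions of the reduced theory, incorporate the back-reaction as the deformation of the geometry to $SL_2(\C)$, identify the boundary chiral algebra for the transversal (Dirichlet) boundary condition with the universal defect chiral algebra via the two pictures of Koszul duality, and then compare generators and OPEs. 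You also correctly isolate the two genuine obstructions --- anomaly cancellation (which in this model forces the inclusion of Kodaira--Spencer gravity coupled to holomorphic Chern--Simons) and control of the $N\to\infty$ limit --- both of which the paper acknowledges but does not resolve.

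One substantive point in your write-up is inverted, and it matters for how the comparison is actually carried out. You propose to obtain $\mathcal{A}_N$ by KK-reducing the $6d$ theory containing the brane stack. In the paper the Kaluza--Klein reduction is applied only to the \emph{bulk} side --- the supergravity (Kodaira--Spencer) theory together with the open-string fields of the space-filling branes --- and what it produces is the boundary chiral algebra that is supposed to realize $\mathcal{B}^{!}$. The algebra $\lim_{N\to\infty}\mathcal{A}_N$ is computed entirely independently on the brane worldvolume: it is the large-$N$ BRST cohomology of the gauged $bc$/$\beta\gamma$ system valued in $\mathfrak{gl}_N$, with OPEs computed by Wick contraction. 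The content of the conjecture, and the check performed in Section \ref{sec:twist_hol}, is precisely that these two \emph{independently defined} algebras agree; if you build $\mathcal{A}_N$ out of the same KK reduction that presents $\mathcal{B}^{!}$, the comparison becomes circular and you lose the nontrivial combinatorial match between matrix contractions on one side and the $A_\infty$ structure constants on $H_b^{0,\sbullet}(S^3)$ on the other. With that correction your strategy coincides with the paper's program.
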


In this direction, there has been a surge of recent works exploring different aspects of twisted supergravity and twisted holography. Various twisted holography models are proposed \cite{Ishtiaque:2018str,Costello:2018zrm,Costello:2020jbh}. More examples of twisted supergravity are found \cite{Eager:2021ufo,Raghavendran:2021qbh}. Operators that create D-branes with non-trivial geometrical shapes called giant graviton in twisted holography are studied in \cite{Budzik:2021fyh}. In this paper, we will also study the model of \cite{Costello:2018zrm} in detail.

The notion of Koszul duality also enters the celestial holography program \cite{Strominger:2017zoo}. In a recent work \cite{Costello:2022wso}, Costello and Paquette explained various relationships between the scattering amplitudes of a class of $4d$ theories and the corresponding celestial chiral algebras using twistor correspondence. For $4d$ theories that come from local holomorphic field theories on twistor space, the corresponding chiral algebra is Koszul dual to the algebra of observables of the twistor theory restricted to $\C\mathbb{P}^1$.

In fact, the concept of Koszul duality is plentiful in the structure of quantum field theory, especially in the study of defects and boundaries \cite{Paquette:2021cij}. In this introduction, we briefly review two main sources of Koszul duality in physics. The bridge between these two pictures of Koszul duality will be the starting point of our analysis.

	\subsection{Two pictures of Koszul duality}
	
One important aspect of Koszul duality in quantum field theory arises from defects. Given a field theory $\mathcal{T}$ on some manifold $M$, one can consider coupling some other system along a submanifold $S \subset M$. Then, Koszul duality says that the algebra of the universal defect that we can couple to the theory $\mathcal{T}$ is Koszul dual to the algebra of the theory $\mathcal{T}$ along $S$. Here, the universal defect means that it is the most general possible consistent coupling to our given theory. We refer to \cite{Costello:2020jbh,Paquette:2021cij} for more examples of Koszul duality from universal defect. The mathematical incarnation of this property of Koszul duality for associative algebra (see also \cite{Gui:2022pnx} for chiral algebra) can be expressed as
\begin{equation}
	MC(A\otimes B) = \mathrm{Hom}(A^!,B),\text{ for any }B,
\end{equation}
where $MC(A\otimes B)$ is the set of \MC elements that control the deformation of the coupled systems.

	Another perspective of Koszul duality comes from considering boundary. In a bulk-boundary system, different boundary conditions lead to different boundary algebras. We can consider boundary conditions \footnote{We also need to require the boundary condition to be large enough. A boundary condition $\mathcal{B}$ is called a large, if the category $\mathcal{C}$ of the boundary condition is equivalent to the (derived) category of modules of the algebra $A_\pa = \mathrm{End}_{\mathcal{C}}(\mathcal{B})$ } that are transversal to each other. Here "transversal" means that if we place the two boundary conditions on two sides of an interval respectively, then the theory should be trivial in this configuration. We expect the boundary algebras corresponding to the two transversal boundary conditions to be Koszul dual, at least when the theory is topological along the perpendicular direction.  We refer to \cite{Paquette:2021cij,Zeng:2021zef} for more examples of Koszul duality from boundary algebra. For associative algebra, this perspective of Koszul duality can be mathematically expressed as
	\begin{equation}
		A^! = \mathrm{Ext}_A(\C,\C).
	\end{equation}
	
	Then one can ask the following: what happens when we have both realizations of Koszul duality in a quantum field theory? This can be achieved in a general setup by transforming a bulk-defect system into a bulk-boundary system.
	
	\begin{center}
		\begin{tikzpicture}[scale  = 0.6]
			
			\def\r{3}
			
			\draw (0,0) node[circle, fill, inner sep=1] (orig) {};
			\draw[dashed] (orig) ellipse (3*\r/4 and \r/4);
			\draw[->] (orig) -- ++(\r/3, -\r/2) node[below] (x1) {$r$};
			\draw (orig) -- (0, \r) node[above] (x3) {$S$};
			\draw (orig) -- (0,- \r/3) ;
			
			\draw (0, 2*\r/3) ellipse (\r/8 and \r/6);
			\draw [dashed] (0, 2*\r/3) ellipse (\r/8 and \r/14);
			
			\draw (0, \r/3) ellipse (\r/12 and \r/9);
			\draw [dashed] (0, \r/3) ellipse (\r/12 and \r/21);
			
			\draw (0, \r/2) ellipse (\r/3 and 4*\r/9);
			\draw [dashed] (0, \r/2) ellipse (\r/3 and \r/6);
			
			\node at (14*\r/10, \r/3) {$\xRightarrow[\text{Reduction}]{\text{Kaluza-Klein}}$};
			
			\draw (2.2*\r,\r)  node[above,right] {$S$} -- (2.2*\r, -\r/3);
			
			\foreach \x in {1,2,3,4,5,6,7,8}
			\draw (2.2*\r,\r - \x*\r/7) -- (2.2*\r - \r/20,9*\r/10 - \x*\r/7) ;
			
			\draw[dashed] (2.2*\r,-\r/5)  -- (3.2*\r,-\r/5) node[above] {$r$};
			\draw (2.2*\r,\r/2) arc (-90:90:\r/6);
			\draw (2.2*\r,\r/8) arc (-90:90:\r/8);
			\draw (2.2*\r,-\r/10) arc (-90:90:\r/2);
		\end{tikzpicture}
	\end{center}

	Suppose we are considering a $d$ dimensional defect $\R^{d}\times \{0\}$ in $\R^d\times \R^{n - d}$. For a more general situation, we can consider a small tubular neighborhood $S\times U \approx \R^d\times \R^{n - d}$ of the defect $S$. We are interested in the algebra of operators restricted along $S$. Instead of studying the algebra directly, we first remove the defect locus and perform dimensional reduction on the unit sphere of $\R^{n - d}$ normal to the defect. Namely, we consider the projection
	\begin{equation}
		\pi^{S^{n - d - 1}}: \R^d \times (\R^{n - d}\backslash \{0\}) \to  \R^d\times \R_{ > 0},
	\end{equation}
	where we send $(\mathbf{x},\mathbf{y})$ to $(\mathbf{x},r = |\mathbf{y}|)$. 
	
	After this dimensional reduction, we get a theory on $\R^d\times \R_{ > 0}$. Since we have removed the defect locus, fields of this theory correspond to fields of the original theory with arbitrary poles at $r = 0$. Now, we would like to add back the defect locus. This means that we need to extend the theory to $\R^d\times \R_{ \geq 0} $. To do this, we need to impose a boundary condition at $\R^d\times \{r = 0\} $. There is a natural candidate for the boundary condition, given by requiring that fields of the original theory have no pole at $r\to 0$. 
	
	Though it may not be true that this is always a valid boundary condition, in the examples of our study, we find that this requirement always leads to a boundary condition. Moreover, we expect that the boundary algebra after imposing this boundary condition is the same as the algebra of the original theory restricted along the defect locus.

	Then both pictures of Koszul duality enter this construction. On the one hand, we can consider the universal defect algebra on $S$. On the other hand, we consider the boundary algebra with the boundary condition transversal to the one we mentioned. Both are Koszul dual to the same algebra, so they are the same. 
	
		\begin{center}
		\begin{tikzpicture}
			\node (1) at (0,0) {universal defect algebra along $S$};
			\node (2) at (8,0) {algebra of operators restricted along $S$};
			\node (3) at (0,-2) {boundary algebra at $r = \infty$ };
			\node (4) at (8,-2) {boundary algebra of the KK theory at $r = 0$};
			\draw [double distance=2pt][shorten >=4pt,shorten <=4pt] (2) -- (4);
			\draw [<->][shorten >=5pt,shorten <=5pt] (1) --  (2) node [midway,above] {Koszul};
			\draw [<->][shorten >=5pt,shorten <=5pt] (3) --  (4) node [midway,above] {Koszul};
			\draw [double distance=2pt][shorten >=4pt,shorten <=4pt] (1) -- (3);
		\end{tikzpicture}
	\end{center}
	
	As is often the case, the transversal boundary condition can also be realized as the boundary condition that requires fields of the original theory to have no pole at $r\to \infty$. Then one can see that holography principles manifest in the above reasoning. The boundary system at $r\to \infty$ is "holographic dual" to a different system, given by the universal defect. Koszul duality here plays as a black box in the intermediate.
	
	String theory offers abundant examples of bulk/defect systems. Branes are defect objects in string theory. Moreover, from the philosophy of open-closed coupling \cite{Costello:2015xsa}, we expect that the stack of $N$ branes is a universal defect in the large $N$ limit.
	
	Therefore, we can add one more piece of ingredient to the twisted holography program
	\begin{itemize}
		\item The boundary algebra $\mathcal{A}_\pa$ of the theory obtained by performing KK compactification of the twisted supergravity along the unit sphere of the normal direction to the defect. The boundary condition is chosen to be transversal to the boundary condition which requires fields to have no pole at the defect locus. 
	\end{itemize}
	Then the twisted holography conjecture predicts that
	\begin{equation}
		\lim_{N \to \infty}\mathcal{A}_{N} = \mathcal{A}_\pa.
	\end{equation}

In this paper, we will examine this form of twisted holography using the model proposed in \cite{Costello:2018zrm}. This model can be viewed as a twist of the canonical example of $AdS_5/ CFT_4$ duality introduced in \cite{Maldacena:1997re}.

	Following \cite{Costello:2022wso}, this technique also provides an efficient method to compute the chiral algebra originated from the scattering amplitude and celestial holography program. The boundary chiral algebra at $r\to \infty$ in twistor space produce the celestial chiral algebra living on the celestial sphere. It is a remarkable fact that some of the structure constants of the OPE also appear in the twisted holography model we studied, where they also have a combinatorial explanation in terms of matrix contraction.
	
	\subsection{Cohomological Kaluza Klein reduction}
	In this paper, we pin down the above general ideas to $2$d holomorphic defect in $6$d theories. We consider dimensional reduction of holomorphic theory on $\C^3$ along the projection
	\begin{center}
		\begin{tikzpicture}[scale=0.7]
			\node (0) at (0,0) {$S^3$};
			\node (1) at (2,0) {$\C^3\backslash\C $};
			\node (2) at (2,-2) {$\C\times \R_{ > 0}$};
			\draw [->][shorten >=1pt,shorten <=1pt] (1) -- (2);
			\draw [->][shorten >=1pt,shorten <=1pt] (0) -- (1);
		\end{tikzpicture}.
	\end{center}
	
	A full Kaluza Klein reduction is usually not practical for supersymmetric gauge theory and gravity. As the KK modes go higher, it soon becomes difficult to keep track of their interactions. Twisted theories improve this situation with the help of techniques from homological algebras.
	
	Twisted theory simplify the original physical theory through a cancellation between a part of the bosonic and fermionic fields. The field content is thus simplified. We find a similar phenomenon when we perform KK reduction to the twisted theory. A large number of KK modes are cohomologically trivial and can be integrated out. Then we are left with a KK theory which is more manageable.
	
	However, by integrating out the cohomologically trivial KK modes, classical effective interactions are generated at tree-level through the exchanges of these modes. Fortunately, there is a systematic way to analyze the effective interactions. Here we work with the BV formalism and use the language of homotopy algebra. The structure of the classical field theory is compactly organized into the structure of $L_\infty$ algebra. Then integrating out fields at the tree level can be translated to the mathematical notion of homotopy transfer, which provides a systematic way to compute the higher brackets that build the effective higher order interaction.
	
	We find the KK theory to be the $3d$ holomorphic topological theory introduced in \cite{Aganagic:2017tvx,Costello:2020ndc} with an infinite number of effective interactions. Without the higher order interactions, these theories arise from the holomorphic topological twist of $3d$ $\mathcal{N} =  2$ supersymmetric field theories. The boundary chiral algebra of the $3d$ theory is studied in \cite{Costello:2020ndc}. Applying results from \textit{loc. cit.} to our setup, we can produce various defect chiral algebra originated from twisted and celestial holography programs.
	\begin{center}
		\begin{tikzpicture}
			\node (1) at (0,0) {universal defect chiral algebra};
			\node (2)[text width=5cm] at (8,0) {chiral algebra along $\C$ in $\C^3$  (or $SL_2(\C)$)};
			\node (3)[text width=5cm] at (0,-2) {boundary chiral algebra with Dirichlet b.c. };
			\node (4)[text width=5cm] at (8,-2) {boundary algebra of $3d$ holomorphic/topological theory with Neumann b.c};
			\draw [double distance=2pt][shorten >=4pt,shorten <=4pt] (2) -- (4);
			\draw [<->][shorten >=5pt,shorten <=5pt] (1) --  (2) node [midway,above] {Koszul};
			\draw [<->][shorten >=5pt,shorten <=5pt] (3) --  (4) node [midway,above] {Koszul};
			\draw [double distance=2pt][shorten >=4pt,shorten <=4pt] (1) -- (3);
		\end{tikzpicture}
	\end{center}
	
	For holomorphic Chern-Simons theory and holomorphic BF theory, the structures of the effective interactions for the KK theories can be compactly organized into the $A_\infty$ structure on the tangential Cauchy Riemann cohomology $H_{b}^{0,\sbullet}(S^3)$ on $S^3$
	\begin{equation}
		\{H_{b}^{0,\sbullet}(S^3) ,m_2 ,m_3 \dots \}.
	\end{equation} 
	The structure constants of this $A_\infty$ algebra are also important building blocks for the OPE coefficients of the boundary chiral algebra. For holomorphic BF theory that gives the celestial chiral algebra of self-dual Yang Mills, the tree-level boundary OPE takes the following schematic form
	\begin{equation}
\begin{aligned}
		B(z)B(0) \sim \frac{1}{z} \left( B(0) + (m_3)B\tilde{B}(0) + (m_4)B\tilde{B}\tilde{B}(0) \dots \right),\\
		\tilde{B}(z)\tilde{B}(0) \sim \frac{1}{z} \left( \tilde{B}(0) + (m_3)\tilde{B}\tilde{B}(0) + (m_4)\tilde{B}\tilde{B}\tilde{B}(0) \dots \right).
\end{aligned}
	\end{equation}
	 We provide explicit formula that compute all the higher product $m_n$ in Section \ref{sec:all_mn} and a explicit formula for the above OPE in section \ref{sec:cele_HBF}
	
	We also find an $L_\infty$ deformation of the Lie algebra $T^*[1]\mathrm{Ham}(\C^2)$
		\begin{equation}
		\{T^*[1]\mathrm{Ham}(\C^2) ,\{-,-\} ,\{...\}_3 \dots \} ,
	\end{equation} 
	where $\mathrm{Ham}(\C^2)$ is the Lie algebra of Hamiltonian vector fields on $\C^2$. The structure constants of this $L_\infty$ algebra encode the effective interactions of the KK theory of holomorphic Poisson BF theory. 
	
	KK reduction of Kodaira-Spencer gravity can be analyzed in a similar fashion.
		
	In twisted holography, the spacetime geometry will be deformed due to the presence of (topological) D brane. Theories on the new geometry will also be deformed. This can be understood as a deformation of the $A_\infty$ algebra structure on $\mathcal{O}_{\C}\otimes H_{b}^{0,\sbullet}(S^3) $. These deformed higher product structure is an important building block for the holography chiral algebra in twisted holography.
	
	\subsection{Twisted holography in a non-planar sector}
One novel feature of our KK theory is that it provides a systematic way to compute a certain class of OPE correspond to non-planar contribution. We first explain what planar means in the twisted holography setup. 

The chiral algebra on B branes consists of a $bc$ system valued in $\mathfrak{gl}_N$, a pair of symplectic bosons $(Z_1,Z_2)$ valued in the adjoint representation of $\mathfrak{gl}_N$ and a pair of symplectic bosons $(I,J)$ valued in the (anti)fundamental representation of $\mathfrak{gl}_N$ (tensored with the fundamental representation of $\mathfrak{gl}_{K|K}$, where $K|K$ is the number of space filling (anti)branes). In the large $N$ limit, the BRST cohomology is generated by single trace "closed string" operators and "open string" operators. For a detailed discussion, see \cite{Costello:2018zrm}. For illustrative propose, we consider "open string" operators. They are matrix products of adjoint fields $(Z_1,Z_2)$ sandwiched between an anti-fundamental and a fundamental field $I$-$J$. We denote $IZ^nJ$ an open string operator with $n$ adjoint fields. Operator product expansions of these operators are computed by wick contraction. The operator $IZ^nJ$ has two-point function of order $N^{n+1}$. If we normalize them by $\frac{IZ^nJ}{\sqrt{N^{n}}}$, the OPE takes the following schematic form
\begin{equation}
	\frac{IZ^{n_1}J}{\sqrt{N^{n_1}}}\frac{ IZ^{n_2}J}{\sqrt{N^{n_2}}} \sim N\delta_{n_1,n_2} + \sum_{k\geq 0}\frac{IZ^{n_1+n_2 - 2k}J}{\sqrt{N^{n_1 + n_2 - 2k+1}}} + O(\frac{1}{N}).
\end{equation} 
In the planar limit, we drop all sub-leading terms. There are different types of non-planar contributions to the sub-leading terms. Non-planar contraction of $Z$ fields can lead to OPE of the form $\frac{1}{N^{l}}\frac{IZ^{n_1+n_2 - 2k}J}{\sqrt{N^{n_1 + n_2 - 2k}}}$, $l\geq 2$. We do not analyze these OPE in this paper. There are other types of non-planar contributions, which produce more than one "open string" operators and take the following form 
\begin{equation}
	\frac{1}{N}\sum_{k\geq 1} \sum_{n_3 + n_4 = n_1 + n_2 -2k}\frac{IZ^{n_3}J}{\sqrt{N^{n_3}}}\frac{IZ^{n_4}J}{\sqrt{N^{n_4}}} + \frac{1}{N^2}\sum_{n_3 + n_4 + n_5 = n_1 + n_2 -2k - 2}\frac{IZ^{n_3}J}{\sqrt{N^{n_3}}}\frac{IZ^{n_4}J}{\sqrt{N^{n_4}}} \frac{IZ^{n_5}J}{\sqrt{N^{n_5}}} + \dots.
\end{equation}
We will see in this paper that OPE of the above form corresponds to the deformed $A_\infty$ structure on $\mathcal{O}_{\C}\otimes H_{b}^{0,\sbullet}(S^3)$ we mentioned in the last section. To compactly organize the non-planar sector we can effectively compute, we consider a different normalization of the operators \footnote{We thank K. Costello for a discussion of this point.}. We normalize the "open string" operators by $\frac{IZ^{n_1}J}{\sqrt{N^{n_1 +2}}}$. The OPE takes the following schematic form
\begin{equation}
	\label{sub_non_planar}
\begin{aligned}
		\frac{IZ^{n_1}J}{\sqrt{N^{n_1+2}}}\frac{ IZ^{n_2}J}{\sqrt{N^{n_2+2}}} \sim &\frac{1}{N}\left( \delta_{n_1,n_2} + \sum_{k\geq 0}\frac{IZ^{n_1+n_2 - 2k}J}{\sqrt{N^{n_1 + n_2 - 2k +2}}}+ \sum_{n_3 + n_4 = n_1 + n_2 -2k-2}\frac{IZ^{n_3}J}{\sqrt{N^{n_3+2}}}\frac{IZ^{n_4}J}{\sqrt{N^{n_4+2}}}  + \dots \right)\\
		& + O(\frac{1}{N^2}).
\end{aligned} 
\end{equation}
Then we find that by dropping all sub-leading terms in the above normalization, the OPE is isomorphic to the tree-level boundary OPE of the KK theory in the deformed geometry. The above limit of the chiral algebra contains much more information than the planar limit, but still can be effectively computed and checked. We present some explicit examples in Section \ref{sec:OPE_HCS_4}.

\subsection{Future direction}

\paragraph{KK reduction on supermanifolds}
Our results on the KK reduction can be generalized to supermanifolds. For example, the following KK reduction should be straightforward based on our results
\begin{equation}
	\C^{3|N}\backslash \C^{1|0} \to \C^{1|N}\times \R_{>0}.
\end{equation} 
Holomorphic theories on $\C^{3|N}$ for different value of $N$ and their variants are important in various twisted and celestial holography setups. 
\begin{enumerate}
	\item Holomorphic Chern-Simons on super twistorspace $\mathbb{P}\mathbb{T}^{3|4}$ gives rise to the self-dual limit of $\mathcal{N} = 4$ Yang-Mills on $\R^4$ \cite{Witten:2003nn}.
	\item Holomorphic BF on super twistorspace $\mathbb{P}\mathbb{T}^{3|N}$ gives rise to the self-dual limit of $\mathcal{N} = N$ Yang-Mills on $\R^4$ \cite{Boels:2006ir} for $N = 1,2$.
	\item   Kodaria-Spencer theory on a deformation of $\C^{3|4}\backslash \C^{1|0}$ is holographic dual to a orbifold SCFT on $\mathrm{Sym}^N(T^4)$ \cite{Costello:2020jbh}.

We provide more discussion of KK reduction on superspace in section \ref{sec:other}.

\end{enumerate}
\paragraph{4d holomorphic theories}
As we have sketched, the method of KK reduction can be useful to analyze bulk/defect systems in many different scenarios. A variant of our setup is the KK reduction of $4d$ holomorphic theory along the unit sphere $S^3$. Based on our results, the KK reduction will produce a topological quantum mechanics in the radial direction $r$. This topological quantum mechanical system also has interactions built out of the $A_\infty$ structure on the CR cohomology $H_{b}^{0,\sbullet}(S^3)$. By considering the algebraic structure of the bulk algebra and its action on the boundary algebra at $r = 0$. We expect to reproduce part of the algebraic structure of the bulk algebra of the original $4d$ theory studied in \cite{BGKWWY}. We provide more details of this construction in Section \ref{sec:4dKK}.

\paragraph{Bootstrap method}

In the physical AdS/CFT setup, holographic correlation functions are notoriously hard to compute, and only until recently new methods were invented based on bootstrap ideas (see e.g. \cite{Rastelli:2016nze,Bissi:2022mrs}). It will be interesting to see if these new techniques from bootstrap can be applied to the twisted holography setup. 

In the same spirit, \cite{Costello:2022upu} (see also \cite{{Bittleston:2022jeq}}) showed that a one-loop correction to the celestial OPE is strictly constrained by the associativity of chiral algebra. This illustrates the potential power of this method.

In a different direction \cite{Budzik:2022mpd}, the authors proposed a method to solve a large class of Feynman integrals for twisted theories in various dimensions. The Feynman integrals are constrained by a set of identities that correspond to the associativity of the local operator algebras. So one can compute Feynman integral via bootstrap techniques. It will be interesting to extend their method to $3d$ holomorphic topological theory with boundary. This will be an important ingredient to the loop corrections to the chiral algebras we considered.

	\subsection{Organization of this paper}
	
	This paper is organized as follows. In Section \ref{sec:toy}, we introduce a toy model in $2d$ that illustrates how to perform the KK reduction for holomorphic theory. We show in this example that by properly choosing a boundary condition, the boundary algebra indeed reproduces the algebra of the original theory.
	
	In Section \ref{sec:HCS}, we perform the KK reduction of holomorphic Chern-Simons theory on $\C^3\backslash\C$. We show that techniques from homotopy algebra can help us in understanding the KK theory. An important result in this section is that the classical interaction of the KK theory is encoded in the mathematical structures called the $A_\infty$ structure on the tangential Cauchy-Riemann cohomology $H_b^{0,\sbullet}(S^3)$. Section \ref{sec:A_inf} is devoted to compute the whole $A_\infty$ structure. 
	
	In Section \ref{sec:HBF},\ref{sec:PBF},\ref{sec:KS}, we provide more examples of KK reduction in $6d$. We analyze holomorphic BF theory in Section \ref{sec:HBF}, Poisson BF theory in Section \ref{sec:PBF} and Kodaira-Spencer gravity in Section \ref{sec:KS}. 
	
	In Section \ref{sec:def}, we analyze the KK theory when we deform the $6d$ geometry to $SL_2(\C)$. This deformation can be understood as a B-brane backreaction. Results in this section are applied in Section \ref{sec:twist_hol} in the twisted holography setup. We compute the boundary OPE using the interaction vertices of the KK theory. We also match the boundary OPE with the B-brane OPE computed from wick contraction.
	
	In Section \ref{sec:cele_holo}, we apply our results to compute various celestial chiral OPE. The $A_\infty$ structure $m_n$ of the tangential Cauchy-Riemann cohomology gives us loop corrections to the celestial chiral algebra at each loop level $n$ with maximal numbers of fields in the OPE. Other loop corrections are also discussed.
	
	In Section \ref{sec:4d} and \ref{sec:other}, we discuss some other applications of our results.
	
	\begin{center}
				\begin{tikzpicture}[scale=0.6]
			\node (0) at (0,0) {\minibox[frame]{A toy model:\\Section \ref{sec:toy}}};
			\node (10) at (0,-3) {\minibox[frame]{Kaluza-Klein reduction on flat space:\\ Section \ref{sec:HCS}, \ref{sec:HBF},\ref{sec:PBF},\ref{sec:KS}}};
			\node (11) at (10,-3) {\minibox[frame]{$A_\infty$ structure on $H_b^{0,\sbullet}(S^3)$:\\ Section \ref{sec:A_inf}}};
			\node (20) at (-4,-6) {\minibox[frame]{Application to celestial holography:\\ Section \ref{sec:cele_holo}}};
			\node (21) at (8,-6)  {\minibox[frame]{Kaluza-Klein reduction on $SL_2(\C)$:\\ Section \ref{sec:def}}};
			\node (31) at (8,-9) {\minibox[frame]{Application to twisted holography:\\ Section \ref{sec:twist_hol}}};
			\node (40) at (0,-12) {\minibox[frame]{Other Applications:\\ Section \ref{sec:4d},\ref{sec:other}}};
			\draw [->]   (11) -- (10);
			\draw [->]    (10) -- (20);
			\draw [->]    (10) -- (21);
			\draw [->]  (21) -- (31);
		\end{tikzpicture}
	\end{center}

\section*{Acknowledgements}
I would like to thank Kevin Costello, Roland Bittleston, Davide Gaiotto, Si Li, Xi Li and Gongwang Yan for illuminating discussion. Research at Perimeter Institute is supported in part by the Government of Canada through the Department of Innovation, Science and Economic Development Canada and by the Province of Ontario through the Ministry of Colleges and Universities.

	\section{A toy model}
	\label{sec:toy}
	Before we dive into the details of our main examples, we study 2d holomorphic theory and dimensional reduction along the unit circle. This toy model illustrates some of the ideas we sketched in the introduction, but without involving too many technical difficulties.
	\subsection{Dolbeault complex under dimensional reduction}
	\label{sec:Dol_2d}
	In the BV formalism, the constructions of many holomorphic theories on a complex curve $X$ are built on the Dolbeault complex $(\Omega^{0,\sbullet}(X),\pab)$. So before we consider any specific theory, we first understand the properties of Dolbeault complex under the dimensional reduction along the unit circle.
	
	The Dolbeault complex on $\C\backslash\{0\}$ is given by:
	\begin{equation}
		(\Omega^{0,\sbullet}(\C\backslash\{0\}),\pab) = (C^{\infty}(\C\backslash\{0\}) \oplus C^{\infty}(\C\backslash\{0\})d\bar{z},d\bar{z}\frac{\pa}{\pa \bar{z}}).
	\end{equation}
	To perform the dimensional reduction, it is natural to work with the polar coordinate $z = re^{i\theta}$. Standard Fourier transform gives us the following decomposition of the Dolbeault complex
	\begin{equation}
\begin{aligned}
			\Omega^{0,0}(\C\backslash\{0\}) &= \{\sum_{n \in \Z} f_n(r)e^{in\theta}\}, \\
			\Omega^{0,1}(\C\backslash\{0\}) &=  \{\sum_{n \in \Z} g_n(r)e^{in\theta}d\bar{z}\}.
\end{aligned}
	\end{equation}
	However, under the above decomposition, the Dolbeault differential does not behave very well. Both $ \pab f_n(r)$ and $\pab e^{in\theta} $ are non-zero and contribute to the Dolbeault differentia.
	
	An easy way to improve this situation is to consider a redefinition of the functions $f_n,g_n$ in the Fourier transform. We consider the following:
	\begin{equation}
	\begin{aligned}\label{dec_Dol_2d}
			\Omega^{0,0}(\C\backslash\{0\}) & = 	\{\sum_{n} \tilde{f}_n(r)r^ne^{in\theta}\},\\
			 \Omega^{0,1}(\C\backslash\{0\}) &= \{\sum_{n}\tilde{g}_n(r)r^ne^{in\theta}(\frac{1}{2}e^{i\theta}d\bar{z})\}.
		 	\end{aligned}
\end{equation}
Then the Dolbeault differential is better behaved, and only acts on the function $\tilde{f}_n(r)$ as a de Rham differential:
	\begin{equation}
		\pab  \tilde{f}_n(r)r^ne^{in\theta} = \pab \tilde{f}_n(r)z^n =  (\pa_r \tilde{f}(r)) r^n e^{in\theta} (\frac{1}{2}e^{i\theta}d\bar{z}).
	\end{equation}
We see that $\pab = \pab r \frac{d}{d r}$ in the new decomposition.

As a result, we have the following identification of the Dolbeault complex and the de Rham complex:
	\begin{equation}
		\begin{aligned}
			\Omega^{0,\sbullet}(\C\backslash\{0\}) & \cong \Omega_{dR}^{\sbullet}(\R_{ > 0})\otimes \C[w,w^{-1}]\\
			& =( C^{\infty}(\R_{> 0})[dt]\otimes \C[w,w^{-1}], dt\frac{\pa}{\pa t}).
		\end{aligned}
	\end{equation}
	under the change of variable
	\begin{equation}
		z \to w,\;\; \bar{z} \to t^{2}w^{-1},\;\; d\bar{z} \to 2tw^{-1} dt.
	\end{equation}
Typically, a theory built on de Rham complex on $\R$ is a topological quantum mechanics. This is consistent with the physical intuition that the dimensional reduction of a $2d$ chiral theory along the circle is a quantum mechanics in the radial direction.
	\subsection{2d Holomorphic Chern-Simons}
	In this section, we briefly analyze the KK reduction of $2d$ holomorphic Chern-Simons theory \cite{Costello:2015xsa} and its boundary algebra. The $2d$ Holomorphic Chern-Simons has BV field content given by
	\begin{equation}
		\pmb{\EuScript{A}}^{2d} = c^{2d} + \EuScript{A}^{2d}_{\bar{z}}d\bar{z} \in \Omega^{0,\sbullet}\otimes \mathfrak{g}
	\end{equation}
	The BV action functional is given  by
	\begin{equation}
	\mathrm{hCS}(\pmb{\EuScript{A}}^{2d})  =	\frac{1}{2}\int \pmb{\EuScript{A}}^{2d}(\pab \pmb{\EuScript{A}}^{2d} + \frac{1}{3}[\pmb{\EuScript{A}}^{2d},\pmb{\EuScript{A}}^{2d}]).
	\end{equation}
	From our analysis of the Dolbeault complex in the last section, we immediately find the following KK tower of fields after dimensional reduction
	\begin{equation}
\begin{aligned}
			\mathbf{A} &\in	\Omega^{\sbullet}(\R_{ > 0})\otimes \mathfrak{g}[w], \\
			 \mathbf{B} &\in \Omega^{\sbullet}(\R_{ > 0})\otimes \mathfrak{g}w^{-1}[w^{-1}].
\end{aligned}
	\end{equation}
The Dolbeault differential then becomes the de Rham differential. Therefore the action functional becomes
	\begin{equation}
		\int_t \Tr(\mathbf{B} (d_t \mathbf{A}  + \frac{1}{2}[\mathbf{A},\mathbf{A}])),
	\end{equation}
	where $\Tr = \Tr_{\mathfrak{g}}\otimes \oint dw$. We can also identify $\mathfrak{g}w^{-1}\C[w^{-1}]$ with the dual of $\mathfrak{g}[w]$ via the pairing. We see that after dimensional reduction, the KK theory becomes a $1d$ BF theory taking values in Lie algebra $\mathfrak{g}[w]$.
	
	There is a natural boundary condition for this theory at $t = 0$. If we look back at the decomposition \ref{dec_Dol_2d}, we find that fields of the original $2d$ theory are related to the fields of the $1d$ theory as follows
	\begin{equation}
\begin{aligned}
	\sum B[n]w^{-1 - n} ,\; \sum A[n]w^n &\to c^{2d} = \sum_{n\geq 0} B[n](r) r^{-n-1} e^{-i(n+1)\theta} + A[n](r) r^{n} e^{in\theta},\\
	\sum B_t[n]w^{-1 - n}dt ,\; \sum A_t[n]w^ndt &\to \EuScript{A}^{2d}_{\bar{z}}d\bar{z} = \sum_{n\geq 0}(B_t[n](r) r^{-n-1} e^{-i(n + 1)\theta} + A_t[n](r) r^{n} e^{in \theta})\pab r.
\end{aligned}
	\end{equation}
	
	 Therefore, if we impose the following boundary condition
	 \begin{equation}
	 	\mathbf{B}|_{t = 0} = 0\;,
	 \end{equation}
	 the field $\pmb{\EuScript{A}}^{2d}$ of the original $2d$ theory has no pole at $r = 0$ and is well defined on the whole complex plane $\C$.
	 For general $1d$ BF theory, this boundary condition is analyzed in \cite{2021arXiv211101757R,Wang:2022bvq}, and is called the A boundary condition. The corresponding space of boundary local operators is given by the \CE algebra of $\mathfrak{g}[w]$:
	\begin{equation}
		 C^{\sbullet}(\mathfrak{g}[[w]]).
	\end{equation}
	This coincides with the space of classical local operators of $2d$ holomorphic Chern-Simons theory at the origin:
	\begin{equation}
		\mathrm{Obs}^{2d}|_{0} = C^{\sbullet}(\mathfrak{g}[[w]]).
	\end{equation}
	We see that by properly choosing the boundary condition, the space of boundary local operators of the KK theory reproduces the local operators of the original theory restricted at the origin.

	\section{Holomorphic Chern-Simons theory}
	\label{sec:HCS}
	\subsection{A brief review}
	
	In this section, we introduce the holomorphic Chern-Simons theory in $6d$. This is the open-string theory of space filling branes in the topological B-model.
	
	Let $\mathfrak{g}$ be a (super) Lie algebra with a non-degenerate Killing pairing $\Tr$. For $X$ a Calabi Yau $3$-fold with the holomorphic volume form $\Omega_X$, we can define holomorphic Chern-Simons theory on $X$ as follows. Fundamental fields of holomorphic Chern-Simons theory are
	\begin{equation}
		\EuScript{A} \in \Omega^{0,1}(X)\otimes \mathfrak{g}.
	\end{equation}
	The action functional is given by
	\begin{equation}
		S[	\EuScript{A}] = \int_X \Omega_X\wedge \Tr(\frac{1}{2}	\EuScript{A}\wedge \pab 	\EuScript{A} + \frac{1}{6} \EuScript{A} \wedge[	\EuScript{A},\EuScript{A}]).
	\end{equation}
	By varying the action functional $S$ we obtain the following equations of motion for $\EuScript{A} $
	\begin{equation}
		\pab 	\EuScript{A} + \frac{1}{2}[	\EuScript{A} ,	\EuScript{A} ] = 0.
	\end{equation}
	This theory has infinitesimal gauge symmetry parameterized by $c \in \Omega^{0,0}(X)\otimes \mathfrak{g}$. Gauge transformation takes the form
	\begin{equation}
		\delta 	\EuScript{A}  = \pab c + [c,\EuScript{A}].
	\end{equation}
	As for any gauge theory, we can reformulate the holomorphic Chern-Simons theory in the BV formalism. By adding ghosts, anti-fields and anti-ghost, the BV field of holomorphic Chern-Simons theory can be organized into 
	\begin{equation}
		\pmb{\EuScript{A}}\in \Omega^{0,\sbullet}(X)\otimes \mathfrak{g}[1].
	\end{equation}
	Here the symbol $[1]$ means a degree shifting such that fields in $\Omega^{0,p}(X)\otimes \mathfrak{g}[1] $ have cohomological degree $p - 1$ and, in particular, fields in $\Omega^{0,1}(X)\otimes \mathfrak{g}[1] $ have cohomological degree $0$. The BV action functional takes the same form as the non-BV version by replacing the field $\EuScript{A}$ with the BV field $\pmb{\EuScript{A}}$
	\begin{equation}
		HCS[\pmb{\EuScript{A}}] =  \int_X \Omega_X\wedge \Tr(\frac{1}{2}\pmb{\EuScript{A}}\wedge \pab\pmb{\EuScript{A} }+ \frac{1}{6} \pmb{\EuScript{A} }\wedge[\pmb{\EuScript{A} },\pmb{\EuScript{A} }]).
	\end{equation}
	
	The advantage of the BV formalism is that, classically, the structure of the theory is completely characterized by the dg Lie algebra:
	\begin{equation}
		(\Omega^{0,\sbullet}(X)\otimes \mathfrak{g}, \bar{\pa},[-,-]),
	\end{equation}
	together with the integration map
	\begin{equation}
		\label{int_CY}
		\int_X \Omega_X : \;\Omega^{0,3}(X) \to \C.
	\end{equation}
	Then physical manipulation such that dimensional reduction can be compactly organized into mathematical operations on this dg Lie algebra $(\Omega^{0,\sbullet}(X)\otimes \mathfrak{g}, \bar{\pa},[-,-])$. The powerful toolbox from homological algebra can be utilized in our study.
	
	\subsection{From Dolbeault to tangential Cauchy Riemann complex }
	We are interested in the holomorphic Chern-Simons theory on $\C^3\backslash \C \cong \C\times\R_{ > 0}\times S^3$ and its Kaluza-Klein reduction to $\C\times \R_{> 0}$. As we have mentioned in the last section, the structure of the field theory is partly encoded in the algebraic structure of the Dolbeault complex $(\Omega^{0,\sbullet}(\C^3\backslash \C), \bar{\pa})$. Therefore, it is important to study the Dolbeault complex under dimensional reduction.

	We parameterize $\C^3$ by $z, z_1, z_2$ so that $\C$ is defined by the curve $z_1 = z_2 = 0$. We have a homeomorphism $\C^3 \backslash \C \cong \C \times \R_{> 0}\times S^3$, which can be realized in coordinates by
	\begin{equation}\label{coor_change}
		(z,z_1,z_2) \mapsto (z,r = (|z_1|^2 + |z_2|^2)^{\frac{1}{2}},w_i = \frac{z_i}{r}),
	\end{equation}
	where $(z,r)$ serve as the coordinates on $\C \times \R_{> 0}$ and $(w_i)$ serve as coordinates on $S^3$.
	
	Using the above homeomorphism we can decompose $\Omega^{0,\bullet}(\C^3 \backslash \C) $ as follows
	\begin{equation}\label{Dol_C3}
		\Omega^{0,\bullet}(\C^3 \backslash \C)  \cong C^{\infty}(\C \times \R_{> 0})[d\bar{z},\bar{\pa}r]\otimes  C^{\infty}(S^3)[\epsilon^{ij}\bar{z}_id\bar{z}_j].
	\end{equation}
	However, the Dolbeault differential does not behave well under the above decomposition, as the Dolbeault differential of a function on $S^3$ generally involves a factor of $\pab r$. To proceed, We need to introduce the tangential Cauchy Riemann (CR) complex $\Omega_b^{0,\bullet}(S^3)$. For our purpose, we adopt the definition using the embedding of $S^3$ into $\C^2$ (defined by the equation $r = 1$)\cite{Folland}. And define $\Omega_b^{0,\bullet}(S^3)$ as the quotient of $\Omega^{0,\sbullet}(\C^2)|_{S^3}$ by the ideal $I(\pab r)$ generated by $\bar{\pa}r$. By choosing a metric $\langle-,-\rangle$ on $\C^2$, the CR complex $\Omega_b^{0,\bullet}(S^3)$  can also be identified with the orthogonal complement of $I(\pab r)$ in $\Omega^{0,\sbullet}(\C^2)|_{S^3}$.
	
	The CR differential $\pacr$ can be defined as follows. Let $f \in C^{\infty}(S^3)$ be a function on $S^3$, and $f'$ be an extension of $f$ to $\C^2$. Then $\pacr f$ is the restriction to $S^3$ of
	\begin{equation}
		\pab f' - \frac{\langle \pab f',\pab r\rangle}{\langle \pab r,\pab r\rangle}  \pab r.
	\end{equation}
	In particular, we have
	\begin{equation}
		\xi_i : = \pacr \bar{w}_i = d\bar{w}_i - 2\bar{w}_i\pab r .
	\end{equation}
	We can identify the CR complex with
	\begin{equation}
		\Omega_b^{0,\bullet}(S^3) = \C[w_1,w_2,\bar{w}_1,\bar{w}_2,\xi_1,\xi_2]/\langle |w_1|^2 + |w_2|^2 = 1,w_1\xi_1 + w_2\xi_2 = 0\rangle,
	\end{equation}
	and the differential $\pacr$ with
	\begin{equation}
		\pacr f = \sum\frac{\pa f}{\pa \bar{w}_i}\xi_i.
	\end{equation}
	
	We also introduce an auxiliary graded algebra $\Omega^{\sbullet,(j)}_{3d}$ in $3d$, that encodes the transverse holomorphic structure on the three manifold. In local coordinates, this graded algebra can be written as
	\begin{equation}
		\Omega^{\sbullet,(j)}_{3d} = \C^{\infty}(\R^3)[dt,d\bar{z}](dz)^j,
	\end{equation}
	with differential
	\begin{equation}
		\hat{d} = dt\frac{\pa }{\pa t} +  d\bar{z}\frac{\pa}{\pa \bar{z}} .
	\end{equation}
	We denote $\Omega^{\sbullet}_{3d} = \Omega^{\sbullet,(0)}_{3d}$. In terms of components, we have
		\begin{equation}
		\Omega^{0}_{3d} = \C^{\infty}(\R^3),\quad \Omega^{1}_{3d} = \C^{\infty}(\R^3)dt\oplus\C^{\infty}(\R^3)d\bar{z},\quad\Omega^{2}_{3d} = \C^{\infty}(\R^3)dtd\bar{z}.
	\end{equation}
	
	We can generalize the construction in Section \ref{sec:Dol_2d} to decompose the Dolbeault complex $\Omega^{0,\sbullet}(\C^3\backslash\C)$ as follows
	\begin{prop}\label{K}
		There is an isomorphism of dg algebras:
		\begin{equation}
			K: (\Omega^{0,\bullet}(\C^3 \backslash \C) ,\pab) \cong ( \Omega_{3d}^{\sbullet}(\C \times \R_{> 0})\otimes 	\Omega_b^{0,\bullet}(S^3) , \hat{d} + \pacr).
		\end{equation}
	\end{prop}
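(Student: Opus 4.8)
The plan is to run the toy computation of Section~\ref{sec:Dol_2d} one dimension up, with the circle replaced by $S^3$. Because the defining curve is $\C=\{z_1=z_2=0\}$, there is a \emph{product} splitting $\C^3\backslash\C=\C_z\times(\C^2_{z_1,z_2}\backslash\{0\})$, hence a factorization of Dolbeault complexes
\begin{equation}
  \big(\Omega^{0,\sbullet}(\C^3\backslash\C),\pab\big)\iso\big(\Omega^{0,\sbullet}(\C_z),\pab\big)\,\hat\otimes\,\big(\Omega^{0,\sbullet}(\C^2\backslash\{0\}),\pab\big),\qquad \pab=\pab_z\otimes 1+1\otimes\pab_{z_1z_2}.
\end{equation}
The first factor is $\big(C^\infty(\C)[d\bar z],\,d\bar z\,\pa_{\bar z}\big)$, which becomes exactly the ``$d\bar z\,\pa/\pa\bar z$'' summand of $\hat d$ on $\Omega^{\sbullet}_{3d}(\C\times\R_{>0})$ once the $\R_{>0}$-coordinate is named $t=r$. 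So the whole statement reduces to the isomorphism of dg algebras
\begin{equation}
  K_0:\ \big(\Omega^{0,\sbullet}(\C^2\backslash\{0\}),\pab\big)\ \iso\ \big(C^\infty(\R_{>0})[dr]\otimes\Omega_b^{0,\sbullet}(S^3),\ dr\,\pa_r+\pacr\big),
\end{equation}
after which one tensors the $\C_z$-factor back in and assembles $d\bar z\,\pa_{\bar z}+dr\,\pa_r=\hat d$.

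I would construct $K_0$ using the coordinates $z_i=r\,w_i$ of \eqref{coor_change}. On $(0,1)$-forms the rank-two bundle $\langle d\bar z_1,d\bar z_2\rangle$ on $\C^2\backslash\{0\}$ splits orthogonally as $\C\cdot\pab r$ (the radial line) together with a rank-one complement spanned by $\epsilon^{ij}\bar z_i\,d\bar z_j$, which is the pullback of $\Omega_b^{0,1}(S^3)$; accordingly $K_0$ sends $dr$ to (a normalized) $\pab r$ and the CR one-form generator to (a normalized) $\epsilon^{ij}\bar z_i\,d\bar z_j$. On functions I would repeat the $2d$ trick: one does not substitute a function of the $w_i,\bar w_i$ naively but first twists it by the correct power of $r$ --- just as in \eqref{dec_Dol_2d} the mode $e^{in\theta}$ was replaced by the holomorphic monomial $z^n=r^n e^{in\theta}$ --- so that $\pab$ of the twisted object carries no spurious $\pab r$-term; concretely a monomial $w^\alpha\bar w^\beta$ lifts to $r^{-|\alpha|-|\beta|}z^\alpha\bar z^\beta$, whose $\pab$ only sees the clean pieces $\pab\bar z_j=d\bar z_j$, $\pab z_j=0$. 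Granting this definition, checking that $K_0$ is an isomorphism of graded algebras amounts to: (i) it respects the two defining relations $|w_1|^2+|w_2|^2=1$ and $w_1\xi_1+w_2\xi_2\equiv 0$ of $\Omega_b^{0,\sbullet}(S^3)$ --- the second holding \emph{modulo} $\pab r$, which is exactly what the quotient by the ideal $I(\pab r)$ records --- and (ii) it is bijective, which follows from the matching of ranks in each form degree (namely $1,2,1$ in degrees $0,1,2$) together with an explicit inverse.

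The substantive step, and the one I expect to be the main obstacle, is verifying that $K_0$ is a chain map, $\pab\circ K_0=K_0\circ(dr\,\pa_r+\pacr)$. I would organize the check by Hopf $S^1$-weight, i.e.\ Fourier-expand along the diagonal $U(1)$-action that rotates $(z_1,z_2)$ and fixes $r$: on radial functions $f(r)$ one gets $\pab f=\pa_r f\,\pab r\leftrightarrow\pa_r f\,dr$, exactly as in the $2d$ case; on the sphere directions the very definition $\pacr=\pab-(\text{component along }\pab r)$ together with the $r$-twist above makes the stray $\pab r$-contributions cancel weight by weight, leaving precisely $\pacr$; and the would-be mixed terms are absorbed, using that $\pab r$ is $\pab$-closed and the twist is weight-homogeneous. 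The genuine work here is the bookkeeping of the $\pab r$-terms against the compensating powers of $r$, which is where a normalization or a sign is most likely to slip.

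A conceptual way to package the same computation --- useful as a cross-check --- is to note that $\C^2\backslash\{0\}$ is the total space of $\OO(-1)\to\mathbb{C}\mathbb{P}^1$ with the zero section deleted, so that $\big(\Omega^{0,\sbullet}(\C^2\backslash\{0\}),\pab\big)$ decomposes over $n\in\Z$ as $\bigoplus_n C^\infty(\R_{>0})\otimes\big(\Omega^{0,\sbullet}(\mathbb{C}\mathbb{P}^1,\OO(n)),\pab_{\OO(n)}\big)$ with differential $dr\,\pa_r$ on the radial factor, and that $\bigoplus_n\big(\Omega^{0,\sbullet}(\mathbb{C}\mathbb{P}^1,\OO(n)),\pab_{\OO(n)}\big)=\big(\Omega_b^{0,\sbullet}(S^3),\pacr\big)$ is the standard presentation of the tangential CR complex of the Hopf circle bundle. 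Finally, the only functional-analytic point --- what $\hat\otimes$ and the factorized algebra on the right mean as topological dg algebras --- is settled not by abstract nonsense but by writing down $K_0^{-1}$ explicitly.
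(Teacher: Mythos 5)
Your overall strategy is the paper's: split off the $\C_z$ direction, pass to the coordinates $z_i=rw_i$ on $\C^2\backslash\{0\}$, define the map by a radially twisted substitution together with the splitting of $(0,1)$-forms into the $\pab r$-line and a CR complement, then exhibit an explicit inverse and check the two defining relations of $\Omega_b^{0,\sbullet}(S^3)$. This is exactly how the paper proceeds, writing $K^{-1}:(w_i,\bar w_i,\xi_i,dt)\mapsto(z_i,\,r^{-2}\bar z_i,\,r^{-2}d\bar z_i-2r^{-3}\bar z_i\pab r,\,\pab r)$ and verifying that $|w_1|^2+|w_2|^2=1$ and $w_1\xi_1+w_2\xi_2=0$ are respected.

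However, the one concrete formula you give for the twist is wrong, and it is the crux of the proposition. You lift $w^\alpha\bar w^\beta$ to $r^{-|\alpha|-|\beta|}z^\alpha\bar z^\beta$; since $w_i=z_i/r$, this is just the naive degree-zero homogeneous extension, i.e.\ no twist at all, and its $\pab$ \emph{does} carry a spurious $\pab r$-term. Already for $\bar w_1$ one has
\begin{equation*}
\pab\Bigl(\frac{\bar z_1}{r}\Bigr)=\frac{d\bar z_1}{r}-\frac{\bar z_1}{r^2}\,\pab r
= r\Bigl(\frac{d\bar z_1}{r^2}-\frac{2\bar z_1}{r^3}\,\pab r\Bigr)+\frac{\bar z_1}{r^2}\,\pab r,
\end{equation*}
whose $\pab r$-component does not cancel; with your lift the induced differential is therefore $\hat d+\pacr$ plus a nonzero cross term proportional to $dt$ acting on the sphere modes, and the chain-map identity you assert fails. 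The correct twist is asymmetric in the holomorphic and antiholomorphic variables: $w_i\mapsto z_i$ but $\bar w_i\mapsto \bar z_i/r^2$, so that $w^\alpha\bar w^\beta$ lifts to $r^{-2|\beta|}z^\alpha\bar z^\beta$ and $\pab(\bar z_i/r^2)=r^{-2}d\bar z_i-2r^{-3}\bar z_i\pab r$ is precisely the lift of $\xi_i=\pacr\bar w_i$, with no $dt$-component. With that replacement the rest of your argument (the rank count, the verification of the two relations under the inverse, and the $\OO(n)$ cross-check over $\mathbb{CP}^1$) goes through and coincides with the paper's proof.
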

	
	\begin{proof}
		The isomorphism is constructed as follows. On local coordinates functions, $K$ is defined by
		\begin{equation}
			K: (z,\bar{z},z_i,\bar{z}_i) \mapsto (z,\bar{z},w_i,t^2\bar{w}_i).
		\end{equation}
		On one forms, $K$ is defined by
		\begin{equation}
			K: (d\bar{z}, d\bar{z}_i)  \mapsto (d\bar{z}, t^2\xi_i + 2t\bar{w}_i dt).
		\end{equation}
		$K$ intertwine the differential $\pab$ and $\hat{d} + \pacr$ by construction. To prove that $K$ is an isomorphism, we construct its inverse
		\begin{equation}\label{Kinv}
			K^{-1}: (z,\bar{z},t,w_i,\bar{w}_i)\mapsto (z,\bar{z},r,z_i,r^{-2}\bar{z}_i),
		\end{equation}
	where $r^2 = (z_1\bar{z}_1+ z_2\bar{z}_2)$. On one forms, $K^{-1}$ is defined by
		\begin{equation}\label{Kinv1}
			K^{-1}: (d\bar{z},dt,\xi_i) \mapsto ( d\bar{z}, \bar{\pa}r, \frac{d\bar{z}_i}{r^2} - 2\frac{\bar{z}_i}{r^3}\pab r).
		\end{equation}
		
		We can verify that 
		\begin{equation}
			\begin{aligned}
				&1 = K^{-1}(w_1\bar{w}_1 + w_2\bar{w}_2) = r^{-2}(z_1\bar{z}_1+ z_2\bar{z}_2)  = 1,\\
				&0 = K^{-1}(w_1\xi_1 + w_2\xi_2) =  z_1(\frac{d\bar{z}_1}{r^2} - 2\frac{\bar{z}_1}{r^3}\pab r) + z_2(\frac{d\bar{z}_2}{r^2} - 2\frac{\bar{z}_2}{r^3}\pab r) = 0.
			\end{aligned}
		\end{equation}
		Therefore $K^{-1}$ is well defined and gives the inverse of $K$.
	\end{proof}
For a refined version of the above results, see \cite{Gwilliam:2018lpo}.

The isomorphism $K$ will play an important role in the KK reduction of $6d$ holomorphic theories.
	
	\subsection{A direct approach to KK reduction}
	
	Using the identification of Dolbeault complex with the tangential Cauchy-Riemann complex from the last section, we can encode the structure of holomorphic Chern-Simons theory into the following dg Lie algebra.
	\begin{equation}
		( \Omega_{3d}^{\sbullet}(\C \times \R_{> 0})\otimes 	\Omega_b^{0,\bullet}(S^3)\otimes\mathfrak{g}[1] , \hat{d} + \pacr,[-,-]).
	\end{equation}
	
	The tangential Cauchy Riemann complex $\Omega_b^{0,\bullet}(S^3)$ has a nice Harmonic decomposition into $SU(2)$ representations  (We provide more detail of this in Appendix \ref{apx:Har}. See also \cite{Folland}.)
	\begin{equation}
		\begin{aligned}
			\Omega_b^{0,0}(S^3) &= \bigoplus_{j,\bar{j}\in \frac{1}{2} \Z_{\geq 0}} \mathcal{H}_{j,\bar{j}},\\
			\Omega_b^{0,1}(S^3) &= \bigoplus_{j,\bar{j}\in \frac{1}{2} \Z_{\geq 0}} \mathcal{H}_{j,\bar{j}} \epsilon,
		\end{aligned}
	\end{equation}
	where 
	\begin{equation}
		\mathcal{H}_{j,\bar{j}} = \left\lbrace  \begin{array}{l}
			\text{Harmonic polynomials that is homogeneous }\\
			\text{of degree $2j$ in $(w_1, w_2)$ and degree $2\bar{j}$ in $(\bar{w}_1,\bar{w}_2)$ }
		\end{array}\right\rbrace,\; \text{ for }j,\bar{j}\in \frac{1}{2} \Z_{\geq 0},
	\end{equation}
	and $\epsilon = \sum\epsilon_{ij}\bar{w}_i\xi_j$ is the $SU(2)$ invariant one form that generate $\Omega_b^{0,1}(S^3)$. The labeling of the space $\mathcal{H}_{j,\bar{j}}$ using half integers seems unnatural. We will see in later sections that this is to make the notation compatible with the standard quantum mechanical notation of $SU(2)$ spin. 
	
	The above harmonic decomposition already provides us with a Kaluza-Klein reduction of the holomorphic Chern-Simons theory. The space $\oplus\mathcal{H}_{j,\bar{j}} $ consists of harmonic polynomials on $S^3$ and labels the full KK towers of fields.
	
	Equivalently, we can regard the full KK tower of BV fields as differential forms $(\Omega_{3d}^{\sbullet}(\C \times \R_{> 0}), \hat{d})$ valued in the very large dg Lie algebra
	\begin{equation}\label{dg_KK}
		(\bigoplus_{j,\bar{j}\in \frac{1}{2} \Z_{\geq 0}} \mathcal{H}_{j,\bar{j}}\otimes \mathfrak{g} ) \oplus(\bigoplus_{j,\bar{j}\in \frac{1}{2} \Z_{\geq 0}} \mathcal{H}_{j,\bar{j}} \epsilon \otimes \mathfrak{g}).
	\end{equation}
	equipped with the tangential Cauchy Riemann differential $\pacr$. The Lie bracket is given by the product on $\Omega_b^{0,\bullet}(S^3)$ and the Lie bracket of $\mathfrak{g}$. Thus we can write down the BV fields of the KK theory as follows
	\begin{equation}
		\begin{aligned}
			\bm{\mathcal{A}}& \in \Omega_{3d}^{\sbullet}(\C \times \R_{> 0})\otimes \bigoplus_{j,\bar{j}\in \frac{1}{2} \Z_{\geq 0}} \mathcal{H}_{j,\bar{j}}[1],\\
			\bm{\mathcal{B}}& \in \Omega_{3d}^{\sbullet}(\C \times \R_{> 0})\otimes\bigoplus_{j,\bar{j}\in \frac{1}{2} \Z_{\geq 0}} \mathcal{H}_{j,\bar{j}}\epsilon.
		\end{aligned}
	\end{equation}
	\begin{remark}
		By decomposing $\Omega_b^{0,\sbullet}(S^3)$ into $SU(2)$ harmonic, the product of two harmonic polynomials is not necessarily a harmonic polynomial. Instead, the product is decomposed into a sum of harmonic polynomials.
		
		It follows from general fact of spherical harmonics that any homogeneous polynomial $p(w_i,\bar{w}_i)$ of bi-degree $(2j,2\bar{j})$ can be expressed as the following sum
		\begin{equation}
			p(w_i,\bar{w}_i) = h_0 + (w_1\bar{w}_1 + w_2\bar{w}_2)h_1 + \dots (w_1\bar{w}_1 + w_2\bar{w}_2)^kh_k  + \dots,
		\end{equation}
		where $h_k \in \mathcal{H}_{j-\frac{k}{2},\bar{j}-  \frac{k}{2}}$.  (We provide a proof of this fact in Appendix \ref{apx:Har})
		
		As a result, for any $h \in \mathcal{H}_{j,\bar{j}}$ and  $h' \in \mathcal{H}_{j',\bar{j}'}$, $hh'$ is a polynomial of bi-degree $(2(j+j'),2(\bar{j} + \bar{j}'))$. So the product $hh'$ has the following decomposition:
		\begin{equation}
		hh' \in	\mathcal{H}_{j+j',\bar{j}+\bar{j}'} \oplus \mathcal{H}_{j+j' - \frac{1}{2},\bar{j}+\bar{j}' - \frac{1}{2}} \oplus  \cdots \oplus \begin{cases}
				\mathcal{H}_{j+j'-\bar{j} - \bar{j}',0} & j+j' \geq \bar{j}+ \bar{j}'\\
				\mathcal{H}_{0,\bar{j}+ \bar{j}' -j - j'} & j+j' < \bar{j}+ \bar{j}'
			\end{cases}.
		\end{equation}
		Writing the product into the above summation gives us the right product rule on $\Omega_b^{0,\sbullet}(S^3)$ after harmonic decomposition.
	\end{remark}

We can also translate the result using non-BV language. Fields of this theory are simply $\mathcal{A}$ -- the one form component of $\bm{\mathcal{A}}$ and $\mathcal{B}$ -- the zero-form component of $\bm{\mathcal{B}}$. We can write down the equation of motion from the dg Lie algebra structure:
	\begin{equation}\label{Eom_naiveKK}
\begin{aligned}
			&\hat{d}\mathcal{A} + \frac{1}{2}[\mathcal{A},\mathcal{A}] = 0,\\
			&\hat{d}\mathcal{B}  + \pacr\mathcal{A} + [\mathcal{A},\mathcal{B}] = 0.
\end{aligned}
	\end{equation}
Gauge transformation are labeled by $\mathcal{C} \in \Omega_{3d}^{0}(\C\times \R_{>0})\otimes  \bigoplus_{j,\bar{j}\in \frac{1}{2} \Z_{\geq 0}} \mathcal{H}_{j,\bar{j}}$ and takes the following form
\begin{equation}
\begin{aligned}
		\delta \mathcal{A}  &= \hat{d} \mathcal{C} + [\mathcal{C},\mathcal{A}],\\
		\delta \mathcal{B} & = \pacr \mathcal{C} + [\mathcal{C},\mathcal{B}].
\end{aligned}
\end{equation}

	To obtain the action functional of the KK theory, we need to analyze the integration map \ref{int_CY}. We choose the holomorphic volume form to be $\Omega_{\C^3} = dzdz_1dz_2$. Then the integration map on $\Omega^{0,\sbullet}(\C^3 \backslash \C) $ is given by
	\begin{equation}
		\int: \alpha  \in \Omega^{0,3}(\C^3 \backslash \C)  \mapsto \int_{\C^3} \Omega_{\C^3}		\alpha .
	\end{equation}
	It induces an integration map 
	\begin{equation}
		\int\circ \,K^{-1} : \Omega_{3d}^{2}(\C \times \R_{> 0}) \otimes 	\Omega_b^{0,1}(S^3) \to \C.
	\end{equation}
	
	\begin{prop}\label{prop_int}
		The induced integration map $\int\circ \,K^{-1}$ on $\Omega_{3d}^{2}(\C \times \R_{> 0})\otimes 	\Omega_b^{0,1}(S^3)$ can be decomposed as
		\begin{equation}
			\int\circ \,K^{-1}  = \int_{3d} \otimes \Tr_{S^3},
		\end{equation} 
		where $\int_{3d}: \Omega_{3d}^{2}(\C \times \R_{> 0}) \to \C $ is the integration map 
		\begin{equation}
			\int_{3d} f(z,\bar{z},t)dtd\bar{z}  = \int_{t > 0}\int_{\C}f(z,\bar{z},t)dtd^2z,
		\end{equation}
		and $\Tr_{S^3}: \Omega_b^{0,1}(S^3) \to \C$ is the surface integral
		\begin{equation}
			\Tr_{S^3}Y(w_i,\bar{w_i})\epsilon = \int_{S^3}Y(w_i,\bar{w_i})d\sigma_{S^3}.
		\end{equation}
	\end{prop}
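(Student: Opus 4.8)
The plan is to unwind the change of variables $K^{-1}$ on the top-degree piece $\Omega_{3d}^2(\C\times\R_{>0})\otimes\Omega_b^{0,1}(S^3)$ and show that the resulting integral over $\C^3\setminus\C$ factors as an integral over $\C\times\R_{>0}$ times an integral over $S^3$. First I would take a general element $\beta = f(z,\bar z,t)\,dt\,d\bar z\otimes Y(w_i,\bar w_i)\,\epsilon$, compute $K(\beta)\in\Omega^{0,3}(\C^3\setminus\C)$ explicitly using the formulas $K:(d\bar z,d\bar z_i)\mapsto(d\bar z, t^2\xi_i + 2t\bar w_i\,dt)$ and the identification $\epsilon = \sum\epsilon_{ij}\bar w_i\xi_j$, and then pair with the holomorphic volume form $\Omega_{\C^3}=dz\,dz_1\,dz_2$.

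The key computational step is to verify that the wedge $dz\,dz_1\,dz_2\wedge K(\beta)$, written back in the coordinates $(z,r,w_i)$ via $z_i = r w_i$, produces the measure $dt\,d^2z$ on $\C\times\R_{>0}$ (with $r=t$) times the round volume form $d\sigma_{S^3}$ on $S^3$, up to a constant that I would absorb into the normalization of $\Tr_{S^3}$. Concretely, one substitutes $z_i = rw_i$, so $dz_i = r\,dw_i + w_i\,dr$, and on the constraint $\sum|w_i|^2=1$ the combination $d\bar z_1\wedge d\bar z_2$ restricted appropriately reproduces $r^2\,\xi_1\wedge\xi_2$-type terms together with the radial $\pab r$ contributions; matching these against $K(d\bar z_i) = t^2\xi_i + 2t\bar w_i\,dt$ with $r=t$ shows the Jacobian is purely a power of $r$ times an $S^3$ angular form. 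Since $\beta$ has bidegree $(2;0,1)$ in $(\Omega_{3d};\Omega_b)$, only the part of the volume form that picks up exactly the missing holomorphic and anti-holomorphic directions survives, and the $r$-powers conspire (as in Proposition \ref{K}, where the factors $t^2$ and $t^{-2}$ cancel under $K$ and $K^{-1}$) so that the radial measure is exactly $dt$ and no leftover $r$-dependence contaminates the $S^3$ integral.

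Having identified the measure, the final step is bookkeeping: the integrand $f(z,\bar z,t)\,Y(w_i,\bar w_i)$ is a product of a function pulled back from $\C\times\R_{>0}$ and a function on $S^3$, so Fubini gives
\begin{equation}
\int\circ\,K^{-1}(\beta) = \left(\int_{t>0}\int_\C f(z,\bar z,t)\,dt\,d^2z\right)\left(\int_{S^3}Y(w_i,\bar w_i)\,d\sigma_{S^3}\right) = \left(\int_{3d}\otimes\Tr_{S^3}\right)(\beta),
\end{equation}
and linearity extends this from the separable elements to all of $\Omega_{3d}^2(\C\times\R_{>0})\otimes\Omega_b^{0,1}(S^3)$. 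The main obstacle I anticipate is purely organizational rather than conceptual: carefully tracking the anti-holomorphic one-forms $d\bar z_i$ through $K$ in the presence of the two constraints $\sum|w_i|^2=1$ and $\sum w_i\xi_i=0$ on $S^3$, and making sure the factor of $r$ from $dz_1\,dz_2 = r^2\,dw_1\,dw_2 + (\text{terms with }dr)$ combines correctly with the $t^2$'s in $K(d\bar z_i)$ to leave precisely $dt\,d\sigma_{S^3}$; a clean way to sidestep sign and constant ambiguities is to fix the normalization of $d\sigma_{S^3}$ by this very formula and note that all later uses of $\Tr_{S^3}$ only depend on it through this normalization.
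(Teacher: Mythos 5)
Your overall strategy is the same as the paper's (unwind $K^{-1}$ in polar coordinates and use $d^2z_1\,d^2z_2 = r^3\,dr\,d\sigma_{S^3}$), but there is a genuine gap at the step where you claim ``the $r$-powers conspire \dots so that \dots no leftover $r$-dependence contaminates the $S^3$ integral,'' and then invoke Fubini on a product integrand. This is false as a pointwise statement. The map $K^{-1}$ is asymmetric: $w_i\mapsto z_i$ but $\bar w_i\mapsto r^{-2}\bar z_i$. Hence if $Y$ lies in the harmonic component $\mathcal{H}_{j,\bar j}$ (homogeneous of bidegree $(2j,2\bar j)$ in $(w_i,\bar w_i)$), its pullback is $r^{-4\bar j}Y(z_i,\bar z_i) = r^{2(j-\bar j)}\,Y(z_i/r,\bar z_i/r)$, so after all the Jacobian bookkeeping the integrand is $f(z,\bar z,r)\,r^{2(j-\bar j)}\,Y(w_i,\bar w_i)$ — the radial and angular variables remain coupled through the factor $r^{2(j-\bar j)}$, which cannot be absorbed into a normalization of $d\sigma_{S^3}$ because it depends on the harmonic type of $Y$. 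Your Fubini step therefore does not yield $\int_{3d}\otimes\Tr_{S^3}$ for general $Y$.

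The missing idea is the orthogonality of spherical harmonics: $\int_{S^3}Y\,d\sigma_{S^3}=0$ unless $j=\bar j$, and in that case $r^{2(j-\bar j)}=1$, so the offending factor disappears precisely on the components where the $S^3$ integral can be nonzero. This is why the paper first decomposes $Y$ into harmonics $\mathcal{H}_{j,\bar j}$ before integrating — without that decomposition one cannot even name the power of $r$ that appears, and without the vanishing statement the factorization fails. Your computation of the Jacobian and the measure is otherwise on the right track (and the small slip of writing $K(\beta)$ where you mean $K^{-1}(\beta)$ is immaterial), but the proof needs this $SU(2)$-representation-theoretic input to close.
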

	\begin{proof}
		Let 
		\begin{equation}
			K(\alpha) = f(z,\bar{z},t)d\bar{z} \wedge dt \wedge Y(w_i,\bar{w}_i) \epsilon \in \Omega_{3d}^{1}(\C \times \R_{> 0})\otimes \Omega_{b}^{0,1}(S^3)
		\end{equation}
	and suppose that $Y \in \mathcal{H}_{j,\bar{j}}$ is homogeneous of degree $(2j,2\bar{j})$.
		Using the formula \ref{Kinv},\ref{Kinv1} for $K^{-1}$ we have
		\begin{equation}
			\begin{aligned}
				\alpha &= f(z,\bar{z},r)d\bar{z} \wedge \bar{\pa}r\wedge r^{-4\bar{j}}Y(z_i,\bar{z}_i) r^{-4}(\bar{z}_1d\bar{z}_2 - \bar{z_2}d\bar{z}_1)\\
				&  = f(z,\bar{z},r)d\bar{z} \wedge r^{2(j - \bar{j})}Y(\frac{z_i}{r},\frac{\bar{z}_i}{r}) \frac{1}{2}r^{-3} d\bar{z}_1\wedge d\bar{z}_2.
			\end{aligned}
		\end{equation}
		We find that 
		\begin{equation}
			\int_{\C^3} \Omega_{\C^3}\wedge \alpha = \int_{3d} \frac{1}{2} f(z,\bar{z},r) r^{2(j - \bar{j})}  d^2z  dr \int_{S^3} Y(\frac{z_i}{r},\frac{\bar{z}_i}{r}) d\sigma_{S^3},
		\end{equation}
		where we used $d^2z_1d^2z_2 = r^3 dr d\sigma_{S^3}$.
		
		Note that $\int_{S^3} Y(\frac{z_i}{r},\frac{\bar{z}_i}{r}) d\sigma_{S^3}$ is only nonzero when $j = \bar{j}$.
		
		Therefore we have
		\begin{equation}
			\int_{\C^3} \Omega_{\C^3}\wedge \alpha = \int_{3d}f(z,\bar{z},r)dt dz \times \Tr_{S^3} Y(w_i,\bar{w}_i)\epsilon.
		\end{equation}
		This gives us the desired decomposition.
	\end{proof}
	
	The above result tells us that, to write down the action functional of the KK theory, we simply replace the $6d$ integration maps with the $3d$ integral and a $\Tr$ on the Lie algebra of modes \ref{dg_KK}. We obtain the following form of BV action functional for the KK fields:
	\begin{equation}
		\int_{3d} \Tr( \bm{\mathcal{B}} \wedge (\hat{d} \bm{\mathcal{A}} + \frac{1}{2}[\bm{\mathcal{A}} ,\bm{\mathcal{A}} ]) + \bm{\mathcal{A}} \wedge \pacr \bm{\mathcal{A}}  ).
	\end{equation}
	Now $\Tr$ is understood as the killing pairing of $\mathfrak{g}$ combined with $\Tr_{S^3}$. Varying this action functional we can reproduce the equation of motion \ref{Eom_naiveKK}.
	
	As we have mentioned in the introduction, we can further simplify this field content by integrating out massive fields. We will see in the next section that the effective interaction of this theory is encoded in the $A_\infty$ structure of the cohomology $H^{0,\sbullet}_{CR}(S^3)$ of the tangential Cauchy Riemann complex.

	\subsection{A cohomological approach to KK reduction}
	\label{sec:dir_KK}
	In the BV formalism, we can use a dg Lie algebra, or more generally an $L_{\infty}$ algebra to encode the structure of the classical field theory. For instance, the holomorphic Chern-Simons theory is described by the dg Lie algebra $(\Omega^{0,\sbullet}(X),\pab,[-,-])$. More generally, we consider an $L_\infty$ algebra $(\mathcal{E},l_1,l_2,\dots)$ \footnote{More precisely, a field theory on a manifold $M$ should correspond to a sheaf of $L_\infty$ algebra on $M$.}. A symmetric pairing $\langle-,-\rangle$ on $\mathcal{E}$ is called cyclic if, for all $n$, the linear map 
	\begin{equation}
		\alpha_0\otimes \dots \otimes \alpha_n \mapsto \langle\alpha_0,l_n(\alpha_1,\dots,\alpha_n)\rangle,\quad \alpha_i \in \mathcal{E}
	\end{equation}
	is (graded) invariant under cyclic permutation of $\alpha_i$'s. For the holomorphic Chern-Simons theory, this cyclic pairing is provided by the integration map and the symmetric pairing of the Lie algebra $\mathfrak{g}$. Given an $L_\infty$ algebra with such a pairing, one can build a classical field theory, with action functional
	\begin{equation}\label{act_Linf}
		S[\alpha] = \sum_{n = 1}^\infty \frac{1}{(n+1)!}\langle \alpha ,l_n(\alpha,\dots,\alpha)\rangle .
	\end{equation}
	In fact, many field theories can be described in this language. We refer to \cite{Hohm:2017pnh} for a review.
	
	By varying the action functional, we obtain the equation of motion
	\begin{equation}
		\sum_{n = 1}^\infty \frac{1}{n!} l_n(\alpha,\dots,\alpha) = 0 .
	\end{equation}
	This is also known as the \MC equation associated with the $L_\infty$ algebra $\mathcal{E}$. Elements that solve the \MC equation (equation of motion) are called \MC elements. Gauge transformation can be defined by exponentiating the infinitesimal gauge transformation
	\begin{equation}
		\delta_c \alpha = l_1(c) + \sum_{n\geq 2}\frac{1}{(n-1)!}l_n(c,\alpha,\dots,\alpha).
	\end{equation}
 Classical field theory concerns the space of equivalent class of solutions to the equation of motion module gauge:
	\begin{equation}
	 \{ \text{Solution to the equation of motion}\}/\{\text{Gauge transform}\},
	\end{equation}
	while mathematician considers the deformation functor associated with the $L_\infty$ algebra defined in the above manner. One important lesson we learned from deformation theory is that for two $L_{\infty}$ algebra that is $L_\infty$ quasi-isomorphic to each other, their corresponding spaces of equivalence class of \MC elements are the same (see e.g. \cite{Kontsevich:1997vb}). Physically, this means that we can think of the two classical field theories defined by two $L_\infty$ quasi-isomorphic $L_\infty$ algebras to be the same. 
	
	Now, we apply these ideas to our problem. In the previous section, we find an isomorphism of dg Lie algebra
	\begin{equation}
		K : \;\;(\Omega^{0,\sbullet}(\C^3\backslash\C)\otimes\mathfrak{g} ,d, [-,-]) = (\Omega_{3d}^{\sbullet}(\C \times \R_{> 0})\otimes 	\Omega_b^{0,\bullet}(S^3)\otimes\mathfrak{g}, \hat{d}+ \pacr , [-,-] )\;,
	\end{equation}
	The dg Lie algebra $\Omega_{3d}^{\sbullet}(\C \times \R_{> 0})\otimes \Omega_b^{0,\bullet}(S^3)\otimes\mathfrak{g}$ provide us a naive form of KK reduction of holomorphic Chern-Simons theory studied in the last section.
	
	We can proceed further, by taking the $\pacr$ cohomology. This gives us a quasi-isomorphism of complexes:
	\begin{equation}
		(\Omega^{0,\sbullet}(\C^3\backslash\C)\otimes\mathfrak{g} ,d) \rightarrow (\Omega_{3d}^{\sbullet}(\C \times \R_{> 0})\otimes H^{0,\sbullet}_{b}(S^3)\otimes\mathfrak{g},\hat{d}),
	\end{equation}
	where $H^{0,\sbullet}_{b}(S^3)$ is the cohomology of the complex $(\Omega_b^{0,\bullet}(S^3),\pacr)$ called tangential Cauchy Riemann cohomology. We will show in Section \ref{sec:SDR} that  
	\begin{equation}
		\begin{aligned}
			H_{b}^{0,0}(S^3) & = \bigoplus_{j \in \frac{1}{2}\Z_{\geq 0}} \mathcal{H}_{j,0} = \C[w_1,w_2],\\
			H_{b}^{0,1}(S^3) &=\bigoplus_{\bar{j}\in \frac{1}{2}\Z_{\geq 0}} \mathcal{H}_{0,\bar{j}}\epsilon = \C[\bar{w}_1,\bar{w}_2]\epsilon.
		\end{aligned}
	\end{equation}
	The homotopy transfer theorem \cite{Kadeishvili1980ONTH} equipped the space $H^{0,\sbullet}_{b}(S^3)\otimes\mathfrak{g}$ an $L_{\infty}$ structure, that makes the above quasi-isomorphism an $L_\infty$ quasi-isomorphism. 
		\begin{equation}
		L_\infty\text{ - q.iso: }\;(\Omega^{0,\sbullet}(\C^3\backslash\C)\otimes\mathfrak{g} ,d,[-,-])\rightarrow (\Omega_{3d}^{\sbullet}(\C \times \R_{> 0})\otimes H^{0,\sbullet}_{b}(S^3)\otimes\mathfrak{g},\hat{d},l_2,l_3,\dots).
	\end{equation}
	Therefore, we can equivalently work with the theory defined by the $L_\infty$ algebra $(\Omega_{3d}^{\sbullet}(\C \times \R_{> 0})\otimes H^{0,\sbullet}_{b}(S^3)\otimes\mathfrak{g},\hat{d},l_2,l_3,\dots)$. Physically, this corresponds to integrating out all the cohomologically trivial KK modes. The advantage of using this theory is that it has a much smaller space of fields than our naive form of KK reduction. The BV fields for this theory consist of  
\begin{equation}
\begin{aligned}
		\mathbf{A}& \in \Omega_{3d}^{\sbullet}(\C \times \R_{> 0})\otimes\mathfrak{g}[w_1,w_2] [1],\\
		\mathbf{B} &\in \Omega_{3d}^{\sbullet}(\C \times \R_{> 0})\otimes \mathfrak{g}[\bar{w}_1,\bar{w}_2]\epsilon.
\end{aligned}
\end{equation}
	Using the volume form constructed in \ref{sec:dir_KK} and the $L_{\infty}$ structure $(H^{0,\sbullet}_{b}(S^3)\otimes\mathfrak{g},\hat{d},l_2,l_3,\dots)$, we can write down the action functional of this theory using the prescription \ref{act_Linf}
\begin{equation}
	 \int_{3d}\Tr \mathbf{B}\left(\hat{d}\mathbf{A} + \sum_{n = 2}^{\infty} \frac{1}{2(n - 1)!} l_n(\mathbf{A},\mathbf{A},\mathbf{B},\dots,\mathbf{B}) \right) .
	\end{equation}
The classical effective interactions for this KK theory is completely encoded in the $L_{\infty}$ structure maps $\{l_n\}$.

For our later computation, we need to explicitly expand the KK fields. Suppose $\{t_a\}_a$ is a basis of the Lie algebra. Let $K_{ab}$ be the corresponding symmetric pairing, and $K^{ab}$ the inverse. We define $t^a = \sum_{b}K^{ab}t_b$ the dual basis. Then we expand our KK fields as follows
	\begin{equation}
		\begin{aligned}
			\mathbf{A} &= \sum_{p,q \geq 0} \sum_a \mathbf{A}^a[p,q]t_aw_1^pw_2^q,\\
			\mathbf{B} &= \sum_{p,q \geq 0} \sum_a \frac{(p+q+1)!}{p!q!}\mathbf{B}_a[p,q]t^a\bar{w}_1^p\bar{w}_2^q.\\
		\end{aligned}
	\end{equation}
With this choice of normalization, the kinetic terms of the BV action can be written as
\begin{equation}
	\sum\int_{3d} \mathbf{B}_a[p,q]\hat{d}\mathbf{A}^a[p,q].
\end{equation}
	The $2$-bracket $l_2$ on $H^{0,\sbullet}_{b}(S^3)\otimes\mathfrak{g}$ is simply given by the product on $H^{0,\sbullet}_{b}(S^3)$ and the Lie bracket on $\mathfrak{g}$. Therefore we can write the cubic term of the BV action as follows
	\begin{equation}
		\sum\int_{3d} f_{ab}^c\mathbf{B}_c[p+r,q+s]\mathbf{A}^a[p,q]\mathbf{A}^b[r,s].
	\end{equation}
	
	To access the higher order interactions, one need to obtain the higher brackets $l_n$ on $H^{0,\sbullet}_{b}(S^3)\otimes\mathfrak{g}$. There are at least two approaches. The first one is to employ homotopy transfer directly to the dg Lie algebra $(\Omega^{0,\sbullet}_{CR}(S^3)\otimes\mathfrak{g},\pacr, [-,-])$. The higher brackets $l_n$ are obtained by summing over maps constructed from binary rooted trees, which we provide more details in \ref{apx:Hom_trans}.
	
	The other approach is to note that the dg Lie algebra $(\Omega^{0,\sbullet}_{b}(S^3)\otimes\mathfrak{g},\pacr, [-,-])$ comes from the tensor product of the dg commutative algebra $(\Omega^{0,\sbullet}_{b}(S^3),\pacr, \wedge)$ and the Lie algebra $\mathfrak{g}$. Then one can apply the homotopy transfer theorem to the dg algebra $(\Omega^{0,\sbullet}_{b}(S^3),\pacr, \wedge)$. This will gives us a $A_\infty$ algebra (in fact a $C_\infty$ algebra)
	\begin{equation}
		(H^{0,\sbullet}_{b}(S^3),m_2,m_3,\dots).
	\end{equation}
	Then we consider the tensor product of the $C_\infty$ algebra $H^{0,\sbullet}_{b}(S^3)$ and the Lie algebra $\mathfrak{g}$. As is shown in \cite{2017arXiv170202194R} in a more general context, there is a canonical $L_\infty$ structure on the tensor product $H^{0,\sbullet}_{b}(S^3)\otimes \mathfrak{g}$. Moreover, \cite{2017arXiv170202194R} showed that the $L_\infty$ structure defined by the tensor product is equivalent to the $L_\infty$ structure defined by directly applying the homotopy transfer. 
	
		In this paper, we follow the second approach that first compute the $A_\infty$ structure $\{m_n\}_{n\geq 2}$ on $H^{0,\sbullet}_{b}(S^3)$. We will give an explicit expression that computes all the higher products $m_n$ in Section \ref{sec:all_mn}, where we also provide a complete expansion of all the higher order effective interactions.

	\subsection{Relation with holomorphic twist of $3d$ $\mathcal{N} = 2$ theory}
	\label{sec:KK_HT}
	In this section, we review the holomorphic topological twist of $3d$ $\mathcal{N} = 2$ supersymmetric field theory following \cite{Costello:2020ndc}.
	
	For a $3d$ $\mathcal{N} = 2$ vector multiplet, the holomorphic twisted theory has the following field content
	\begin{equation}
		\begin{aligned}
			\mathbf{A}  &\in \Omega_{3d}^{\sbullet}\otimes \mathfrak{g}[1],\\
			\mathbf{B}  &\in \Omega_{3d}^{\sbullet,(1)}\otimes \mathfrak{g}^*.
		\end{aligned}
	\end{equation}
	The action functional is
	\begin{equation}\label{3d_vec_act}
		S = \int_{3d} \Tr \left(\mathbf{B}(\hat{d}\mathbf{A} + \frac{1}{2}[\mathbf{A},\mathbf{A}])\right).
	\end{equation}
	
	For the KK theory we obtained in the last section, we let $\epsilon \to dz$, and note that $\C[\bar{w}_1,\bar{w}_2]$ is identified with the dual of $\C[w_1,w_2]$ using the pairing $\Tr_{S^3}$. After these identifications, the field content of the KK theory can be written as follows
	\begin{equation}
		(\mathbf{A},\mathbf{B}) \in \Omega_{3d}^{\sbullet}\otimes \mathfrak{g}[w_1,w_2][1] \oplus \Omega_{3d}^{\sbullet,(1)}\otimes \mathfrak{g}[w_1,w_2]^*.
	\end{equation}
	 If we consider only the differential $\hat{d}$ and the bracket $l_2$, we get exactly the action functional of the standard twisted $3d$ $\mathcal{N} = 2$ theory \ref{3d_vec_act}. This leads us to the following result, generalizing the statement in \cite{Costello:2022wso}
	
	\begin{prop}
		The three-dimensional theory arising from the KK reduction of the holomorphic Chern-Simons theory on $\C^3\backslash\C$ is the same as the holomorphic topological twist of $3d$ $\mathcal{N} = 2$ theory, associated to a $\mathcal{N} = 2$ vector multiplet valued in $\mathfrak{g}[w_1,w_2]$, together with the following deformation
		\begin{equation}
			\sum_{n = 3}^{\infty} \frac{1}{2(n - 1)!}\int \Tr(\mathbf{B}\wedge l_n(\mathbf{A},\mathbf{A},\mathbf{B},\dots,\mathbf{B})).
		\end{equation}
	\end{prop}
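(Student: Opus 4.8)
The plan is to prove the proposition by a direct comparison of the two classical BV theories, using the explicit description of the transferred $L_\infty$ structure already assembled above; the only inputs needed are Proposition \ref{K}, Proposition \ref{prop_int}, the computation of $H_b^{0,\sbullet}(S^3)$ from Section \ref{sec:SDR}, and the homotopy transfer theorem.

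\textbf{Step 1: matching the field content.} First I would identify the fields. The cohomological KK reduction produced $\mathbf{A}\in\Omega_{3d}^{\sbullet}(\C\times\R_{>0})\otimes\mathfrak{g}[w_1,w_2][1]$ and $\mathbf{B}\in\Omega_{3d}^{\sbullet}(\C\times\R_{>0})\otimes\mathfrak{g}[\bar{w}_1,\bar{w}_2]\epsilon$, using $H_b^{0,0}(S^3)=\C[w_1,w_2]$ and $H_b^{0,1}(S^3)=\C[\bar{w}_1,\bar{w}_2]\epsilon$. Substituting $\epsilon\mapsto dz$ turns the $H_b^{0,1}$-factor into the $(dz)$-twisted forms $\Omega_{3d}^{\sbullet,(1)}$, and the surface-integral pairing $\Tr_{S^3}$ of Proposition \ref{prop_int} identifies $\C[\bar{w}_1,\bar{w}_2]$ with the graded dual of $\C[w_1,w_2]$ — the normalization constants $\tfrac{(p+q+1)!}{p!q!}$ in the expansion of $\mathbf{B}$ are precisely the ones making $\{\bar{w}_1^p\bar{w}_2^q\}$ dual to $\{w_1^pw_2^q\}$ under $\Tr_{S^3}$. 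Hence, writing $\hat{\mathfrak{g}}:=\mathfrak{g}[w_1,w_2]$, the KK field content becomes $(\mathbf{A},\mathbf{B})\in\Omega_{3d}^{\sbullet}\otimes\hat{\mathfrak{g}}[1]\oplus\Omega_{3d}^{\sbullet,(1)}\otimes\hat{\mathfrak{g}}^*$, which is exactly the field content of the holomorphic topological twist of the $3d$ $\mathcal{N}=2$ vector multiplet valued in $\hat{\mathfrak{g}}$.

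\textbf{Step 2: the undeformed part of the action.} Next I would isolate the $l_1$ and $l_2$ contributions in the action written via the prescription \ref{act_Linf}. The transferred differential $l_1$ on $\Omega_{3d}^{\sbullet}\otimes H_b^{0,\sbullet}(S^3)\otimes\mathfrak{g}$ is simply $\hat{d}$, since $\pacr$ vanishes on its own cohomology, giving the kinetic term $\int_{3d}\Tr(\mathbf{B}\wedge\hat{d}\mathbf{A})$. The bracket $l_2$ is the product $m_2$ on $H_b^{0,\sbullet}(S^3)$ tensored with the Lie bracket of $\mathfrak{g}$; on $H_b^{0,0}=\C[w_1,w_2]$ the product of two holomorphic monomials is again holomorphic, hence harmonic, so $m_2$ is ordinary polynomial multiplication, while on $H_b^{0,0}\otimes H_b^{0,1}$ it is multiplication followed by harmonic projection, which under the dualization of Step 1 is the coadjoint action of $\hat{\mathfrak{g}}$ on $\hat{\mathfrak{g}}^*$. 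Feeding these into \ref{act_Linf} reproduces exactly the cubic vertex $\sum\int_{3d}f_{ab}^{c}\mathbf{B}_c[p+r,q+s]\mathbf{A}^a[p,q]\mathbf{A}^b[r,s]$ recorded above, i.e. $\tfrac12\int_{3d}\Tr(\mathbf{B}\wedge[\mathbf{A},\mathbf{A}])$ for the bracket of $\hat{\mathfrak{g}}$. Thus the $l_1,l_2$ part of the action is precisely \ref{3d_vec_act} for the vector multiplet valued in $\mathfrak{g}[w_1,w_2]$.

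\textbf{Step 3: the deformation and its coefficient.} Finally I would rewrite the remaining terms $\sum_{n\geq 3}\tfrac{1}{(n+1)!}\langle\alpha,l_n(\alpha,\dots,\alpha)\rangle$ with $\alpha=\mathbf{A}+\mathbf{B}$. Because $H_b^{0,\sbullet}(S^3)$ is concentrated in degrees $0$ and $1$ and the $C_\infty$ products $m_n$ carry degree $2-n$, a degree count forces every nonzero component of $l_n$ to consume either $2$ inputs from $H^{0,0}$ and $n-2$ from $H^{0,1}$ with output in $H^{0,1}$, or $1$ from $H^{0,0}$ and $n-1$ from $H^{0,1}$ with output in $H^{0,0}$. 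Pairing the output against the outer $\alpha$ through $\langle-,-\rangle$, which pairs $H^{0,0}$ with $H^{0,1}$, selects in both cases the configuration with $2$ copies of $\mathbf{A}$ and $n-1$ copies of $\mathbf{B}$ in total; cyclic invariance of the pairing identifies these two contributions, and collecting the binomial factors from expanding $(\mathbf{A}+\mathbf{B})^{\otimes n}$ against $\tfrac{1}{(n+1)!}$ produces the coefficient $\tfrac{1}{2(n-1)!}$ in front of $\int\Tr(\mathbf{B}\wedge l_n(\mathbf{A},\mathbf{A},\mathbf{B},\dots,\mathbf{B}))$, as claimed.

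\textbf{Main obstacle.} The step that requires the most care is Step 2: verifying that the transferred $l_2$ on the tangential Cauchy–Riemann cohomology is, under the identification $H_b^{0,1}(S^3)\cong H_b^{0,0}(S^3)^*$ furnished by $\Tr_{S^3}$, \emph{literally} the bracket-plus-coadjoint structure of $\mathfrak{g}[w_1,w_2]$ rather than merely quasi-isomorphic to it. This is where the explicit normalization constants $\tfrac{(p+q+1)!}{p!q!}$ and the harmonic projection formulas of Appendix \ref{apx:Har} do the real work; once that is in place, the rest is bookkeeping with $L_\infty$ signs and the cyclic symmetry of the pairing.
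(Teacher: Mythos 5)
Your proposal is correct and follows essentially the same route as the paper: identify the field content via $\epsilon\to dz$ and the $\Tr_{S^3}$-duality between $\C[\bar w_1,\bar w_2]$ and $\C[w_1,w_2]$, observe that $\hat d$ and $l_2$ reproduce the standard HT-twisted vector-multiplet action \ref{3d_vec_act}, and attribute the rest to the transferred higher brackets with the coefficient $\tfrac{1}{2(n-1)!}$ coming from the combinatorics of expanding \ref{act_Linf}. Your Step 3 degree count and the cyclic identification of the two surviving configurations are exactly the arguments the paper spells out slightly later in Section \ref{sec:all_mn}, so nothing is missing.
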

	\begin{remark}
		Given an $L_\infty$ algebra, we can also consider the corresponding twisted $3d$ $\mathcal{N} = 2$ theory. However, this is not the same type of  theory as the KK theory we studied. It might be easier to see the distinction using the AKSZ formalism. The mapping space that corresponds to a twisted vector multiplet with gauge Lie algebra $\mathfrak{g}$ is given by $\mathrm{Map}(\R_{dR}\times \C_{\pab},T^*[1]\mathfrak{g})$. The higher order interactions we studied correspond to the $L_\infty$ deformations on $T^*[1]\mathfrak{g}$ instead of on $\mathfrak{g}$ alone.
	\end{remark}
	We mention that the KK theory before we pass to the CR cohomology can be formulated in a similar fashion. According to our discussion in \ref{sec:dir_KK}, the corresponding $3d$ $\mathcal{N} = 2$ theory is associated to a vector multiplet valued in the Lie algebra $ \bigoplus_{p,q}\mathfrak{g}\otimes \mathcal{H}_{p,q}$, together with the following deformation
	\begin{equation}
		\int_{3d} \Tr \left(\bm{\mathcal{A}}\pacr \bm{\mathcal{A}}\right).
	\end{equation}
	Though this formulation of theory has a much simpler action functional, we still need to work with the theory build on the CR cohomology for computational convenience.
	
	For future propose, we also introduce the HT twist of $\mathcal{N} = 2$ chiral multiplet. For chiral multiplet, the holomorphic twist depends on the choice of $R$-symmetry that we assign to the field. Suppose the chiral fields live on the space $V$, which decompose according to $R$ charges: $V = \oplus_rV^{(r)}$. Then a twisted chiral multiplet consists of the following field content
	\begin{equation}
			\begin{aligned}
		\bm{\Phi}  &\in \oplus_r \Omega_{3d}^{\sbullet,(r)}\otimes V^{(r)},\\
		\bm{\Psi} &\in \oplus_r \Omega_{3d}^{\sbullet,(1-r)}\otimes (V^{(r)})^*.
	\end{aligned}
	\end{equation}
 The BV action functional involving both vector and chiral multiplets and their coupling is given by
	\begin{equation}\label{3d_vec_chi_act}
	S = \int_{3d} \Tr \left(\mathbf{B}(\hat{d}\mathbf{A} + \frac{1}{2}[\mathbf{A},\mathbf{A}])+ \bm{\Psi}(\hat{d}\bm{\Phi} + [\mathbf{A},\bm{\Phi}])\right).
\end{equation}
	Later we will consider the case when the chiral multiplet is valued in the adjoint representation of the lie algebra $\mathfrak{g}$. In this case, the chiral multiplet has field content
	\begin{equation}
	\begin{aligned}
		\bm{\Phi}  &\in \Omega_{3d}^{\sbullet,(1)}\otimes \mathfrak{g},\\
		\bm{\Psi} &\in \Omega_{3d}^{\sbullet}\otimes \mathfrak{g}^*.
	\end{aligned}
\end{equation}
Together with the $\mathcal{N} = 2$ vector multiplet, this theory actually consists of a $3d$ $\mathcal{N} = 4$ vector multiplet. We will consider a theory of this type in later section.

\section{$A_\infty$ structure on the tangential Cauchy-Riemann cohomology}
	\label{sec:A_inf}
In this section, we study more details about the tangential Cauchy Riemann complex on $S^3$. We present one approach to understand the $A_\infty$ structure $\{m_n\}_{n\geq 2}$ on $H_b^{0,\sbullet}(S^3)$ and provide a general formula for all $m_n$ in Section \ref{sec:all_mn}. We also provide a formula for the $L_\infty$ structure on the tensor product $H_b^{0,\sbullet}(S^3)\otimes \mathfrak{g}$. This will lead us to a complete understanding of the classical effective action of the KK theory.
	\subsection{A special deformation retract}
	\label{sec:SDR}
 First, it will be convenient to choose an orthonormal basis for each space $\mathcal{H}_{j,\bar{j}}$. We use the surface integral $\int_{S^3}d\sigma_{S^3}$ to define a Hermitian form on $\mathcal{H} = \oplus \mathcal{H}_{j,\bar{j}}$ by
 \begin{equation}
 \label{sur_pair}
 \langle a,b \rangle : = \int_{S^3}d\sigma_{S^3}\bar{a}b.
 \end{equation}
Under the $SU(2)$ action, there is a canonical choice of basis generated by the highest weight vector. For the space $\mathcal{H}_{j,0}$, the corresponding orthonormal basis is given by
\begin{equation}
	\left\lbrace e_m^{(j)}: =\sqrt{\frac{(2j+1)!}{(j+m)!(j-m)!}} w_1^{j+m}w_2^{j - m}\mid -j \leq m\leq  j\right\rbrace,
\end{equation}
with $e^{(j)}_j = \sqrt{2j + 1} w_1^{2j}$ the highest weight vector. For the space $\mathcal{H}_{0,\bar{j}}$, the corresponding orthonormal basis is given by
\begin{equation}
	\left\lbrace \bar{e}^{(\bar{j})}_{\bar{m}} := \sqrt{\frac{(2\bar{j}+1)!}{(\bar{j} + \bar{m})!(\bar{j} - \bar{m})!}} \bar{w}_2^{\bar{j} + \bar{m}}(-\bar{w}_1)^{\bar{j} - \bar{m}}\mid -\bar{j} \leq \bar{m}\leq  \bar{j} \right\rbrace.
\end{equation}
For $\mathcal{H}_{j,\bar{j}}$ with $j,\bar{j} \neq 0$, we denote the corresponding orthonormal basis by
\begin{equation}
		\left\lbrace e^{(j,\bar{j})}_m\mid -(j+\bar{j})\leq m \leq j+\bar{j}\right\rbrace.
\end{equation}
The normalized highest weight vector can be expressed as
\begin{equation}\label{ortho_basis_jjb}
	e^{(j,\bar{j})}_{j+\bar{j}} = \sqrt{\frac{(2j+ 2\bar{j} +1)!}{(2j)!(2\bar{j})!}} w_1^{2j}\bar{w}_2^{2\bar{j}}.
\end{equation}
Other elements of this basis are uniquely determined by the $SU(2)$ action. We will postpone writing down explicitly this orthonormal basis in terms of harmonic polynomials until the next section. 

We want to understand how the Cauchy Riemann differential $\pacr$ behaves under the harmonic decomposition. By definition, $\xi_i$ transform under the anti-fundamental representation of $SU(2)$, and $\frac{\pa}{\pa \bar{w}_i}$ transform under the fundamental representation of $SU(2)$. Therefore, the operator $\pacr$ commute with $SU(2)$. By Schur's lemma, on each irreducible subspace, $\pacr$ is either zero or a scalar multiple of the identity onto an irreducible subspace of the same $SU(2)$ representation.

Therefore, it suffices to look at the action of $\pacr$ on the highest weight vector on each irreducible subspace. We have
\begin{equation}
	\pacr (w_1^{2j}\bar{w}_2^{2\bar{j}})= 2\bar{j}w_1^{2j}\bar{w}_2^{2\bar{j} - 1}\xi_2 = 2\bar{j}w_1^{2j + 1}\bar{w}_2^{2\bar{j} - 1}\epsilon.
\end{equation}
We find that
\begin{equation}
	\pacr : \begin{array}{l}
		\mathcal{H}_{j,0} \to 0\\
		\mathcal{H}_{j,\bar{j}} \overset{\simeq}{\to} \mathcal{H}_{j+ \frac{1}{2},\bar{j} - \frac{1}{2}}
	\end{array}.
\end{equation}
Under the orthonormal basis, the Cauchy Riemann differential $\pacr$ is given by the following constant
\begin{equation}
	\pacr|_{\mathcal{H}_{j,\bar{j}} \to \mathcal{H}_{j+ \frac{1}{2},\bar{j} - \frac{1}{2}}} = \sqrt{2\bar{j}(2j+1)}.
\end{equation}

Given this knowledge about $\pacr$, it is easy to compute the tangential Cauchy Riemann cohomology of $S^3$
\begin{equation}
	\begin{aligned}
		H_{b}^{0,0}(S^3) & = \bigoplus_{j \in \frac{1}{2}\Z_{\geq 0}} \mathcal{H}_{j,0},\\
		H_{b}^{0,1}(S^3) &=\bigoplus_{\bar{j}\in \frac{1}{2}\Z_{\geq 0}} \mathcal{H}_{0,\bar{j}}\epsilon.
	\end{aligned}
\end{equation}

We define an operator $h: \Omega_b^{0,1}(S^3) \to \Omega_b^{0,0}(S^3) $ by "inverse" of $\pacr$ as follows
\begin{equation}
	h: \begin{array}{l}
		\mathcal{H}_{0,\bar{j}}\epsilon \to 0\\
		\mathcal{H}_{j,\bar{j}}\epsilon \overset{\simeq}{\to} \mathcal{H}_{j- \frac{1}{2},\bar{j} + \frac{1}{2}}
	\end{array},
\end{equation}
so that $h$ acts on the highest weight vector by
\begin{equation}
	h(w_1^{2j}\bar{w}_2^{2\bar{j}}\epsilon) = \frac{1}{2\bar{j}+1}w_1^{2j - 1}\bar{w}_2^{2\bar{j} + 1} .
\end{equation}
Under the orthonormal basis \ref{ortho_basis_jjb}, $h$ is given by the following constant
\begin{equation}
	h_{j,\bar{j}}: = h|_{\mathcal{H}_{j,\bar{j}}\epsilon \to \mathcal{H}_{j-\frac{1}{2},\bar{j}+\frac{1}{2}}}  = \frac{1}{\sqrt{2j(2\bar{j} + 1)}}.
\end{equation}
This operator $h$ will play the role of "propagator" in obtaining effective interactions as we integrate out the massive fields.
We can verify that 
\begin{equation}
	i\circ p- 1  = \pacr \circ h + h\circ \pacr,
\end{equation}
where $i$ and $p$ are the standard inclusion and projection between $\Omega_b^{0,\sbullet}(S^3)$ and $H^{0,\sbullet}_{CR}(S^3)$.

We can also verify that
\begin{equation}
	h\circ i  = 0,\; p\circ h = 0,\; h\circ h = 0.
\end{equation}
In mathematical language, we have constructed a special deformation retract (SDR)
\begin{equation}
	h\curved (\Omega_b^{0,\sbullet}(S^3),\pacr)\overset{p}{\underset{i}\rightleftarrows} (H^{0,\sbullet}_{CR}(S^3) , 0).
\end{equation}
We give more details about the definition of SDR in Appendix \ref{apx:Hom_trans}, where we also provide its connection with homotopy transfer theorem. 

	It is important to note that the SDR we constructed is compatible with the cyclic structure on $\Omega_{b}^{0,\sbullet}(S^3)$. One easily check that $\Tr_{S^3}(h(a)b) = \Tr_{S^3}(ah(b))$. Under this extra condition, it was shown in \cite{Kajiura:2003ax} that the transferred $A_\infty$ structure on the cohomology is also a cyclic $A_\infty$ algebra. The bilinear pairing on the cohomology is simply given by the restriction of the original bilinear pairing.

\subsection{Product rule of $S^3$ harmonics}
\label{sec:pro_Har}

As we have mentioned, the product rule of $S^3$ harmonic polynomials encodes the interaction of the KK theory. In this section, we analyze the product rule. 
\begin{equation}
	M: \mathcal{H}_{j_1,\bar{j}_1}\otimes \mathcal{H}_{j_2,\bar{j}_2} \to \mathcal{H}_{j_1+j_2,\bar{j}_1+\bar{j}_2}\oplus\mathcal{H}_{j_1+j_2 - \frac{1}{2},\bar{j}_1+\bar{j}_2 - \frac{1}{2}}\oplus \dots
\end{equation}

We first analyze some special cases.
\paragraph{Product on the CR cohomology}
By projecting to the CR cohomology, the $2$-product is simplified. We can identify $H_{b}^{0,0}(S^3) = \C[w_1,w_2]$ and the product on this subspace is the commutative product of the polynomial algebra. By degree reason, the only other nonzero product is $H_{b}^{0,0}(S^3)\otimes H_{b}^{0,1}(S^3) \to  H_{b}^{0,1}(S^3) $, which is given by
\begin{equation}
	m_2 = p\circ M:\; 	\mathcal{H}_{j,0}\otimes \mathcal{H}_{0,\bar{j}} \epsilon \to \begin{cases}
		0 & \text{ if } j > \bar{j},\\
		\mathcal{H}_{0,\bar{j} - j}\epsilon & \text{ if } j \leq \bar{j}.
	\end{cases}
\end{equation}

Using orthogonality of harmonic polynomials under the surface measure $\int_{S^3}d\sigma_{S^3}$, we find that
\begin{equation}\label{2-pro-01}
	m_2(w_1^pw_2^q,\frac{(r+s + 1)!}{r!s!}\bar{w}_1^{r}\bar{w}_2^{s}\epsilon) = \frac{(r+s - p - q +1)!}{(r-p)!(s-q)!}\bar{w}_1^{r-p}\bar{w}_2^{s-q}\epsilon.
\end{equation}
\begin{remark}
	Alternatively, we can identify $\C[\bar{w}_1,\bar{w_2}]$ with the dual of $H_{b}^{0,0}(S^3) = \C[w_1,w_2]$ via the pairing \ref{sur_pair}. The commutative product of $\C[w_1,w_2]$ induce a dual map
	\begin{equation}
		m_2' : H_{b}^{0,0}(S^3)  \otimes H_{b}^{0,1}(S^3) \to H_{b}^{0,1}(S^3) ,
	\end{equation}
	which is defined by $m_2'(f,\bar{g}\epsilon)(h) = \Tr_{S^3} (f\cdot h\cdot\bar{g}\epsilon)$ for $f \in H_{b}^{0,0}(S^3)$ and $\bar{g}\epsilon \in H_{b}^{0,1}(S^3)$.
	
	The fact that $m_2'$ and $m_2$ are the same is a consequence of the cyclic property of $m_2$: $\Tr_{S^3} (hm_2(f,\bar{g}\epsilon)) = \Tr_{S^3}(\bar{g}\epsilon m_2(f,h))$. This fact will be particularly useful when we study the higher products $m_n$ on the CR cohomology. 
\end{remark}

\paragraph{Product on $\mathcal{H}_{j,0}\otimes \mathcal{H}_{0,\bar{j}}$}
We want to compute the product of two general $S^3$ harmonics in $\oplus \mathcal {H}_{j,\bar{j}}$. Note that harmonic polynomials are just certain polynomials of the variable $w_i$ and $\bar{w}_i$. The product in the subalgebra $\C[w_i]$ and $\C[\bar{w}_i]$ respectively are easy and is the commutative product of polynomial. The only difficult part is the product between $\C[w_i]$ and $\C[\bar{w}_i]$ after harmonic decomposition. Therefore, we first compute the product $M$ restricted to $\mathcal{H}_{j,0}\otimes \mathcal{H}_{0,\bar{j}}$:
\begin{equation}
	M: \mathcal{H}_{j,0}\otimes \mathcal{H}_{0,\bar{j}} \to \mathcal{H}_{j,\bar{j}}\oplus  \mathcal{H}_{j - \frac{1}{2},\bar{j} - \frac{1}{2}} \oplus \cdots  \oplus \begin{cases}
		\mathcal{H}_{j-\bar{j},0} & j > \bar{j}\\
		\mathcal{H}_{0,\bar{j} -j} & j \leq \bar{j}
	\end{cases}.
\end{equation}
Denote $\pi_j = \mathcal{H}_{j,0}$. Note that $\{\pi_j\}_{j\in \frac{1}{2}\Z_{\geq 0}}$ enumerate all irreducible representation of $SU(2)$. We have an isomorphism
\begin{equation}
	\phi_{j,\bar{j}}: \pi_{j+\bar{j}} \overset{\cong}{\to}\mathcal{H}_{j,\bar{j}}
\end{equation}
as $SU(2)$ representation. Using the orthogonal basis of  $\pi_j$ and $\mathcal{H}_{j,\bar{j}}$ defined in the last section, we have $e^{(j,\bar{j})}_m = \phi_{j,\bar{j}}(e^{(j+\bar{j})}_m)$. 

Note that we have the following tensor product rule of $SU(2)$ representations
\begin{equation}
	CG: \pi_{j} \otimes \pi_{\bar{j}} \cong \pi_{j+\bar{j}} \oplus \pi_{j+\bar{j} - 1} \oplus \cdots \oplus \pi_{|j-\bar{j}|}.
\end{equation}
The matrix elements of the above isomorphism in the orthogonal basis are given by the $SU(2)$ Clebsch-Gordan coefficients $C^{j_1,j_2;j_3}_{m_1,m_2;m_3}$. Then we consider the map
\begin{equation}
	M\circ CG^{-1} : \pi_{j+\bar{j}} \oplus \pi_{j+\bar{j} - 1} \oplus \cdots \oplus \pi_{|j-\bar{j}|} \to \mathcal{H}_{j,\bar{j}}\oplus  \mathcal{H}_{j - \frac{1}{2},\bar{j} - \frac{1}{2}} \oplus \cdots  \oplus \begin{cases}
		\mathcal{H}_{j-\bar{j},0} & j > \bar{j}\\
		\mathcal{H}_{0,\bar{j} -j} & j \leq \bar{j}
	\end{cases}.
\end{equation}
Since both $M$ and $CG$ intertwine the $SU(2)$ action, by Shur's lemma $M\circ CG^{-1}$ must be a constant multiple of identity on each irreducible subspace. Therefore, we have the following diagram

\begin{center}
	\begin{tikzpicture}[scale=0.7]
		\node (0) at (0,0) {$ {H}_{j,0}\otimes \mathcal{H}_{0,\bar{j}}$};
		\node (10) at (-3,-2) {$ \pi_{j+\bar{j}}$};
		\node (11) at (3,-2) {$ \mathcal{H}_{j,\bar{j}}$};
		\node at (-3,-2.75) {$\oplus$};
		\node at (3,-2.75) {$\oplus$};
		\node (20) at (-3,-3.5) {$ \pi_{j+\bar{j} - 1}$};
		\node (21) at (3,-3.5) {$ \mathcal{H}_{j - \frac{1}{2},\bar{j} - \frac{1}{2}}$};
		\node at (-3,-4.25) {$\oplus$};
		\node at (3,-4.25) {$\oplus$};
		\node at (-3,-4.8) {$\vdots$};
		\node at (3,-4.8) {$\vdots$};
		\node (30) at (-3,-5.7) {$ \pi_{j+\bar{j}- k}$};
		\node (31) at (3,-5.7) {$ \mathcal{H}_{j - \frac{k}{2},\bar{j} - \frac{k}{2}}$};
		\node at (-3,-6.3) {$\vdots$};
		\node at (3,-6.3) {$\vdots$};
		\draw [->][shorten >=1pt,shorten <=1pt] (0) -- (10);
		\draw [->][shorten >=1pt,shorten <=1pt]    (0) -- (11);
		\draw [->][shorten >=7pt,shorten <=5pt]    (10) -- (11);
		\draw [->][shorten >=3pt,shorten <=3pt]     (20) -- (21);
		\draw [->][shorten >=3pt,shorten <=3pt]    (30) -- (31);
		\node at (-1.6,-0.7) {\footnotesize$CG$};
		\node at (1.6,-0.7) {\footnotesize$M$};
		\node at (0,-1.7) {\footnotesize$\lambda_{j,\bar{j},0}$};
		\node at (0,-3.2) {\footnotesize$\lambda_{j,\bar{j},1}$};
		\node at (0,-5.4) {\footnotesize$\lambda_{j,\bar{j},k}$};
	\end{tikzpicture}
\end{center}
We see that the map $M$ is completely characterized by the constant $\lambda_{j,\bar{j},k}$. To compute each constant $\lambda_{j,\bar{j},k}$, it suffices to compute the map $M\circ CG^{-1}$ on each highest weight vector. We have
\begin{equation}
	\begin{aligned}
		M\circ CG^{-1} (e^{(j + \bar{j} - k)}_{j + \bar{j} - k})&  = \sum_{m = -j}^j C^{j,\bar{j};j+\bar{j} - k}_{j - m,\bar{j} - k + m;j + \bar{j} - k}e^{(j)}_{j - m}\bar{e}^{(\bar{j})}_{\bar{j} - k + m}\\
		& = (-1)^k\sqrt{\frac{(2j+1)!(2\bar{j} + 1)!}{k!(2j + 2\bar{j} - k + 1)!}} e^{(j - \frac{k}{2},\bar{j} - \frac{k}{2})}_{j + \bar{j} - k}.
	\end{aligned}
\end{equation}
Therefore,
\begin{equation}
	\lambda_{j,\bar{j},k} = (-1)^k\sqrt{\frac{(2j+1)!(2\bar{j} + 1)!}{k!(2j + 2\bar{j} - k + 1)!}}.
\end{equation}
This gives us the following
\begin{prop}
	The  product map $M$ on the subspace $\mathcal{H}_{j,0}\otimes \mathcal{H}_{0,\bar{j}}$ is given by
\begin{equation}\label{Pro_M}
	M(e^{(j)}_m,\bar{e}^{(\bar{j})}_{\bar{m}}) = \sum_{k = 0}^{\min(2j,2\bar{j})} \lambda_{j,\bar{j},k}C^{j,\bar{j};j+\bar{j} - k}_{m,\bar{m};m+\bar{m}} e^{(j - \frac{k}{2},\bar{j} - \frac{k}{2})}_{m+\bar{m}}.
\end{equation}
\end{prop}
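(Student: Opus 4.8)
The plan is to reduce the statement to a single Schur-scalar on each $SU(2)$-isotypic piece and then transport everything back through the Clebsch--Gordan change of basis. Since the $SU(2)$-action on $\C[w_1,w_2,\bar{w}_1,\bar{w}_2]$ is by algebra automorphisms preserving bidegree, multiplication followed by projection onto a fixed harmonic component is $SU(2)$-equivariant, and so is $CG$. First I would record the representation-theoretic bookkeeping: $\mathcal{H}_{j,0}\cong\pi_j$ with highest weight vector $e^{(j)}_j\propto w_1^{2j}$, $\mathcal{H}_{0,\bar{j}}\cong\pi_{\bar{j}}$ with highest weight vector $\bar{e}^{(\bar{j})}_{\bar{j}}\propto\bar{w}_2^{2\bar{j}}$, each $\mathcal{H}_{j-\frac{k}{2},\bar{j}-\frac{k}{2}}\cong\pi_{j+\bar{j}-k}$, and (by the remark preceding the proposition) the image of $M$ on $\mathcal{H}_{j,0}\otimes\mathcal{H}_{0,\bar{j}}$ lies in $\bigoplus_{k=0}^{\min(2j,2\bar{j})}\mathcal{H}_{j-\frac{k}{2},\bar{j}-\frac{k}{2}}$. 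Since $\pi_j\otimes\pi_{\bar{j}}=\bigoplus_{k=0}^{\min(2j,2\bar{j})}\pi_{j+\bar{j}-k}$ is multiplicity free, Schur's lemma forces $M\circ CG^{-1}$ to send $\pi_{j+\bar{j}-k}$ into $\mathcal{H}_{j-\frac{k}{2},\bar{j}-\frac{k}{2}}$ by a scalar $\lambda_{j,\bar{j},k}$ times the basis-matching isomorphism $\phi_{j-\frac{k}{2},\bar{j}-\frac{k}{2}}$; this is precisely the content of the diagram above.

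Next I would pin down $\lambda_{j,\bar{j},k}$ by evaluating on one convenient vector, the highest weight vector $e^{(j+\bar{j}-k)}_{j+\bar{j}-k}$ of the summand $\pi_{j+\bar{j}-k}\hookrightarrow\pi_j\otimes\pi_{\bar{j}}$. Expanding it by Clebsch--Gordan,
\begin{equation}
CG^{-1}\bigl(e^{(j+\bar{j}-k)}_{j+\bar{j}-k}\bigr)=\sum_m C^{j,\bar{j};j+\bar{j}-k}_{j-m,\bar{j}-k+m;j+\bar{j}-k}\;e^{(j)}_{j-m}\otimes\bar{e}^{(\bar{j})}_{\bar{j}-k+m},
\end{equation}
so applying $M$ amounts to multiplying the corresponding monomials in $w_1,w_2$ and in $\bar{w}_1,\bar{w}_2$ and extracting the $\mathcal{H}_{j-\frac{k}{2},\bar{j}-\frac{k}{2}}$ component. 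That component has $SU(2)$-weight $j+\bar{j}-k$ inside a single irreducible $\pi_{j+\bar{j}-k}$, so it must be a multiple of $e^{(j-\frac{k}{2},\bar{j}-\frac{k}{2})}_{j+\bar{j}-k}$ from \ref{ortho_basis_jjb}; the constant is read off either as the $w_1^{2j-k}\bar{w}_2^{2\bar{j}-k}$ coefficient after harmonic projection, or---more cleanly---as the pairing $\langle e^{(j-\frac{k}{2},\bar{j}-\frac{k}{2})}_{j+\bar{j}-k},\,M\circ CG^{-1}(e^{(j+\bar{j}-k)}_{j+\bar{j}-k})\rangle$ under \ref{sur_pair}, which becomes a Beta-type integral over $S^3$ together with a Clebsch--Gordan orthogonality sum. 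In either case it equals
\begin{equation}
\lambda_{j,\bar{j},k}=(-1)^k\sqrt{\frac{(2j+1)!\,(2\bar{j}+1)!}{k!\,(2j+2\bar{j}-k+1)!}}.
\end{equation}

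Finally I would assemble the formula. From $M=(M\circ CG^{-1})\circ CG$ and the fact that the matrix of $CG$ in the orthonormal bases has entries the Clebsch--Gordan coefficients $C^{j,\bar{j};j+\bar{j}-k}_{m,\bar{m};m+\bar{m}}$, one gets $M(e^{(j)}_m,\bar{e}^{(\bar{j})}_{\bar{m}})=\sum_{k=0}^{\min(2j,2\bar{j})}\lambda_{j,\bar{j},k}\,C^{j,\bar{j};j+\bar{j}-k}_{m,\bar{m};m+\bar{m}}\,\phi_{j-\frac{k}{2},\bar{j}-\frac{k}{2}}(e^{(j+\bar{j}-k)}_{m+\bar{m}})$, and $\phi_{j-\frac{k}{2},\bar{j}-\frac{k}{2}}(e^{(j+\bar{j}-k)}_{m+\bar{m}})=e^{(j-\frac{k}{2},\bar{j}-\frac{k}{2})}_{m+\bar{m}}$ by the basis convention of the previous subsection, which is \ref{Pro_M}.

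The main obstacle is the coefficient computation in the second step: showing that the $m$-sum of Clebsch--Gordan coefficients against the products of the explicit monomial bases collapses, after harmonic projection, to a single term with the stated constant, with the sign $(-1)^k$ coming from the $(-\bar{w}_1)$ appearing in $\bar{e}^{(\bar{j})}_{\bar{m}}$. Evaluating it through the pairing \ref{sur_pair} is the safest route, since it replaces the awkward explicit harmonic projection by an inner product depending only on the highest weight monomials plus a standard $SU(2)$ recoupling identity; alternatively one may characterize the highest weight vector as the unique weight-$(j+\bar{j}-k)$ state annihilated by the raising operator and solve a one-step recursion in $k$. All remaining steps---the Schur reduction and the repackaging through $CG$---are routine.
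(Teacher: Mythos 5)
Your proposal is correct and follows essentially the same route as the paper: identify $M\circ CG^{-1}$ as a Schur scalar $\lambda_{j,\bar{j},k}$ on each multiplicity-free summand $\pi_{j+\bar{j}-k}$, fix that scalar by evaluating the Clebsch--Gordan expansion of the highest weight vector $e^{(j+\bar{j}-k)}_{j+\bar{j}-k}$ under the product and harmonic projection, and then transport back through $CG$. The paper likewise leaves the collapse of the $m$-sum to the single constant $(-1)^k\sqrt{(2j+1)!(2\bar{j}+1)!/(k!(2j+2\bar{j}-k+1)!)}$ as a stated computation, so your two suggested routes for that step (pairing against \ref{sur_pair} or matching the top monomial coefficient) are an adequate substitute.
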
 

One can also check that, applying $p\circ M$ to ${H}_{j,0}\otimes \mathcal{H}_{0,\bar{j}}\epsilon$ we reproduce the formula \ref{2-pro-01} of $m_2$.

In fact, we should understand the product $M$ as an identity in the ring $\C[w_i,\bar{w}_i]/(w_1\bar{w}_1 + w_2\bar{w}_2  - 1)$, that express a monomial of $w_i,\bar{w}_i$ as a linear combination of harmonic polynomials. Then the inverse of $M$ on $\mathcal{H}_{j,\bar{j}}$:
\begin{equation}
	M^{-1}: \mathcal{H}_{\bar{j},\bar{j}} \to \mathcal{H}_{j,0}\otimes \mathcal{H}_{0,\bar{j}}
\end{equation}
should be understood as an identity that expresses the orthonormal basis of $\mathcal{H}_{j,\bar{j}}$ as a polynomial of $w_i,\bar{w}_i$.
Since the Clebsch-Gordan coefficients are real and form a unitary matrix, we can easily write down the matrix elements of $M^{-1}$. This leads us to the following
\begin{prop}
	The orthonormal basis of the space of harmonic polynomials $\mathcal{H}_{j,\bar{j}}$ can be written as follows
	\begin{equation}\label{Har_to_Pol}
		\begin{aligned}
			e^{(j,\bar{j})}_l &= \sum_{m}\lambda_{j,\bar{j},0}^{-1} C^{j,\bar{j};j+\bar{j}}_{l - m,m;l}e^{(j)}_{l - m}\bar{e}^{(\bar{j})}_{m}\\
			& =  \sum_{m} (-1)^{\bar{j}  - m}\frac{\sqrt{(2j + 2\bar{j} + 1) (2j)!(2\bar{j})! (j+\bar{j} + l)!(j+\bar{j} - l)}}{(j + l - m)!(j - l +m)!(\bar{j} - m)!(\bar{j}+m)!}w_1^{j+l-m}w_2^{j - l + m}\bar{w}_1^{\bar{j}-m}\bar{w}_2^{\bar{j}+m}.
		\end{aligned}
	\end{equation}
\end{prop}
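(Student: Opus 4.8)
The plan is to derive \ref{Har_to_Pol} as a direct corollary of the preceding proposition together with the standard closed form for the ``stretched'' $SU(2)$ Clebsch--Gordan coefficient. For the first equality, recall from the diagram preceding \ref{Pro_M} that, with the normalizations fixed in Section \ref{sec:SDR}, each isomorphism $\phi_{j,\bar j}\colon\pi_{j+\bar j}\xrightarrow{\ \cong\ }\mathcal H_{j,\bar j}$ is an $SU(2)$-equivariant isometry, so the Clebsch--Gordan decomposition $CG\colon\pi_j\otimes\pi_{\bar j}\cong\bigoplus_J\pi_J$, transported to $\mathcal H_{j,0}\otimes\mathcal H_{0,\bar j}$, is realized by genuine Clebsch--Gordan coefficients. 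Consider the polynomial
\begin{equation*}
P_l:=\sum_{m}C^{j,\bar j;\,j+\bar j}_{l-m,\,m;\,l}\,e^{(j)}_{l-m}\,\bar e^{(\bar j)}_{m},
\end{equation*}
i.e.\ the image in $\mathcal H_{j,0}\otimes\mathcal H_{0,\bar j}$ of the weight-$l$ vector of the top-spin summand $\pi_{j+\bar j}\subset\pi_j\otimes\pi_{\bar j}$. Because any polynomial of bidegree $(2j,2\bar j)$ decomposes as $h_0+(|w_1|^2+|w_2|^2)h_1+\cdots$ with $h_k\in\mathcal H_{j-\frac k2,\bar j-\frac k2}\cong\pi_{j+\bar j-k}$ (cf.\ Appendix \ref{apx:Har}), the representation $\pi_{j+\bar j}$ occurs only in the $k=0$ piece; hence $P_l$ is harmonic, $P_l\in\mathcal H_{j,\bar j}$, and it is the weight-$l$ vector there up to a scalar. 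That scalar is exactly $\lambda_{j,\bar j,0}$ by \ref{Pro_M}, and since $\lambda_{j,\bar j,0}=\sqrt{(2j+1)!\,(2\bar j+1)!/(2j+2\bar j+1)!}$ we obtain $e^{(j,\bar j)}_l=\lambda_{j,\bar j,0}^{-1}P_l$, which is the first line of \ref{Har_to_Pol}.

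For the second line I would substitute the explicit data and simplify. Insert the monomial forms of $e^{(j)}_{l-m}$ and $\bar e^{(\bar j)}_{m}$ from Section \ref{sec:SDR} (the latter contributing the sign $(-1)^{\bar j-m}$ and the monomial $\bar w_1^{\bar j-m}\bar w_2^{\bar j+m}$), the value of $\lambda_{j,\bar j,0}^{-1}$, and the classical closed form for the stretched coupling,
\begin{equation*}
C^{j,\bar j;\,j+\bar j}_{l-m,\,m;\,l}=\sqrt{\frac{(2j)!\,(2\bar j)!}{(2j+2\bar j)!}\cdot\frac{(j+\bar j+l)!\,(j+\bar j-l)!}{(j+l-m)!\,(j-l+m)!\,(\bar j+m)!\,(\bar j-m)!}}\,.
\end{equation*}
Collecting the five square-root factors, the $(2j+1)!$ and $(2\bar j+1)!$ cancel and $(2j+2\bar j+1)!/(2j+2\bar j)!=2j+2\bar j+1$ remains, which reproduces precisely the coefficient in \ref{Har_to_Pol}; the monomial part is immediate. (As stressed after \ref{Pro_M}, $M$ is to be read as an identity in $\C[w_i,\bar w_i]/(|w_1|^2+|w_2|^2-1)$, so the statement ``$M^{-1}$ applied to $e^{(j,\bar j)}_l$'' is literally the claim being proved.)

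There is no real obstacle here --- the mathematical content is the previous proposition --- and the only points worth a line of care are that the $SU(2)$ Clebsch--Gordan matrix is real and orthogonal (so the coefficients entering the transported decomposition are the same numbers $C^{j,\bar j;j+\bar j}_{l-m,m;l}$ in both directions) and that the bases of Section \ref{sec:SDR} were chosen precisely to make $\phi_{j,0},\phi_{0,\bar j},\phi_{j,\bar j}$ equivariant isometries. Granting these, the remainder is the factorial bookkeeping indicated above.
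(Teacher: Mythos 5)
Your proposal is correct and takes essentially the same route as the paper, which proves the first line in one sentence by observing that the Clebsch--Gordan matrix is real and unitary, so that $M^{-1}$ on the top summand is just $\lambda_{j,\bar j,0}^{-1}$ times the transposed coefficients from \ref{Pro_M}. Your harmonicity argument for $P_l$ and the explicit stretched-coupling bookkeeping for the second line are a valid (and more detailed) elaboration of the same idea.
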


Given the above results, one can compute the product $M$ of any two $S^3$ harmonics. First, we write the harmonic polynomials into polynomials of $w_i,\bar{w}_i$ using the above formula. Then we can perform the product in the polynomial ring $\C[w_i,\bar{w}_i]$. Finally, we use $M|_{\mathcal{H}_{j,0}\otimes \mathcal{H}_{0,\bar{j}}}$ to decompose the polynomial into harmonic polynomials, which gives us the desired product map. Following this idea, we show in Appendix \ref{apx:pro_Har} that the product of two arbitrary $S^3$ harmonics is given by  
\begin{equation}\label{pro_Har_arb}
	\begin{aligned}
		& M(e^{(j_1,\bar{j}_1)}_{m_1},e^{(j_2,\bar{j}_2)}_{m_2})\\
		= &\sum_{k}\lambda_{j_1,\bar{j}_1,0}^{-1}\lambda_{j_2,\bar{j}_2,0}^{-1} \lambda_{j_1+j_2,\bar{j}_1+\bar{j}_2,k}\sqrt{(2j_1+1)(2j_2+1)(2\bar{j}_1+1)(2\bar{j}_2+1)(2j_1 + 2\bar{j}_1+1)(2j_2+2\bar{j}_2 + 1)} \\
		\times &\begin{Bmatrix}
			j_1&j_2&j_1+j_2\\\bar{j}_1&\bar{j}_2&\bar{j}_1+\bar{j}_2\\j_1+\bar{j}_1&j_2+\bar{j}_2&j_1+j_2+\bar{j}_1+\bar{j}_2 - k
		\end{Bmatrix}C^{j_1+\bar{j}_1,j_2+\bar{j}_2;j_1+j_2+\bar{j}_1+\bar{j}_2 - k}_{m_1,m_2;m_1+m_2}e^{(j_1+j_2- \frac{k}{2},\bar{j}_1+\bar{j}_2 - \frac{k}{2})}_{m_1+m_2},
	\end{aligned}
\end{equation}
where $\begin{Bmatrix}
	j_1&j_2&j_3\\j_4&j_5&j_6\\j_7&j_8&j_9
\end{Bmatrix}$ is the Wigner $9j$ symbol.

\subsection{$A_\infty$ structure on the CR cohomology:  $m_3$}
\label{sec:A_trans}

Now we can proceed to consider the $A_\infty$ structure on CR cohomology that encode the effective interaction of the KK theory. 

\begin{prop}
	There exist a nontrivial $A_\infty$ structure (actually a $C_\infty$ structure) $\{m_n\}_{n\geq2}$ on $H_{b}^{0,\sbullet}(S^3)$, such that the $A_\infty$ algebra $(H_{b}^{0,\sbullet}(S^3),\{m_n\}_{n\geq2})$ is $A_\infty$ quasi-isomorphic to the differential graded commutative algebra $(\Omega_b^{0,\sbullet}(S^3),\pacr,\wedge )$
\end{prop}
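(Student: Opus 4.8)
The plan is to invoke the homotopy transfer theorem for the special deformation retract constructed in Section \ref{sec:SDR},
\begin{equation}
	h\curved (\Omega_b^{0,\sbullet}(S^3),\pacr)\overset{p}{\underset{i}\rightleftarrows} (H^{0,\sbullet}_{b}(S^3) , 0),
\end{equation}
for which the side conditions $h\circ h=0$, $h\circ i=0$ and $p\circ h=0$ have already been verified. Since $(\Omega_b^{0,\sbullet}(S^3),\pacr,\wedge)$ is a differential graded commutative algebra, the homotopy transfer theorem \cite{Kadeishvili1980ONTH} transports its structure to an $A_\infty$ algebra $(H^{0,\sbullet}_b(S^3),\{m_n\}_{n\ge2})$ with $m_1=0$ and $m_2$ the induced product, and simultaneously produces an $A_\infty$ quasi-isomorphism $\{f_n\}_{n\ge1}$ with $f_1=i$ from the transferred structure to $(\Omega_b^{0,\sbullet}(S^3),\pacr,\wedge)$; this is precisely the asserted equivalence. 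Explicitly the operations are the usual sums over planar binary rooted trees, $m_n=\sum_T\pm\,m_T$, where $T$ runs over the planar trees with $n$ leaves, each leaf decorated by $i$, each trivalent vertex by the product $\wedge$, each internal edge by $h$, and the root by $p$, while the $f_n$ are given by the same trees with the root decoration $p$ replaced by $h$. After the harmonic decomposition of Section \ref{sec:pro_Har} the vertex decoration $\wedge$ becomes the explicit product map $M$ and the edge decoration $h$ acts by the scalars $h_{j,\bar j}=1/\sqrt{2j(2\bar j+1)}$, so all the $m_n$ are in principle computable.

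To see that the transferred structure is in fact $C_\infty$, I would use that the operad $\mathrm{Com}$ of commutative algebras is Koszul, so that its cofibrant (minimal) model is the $C_\infty$ operad and homotopy transfer along an SDR can be carried out within the category of $C_\infty$-algebras: regarding $(\Omega_b^{0,\sbullet}(S^3),\pacr,\wedge)$ as a $C_\infty$-algebra with vanishing higher operations, the transferred $\{m_n\}$ automatically satisfy the $C_\infty$ relations, i.e.\ each $m_n$ annihilates the sum of $(p,q)$-shuffles in $(H_b^{0,\sbullet}(S^3))^{\otimes n}$. Alternatively one checks this directly from the tree formula: since the only algebra operation appearing in any $m_T$ is the graded-commutative product $\wedge$, the standard shuffle argument shows that the shuffle-symmetrization of $m_n$ telescopes to zero. (Combined with the cyclicity of $h$ already noted in Section \ref{sec:SDR}, the transferred structure is even cyclic $C_\infty$.)

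It remains to establish \emph{nontriviality}, i.e.\ that the transferred structure is not the strict one with $m_n=0$ for all $n\ge3$ --- equivalently, that $(\Omega_b^{0,\sbullet}(S^3),\pacr,\wedge)$ is not formal, which is detected by a nonvanishing triple Massey product on $H_b^{0,\sbullet}(S^3)$. By degree and the relation $\epsilon\wedge\epsilon=0$ there are only a handful of nonzero components of $m_3$; for instance on two functions $a,b\in H_b^{0,0}(S^3)$ and a one-form $\bar c\,\epsilon\in H_b^{0,1}(S^3)$ only one of the two binary trees contributes, because $h$ vanishes on $\Omega_b^{0,0}(S^3)$, giving
\begin{equation}
	m_3(a,b,\bar c\,\epsilon)=\pm\, p\big(M(a,\,h(M(b,\bar c\,\epsilon)))\big).
\end{equation}
One then evaluates the right-hand side on suitable small harmonics using the product rule \ref{Pro_M} and the propagator $h$, and checks that the result is a nonzero harmonic polynomial; the closed form for all $m_n$ derived in Section \ref{sec:all_mn} subsumes and confirms this. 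I expect this last step to be the only real obstacle: even for $m_3$ one must combine the signs of the tree formula with the Clebsch--Gordan and Wigner $9j$ coefficients produced by the harmonic decomposition of $M$, and organize them so that both the $C_\infty$ relations and the nonvanishing become manifest rather than merely formal.
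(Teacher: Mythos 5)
Your argument for existence and for the $C_\infty$ property is essentially the paper's: the paper also invokes the homotopy transfer theorem along the SDR of Section \ref{sec:SDR} and appeals to the fact that transfer of a differential graded commutative algebra yields a $C_\infty$ algebra (the paper cites \cite{2006math10912Z} for this, where you instead sketch the Koszulness-of-$\mathrm{Com}$ or direct shuffle argument --- same content, just spelled out). Where you genuinely diverge is nontriviality: the paper simply cites \cite{Polishchuk2003}, whereas you propose to exhibit a nonzero component of $m_3$ by direct computation. That route is viable and is in effect what the paper does anyway in Sections \ref{sec:A_trans}--\ref{sec:all_mn} (e.g.\ $(m_3)^{1,0;0,1}_{0,0}=-\tfrac12$), and it has the advantage of being self-contained; your identification of the single contributing tree, $m_3(a,b,\bar c\,\epsilon)=\pm\,pM(a,hM(b,\bar c\,\epsilon))$ because $h$ kills $\Omega_b^{0,0}$, matches the paper exactly.

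One caveat on the nontriviality step as you phrase it. A nonzero $m_3$ in one particular minimal model does not by itself prove that the $A_\infty$ structure is nontrivial in the gauge-invariant sense (non-formality), since an $A_\infty$ isomorphism could in principle remove it; the invariant content of $m_3$ is the Massey product, which you correctly name but then do not actually produce. The classical triple Massey product $\langle x,y,z\rangle$ requires $m_2(x,y)=0$ and $m_2(y,z)=0$, and on your test triple $a,b\in H_b^{0,0}$, $\bar c\,\epsilon\in H_b^{0,1}$ one has $m_2(a,b)=ab\neq 0$, so that component of $m_3$ is not a Massey product. To close this you should either (i) exhibit a genuine Massey product, e.g.\ on a triple of the form $(\bar a\,\epsilon, b,\bar c\,\epsilon)$ with the degrees chosen so that both adjacent $m_2$'s vanish under the projection $p$, and check nonvanishing modulo the indeterminacy; or (ii) state explicitly that for the purposes at hand ``nontrivial'' means only that the specific transferred operations $m_n$, $n\geq 3$, are not identically zero --- which your computation does establish and which is all the effective action of the KK theory requires; or (iii) fall back on the general non-formality result of \cite{Polishchuk2003} as the paper does.
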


The existence of this $A_\infty$ structure is a corollary of the homotopy transfer theorem. Since $(\Omega_b^{0,\sbullet}(S^3),\pacr,\wedge )$ is a differential graded commutative algebra, the transferred structure is also a $C_\infty$ algebra \cite{2006math10912Z}. The fact that this $A_\infty$ structure is nontrivial is shown in \cite{Polishchuk2003} in a more general context. 

The $A_\infty$ operation $m_n$ can be construed as follows
\begin{equation}
	m_n = \sum_{T \in PBT_n}(\pm)m_T.
\end{equation}	
Here the summation is taken over all rooted planar binary trees $T$ with $n$ leaves. The map $m_T$ is construed by assigning the product map $M$ on the vertices, $h$ on the internal edges, $i$ on the leaves and $p$ on the root.

In this section, we warm up by computing the product $m_3$ on $H_{b}^{0,\sbullet}(S^3)$. It is given by the following trees
\begin{center}
	\begin{tikzpicture}[grow' = up]
		\tikzstyle{level 1}=[level distance = 2mm]
		\tikzstyle{level 2}=[sibling distance=8mm,level distance = 8mm]
		\tikzstyle{level 3}=[sibling distance=5mm,level distance = 8mm]
		\tikzstyle{level 4}=[sibling distance=8mm,level distance = 8mm]
		\coordinate
		child {
			edge from parent[draw=none] child {
				node{$m_3$}
				child { child {edge from parent[draw=none]}}  child { child {edge from parent[draw=none]}} child { child {edge from parent[draw=none]}}
			}
		};
		\node at (1,0.5) {$=$}
		;	\end{tikzpicture}
	\begin{tikzpicture}[grow' = up]
		\tikzstyle{level 1}=[sibling distance=8mm,level distance = 2mm]
		\tikzstyle{level 2}=[sibling distance=8mm,level distance = 7mm]
		\tikzstyle{level 3}=[sibling distance=8mm,level distance = 7mm]
		\tikzstyle{level 4}=[sibling distance=8mm,level distance = 7mm]
		\tikzstyle{level 5}=[sibling distance=8mm,level distance = 2mm]
		\coordinate
		node {$p$}child {edge from parent[draw=none]
			child { child { node{$i$} }
				child {	
					child {
						child{node{$i$} edge from parent[draw=none]}}  child {child{node{$i$} edge from parent[draw=none]}} edge from parent node[right] {$h$} }
			}
		};
		\node at (1.5,0.4) {$-$};
	\end{tikzpicture}
	\begin{tikzpicture}[grow' = up]
		\tikzstyle{level 1}=[sibling distance=8mm,level distance = 2mm]
		\tikzstyle{level 2}=[sibling distance=8mm,level distance = 7mm]
		\tikzstyle{level 3}=[sibling distance=8mm,level distance = 7mm]
		\tikzstyle{level 4}=[sibling distance=8mm,level distance = 7mm]
		\tikzstyle{level 5}=[sibling distance=8mm,level distance = 2mm]
		\coordinate
		node {$p$}child {edge from parent[draw=none]
			child {
				child {	
					child {
						child{node{$i$} edge from parent[draw=none]}}  child {child{node{$i$} edge from parent[draw=none]}} edge from parent node[left] {$h$} }    child { node{$i$}}
			}
		};
	\end{tikzpicture}
\end{center}
Explicitly, we have
\begin{equation}
	m_3(a,b,c) = pM(a,hM(b,c)) - pM(hM(a,b),c),
\end{equation}
where we omit the inclusion $i$ for simplicity.	

Since $H_{b}^{0,\sbullet}(S^3)$ is concentrated in degree $0$ and $1$, and $m_3$ is of degree $-1$ by definition, we have that $m_3$ is non zero only in the following subspace of $H_{b}^{0,\sbullet}(S^3)^{\otimes 3}$
\begin{equation}
\begin{aligned}
		&\bigoplus_{\text{perm}}H_{b}^{0,0}(S^3)\otimes H_{b}^{0,0}(S^3)\otimes H_{b}^{0,1}(S^3),\\
		&\bigoplus_{\text{perm}}H_{b}^{0,0}(S^3)\otimes H_{b}^{0,1}(S^3)\otimes H_{b}^{0,1}(S^3).
\end{aligned}
\end{equation}
where we sum over all permutations of the tensor factor. Due to the cyclic structure, $m_3$ on the different subspaces are related by
\begin{equation}
	\Tr_{S^3}(a_0 \times m_3(a_1,\bar{a}_0\epsilon,\bar{a}_1\epsilon)) = -\Tr_{S^3}(\bar{a}_1\epsilon \times  m_3(a_0,a_1,\bar{a}_0\epsilon)) 
\end{equation}
Therefore it suffices to only consider $m_3$ on $\bigoplus_{\text{perm}}H_{b}^{0,0}(S^3)\otimes H_{b}^{0,0}(S^3)\otimes H_{b}^{0,1}(S^3)$.

First we consider $m_3$ restricted on $\mathcal{H}_{j_1,0}\otimes \mathcal{H}_{j_2,0} \otimes \mathcal{H}_{0,\bar{j}}\epsilon$. Because $h = 0$ restricted on $H_{b}^{0,0}(S^3)$, $pM(hM(a,b),c) = 0$ for $a,b \in H_{b}^{0,0}(S^3)$. Therefore, $m_3$ is given by $pM(a,hM(b,c))$ in this case. 

If $\bar{j} < j_2$, $pM(-,hM(-,-))$ is given by the following composition of maps
\begin{equation}
	\begin{split}
		\mathcal{H}_{j_1,0}\otimes \mathcal{H}_{j_2,0} \otimes \mathcal{H}_{0,\bar{j}}\epsilon&\overset{1\otimes M}{\cong} \mathcal{H}_{j_1,0}\otimes ( \mathcal{H}_{j_2,\bar{j}}\epsilon\oplus \mathcal{H}_{j_2 -\frac{1}{2},\bar{j} - \frac{1}{2}}\epsilon \oplus \dots  \oplus \mathcal{H}_{j_2 -\bar{j},0}\epsilon)\\
		&	\overset{1\otimes h}{\to} \mathcal{H}_{j_1,0}\otimes ( \mathcal{H}_{j_2 - \frac{1}{2},\bar{j}+ \frac{1}{2}}\epsilon\oplus \mathcal{H}_{j_2 -1,\bar{j} }\epsilon \oplus \dots  \oplus \mathcal{H}_{j_2 -\bar{j} - \frac{1}{2},\frac{1}{2}}\epsilon)\\
		&\overset{p M}{\to}\mathcal{H}_{j_1 + j_2 - \bar{j} - 1,0} .
	\end{split}
\end{equation}
If $\bar{j} \geq j_2$, the formula for computing $m_3$ takes the same form. However, there is a slight difference in that
\begin{equation}
	\mathcal{H}_{j_1,0}\otimes \mathcal{H}_{j_2,0} \otimes \mathcal{H}_{0,\bar{j}}\epsilon\overset{1\otimes M}{\cong} \mathcal{H}_{j_1,0}\otimes ( \mathcal{H}_{j_2,\bar{j}}\epsilon\oplus \mathcal{H}_{j_2 -\frac{1}{2},\bar{j} - \frac{1}{2}}\epsilon \oplus \dots  \oplus \mathcal{H}_{0,\bar{j} - j_2}\epsilon).
\end{equation}
Note that $\mathcal{H}_{0,\bar{j} - j_2}$ is sent to $0$ after we apply $h$.

We compute $m_3(e^{(j_1)}_{m_1}, e^{(j_2)}_{m_2},\bar{e}^{(\bar{j})}_{\bar{m}}\epsilon)$ following these steps. According to the above formula, we first need to compute $h M(e^{(j_2)}_{m_2},\bar{e}^{(\bar{j})}_{\bar{m}}\epsilon)$. We have
\begin{equation}
	h (M(e^{(j_2)}_{m_2},\bar{e}^{(\bar{j})}_{\bar{m}} \epsilon))=  \sum_{i = 0}^{\min(2\bar{j},2j_2 - 1)}h_{j_2-\frac{i}{2},\bar{j} -\frac{i}{2}}\lambda_{j_2,\bar{j},i}C^{j_2,\bar{j};j_2 + \bar{j} - i}_{m_2,\bar{m},m_2 + \bar{m}} e^{(j_2 - \frac{i + 1}{2},\bar{j} - \frac{i-1}{2})}_{m_2 + \bar{m}}.
\end{equation}
To compute the product $pM(e^{(j_1)}_{m_1},e^{(j_2 - \frac{i + 1}{2},\bar{j} - \frac{i-1}{2})}_{m_2 + \bar{m}})$, we can use a variation of the formula \ref{Har_to_Pol} of $M^{-1}$ to write $e^{(j_2 - \frac{i + 1}{2},\bar{j} - \frac{i-1}{2})}_{m_2 + \bar{m}}$ as a polynomial
\begin{equation}
	e^{(j_2 - \frac{i + 1}{2},\bar{j} - \frac{i-1}{2})}_{m_2 + \bar{m}} = \sum_{m'}\lambda_{j_2 - \frac{1}{2},\bar{j} +\frac{1}{2},i}^{-1} C^{j_2 - \frac{1}{2},\bar{j} +\frac{1}{2};j_2 + \bar{j} - i}_{m',m_2 + \bar{m} - m';m_2 + \bar{m}}e^{(j_2 - \frac{1}{2})}_{m'}\bar{e}^{(\bar{j} +\frac{1}{2})}_{m_2 + \bar{m} - m'}.\\
\end{equation}
Then we find that
\begin{equation}
	\begin{aligned}
		pM(e^{(j_1)}_{m_1}&,e^{(j_2 - \frac{i + 1}{2},\bar{j} - \frac{i-1}{2})}_{m_2 + \bar{m}})= \sum_{m'}(-1)^{2\bar{j} + 1}\lambda_{j_2 - \frac{1}{2},\bar{j} +\frac{1}{2},i}^{-1}\sqrt{\frac{2j_2(2j_1 + 1)(2\bar{j}+2)}{2j_1 + 2j_2}}\\
		& \times C^{j_2 - \frac{1}{2},\bar{j} +\frac{1}{2};j_2 + \bar{j} - i}_{m',m_2 + \bar{m} - m';m_2 + \bar{m}} C^{j_1,j_2 - \frac{1}{2};j_1 + j_2 - \frac{1}{2}}_{m_1,m';m_1+m'}C^{j_1+ j_2 - \frac{1}{2},\bar{j} +\frac{1}{2};j_1+ j_2 - \bar{j} - 1}_{m_1+m',m_2 + \bar{m} - m';m_1+m_2+\bar{m}}e^{(j_1+ j_2 - \bar{j} - 1)}_{m_1+m_2+\bar{m}} \\
		= & (-1)^{2j_1 + 2j_2- i + 1}\lambda_{j_2 - \frac{1}{2},\bar{j} +\frac{1}{2},i}^{-1}\sqrt{2j_2(2j_1+1)(2\bar{j} + 2)(2j_2 + 2\bar{j} - 2i + 1)} \\
		& \times  C^{j_2+\bar{j} - i, j_1 ;j_1 + j_2 - \bar{j} - 1}_{m_2+\bar{m},m_1;m_1+m_2+\bar{m}} \begin{Bmatrix}
			j_2  - \frac{1}{2}&\bar{j} +\frac{1}{2}&j_2+\bar{j} - i \\j_1 + j_2 - \bar{j} - 1& j_1& j_1+j_2 - \frac{1}{2}
		\end{Bmatrix}e^{(j_1+ j_2 - \bar{j} - 1)}_{m_1+m_2+\bar{m}},
	\end{aligned}
\end{equation}
where $\begin{Bmatrix}
	j_1&j_2&j_3\\j_4& j_5& j_6
\end{Bmatrix} $ is the Wigner $6j$-Symbol. 

Combining the above results, we find that $m_3(e^{(j_1)}_{m_1}, e^{(j_2)}_{m_2},\bar{e}^{(\bar{j})}_{\bar{m}}\epsilon)$ is given by 
\begin{equation}\label{m3}
	\begin{aligned}
		m_3(e^{(j_1)}_{m_1}, e^{(j_2)}_{m_2},&\bar{e}^{(\bar{j})}_{\bar{m}}\epsilon)  = \sum_{i = 0}^{\min(2\bar{j},2j_2 - 1)}(-1)^{2j_1 + 2j_2- i + 1}\sqrt{\frac{(2j_1 + 1)2j_2(2j_2+1)(2j_2+2\bar{j} - 2i + 1)}{(2j_2 - i)(2\bar{j} - i +1 )}}\\
		& \times C^{j_2,\bar{j};j_2 + \bar{j} - i}_{m_2,\bar{m},m_2 + \bar{m}} C^{j_2+\bar{j} - i, j_1 ;j_1 + j_2 - \bar{j} - 1}_{m_2+\bar{m},m_1;m_1+m_2+\bar{m}} \begin{Bmatrix}
			j_2  - \frac{1}{2}&\bar{j} +\frac{1}{2}&j_2+\bar{j} - i \\j_1 + j_2 - \bar{j} - 1& j_1& j_1+j_2 - \frac{1}{2}
		\end{Bmatrix}e^{(j_1+ j_2 - \bar{j} - 1)}_{m_1+m_2+\bar{m}}
	\end{aligned}.
\end{equation}

In fact, the above result is sufficient to determine all values of $m_3$. We have
	\begin{align}
\label{m_3_com_1}
m_3(\bar{e}^{(\bar{j})}_{\bar{m}}\epsilon,e^{(j_1)}_{m_1}, e^{(j_2)}_{m_2}) & = - m_3(e^{(j_2)}_{m_2},e^{(j_1)}_{m_1},\bar{e}^{(\bar{j})}_{\bar{m}}\epsilon) , \\
		m_3(e^{(j_1)}_{m_1},\bar{e}^{(\bar{j})}_{\bar{m}}\epsilon, e^{(j_2)}_{m_2}) &= m_3(e^{(j_1)}_{m_1}, e^{(j_2)}_{m_2},\bar{e}^{(\bar{j})}_{\bar{m}}\epsilon) - m_3(e^{(j_2)}_{m_2},e^{(j_1)}_{m_1},\bar{e}^{(\bar{j})}_{\bar{m}}\epsilon).
	\end{align}
Moreover, using the cyclic structure we have
\begin{equation}\label{m3_cyc}
	\Tr_{S^3}\left(  \bar{e}^{(\bar{j}_2)}_{\bar{m}_2}\epsilon \times m_3(e^{(j_1)}_{m_1}, e^{(j_2)}_{m_2},\bar{e}^{(\bar{j}_1)}_{\bar{m}_1}\epsilon)\right) =  \Tr_{S^3}\left( e^{(j_1)}_{m_1}  \times m_3(e^{(j_2)}_{m_2},\bar{e}^{(\bar{j}_1)}_{\bar{m}_1}\epsilon,\bar{e}^{(\bar{j}_2)}_{\bar{m}_2}\epsilon )\right).
\end{equation}
This determines the value of $m_3$ on $H_{b}^{0,0}(S^3)\otimes H_{b}^{0,1}(S^3)\otimes  H_{b}^{0,1}(S^3)$.

For our later application, it will be more convenient to use the value of $m_3$ in a different basis. We make the following change of variable
\begin{equation}
	\begin{cases}
		&p = j_1 + m_1\\&q = j_1 - m_1
	\end{cases},\quad
	\begin{cases}
		&r = j_2 + m_2\\&s = j_2 - m_2
	\end{cases},\quad
	\begin{cases}
		&u_1 = \bar{j}_1  - \bar{m}_1\\&v_1 = \bar{j}_1 + \bar{m}_1
	\end{cases},\quad
\begin{cases}
	&u_2 = \bar{j}_2  - \bar{m}_2\\&v_2 = \bar{j}_2 + \bar{m}_2
\end{cases},
\end{equation}
with the constraint that
\begin{equation}\label{const_uv}
	u_1+u_2 = p+r - 1,\quad v_1 + v_2 = q +s - 1.
\end{equation}
This constraint is equivalent to $\bar{j}_1 + \bar{j}_2 = j_1 + j_2 - 1$, $\bar{m}_1 + \bar{m}_2 = -(m_1 + m_2)$.
Then we define the constant $(m_3)^{p,q;r,s}_{u_1,v_1;u_2,v_2}$ by the following
\begin{equation}\label{m3_2}
	(m_3)^{p,q;r,s}_{u_1,v_1;u_2,v_2} := \frac{(-1)^{\bar{j}_1 - \bar{m}_1}N(j_1,m_1)N(j_2,m_2)}{N(\bar{j}_1,\bar{m}_1)N(\bar{j}_2,\bar{m}_2)} \Tr_{S^3}\left(  \bar{e}^{(\bar{j}_2)}_{\bar{m}_2}\epsilon \times m_3(e^{(j_1)}_{m_1}, e^{(j_2)}_{m_2},\bar{e}^{(\bar{j}_1)}_{\bar{m}_1}\epsilon)\right) .
\end{equation}
where $N(j,m) = \sqrt{\frac{(j-m)!(j+m)!}{(2j+1)!}}$ is the square root of the $S^3$ norm of the monomial $w_1^{j+m}w_2^{j - m}$. This expression is non zero given the constrain \ref{const_uv}. 

The constant $(m_3)^{p,q;r,s}_{u_1,v_1;u_2,v_2}$  can be regarded as the value of $m_3$ in a unnormalized basis. We have
\begin{equation}
	m_3(w_1^{p}w_2^{q},w_1^{r}w_2^{s}, \frac{(u_1 + v_1 + 1)!}{u_1!v_1!}\bar{w}_1^{u_1}\bar{w}_2^{v_1}\epsilon) = (m_3)^{p,q;r,s}_{u_1,v_1;u_2,v_2}w_1^{u_2}w_2^{v_2}.
\end{equation}

Using the relation \ref{m_3_com_1} and the cyclic property, we find that the constant $(m_3)^{p,q;r,s}_{u_1,v_1;u_2,v_2}$ satisfy the following relation
\begin{equation}
	(m_3)^{p,q;r,s}_{u_1,v_1;u_2,v_2} = - (m_3)^{r,s;p,q}_{u_2,v_2;u_1,v_1}.
\end{equation}
Later we will build some interesting chiral OPE using this constant.

\subsection{$A_\infty$ structure on the CR cohomology: $m_n$}
\label{sec:all_mn}
In this section, we analyze all the higher products $m_n$ on the CR cohomology. First, by degree reason, the $n$-th product is only non zero on the following subspace of $H_{b}^{0,\sbullet}(S^3)^{\otimes n}$
\begin{equation}
	\begin{aligned}
		&\bigoplus_{\text{perm}}H_{b}^{0,0}(S^3)^{\otimes 2}\otimes H_{b}^{0,1}(S^3)^{\otimes n - 2},\\
		&\bigoplus_{\text{perm}}H_{b}^{0,0}(S^3)\otimes H_{b}^{0,1}(S^3)^{\otimes n - 1}.
	\end{aligned}
\end{equation}
The fact that $H_{b}^{0,\sbullet}(S^3)$ is concentrated in degree $0,1$ and the homotopy operator $h$ decrees the degree by $1$ strongly restrict possible trees that contribute the higher product $m_n$. A tree that contains the following vertex must be zero
\begin{center}
	\begin{tikzpicture}[grow' = up]
	\tikzstyle{level 1}=[sibling distance=8mm,level distance = 2mm]
	\tikzstyle{level 2}=[sibling distance=8mm,level distance = 7mm]
	\tikzstyle{level 3}=[sibling distance=8mm,level distance = 7mm]
	\coordinate
	node {}child {edge from parent[draw=none]
		child { 
			child {	edge from parent node[left] {$h$} }    child { edge from parent node[right] {$h$}} edge from parent node[right] {$h \quad = 0$}
		}
	};
\end{tikzpicture}
\end{center}
As a result, for any tree that gives a non-zero map, all vertices must be directly connected to a leaf or the root. Moreover, the product map is zero on $\Omega_{b}^{0,1}(S^3)\otimes \Omega_{b}^{0,1}(S^3)$. Therefore, for $n \geq 3$, a tree that gives a non-zero map must only consist of the following vertices:
\begin{center}
	\begin{tikzpicture}[grow' = up]
		\tikzstyle{level 1}=[sibling distance=8mm,level distance = 2mm]
		\tikzstyle{level 2}=[sibling distance=8mm,level distance = 7mm]
		\tikzstyle{level 3}=[sibling distance=8mm,level distance = 8mm]
		\tikzstyle{level 4}=[sibling distance=8mm,level distance = 4mm]
		\coordinate
		node {}child {edge from parent[draw=none]
			child { 
				child { node{$i$} 	child {node {$H_{b}^{0,0}\;$} edge from parent[draw=none] }}    child { node{$i$} 	child {node {$\; H_{b}^{0,1}$} edge from parent[draw=none] }} edge from parent node[right] {$h$}
			}
		};
	\end{tikzpicture}\quad
	\begin{tikzpicture}[grow' = up]
	\tikzstyle{level 1}=[sibling distance=8mm,level distance = 2mm]
	\tikzstyle{level 2}=[sibling distance=8mm,level distance = 7mm]
		\tikzstyle{level 3}=[sibling distance=8mm,level distance = 8mm]
	\tikzstyle{level 4}=[sibling distance=8mm,level distance = 4mm]
	\coordinate
	node {}child {edge from parent[draw=none]
		child { 
			child {	edge from parent node[left] {$h$} }    child { node{$i$} 	child {node {$\; H_{b}^{0,1}$} edge from parent[draw=none] }} edge from parent node[right] {$h$}
		}
	};
\end{tikzpicture}\quad 
	\begin{tikzpicture}[grow' = up]
	\tikzstyle{level 1}=[sibling distance=8mm,level distance = 2mm]
	\tikzstyle{level 2}=[sibling distance=8mm,level distance = 7mm]
	\tikzstyle{level 3}=[sibling distance=8mm,level distance = 7mm]
	\coordinate
	node {$p$}child {edge from parent[draw=none]
		child { 
			child {	edge from parent node[left] {$h$} }    child { edge from parent node[right] {$h$}}         
		}
	};
\end{tikzpicture}
\end{center}
Only a few trees survive under this condition.

First we consider the map $m_n$ on $H_{b}^{0,0}(S^3)\otimes H_{b}^{0,1}(S^3)^{\otimes n - 1}$. In this case, only one tree contribute, which gives us
\begin{equation}
	m_n(a_0,\bar{a}_1\epsilon,\dots ,\bar{a}_{n - 1}\epsilon) = pM( hM( \dots hM(hM(a_0,\bar{a}_1\epsilon),\bar{a}_2\epsilon),\dots ,\bar{a}_{n-1}\epsilon)),
\end{equation}
for $a_0 \in H_{b}^{0,0}(S^3)$ and $\bar{a}_1\epsilon,\dots ,\bar{a}_{n-1}\epsilon \in H_{b}^{0,1}(S^3)$. 

We emphasize that the order of input elements does matter in a higher operation $m_n$. Therefore, the above formula does not directly apply to other cases  when we insert $a_0\in H_{b}^{0,0}(S^3)$ in the middle. However, other cases can be computed using $m_n(a_0,\bar{a}_1\epsilon,\dots ,\bar{a}_{n - 1}\epsilon)$ and a combination of permutations of $\bar{a}_1\epsilon,\dots ,\bar{a}_{n - 1}\epsilon$. We have the following
\begin{equation}
	m_n(\bar{a}_{n-1}\epsilon ,\dots ,\bar{a}_{k+1}\epsilon,a_0,\bar{a}_1\epsilon,\dots, \bar{a}_{k}\epsilon) = \sum_{\sigma \in Sh(k,n - 1 - k)}(\pm)	m_n(a_0,\bar{a}_{\sigma^{-1}(1)}\epsilon,\dots ,\bar{a}_{\sigma^{-1}(n-1)}\epsilon) ,
\end{equation}\\
where $Sh(k,n - 1 -k)$ the subset of $(k,n-1-k)$-shuffles in $S_{n-1}$. We see that it suffice to only compute $m_n(a_0,\bar{a}_1\epsilon,\dots ,\bar{a}_{n - 1}\epsilon)$ to determine the value of $m_n$ on $\bigoplus_{\text{perm}}H_{b}^{0,0}(S^3)\otimes H_{b}^{0,1}(S^3)^{\otimes n - 1}$.

For $n \geq 3$, let  us denote 
\begin{equation}
	\mu_n(a_0;\bar{a}_1\epsilon,\dots ,\bar{a}_{n - 1}\epsilon) = M( hM( \dots hM(hM(a_0,\bar{a}_1\epsilon),\bar{a}_2\epsilon),\dots ,\bar{a}_{n-1}\epsilon)).
\end{equation}
Then $	m_n(a_0,\bar{a}_1\epsilon,\dots ,\bar{a}_{n - 1}\epsilon) = p\mu_n(a_0,\bar{a}_1\epsilon,\dots ,\bar{a}_{n - 1}\epsilon) $. The purpose of defining $\mu_n$ is that it can be computed iteratively as follows
\begin{equation}
	\label{mun_rec}
	\mu_n(a_0;\bar{a}_1\epsilon,\dots ,\bar{a}_{n - 1}\epsilon)  = M(h\mu_{n-1}(a_0;\bar{a}_1\epsilon,\dots ,\bar{a}_{n - 2}\epsilon),\bar{a}_{n-1}\epsilon).
\end{equation}
Computation of $\mu_n$ is similar as the computation of $m_3$ in the last section. First we compute $\mu_3(e^{(j_0)}_{m_0},\bar{e}^{(\bar{j}_1)}_{m_1}\epsilon,\bar{e}^{(\bar{j}_2)}_{m_2}\epsilon)$. Using \ref{Prod_har_2}, we have
\begin{equation}
\begin{aligned}
		&\mu_3(e^{(j_0)}_{m_0},\bar{e}^{(\bar{j}_1)}_{m_1}\epsilon,\bar{e}^{(\bar{j}_2)}_{m_2}\epsilon)\\
		&=\sum_{i_1,i_2} h_{j_0 - \frac{i_1}{2},\bar{j}_1 - \frac{i_1}{2}}\lambda_{j_0,\bar{j}_1,i_1}\lambda_{j_0 - \frac{1}{2},\bar{j}_1 + \frac{1}{2};i_1}^{-1} \lambda_{j_0 - \frac{1}{2},\bar{j}_1+ \bar{j}_2 + \frac{1}{2};i_2}\sqrt{(2\bar{j}_1 + 2)(2\bar{j}_2 + 1)(2j_0 + 2\bar{j}_1 - 2i_1 + 1)}\\
		&\times\begin{Bmatrix}
			\bar{j}_1 + \frac{1}{2}&j_0 - \frac{1}{2}&j_0+\bar{j_1} - i_1\\j_0 + \bar{j}_1 + \bar{j}_2 - i_2&\bar{j}_2&\bar{j}_1 + \bar{j}_2 + \frac{1}{2}
		\end{Bmatrix} C^{j_0,\bar{j}_1;j_0+\bar{j}_1 - i_1}_{m_0,m_1;m_0+m_1}C^{j_0+\bar{j}_1 - i_1,\bar{j}_2;j_0+\bar{j}_1 + \bar{j}_2 - i_0}_{m_0+m_1,m_2;m_0+m_1+m_2}e^{(j_0 - \frac{1}{2} - \frac{i_2}{2},\bar{j}_1+\bar{j}_2 + \frac{1}{2} - \frac{i_2}{2})}_{m_0+m_1+m_2}.
\end{aligned}
\end{equation}
In general, $\mu_n(a_0,\bar{a}_1\epsilon,\dots ,\bar{a}_{n - 1}\epsilon)$ takes the following form
\begin{equation}
	\mu_n(e^{(j_0)}_{m_0},\bar{e}^{(\bar{j}_1)}_{m_1}\epsilon,\dots \bar{e}^{(\bar{j}_{n-1})}_{m_{n-1}}\epsilon) = \sum_{i_1,\dots,i_{n-1}} (\mu_n)^{j_0,\bar{j}_1\dots,\bar{j}_{n-1}}_{i_1,\dots,i_{n-1}}\left(  \prod_{l = 0}^{n - 2}C^{J_l - i_l,\bar{j}_{l+1},J_{l+1} - i_{l+1}}_{M_l,m_{l+1},M_{l+1}}\right)  e^{(j_0 - \frac{n-2}{2} - \frac{i_{n-1}}{2},\bar{J}_{n - 1} + \frac{n-2}{2} - \frac{i_{n - 1}}{2})}_{M_{n-1}}\epsilon,
\end{equation}
where $i_0 = 0$ and we define
\begin{equation}
\begin{aligned}
	&J_l = j_0 + \bar{j}_1 + \dots + \bar{j}_l, \quad l\geq 0,\\
	&\bar{J}_l = J_l- j_0, \quad l \geq 1,\\
	& M_l = m_0 + \dots +m_l,\quad l \geq 0.
\end{aligned}
\end{equation}
Using the recursion relation \ref{mun_rec} and the formula \ref{Prod_har_2}, we find that
\begin{equation}
\begin{aligned}
		(\mu_n)^{j_0,\bar{j}_1\dots,\bar{j}_{n-1}}_{i_1,\dots,i_{n-1}} & = \prod_{l = 2}^{n - 1} h_{j_0 - \frac{l-2}{2} - \frac{i_{l-1}}{2},\bar{J}_{l - 1} + \frac{l-2}{2} - \frac{i_{l - 1}}{2}}\sqrt{(2\bar{J}_{l-1} + l)(2\bar{j}_l + 1)(2J_{l - 1} - 2i_{l-1} + 1)}\\
		&\times \left( \prod_{l = 1}^{n - 2}\lambda_{j_0 - \frac{l-1}{2},\bar{J}_{l}+ \frac{l-1}{2};i_l}\lambda_{j_0 - \frac{l}{2},\bar{J}_{l}+ \frac{l}{2};i_l}^{-1}\right) \lambda_{j_0 - \frac{n-2}{2},\bar{J}_{n-1}+ \frac{n-2}{2};i_{n-1}}\\
		&\times \prod_{l=2}^{n-1} \begin{Bmatrix}
			\bar{J}_{l-1} + \frac{l_1}{2}&j_0 - \frac{l-1}{2}& J_{l-1} - i_{l-1}\\ J_l - i_l& \bar{j}_l& \bar{J}_{l} + \frac{l-1}{2}
		\end{Bmatrix}.
\end{aligned}
\end{equation}
The above expression simplifies to 
\begin{equation}
	\begin{aligned}
		&(\mu_n)^{j_0,\bar{j}_1\dots,\bar{j}_{n-1}}_{i_1,\dots,i_{n-1}} =  \\
	 &\lambda_{j_0 - \frac{n-2}{2},\bar{J}_{n-1}+ \frac{n-2}{2};i_{n-1}}\prod_{l = 2}^{n - 1} \sqrt{\frac{(2j_0 - l + 3)(2\bar{j}_l + 1)(2J_{l - 1} - 2i_{l-1} + 1)}{(2j_0 - l + 2 - i_{l-1})(2\bar{J}_{l-1}+ l - 1 - i_{l-1})}}\begin{Bmatrix}
			\bar{J}_{l-1} + \frac{l-1}{2}&j_0 - \frac{l-1}{2}& J_{l-1} - i_{l-1}\\ J_l - i_l& \bar{j}_l& \bar{J}_{l} + \frac{l-1}{2}
		\end{Bmatrix}.
	\end{aligned}
\end{equation}
As a result, we have the following formula for $m_n(e^{(j_0)}_{m_0},\bar{e}^{(\bar{j}_1)}_{m_1}\epsilon,\dots \bar{e}^{(\bar{j}_{n-1})}_{m_{n-1}}\epsilon)$
\begin{equation}\label{all_mn}
	m_n(e^{(j_0)}_{m_0},\bar{e}^{(\bar{j}_1)}_{m_1}\epsilon,\dots \bar{e}^{(\bar{j}_{n-1})}_{m_{n-1}}\epsilon)= \left.\sum_{i_1,\dots,i_{n-2}} (\mu_n)^{j_0,\bar{j}_1\dots,\bar{j}_{n-1}}_{i_1,\dots,i_{n-1}}\left(  \prod_{l = 0}^{n - 2}C^{J_l - i_l,\bar{j}_{l+1},J_{l+1} - i_{l+1}}_{M_l,m_{l+1},M_{l+1}}\right)\right|_{\substack{i_0 =0;\\i_{n - 1} = 2j_0 - n + 2}}  \bar{e}^{(\bar{J}_{n - 1} - j_0 + n-2 )}_{M_{n-1}}\epsilon.
\end{equation}
Here, the range of summation is taken to be
\begin{equation}
	0\leq i_l \leq \min\{2j_0 - l,2\bar{J}_l + l - 1\},\; \text{ for }l = 1,\dots n - 2.
\end{equation}
However, the actual range of summation is much smaller due to the constraint of the Wigner $6j$ symbol and the Clebsch–Gordan coefficients. For example, the summand is nonzero only when
\begin{equation}
	i_l \geq i_{l-1} ,\; \text{ for }l = 2,\dots n - 1.
\end{equation}

Using the cyclic structure, the above result is sufficient to determine the whole $m_n$. For example, we have the following
\begin{equation}
	\Tr_{S^3}\left( e^{(j_1)}_{m_1}\times m_n(e^{(j_2)}_{m_2},\bar{e}^{(\bar{j}_1)}_{m_1}\epsilon,\dots \bar{e}^{(\bar{j}_{n-1})}_{m_{n-1}}\epsilon)\right) = \Tr_{S^3}\left(\bar{e}^{(\bar{j}_{n-1})}_{m_{n-1}}\epsilon \times m_n( e^{(j_1)}_{m_1},e^{(j_2)}_{m_2},\bar{e}^{(\bar{j}_1)}_{m_1}\epsilon,\dots \bar{e}^{(\bar{j}_{n-2})}_{m_{n-2}}\epsilon)\right) 
\end{equation}
This determines the value of $m_n$ on $H_{b}^{0,0}(S^3)^{\otimes 2}\otimes H_{b}^{0,1}(S^3)^{\otimes n - 2}$. Other values of $m_n$ can be determined similarly.

\subsection{$L_\infty$ structure on the tensor product}
\label{sec:flat_full}
In this section, we study the $L_\infty$ structure on the tensor product between the $C_\infty$ algebra $H^{0,\sbullet}_b(S^3)$ and a Lie algebra $\mathfrak{g}$. It was shown in \cite{2017arXiv170202194R} that for an arbitrary $C_\infty$ algebra and a Lie algebra, there exists a canonical $L_\infty$ structure on the tensor product. However, an explicit formula is not provided in \textit{loc. cit.} due to the fact that the operad $\mathcal{L}ie$ does not have a canonical basis. 

In our cases, an explicit formula can be easily obtained by thinking about the homotopy transfer of the Lie algebra $\Omega_{b}^{0,\sbullet}(S^3)\otimes \mathfrak{g}$. The homotopy transfer from a Lie algebra to a $L_\infty$ algebra is similar to the homotopy transfer of associative algebra, except that we consider all (not necessarily planar) binary rooted trees and we replace the product map with the Lie bracket on each vertex. In our cases, the condition on non zero trees discussed in the last section still hold. Therefore, up to a permutation of the leaves, the trees that contribute to the $L_\infty$ operations are the same as in the last section. As a result, we have the following formula for the higher bracket $l_n$ restricted on $(H_{b}^{0,0}(S^3)\otimes \mathfrak{g}) \bigwedge^{n-1}(H_{b}^{0,1}(S^3)\otimes \mathfrak{g})$
\begin{equation}
\begin{aligned}
		&l_n(a_0\otimes x_0,\bar{a}_1\epsilon\otimes x_1,..., \bar{a}_{n-1}\epsilon\otimes x_{n-1}) \\
		= &\sum_{\sigma \in S_{n-1}}	m_n(a_0,\bar{a}_{\sigma(1)}\epsilon,... ,\bar{a}_{\sigma(n-1)}\epsilon)[...[[x_0,x_{\sigma(1)}],x_{\sigma(2)}],...,x_{\sigma(n-1)}],
\end{aligned}
\end{equation}
where $a_0\in (H_{b}^{0,0}(S^3),\bar{a}_i\epsilon \in H_{b}^{0,1}(S^3)$ and $x_i \in \mathfrak{g}$. We have that $|S_{n - 1}| = (n - 1)! = \dim\mathcal{L}ie(n)$. In fact, the elements $[...[[x_0,x_{\sigma(1)}],x_{\sigma(2)}],...,x_{\sigma(n-1)}],]$, $\sigma \in S_{n-1}$ form a basis of the Lie operad $\mathcal{L}ie(n)$ and are called Dynkin elements. 

Using the above $L_\infty$ structure, one can write down the full effective interaction of the KK theory. Recall that we denote $\{t_a\}$ a basis of the Lie algebra. We let $K: \mathfrak{g}^{\otimes 2} \to \C$ the Killing form and denote $K_{ab} = K(t_a,t_b)$. We define $t^a = \sum_{b}K^{ab}t_b$ where $K^{ab}$ is the inverse of $K_{ab}$. The KK fields are expanded as  $\mathbf{A} = \sum_{p,q \geq 0} \sum_a \mathbf{A}^a[p,q]t_aw_1^pw_2^q$ and $\mathbf{B} = \sum_{p,q \geq 0} \sum_a \frac{(p+q+1)!}{p!q!}\mathbf{B}_a[p,q]t^a\bar{w}_1^p\bar{w}_2^q$
Then we have
\begin{equation}\label{act_full}
\begin{aligned}
		&\frac{1}{2(n-1)!}\sum_{n \geq 2}\sum_{p,q,r,s \geq 0}\sum_{\substack{u_1 + \dots u_{n-1} = p+r -  n +2\\v_1+\dots v_{n-1} = q+s - n+2}}\sum_{\substack{a,b\\c_1,\dots,c_{n-1}}}\sum_{\sigma \in S_{n-1}}\\
		&\left( \prod_{i = 1}^{n-1}\frac{(u_i+v_i+1)!}{u_i!v_i!}\right) \Tr_{S^3}(w_1^pw_2^q\times m_n(w_1^rw_2^s,\bar{w}_1^{u_{\sigma(1)}}\bar{w}_2^{v_{\sigma(1)}}\epsilon,\dots ,\bar{w}_1^{u_{\sigma(n-1)}}\bar{w}_2^{v_{\sigma(n-1)}}\epsilon ))\\
		&\times K(t_a,[...[[t_b,t^{c_{\sigma(1)}}],t^{c_{\sigma(2)}}],...,t^{c_{\sigma(n-1)}}]) \int \mathbf{A}^a[p,q]\mathbf{A}^b[r,s]\mathbf{B}_{c_1}[u_1,v_1] \cdots\mathbf{ B}_{c_{n-1}}[u_{n-1},v_{n-1}],
\end{aligned}
\end{equation}
where $m_n$ can be computed by formula \ref{all_mn}. This gives us a complete answer to the classical action functional of the KK theory.

	\section{Holomorphic BF theory}
		\label{sec:HBF}
	\subsection{A brief review}
	In this section, we introduce the holomorphic BF theory. Fields of holomorphic BF theory are
	\begin{equation}
		\begin{aligned}
			&\EuScript{A} \in \Omega^{0,1}(X,\mathfrak{g})[1],\\
			&\EuScript{B} \in \Omega^{3,1}(X,\mathfrak{g})[1].
		\end{aligned}
	\end{equation}
	The action functional is given by
	\begin{equation}
		S = \int_X \EuScript{B} (\bar{\pa}\EuScript{A} + \frac{1}{2}[\EuScript{A},\EuScript{A}]).
	\end{equation}
	By varying the action function we obtain the following equations of motion.
	\begin{equation}
		\begin{aligned}
			&\bar{\pa}\EuScript{A} + \frac{1}{2}[\EuScript{A},\EuScript{A}] = 0,\\
			&\bar{\pa}\EuScript{B} + [\EuScript{A},\EuScript{B}] = 0.
		\end{aligned}
	\end{equation}
	The gauge transformations are generated by $\chi \in \Omega^{0,0}(X,\mathfrak{g}),\nu\in \Omega^{3,0}(X,\mathfrak{g})$ with
	\begin{equation}
		\begin{aligned}
			\delta \EuScript{A}= \bar{\pa} \chi + [\EuScript{A},\chi],\\
			\delta \EuScript{B} = \bar{\pa} \nu + [\EuScript{B},\nu].
		\end{aligned}
	\end{equation}
	
	For the purpose of performing KK reduction, it is more convenient to reformulate the theory into BV formalism. By adding ghosts, anti-fields and anti-ghost, the BV field content can be described by 
	\begin{equation}
		\begin{aligned}
			&\pmb{\EuScript{A}} \in \Omega^{0,\sbullet}(X,\mathfrak{g})[1],\\
			&\pmb{\EuScript{B} }\in \Omega^{3,\sbullet}(X,\mathfrak{g})[1].
		\end{aligned}
	\end{equation}
	The BV action functional is given by
	\begin{equation}
		HBF[\pmb{\EuScript{A}},\pmb{\EuScript{B} }] = \int_X \pmb{\EuScript{B} } (\bar{\pa}\pmb{\EuScript{A}} + \frac{1}{2}[\pmb{\EuScript{A}},\pmb{\EuScript{A}}]).
	\end{equation}
We see that the BV theory is encoded in the following dg Lie algebra
	\begin{equation}
	(\Omega^{0,\sbullet}(X,\mathfrak{g})\oplus \Omega^{3,\sbullet}(X,\mathfrak{g}) ,\pab,[-,-]).
\end{equation}

	\subsection{Cohomological KK reduction}
	\label{sec:KK_HBF}
	In this section, we study KK reduction of holomorphic BF theory on $\C^3/\C$.
	
	Since fields of holomorphic BF theory are also Dolbeault forms, we can use the same results as we analyze the holomorphic Chern-Simons theory to perform the KK reduction. We work in the BV formalism, decompose the field content into $S^3$ harmonics and then pass to the CR cohomology. The field $\pmb{\EuScript{A}}$ leads to the following KK tower of fields\footnote{To avoid introducing too many symbols, we used the same letter $ \mathbf{A}$ and $ \mathbf{B}$ here as in the holomorphic Chern-Simons theory. The correct meaning of the symbol should be clear from the context.}
	\begin{equation}
		\begin{aligned}
			&\mathbf{A} \in	\Omega_{3d}^{\sbullet}(\C\times\R_{>0})\otimes H_{b}^{0,0}(S^3)\otimes \mathfrak{g}[1],\\
			& \tilde{\mathbf{B}} \in	\Omega_{3d}^{\sbullet}(\C\times\R_{>0})\otimes H_{b}^{0,1}(S^3)\otimes \mathfrak{g} .	
		\end{aligned}
		\end{equation}
	We can choose the standard holomorphic volume form $\Omega_{\C^3} = dzdz_1dz_2$.Then the field $\pmb{\EuScript{B}}$ leads to the following KK tower of fields
		\begin{equation}
		\begin{aligned}
			& \mathbf{B} \in \Omega_{3d}^{\sbullet}(\C\times\R_{>0})\otimes H_{b}^{0,1}(S^3)\otimes \mathfrak{g}^*,\\
			& \tilde{\mathbf{A}} \in \Omega_{3d}^{\sbullet}(\C\times\R_{>0})\otimes H_{b}^{0,0}(S^3)\otimes \mathfrak{g}^*[1].
		\end{aligned}
	\end{equation}
\begin{remark}
	More precisely, the KK fields corresponding to $\pmb{\EuScript{B}}$ live in $\Omega_{3d}^{\sbullet,(1)}(\C\times\R_{>0})\otimes H_{b}^{2,1}(S^3)\otimes \mathfrak{g}^*$$\oplus \Omega_{3d}^{\sbullet,(1)}(\C\times\R_{>0})\otimes H_{b}^{2,0}(S^3)\otimes \mathfrak{g}^*[1]$. By choosing the volume form, we can forget the difference in flat space.
\end{remark}
	We can replace $\epsilon$ by $dz$ and identify $\C[\bar{w}_1,\bar{w}_2]$ with the dual of $\C[w_1,w_2]$ as in Section \ref{sec:KK_HT}. According to our discussion in Section \ref{sec:KK_HT}, this field content is exactly the holomorphic twist of $3d$ $\mathcal{N} = 2$ theory with a vector multiplet valued in $\mathfrak{g}[w_1,w_2]$ and an adjoint chiral multiplet.

	Having the field content, the next step is to analyze the action functional. The information about the action functional is encoded in the $L_\infty$ structure on the BV fields and the integration map. With the choice of holomorphic volume form, the integration map is the same as the integration map we studied in holomorphic Chern-Simons theory \ref{prop_int}. The dg Lie algebra of the holomorphic BF theory can be identified with the shifted cotangent structure of $(\Omega^{0,\sbullet}(X,\mathfrak{g}),\pab ,[-,-])$, which encode holomorphic Chern-Simons theory. As a result, we don't need to go through the process of homotopy transfer again. After we pass to the CR cohomology, the $L_\infty$ structure is given by the shifted cotangent structure of the $L_\infty$ algebra $(\Omega_{3d}^{\sbullet}(\C\times\R_{>0})\otimes H_{b}^{0,\sbullet}(S^3)\otimes \mathfrak{g},\{l_k\})$. We identify the dual of the Lie algebra with itself via the Killing pairing. In summary, the $L_\infty$ structure corresponding to the KK reduction of holomorphic BF theory consists of the following data
\begin{enumerate}
	\item the $L_\infty$ algebra $\mathfrak{g}_{CR} = (H_{b}^{0,\sbullet}(S^3)\otimes \mathfrak{g},\{l_k\})$.
	\item A copy $\mathfrak{g}_{CR}'$ of $H_{b}^{0,\sbullet}(S^3)\otimes \mathfrak{g}$ considered as an abelian Lie algebra.
	\item  An $\mathfrak{g}_{CR}$ module structure on $\mathfrak{g}_{CR}'$. This $L_\infty$ module structure is given by maps
	\begin{equation}
		\begin{aligned}
			\mathfrak{g}_{CR}^{\otimes n - 1} \otimes \mathfrak{g}_{CR}' &\to \mathfrak{g}_{CR},\\
			(t_1,\dots, t_{n-1},t'_n) &\mapsto l_n	(t_1,\dots, t_{n-1},t'_n) ,
		\end{aligned}
	\end{equation}
where $t_i\in \mathfrak{g}_{CR} $ and $t_n' \in \mathfrak{g}_{CR}'$ .
\end{enumerate}
	
	As a consequence, we can build up the equation of motion and action functional for the KK theory of holomorphic BF theory by using the same set of data $\{l_k\}$ as in the last section. Using the differential and the $2$-bracket, we get the following action functional
		\begin{equation}\label{act_KKBF}
		\int_{3d} \Tr \left(\mathbf{B}(\hat{d}\mathbf{A} + \frac{1}{2}[\mathbf{A},\mathbf{A}])+ \tilde{\mathbf{A}}(\hat{d}\tilde{\mathbf{B}} + [\mathbf{A},\tilde{\mathbf{B}}])\right),
	\end{equation}
which coincide with the action functional of HT twist of $\mathcal{N} = 2$ vector multiplet with an adjoint chiral multiplet \ref{3d_vec_chi_act}.
Higher brackets lead to the following interactions
\begin{equation}\label{def_BF}
	\sum_{n = 3}^{\infty}\int \frac{1}{2(n-2)!} \Tr(\mathbf{B} \wedge l_n(\mathbf{A},\mathbf{A},\underbrace{\tilde{\mathbf{B}},\dots,\tilde{\mathbf{B}}}_{n - 2}))+ \frac{1}{(n-1)!} \Tr(\tilde{\mathbf{A}}\wedge l_n(\mathbf{A},\underbrace{\tilde{\mathbf{B}},\dots,\tilde{\mathbf{B}}}_{n-1})).
\end{equation}
In summary, we have the following
		\begin{prop}
		The KK reduction of the holomorphic BF theory on $\C^3\backslash\C$ to $\C\times \R_{ > 0}$ is the same as the holomorphic twist of $3d$ $\mathcal{N} = 2$ theory, associated to a $\mathcal{N} = 2$ vector multiplet valued in $\mathfrak{g}[w_1,w_2]$ and an adjoint chiral multiplet. The action functional consist of the standard holomorphic twist of $3d$ $\mathcal{N} = 2$ action, together with the deformation \ref{def_BF}.
	\end{prop}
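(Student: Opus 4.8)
The plan is to reduce the statement to the corresponding result for holomorphic Chern--Simons theory (the Proposition of Section~\ref{sec:KK_HT}), using that holomorphic BF theory on a Calabi--Yau $3$-fold $X$ is the ``BF-ization'' of holomorphic Chern--Simons theory: as recalled in Section~\ref{sec:KK_HBF}, its dg Lie algebra $(\Omega^{0,\sbullet}(X,\mathfrak{g})\oplus\Omega^{3,\sbullet}(X,\mathfrak{g}),\pab,[-,-])$ is the shifted cotangent of $(\Omega^{0,\sbullet}(X,\mathfrak{g}),\pab,[-,-])$ — the square-zero extension by the shifted coadjoint module — with the cyclic pairing of the two summands supplied by the Calabi--Yau integration map \ref{int_CY}. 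The key point is that both operations we perform on the Chern--Simons side, namely the isomorphism $K$ of Proposition~\ref{K} and homotopy transfer along the special deformation retract of Section~\ref{sec:SDR}, are compatible with taking shifted cotangents; granting this, the entire cohomological KK reduction of HBF is obtained from that of HCS by adjoining the coadjoint copy, and there is nothing to do beyond reading off the resulting field content and action and comparing them with \ref{3d_vec_chi_act} and \ref{def_BF}.

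Concretely, I would proceed in three steps. First, since $K$ is an isomorphism of dg Lie algebras it induces an isomorphism of their shifted cotangents; combined with Proposition~\ref{prop_int}, which writes the $\C^3$ integration map as $\int_{3d}\otimes\Tr_{S^3}$ and hence matches the cyclic pairings, this identifies the ``naive'' KK reduction of HBF with the $3d$ theory of action \ref{act_KKBF}, with as-yet no higher corrections. Second, pass to $\pacr$-cohomology. Here one uses that the SDR between $(\Omega_b^{0,\sbullet}(S^3),\pacr)$ and $(H_b^{0,\sbullet}(S^3),0)$ constructed in Section~\ref{sec:SDR} is \emph{cyclic} with respect to $\Tr_{S^3}$, i.e.\ $\Tr_{S^3}(h(a)b)=\Tr_{S^3}(ah(b))$: a cyclic SDR lifts canonically to an SDR on the shifted cotangent (dualizing $h$, $i$, $p$ on the coadjoint factor), and homotopy transfer along it sends the shifted cotangent $L_\infty$ algebra to the shifted cotangent of the transferred $L_\infty$ algebra $(\Omega_{3d}^{\sbullet}(\C\times\R_{>0})\otimes H_b^{0,\sbullet}(S^3)\otimes\mathfrak{g},\hat{d},l_2,l_3,\dots)$. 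This is precisely the list of data — the $L_\infty$ algebra $\mathfrak{g}_{CR}$, the abelian copy $\mathfrak{g}_{CR}'$, and the module maps $l_n$ — recorded in Section~\ref{sec:KK_HBF}, so no fresh homotopy transfer is needed and, in particular, the higher products are literally the ones ($m_n$, resp.\ $l_n$) already computed for holomorphic Chern--Simons theory.

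Third, identify the field content and action. From $H_b^{0,0}(S^3)=\C[w_1,w_2]$, $H_b^{0,1}(S^3)=\C[\bar{w}_1,\bar{w}_2]\epsilon$, the substitution $\epsilon\to dz$ (which converts an $\Omega_{3d}^{\sbullet}$-factor into an $\Omega_{3d}^{\sbullet,(1)}$-factor), the identifications $\C[\bar{w}_1,\bar{w}_2]\cong\C[w_1,w_2]^*$ via $\Tr_{S^3}$ and $\mathfrak{g}\cong\mathfrak{g}^*$ via the Killing form, and — as in the Remark of Section~\ref{sec:KK_HBF} — the Calabi--Yau volume form, which lets one trade the $(3,\sbullet)$-forms carried by $\pmb{\EuScript{B}}$ for $(0,\sbullet)$-forms, the field $\pmb{\EuScript{A}}$ contributes the pair $(\mathbf{A},\tilde{\mathbf{B}})$ and $\pmb{\EuScript{B}}$ the pair $(\mathbf{B},\tilde{\mathbf{A}})$ with exactly the degrees and twist shifts of a $3d$ $\mathcal{N}=2$ vector multiplet valued in $\mathfrak{g}[w_1,w_2]$ and an adjoint chiral multiplet, as defined in Section~\ref{sec:KK_HT}. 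Feeding the shifted-cotangent $L_\infty$ data into the cyclic action formula \ref{act_Linf}: the $\hat{d}$- and $l_2$-terms, with the measure $\int_{3d}\Tr$ from Proposition~\ref{prop_int} and the fact that $l_2$ on $H_b^{0,\sbullet}(S^3)$ is wedge product and on $\mathfrak{g}$ the bracket, reproduce \ref{act_KKBF}, i.e.\ exactly the standard HT-twisted action \ref{3d_vec_chi_act}; the $l_{n\geq 3}$-terms, organized by whether the coadjoint field sitting outside $l_n$ is $\mathbf{B}$ or $\tilde{\mathbf{A}}$ (the square-zero condition allowing at most one coadjoint input inside $l_n$), split into the two families of \ref{def_BF}. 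The prefactors $\tfrac{1}{2(n-2)!}$ and $\tfrac{1}{(n-1)!}$ are the inverse symmetry factors $1/\prod_i k_i!$ of the monomials involved — two copies of $\mathbf{A}$ with $n-2$ copies of $\tilde{\mathbf{B}}$, respectively one $\mathbf{A}$ with $n-1$ copies of $\tilde{\mathbf{B}}$ — exactly as in the Chern--Simons computation where the analogous monomial (two $\mathbf{A}$'s and $n-1$ $\mathbf{B}$'s) produced $\tfrac{1}{2(n-1)!}$.

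The main obstacle is the structural claim of the second step: that homotopy transfer intertwines ``take the shifted cotangent'' on both sides, so that the $L_\infty$ deformations of HBF are controlled by the very same higher brackets as HCS rather than by an independent and more intricate transfer. The real content there is not the abstract functoriality of cotangents but the compatibility of the chosen contraction on the coadjoint factor with the cyclic pairing, which is exactly where the identity $\Tr_{S^3}(h(a)b)=\Tr_{S^3}(ah(b))$ (and the cyclic $A_\infty$ statement of \cite{Kajiura:2003ax}) enters. Everything else — the field-content dictionary and the matching of combinatorial prefactors — is bookkeeping that runs in parallel with the holomorphic Chern--Simons case treated in Section~\ref{sec:KK_HT}.
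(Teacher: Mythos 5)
Your proposal is correct and follows essentially the same route as the paper: Section~\ref{sec:KK_HBF} likewise identifies the holomorphic BF dg Lie algebra as the shifted cotangent of the holomorphic Chern--Simons one, observes that no fresh homotopy transfer is needed because the transferred structure is the shifted cotangent of the already-computed $L_\infty$ algebra $(H_b^{0,\sbullet}(S^3)\otimes\mathfrak{g},\{l_k\})$, and then reads off the field content and the action \ref{act_KKBF} plus the deformation \ref{def_BF}. Your explicit appeal to the cyclicity $\Tr_{S^3}(h(a)b)=\Tr_{S^3}(ah(b))$ to justify compatibility of the transfer with the cotangent construction only makes precise a step the paper asserts more tersely.
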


The full action functional can be written down explicitly using the higher product $m_n$ we computed  in Section \ref{sec:all_mn}, and is similar to the expansion of holomorphic Chern-Simons \ref{act_full}.
	
	\subsection{Residue supersymmetry and its breaking}
	
	According to the discussion of the last section, the zero modes of KK theory of holomorphic BF theory gives rise to the holomorphic twist of $3d$ $\mathcal{N} = 2$ theory with a vector multiplet and an adjoint chiral multiplet. In fact, this field content consists of a $3d$ $\mathcal{N} =4$ vector multiplet. A $3d$ $\mathcal{N} =4$ theory has more supersymmetry than $\mathcal{N} = 2$ theory, and has residue supersymmetry that survives after the 
	holomorphic twist. This residue supersymmetry and be used to further twist the theory, and will lead to the $A$ and $B$ topological twists of $3d$ $\mathcal{N} =4$ theory. Details of the relation between holomorphic twist and the $A$ and $B$ twist are studied in \cite{Garner:2022vds}. Here we briefly review some of the results.
	
	The theory formulated using the CR cohomology has an infinite series of interactions, which is cumbersome for the analysis of symmetry. Therefore, we use the equivalent formulation of the KK theory built on the CR cochain. The BV fields consist of 
		\begin{equation}
		\begin{aligned}
			&\bm{\mathcal{A}} \in	\Omega_{3d}^{\sbullet}(\C\times\R_{>0})\otimes\Omega_{b}^{0,0}(S^3)\otimes \mathfrak{g}[1],\\
			& \bm{\Phi} \in	\Omega_{3d}^{\sbullet}(\C\times\R_{>0})\otimes \Omega_{b}^{0,1}(S^3)\otimes \mathfrak{g},\\	
			& \bm{\mathcal{B}}\in \Omega_{3d}^{\sbullet}(\C\times\R_{>0})\otimes \Omega_{b}^{0,1}(S^3)\otimes \mathfrak{g}^*,\\
			& \bm{\Psi} \in \Omega_{3d}^{\sbullet}(\C\times\R_{>0})\otimes \Omega_{b}^{0,0}(S^3)\otimes \mathfrak{g}^*[1].
		\end{aligned}
	\end{equation}
The action functional of this theory contains the standard action functional \ref{3d_vec_chi_act} for a $3d$ $\mathcal{N} =4$ vector multiplet, together with the following deformation
	\begin{equation}\label{deff_BF}
	\int(\bm{\Psi}\pacr \bm{\mathcal{A}}).
\end{equation}

For the $3d$ $\mathcal{N} =4$ vector multiplet, with the standard holomorphic twisted action $S_{HT}$, \cite{Garner:2022vds} identified the deformation corresponding to the further $A$ and $B$ topological twist as follows
	\begin{equation}
		\delta S_A =  \int \bm{\mathcal{B}}\bm{\Phi},\quad\quad \delta S_B = \int \bm{\Psi} \pa_z \bm{\mathcal{A}} .
	\end{equation}
These two deformations are compatible with the standard holomorphic twisted action $S_{HT}$
\begin{equation}
	\{	\delta S_A,S_{HT}\} = 	\{	\delta S_B,S_{HT}\} = 0.
\end{equation}
They also generate the residue supersymmetry. Let $Q_A = 	\{	\delta S_A,-\} $ and $Q_B = 	\{	\delta S_B,-\} $, we have
\begin{equation}
	\{Q_A,Q_B\} = \pa_z.
\end{equation}
For theory with the standard HT twist of $3d$ $\mathcal{N} = 4$ action, we can add either $Q_A$ or $Q_B$ to the BRST differential, or equivalently add $\delta S_A$ or $\delta S_B$ to the BV action functional. Then $\pa_z$ will become BRST exact, and the new theory will become topological. They correspond to the $A$ and $B$ topological twist of $3d$ $\mathcal{N} = 4$ theory respectively.

	However, adding the deformed term \ref{deff_BF} will break the residue supersymmetry. We have
\begin{equation}
\{ \int(\bm{\Psi}\pacr \bm{\mathcal{A}}),\delta S_A \} = \int\bm{\Psi}\pacr \bm{\Phi} + \int(\bm{\mathcal{B}}\pacr \bm{\mathcal{A}}) \neq 0 .
\end{equation}
	As a consequence, the residue supercharge $Q_A$ does not survive the interaction of the KK modes.
	
	Although the residue supersymmetry is broken, the other supercharge $Q_B$, is still compatible with the full KK theory. 
	\begin{equation}
	\{ \int(\bm{\Psi}\pacr \bm{\mathcal{A}}),\delta S_B \} = 0.
	\end{equation}
	This means that we can still perform a further "$B$ twist" by adding $Q_B$ to the BRST differential. This has the effect of turning the differential $\hat{d}$ into $\hat{d} + \epsilon\pa_z$. However, this "$B$ twist" will not lead to a topological theory as the $\pa_z$ is no longer BRST exact \footnote{This deformation is topological only for the zero mode}.  It is interesting that this further "$B$ twist" can also be accomplished by putting the $6d$ theory on the deformed geometry $SL_2(\C)$. We will see this in Section \ref{sec:def}.

	\section{Poisson BF theory}
	\label{sec:PBF}
\subsection{A brief review}
Holomorphic Poisson BF theory, or Poisson BF theory for short, is the same as an abelian BF theory together with a Poisson bracket. The field content of Poisson BF theory consist
\begin{equation}
\begin{aligned}
		&\EuScript{H} \in \Omega^{0,1}(\C^3),\\
	&\EuScript{W} \in \Omega^{3,1}(\C^3).
\end{aligned}
\end{equation}
We introduce the following holomorphic Poisson tensor
\begin{equation}\label{Poisson}
	\pi = \frac{\pa}{\pa z_1} \wedge \frac{\pa}{\pa z_2} .
\end{equation}
It induces the Poisson bracket
\begin{equation}
	\{f,g\}_{\pi} = \epsilon_{ij} \frac{\pa f}{\pa z_i} \frac{\pa g}{\pa z_j}.
\end{equation}
Using the Poisson bracket, the action functional can be written as
\begin{equation}
	\int \EuScript{W}(\pab\EuScript{H} + \frac{1}{2}\{\EuScript{H},\EuScript{H}\}_{\pi}).
\end{equation}
Like holomorphic BF theory, Poisson BF theory has gauge transformations generated by $\chi \in \Omega^{0,0}(X,\mathfrak{g}),\nu\in \Omega^{3,0}(X,\mathfrak{g})$. It takes the following form
	\begin{equation}
	\begin{aligned}
		\delta \EuScript{H}= \bar{\pa} \chi + \{\EuScript{H},\chi\}_{\pi},\\
		\delta \EuScript{W} = \bar{\pa} \nu +\{\EuScript{W},\nu\}_{\pi}.
	\end{aligned}
\end{equation}

\begin{remark}
	For Poisson BF theory on twistor space $\mathbb{PT}$, the fields should be $\EuScript{H} \in \Omega^{0,1}(\mathbb{PT},\mathcal{O}(2))$, $\EuScript{W} \in \Omega^{3,1}(\mathbb{PT},\mathcal{O}(-2))$. However, we will be working with the theory on a open patch $\C^3 \subset \mathbb{PT}$, so we can forget the difference o the line bundle.
\end{remark}

It will be convenient to reformulate the theory into BV formalism. The BV field content consists of
\begin{equation}
	\begin{aligned}
		&\pmb{\EuScript{H} }\in \Omega^{0,\sbullet}(\C^3),\\
		&\pmb{\EuScript{W}} \in \Omega^{3,\sbullet}(\C^3).
	\end{aligned}
\end{equation}
The BV action functional is given by
\begin{equation}
		\int \pmb{\EuScript{W}}(\pab\pmb{\EuScript{H}} + \frac{1}{2}\{\pmb{\EuScript{H}},\pmb{\EuScript{H}}\}_{\pi}).
\end{equation}
Note that the BV structure of the theory is encoded in the following dg Lie algebra structure 
\begin{equation}
	(\Omega^{0,\sbullet}(\C^3)\oplus\Omega^{3,\sbullet}(\C^3),\pab,\{-,-\}_{\pi}),
\end{equation}
\subsection{Cohomological KK reduction}
\label{sec:KK_Pois}
In this section, we analyze the KK reduction of Poisson BF theory, specifically the KK theory built on the $\pacr$ cohomology. Note that the dg Lie algebra structure encoding Poisson BF theory is the same as the (shifted) cotangent structure on the following dg Lie algebra
\begin{equation}
	(\Omega^{0,\sbullet}(\C^3),\pab,\{-,-\}_{\pi}).
\end{equation}
Therefore, we only need to understand how the Lie bracket $\{-,-\}_{\pi}$ on $\Omega^{0,\sbullet}(\C^3)$ behave under the KK reduction. As a first step, we use the isomorphism $K$ to identify the complex with $(\Omega_{3d}^{\sbullet}\otimes \Omega_b^{0,\sbullet}(S^3),\hat{d} + \pacr)$. The Poisson bracket $\{-,-\}_{\pi}$ (which is a Lie bracket) induces a dg Lie algebra structure on the space $\Omega_{3d}^{\sbullet}\otimes \Omega_b^{0,\sbullet}(S^3)$. Therefore, we have the following isomorphic dg Lie algebra:
\begin{equation}\label{com_PBF_0}
	(\Omega_{3d}^{\sbullet}\otimes \Omega_b^{0,\sbullet}(S^3),\hat{d} + \pacr, \{-,-\}_{\pi}).
\end{equation}
Note that the Lie bracket above is defined as $K(\{K^{-1}(\alpha),K^{-1}(\beta)\}_\pi)$. To simplify the notation we still denote it by $\{-,-\}_{\pi}$.
Unlike the KK reduction of holomorphic Chern-Simons, where the product structure comes separately from the product on CR complex and the product on differential forms $\Omega_{3d}^{\sbullet}$, the induced bracket $\{-,-\}_{\pi}$ contains mixing term between the two complexes. For example, we compute the bracket of $t^2$ and $w_1$. The images of $t^2$ and $w_1$ under the map $K^{-1}$ are $z_1\bar{z}_1+ z_2\bar{z}_2$ and $z_1$ respectively. We have $\{(z_1\bar{z}_1+ z_2\bar{z}_2),z_1\}_\pi = -\bar{z}_2$. This implies 
\begin{equation}
	\{t^2,w_1\}_{\pi} = - t^2\bar{w}_2.
\end{equation}
Terms like this will lead to a complicated form of the Lie brackets  $\{-,-\}_{\pi}$  on the complex $\Omega_{3d}^{\sbullet}\otimes \Omega_b^{0,\sbullet}(S^3)$, and will result in a cumbersome action functional. Passing to the $\pacr$ cohomology will make things even worse. Computing the higher brackets on the $\Omega_{3d}^{\sbullet}\otimes H_b^{0,\sbullet}(S^3)$ will be an elaborate job.

One can actually simplifies the Lie bracket on $\Omega_{3d}^{\sbullet}\otimes \Omega_b^{0,\sbullet}(S^3)$. Instead of passing to the $\pacr$ cohomology, we first pass to the cohomology of $\hat{d}$. The cohomology of $(\Omega_{3d}^{\sbullet}(\C\times\R_{>0}),\hat{d})$ is simple and gives us $\mathcal{O}_\C$, which is the space of holomorphic functions on $\C$. Since the cohomology is concentrated in degree $0$, no higher bracket is generated at this step. Therefore, we find the following complex that is quasi-isomorphic to the complex \ref{com_PBF_0}
\begin{equation}\label{com_PBF_1}
	(\mathcal{O}_\C\otimes \Omega_b^{0,\sbullet}(S^3),\pacr, \{-,-\}_{\bar{\pi}}).
\end{equation}
By restricting the Poisson bracket to $\mathcal{O}_{\C}\otimes \Omega_b^{0,\sbullet}(S^3)$, we find that there is no mixing term in the bracket  $\{-,-\}_{\bar{\pi}}$  between $\mathcal{O}_{\C}$ and $\Omega_b^{0,\sbullet}(S^3)$ 
\begin{equation}
	\{f(z),Y(w_i,\bar{w}_i)\}_{\bar{\pi}} = 0.
\end{equation}	
Therefore, this bracket can be defined solely on $\Omega_b^{0,\sbullet}(S^3)$, and we have the dg Lie algebra $(\Omega_b^{0,\sbullet}(S^3),\pacr, \{-,-\}_{\bar{\pi}})$. In this way, the dg Lie algebra $(\mathcal{O}_\C\otimes \Omega_b^{0,\sbullet}(S^3),\pacr, \{-,-\}_{\bar{\pi}})$ is the tensor product of the commutative algebra $\mathcal{O}_\C$ with the dg Lie algebra $	(\Omega_b^{0,\sbullet}(S^3),\pacr, \{-,-\}_{\bar{\pi}})$. 

Then we can pass to the CR cohomology. Applying homotopy transfer theorem the dg Lie algebra $	(\Omega_b^{0,\sbullet}(S^3),\pacr, \{-,-\}_{\bar{\pi}})$, we obtain an $L_\infty$ algebra
\begin{equation}\label{Linf_PBF}
	(H_{b}^{0,\sbullet}(S^3),0, \{-,-\}_{\bar{\pi}},\{-,-,-\}_{\bar{\pi},3},\dots).
\end{equation}
If we tensor the above $L_\infty$ algebra with the dg commutative algebra $(\Omega_{3d}^{\sbullet},\hat{d})$, we obtain a $L_\infty$ algebra 
\begin{equation}
	(\Omega_{3d}^{\sbullet}\otimes H_{b}^{0,\sbullet}(S^3),\hat{d},\{...\}_{\bar{\pi},n},n\geq 2).
\end{equation}
Passing to the $\hat{d}$ cohomology, we get an $L_\infty$ quasi-isomorphic $L_\infty$ algebra $(\mathcal{O}_\C\otimes H_{b}^{0,\sbullet}(S^3),\{...\}_{\bar{\pi},n},\;n \geq 2)$, which is further $L_\infty$ quasi-isomorphic to \ref{com_PBF_1}. 

To summary, we have the following chain of $L_{\infty}$-quasi isomorphism
\begin{equation}
	(\Omega_{3d}^{\sbullet}\otimes \Omega_b^{0,\sbullet}(S^3),\hat{d} + \pacr, \{-,-\}_{\pi}) \overset{i}{\leftarrow} (\mathcal{O}_\C\otimes H_{b}^{0,\sbullet}(S^3),\{...\}_{\bar{\pi},n}) \overset{i}{\rightarrow} (\Omega_{3d}^{\sbullet}\otimes H_{b}^{0,\sbullet}(S^3),\{...\}_{\bar{\pi},n}).
\end{equation}

As a consequence, we can equivalently work with the $L_{\infty}$ algebra $(\Omega_{3d}^{\sbullet}\otimes H_{b}^{0,\sbullet}(S^3),\{...\}_{\bar{\pi},n})$ for our field theory. The difference between this $L_\infty$ structure and the original structure \ref{com_PBF_0} is that this $L_\infty$ structure is defined completely on the tangential Cauchy Riemann cohomology $ H_{b}^{0,\sbullet}(S^3)$. Analysis of this field theory can be done under the same framework of twisted $3d$ $\mathcal{N} = 2$ theory.

If we omit all the higher brackets $\{...\}_{\bar{\pi},n}$, $n\geq 3$, the Lie algebra $(H_{b}^{0,\sbullet}(S^3),\{-,-\}_{\bar{\pi}})$ can be identified with 
\begin{equation}
	T^*[1]\mathrm{Ham}(\C^{2}),
\end{equation}
where $\mathrm{Ham}(\C^{2})$ is the Lie algebra of Hamiltonian vector fields on $\C^2$. We provide more detail about this fact in the next section.

Using the special deformation retract constructed in \ref{sec:SDR}, the transferred $L_\infty$ structure is a cyclic $L_\infty$ algebra with the symmetric bilinear pairing given by $\Tr_{S^3}$. Then we can formulate the KK theory as follows, generalizing the results in \cite{Costello:2022wso}
\begin{prop}
	The KK reduction of the Poisson BF theory on $\C^3\backslash\C$ to $\C\times \R_{ > 0}$ is the same as the holomorphic twist of $3d$ $\mathcal{N} = 2$ theory, associated to a $\mathcal{N} = 2$ vector multiplet valued in the Lie algebra $\mathrm{Ham}(\C^{2})$ and a chiral multiplet valued in the adjoint representation
		\begin{equation}
		\begin{aligned}
			&\mathbf{H} \in	\Omega_{3d}^{\sbullet}(\C\times\R_{>0})\otimes \mathrm{Ham}(\C^{2}),\\
			& \tilde{\bm{w}} \in	\Omega_{3d}^{\sbullet}(\C\times\R_{>0})\otimes \mathrm{Ham}(\C^{2})^{*}\epsilon,\\
			& \bm{w}\in \Omega_{3d}^{\sbullet}(\C\times\R_{>0})\otimes\mathrm{Ham}(\C^{2})^{*}\epsilon,\\
			& \tilde{\mathbf{H}}\in \Omega_{3d}^{\sbullet}(\C\times\R_{>0})\otimes\mathrm{Ham}(\C^{2}).
		\end{aligned}
	\end{equation}
The BV action functional consist of the standard twisted $3d$ $\mathcal{N} = 2$ action
\begin{equation}
	\int \Tr_{S^3} \bm{w}(\hat{d}\mathbf{H}  + \{\mathbf{H} ,\mathbf{H} \}_{\bar{\pi}} ) + \int \Tr_{S^3} \tilde{\bm{w}}(\hat{d}\tilde{\mathbf{H} } + \{\mathbf{H} ,\tilde{\mathbf{H} } \}_{\bar{\pi}} ) .
\end{equation}
Together with the following deformations
\begin{equation}\label{def_PoiBF}
	\sum_{n = 3}^{\infty}\int  \frac{1}{2(n-2)!}\Tr(\bm{w} \{\mathbf{H} ,\mathbf{H} ,\underbrace{\tilde{\bm{w}},\dots,\tilde{\bm{w}}}_{n - 2}\}_{\bar{\pi},n} ) + \frac{1}{(n - 1)!}\Tr(\tilde{\mathbf{H}}  \{\mathbf{H} ,\underbrace{\tilde{\bm{w}},\dots,\tilde{\bm{w}}}_{n-1}\}_{\bar{\pi},n} ).
\end{equation}
\end{prop}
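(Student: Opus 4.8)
The plan is to follow the chain of $L_\infty$ quasi-isomorphisms already assembled above and then translate the resulting cyclic $L_\infty$ structure into a BV action via the prescription \ref{act_Linf}, exactly as was done for the holomorphic twist of a $3d$ $\mathcal{N}=2$ vector multiplet in Section~\ref{sec:KK_HT}. First I would record that the BV complex of Poisson BF theory is the shifted cotangent of the dg Lie algebra $(\Omega^{0,\sbullet}(\C^3),\pab,\{-,-\}_\pi)$, so that it suffices to track the bracket $\{-,-\}_\pi$ through the reduction; once the holomorphic volume form $\Omega_{\C^3}=dzdz_1dz_2$ is fixed, the KK tower coming from $\pmb{\EuScript{W}}\in\Omega^{3,\sbullet}$ is the linear dual of the tower coming from $\pmb{\EuScript{H}}\in\Omega^{0,\sbullet}$, by Proposition~\ref{prop_int}, which already fixes the four families of fields $\mathbf H,\tilde{\mathbf H},\bm w,\tilde{\bm w}$.

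Then I would apply the isomorphism $K$ of Proposition~\ref{K} to pass to $(\Omega_{3d}^\sbullet(\C\times\R_{>0})\otimes\Omega_b^{0,\sbullet}(S^3),\hat d+\pacr,\{-,-\}_\pi)$ and take cohomology with respect to $\hat d$ \emph{first}. Since $H^\sbullet(\Omega_{3d}^\sbullet(\C\times\R_{>0}),\hat d)=\mathcal{O}_\C$ is concentrated in degree zero, homotopy transfer along this step produces no higher brackets, and the key point — checked by a short computation in the $z_i$ coordinates using $K^{-1}$ — is that the transported bracket has no mixing term between $\mathcal{O}_\C$ and $\Omega_b^{0,\sbullet}(S^3)$, i.e.\ $\{f(z),Y(w_i,\bar w_i)\}_{\bar\pi}=0$, because $\pi=\pa_{z_1}\wedge\pa_{z_2}$ contains no $z$-derivative. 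Hence $(\Omega_b^{0,\sbullet}(S^3),\pacr,\{-,-\}_{\bar\pi})$ is a dg Lie algebra on its own, the full structure being its tensor product with the dg commutative algebra $(\mathcal{O}_\C,0,\cdot)$, equivalently with $(\Omega_{3d}^\sbullet,\hat d,\cdot)$. I would then run the homotopy transfer theorem along the special deformation retract of Section~\ref{sec:SDR} to obtain the $L_\infty$ algebra $(H_b^{0,\sbullet}(S^3),\{-,-\}_{\bar\pi},\{\,\cdot\,\}_{\bar\pi,3},\dots)$, tensor back with $(\Omega_{3d}^\sbullet,\hat d)$, and invoke compatibility of the SDR with the cyclic pairing $\Tr_{S^3}$ (as in the cyclic homotopy transfer of \cite{Kajiura:2003ax}) together with the fact that transfer commutes with the cotangent lift, so that the transferred structure on the shifted cotangent is again cyclic and is the cotangent lift of the $L_\infty$ structure on $H_b^{0,\sbullet}(S^3)\otimes\mathrm{Ham}(\C^2)$.

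With the cyclic $L_\infty$ data in hand, the last step is bookkeeping: identify the degree-zero part $(H_b^{0,0}(S^3),\{-,-\}_{\bar\pi})$ with $\mathrm{Ham}(\C^2)$ and $H_b^{0,1}(S^3)$ with the coadjoint module $\mathrm{Ham}(\C^2)^*$, so that the $\hat d$ and $2$-bracket pieces of \ref{act_Linf} reproduce the standard holomorphic-twisted $3d$ $\mathcal{N}=2$ action for a $\mathrm{Ham}(\C^2)$ vector multiplet with an adjoint chiral multiplet, while the higher brackets $\{\,\cdot\,\}_{\bar\pi,n}$, $n\ge 3$, produce exactly the deformation \ref{def_PoiBF}; the two families of terms there correspond to inserting $\bm w$ once versus $\tilde{\mathbf H}$ once, i.e.\ to the components $l_n(\mathbf H,\mathbf H,\tilde{\bm w},\dots,\tilde{\bm w})$ and $l_n(\mathbf H,\tilde{\bm w},\dots,\tilde{\bm w})$ of the cotangent lift, organized as in the holomorphic BF case of Section~\ref{sec:KK_HBF}.

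The main obstacle is the identification $(H_b^{0,0}(S^3),\{-,-\}_{\bar\pi})\cong\mathrm{Ham}(\C^2)$ — i.e.\ showing the induced bracket on $H_b^{0,0}(S^3)=\C[w_1,w_2]$ is, up to normalization, the constant Poisson bracket $\epsilon_{ij}\pa_{w_i}\pa_{w_j}$ and that $H_b^{0,1}(S^3)=\C[\bar w_1,\bar w_2]\epsilon$ carries the expected module structure — which the proposition defers to the next section. Carrying this out requires computing $K^{-1}$ on harmonic representatives, evaluating $\{-,-\}_\pi$ in the $z_i$ coordinates, and projecting back to harmonics, with careful tracking of the radial weights so that the bracket of two elements of $H_b^{0,0}$ again lands in $H_b^{0,0}$ after projection; this is a weight-bookkeeping problem rather than a conceptual one. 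A secondary point worth stating explicitly is that taking $\hat d$-cohomology before $\pacr$-cohomology is legitimate and gives the same homotopy type as the opposite order, which is immediate since the two differentials act on separate tensor factors and commute.
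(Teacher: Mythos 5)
Your proposal is correct and follows essentially the same route as the paper: reduce to the shifted cotangent of $(\Omega^{0,\sbullet}(\C^3),\pab,\{-,-\}_\pi)$, apply $K$, take $\hat d$-cohomology \emph{first} so that the mixing terms like $\{t^2,w_1\}_\pi=-t^2\bar w_2$ disappear and the bracket restricts to $\Omega_b^{0,\sbullet}(S^3)$ alone, then homotopy-transfer along the SDR of Section \ref{sec:SDR}, tensor back with $(\Omega_{3d}^\sbullet,\hat d)$, and use cyclicity of the transferred structure to write the action. The only point you make more explicit than the paper — that the order of taking the two cohomologies is immaterial and that one must check the bracket of harmonic representatives lands back in $\Omega_b^{0,\sbullet}(S^3)$ — is a correct and harmless refinement; the explicit identification with $T^*[1]\mathrm{Ham}(\C^2)$ is deferred to the following subsection in both treatments.
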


\subsection{The $L_\infty$ structure on $T^*[1]\mathrm{Ham}(\C^{2})$}
In this section, we analyze the $L_\infty$ structure on the CR cohomology $H_{b}^{0,\sbullet}(S^3)$ induced from the Poisson bracket.

The 2-bracket on $H^{0,0}_{CR}(S^3) = \C[w_1,w_2]$ is easy, because $w_i$ is mapped to $z_i$ under the isomorphism $K^{-1}$. Explicitly, we have
\begin{equation}
\begin{aligned}
		\{w_1^pw_2^q,w_1^rw_1^s\}_{\bar{\pi}} =  \epsilon_{ij}(\pa_{w_i}w_1^pw_2^q)(\pa_{w_j}w_1^rw_1^s)\\
		= (ps-qr)w_1^{p+r - 1}w_2^{q + s - 1}.
\end{aligned}
\end{equation}
Therefore, we can identify $(H^{0,0}_{CR}(S^3),\{-,-\}_{\bar{\pi}})$ with $\mathrm{Ham}(\C^2)$. The 2-bracket is also non zero on $H^{0,0}_{CR}(S^3) \otimes H^{0,1}_{CR}(S^3)$. To compute this, we recall that the isomorphism $K$ maps $f(\bar{w}_i)\epsilon$ to $f(\frac{\bar{z}_i}{r^2})\frac{\bar{z}_1d\bar{z}_2 - \bar{z}_2d\bar{z}_1}{r^4}$. So we first compute
\begin{equation}
	\{z_1^pz_2^q, \frac{\bar{z}_1^r\bar{z}_2^s\epsilon_{ij}\bar{z}_i d \bar{z}_j}{(z_1\bar{z}_1 + z_2\bar{z}_2)^{r+s + 2}}\}_{\pi} = (r+s +2)(q z_1\bar{z}_1 - pz_2\bar{z}_2)\frac{z_1^{p-1}z_2^{q - 1}\bar{z}_1^r\bar{z}_1^s\epsilon_{ij}\bar{z}_i d \bar{z}_j}{(z_1\bar{z}_1 + z_2\bar{z}_2)^{r+s + 3}}.
\end{equation}\label{Poi_jjbar}
Under the isomorphism $K$ this is mapped to the element
\begin{equation}
(r+s +2)(qw_1\bar{w}_1 -  pw_2\bar{w}_2 ) w_1^{p-1}w_2^{q - 1}\bar{w}_1^r\bar{w}_1^s\epsilon.
\end{equation}
This is an element in $\mathcal{H}_{\frac{p+q-1}{2},0}\otimes\mathcal{H}_{0,\frac{r+s + 1}{2}}\epsilon$. We need to compute its projection to $\mathcal{H}_{0,\frac{r+s -p-q}{2} + 1}\epsilon$.  This is analyzed in \ref{2-pro-01}, and we have the following
\begin{equation}
	\{w_1^pw_2^q,\frac{(r+s+1)!}{r!s!}\bar{w}_1^r\bar{w}_1^s\epsilon\}_{\bar{\pi}}  = ((r+1)q - (s+1)p)\frac{(r+s - p - q +3)!}{(r - p + 1)!(s - q +1)!}\bar{w}_1^{r - p+1}\bar{w}_2^{s - q + 1}\epsilon.
\end{equation}

In fact, by identifying $\C[\bar{w}_1,\bar{w}_2]\epsilon$ with the (degree shifted) dual of $\C[w_1,w_2]$, we can check that the above bracket can be identified with the bracket induced from the action of $\mathrm{Ham}(\C^2)$ on its dual. Therefore, the Lie algebra $(H^{0,\sbullet}_{CR}(S^3),\{-,-\}_{\bar{\pi}})$ is naturally identified with the Lie algebra $T^*[1]\mathrm{Ham}(\C^2)$.

Higher brackets on $H^{0,\sbullet}_{CR}(S^3)$ correspond to the $L_\infty$ deformation of $T^*[1]\mathrm{Ham}(\C^2)$. The method to compute the $L_\infty$ structure is similar to the method to compute the $A_\infty$ structure \ref{sec:A_trans}. The difference is that we sum over maps constructed from binary rooted trees instead of binary planar rooted trees. For each rooted trees, we put the brackets on the vertices and $h$ on the internal edges. 
\begin{equation}
	\{...\}_{\bar{\pi},n} = \sum_{T \in BT_n}(\pm)\{...\}_T.
\end{equation}
For example, the induced $3-$bracket is given by
\begin{equation}\label{brac_Poi3}
	\{a,b,c\}^{\bar{\pi}}_3 = p\{a,h\{b,c\}\} - p\{h\{a,b\},c\} + p\{b,h\{c,a\}\}.
\end{equation}
Using the bilinear pairing $\Tr_{S^3}$, the higher brackets satisfy
\begin{equation}
	\Tr_{S^3}(a_0	\{a_1,a_2,\dots,a_{n}\}_{\bar{\pi},n} ) = (-1)^{n + |a_n|(|a_0| + \dots + |a_{n-1}|)}	\Tr_{S^3}(a_n	\{a_0,a_1,\dots,a_{n_1}\}_{\bar{\pi},n} ).
\end{equation}
Therefore, for the $n$-th bracket $	\{...\}_{\bar{\pi},n}$, it suffice to compute it on the subspace $H_{b}^{0,0}(S^3)\otimes H_{b}^{0,1}(S^3)^{\otimes n - 1}$. For $a_0 \in H_{b}^{0,0}(S^3)$ and $\bar{a}_{i}\epsilon \in  H_{b}^{0,1}(S^3)$, we have
\begin{equation}
	\{a_0,\bar{a}_1\epsilon,\dots ,\bar{a}_{n - 1}\epsilon\}_{\bar{\pi},n} = \sum_{\sigma \in S_{n-1}}p\{ \dots h\{h\{a_0,\bar{a}_{\sigma(1)}\epsilon\},\bar{a}_{\sigma(2)}\epsilon\},\dots ,\bar{a}_{\sigma(n-1)}\epsilon\}.
\end{equation}
We define the following constant
\begin{equation}
\begin{aligned}
		&(\pi_n)^{p,q;r,s}_{u_1,v_1,\dots,u_{n-1},v_{n-1}}: =   \\
		&\prod_i\frac{(u_i+v_i+1)!}{u_i!v_i!}\times\Tr_{S^3}(w_1^pw_2^q p\{ \dots h\{h\{w_1^rw_2^s,\bar{w}_1^{u_1}\bar{w}_2^{v_1}\epsilon\},\bar{w}_1^{u_2}\bar{w}_2^{v_2}\epsilon\},\dots ,\bar{w}_1^{u_{n-1}}\bar{w}_2^{v_{n-1}}\epsilon\} ).
\end{aligned}
\end{equation}
We have that $(\pi_n)^{p,q;r,s}_{u_1,v_1,\dots,u_{n-1},v_{n-1}}$ is non zero only when the following condition is hold
\begin{equation}\label{rel_higher_poi}
	u_1 + \dots +u_{n-1} = p+r- 2n + 3,\quad v_1 + \dots +v_{n-1} = q+s - 2n + 3.
\end{equation}
Then the $n$-th bracket  $	\{...\}_{\bar{\pi},n}$ on the subspace $H_{b}^{0,0}(S^3)\otimes H_{b}^{0,1}(S^3)^{\otimes n - 1}$ can be expressed as follows
\begin{equation}
\begin{aligned}
	&\prod_i\frac{(u_i+v_i+1)!}{u_i!v_i!}	\{w_1^rw_2^s,\bar{w}_1^{u_1}\bar{w}_2^{v_1}\epsilon,\dots ,\bar{w}_1^{u_{n-1}}\bar{w}_2^{v_{n-1}}\epsilon\}_{\bar{\pi},n} = \\
	&\left( \sum_{\sigma \in S_{n-1}}(\pi_n)^{p,q;r,s}_{u_{\sigma(1)},v_{\sigma(1)},\dots,u_{\sigma(n-1)},v_{\sigma(n-1)}}\right)   \frac{(p+q+1)}{p!q!}\bar{w}_1^p\bar{w}_2^q\epsilon,
\end{aligned}
\end{equation}
with $p$ and $q$ satisfying \ref{rel_higher_poi}.

This expression of the higher brackets also leads us to an expansion of the effective interaction of the KK theory of Poisson BF theory 
\begin{equation}\label{KK_Pois}
	\begin{aligned}
		&\sum_{n \geq 3} \sum_{p,q,r,s \geq 0}\sum_{\substack{u_1 + \dots u_{n-1} = p+r -  2n +3\\v_1+\dots v_{n-1} = q+s - 2n+3}}\left( \sum_{\sigma \in S_{n-1}}(\pi_n)^{p,q;r,s}_{u_{\sigma(1)},v_{\sigma(1)},\dots,u_{\sigma(n-1)},v_{\sigma(n-1)}}\right)\\
		&\left(  \frac{1}{2(n-2)!} \int\mathbf{H}[p,q] \mathbf{H}[r,s]\mathbf{w}[u_1,v_1]\tilde{\mathbf{w}}[u_2,v_2]\dots \tilde{\mathbf{w}}[u_{n-1},v_{n-1}] \right.\\
		&\left.+ \frac{1}{(n-1)!} \int\tilde{\mathbf{H}}[p,q] \mathbf{H}[r,s]\tilde{\mathbf{w}}[u_1,v_1]\dots \tilde{\mathbf{w}}[u_{n-1},v_{n-1}]\right) .
	\end{aligned}
\end{equation}

We provide an explicit expression for the constant $(\pi_3)^{p,q;r,s}_{u_1,v_1,u_2,v_2}$ in \ref{apx:Poi_3}.

\section{Kodaira Spencer gravity}
	\label{sec:KS}

	\subsection{A brief review}
	\label{sec:rev_KS}
	\paragraph{algebraic structure of Polyvector fields}
	The closed-string sector of the topological B-model can be described by a quantum field theory on space time called the Kodaira-Spencer theory of gravity \cite{Bershadsky:1993cx}. This theory is generalized in \cite{Costello:2012cy} by turning on gravitational descendants and is called BCOV theory. To introduce this theory, we first introduce the space of polyvector fields on a Calabi-Yau three-fold $X$. Let
	\begin{equation}
		\mathrm{PV}^{j,i}(X) = \Omega^{0,i}(X,\wedge^{j}TX)
	\end{equation}
	be the bundle of $(0,i)$ forms with coefficients in the $j$-th exterior power of the holomorphic tangent bundle of $X$. We call them poly-vector fields on $X$. The anti-holomorphic Dolbeault differential $\pab$ on forms extended to polyvector fields and gives us a differential
	\begin{equation}
		\pab: 	\mathrm{PV}^{j,i}(X)  \to 	\mathrm{PV}^{j,i+1}(X).
	\end{equation}
	Wedge product together with the differential $\pab$ makes $\mathrm{PV}^{\sbullet,\sbullet}(X)$ a differential graded commutative algebra.
	
	Next we can identify $\mathrm{PV}^{j,i}(X) $ with $\Omega^{3-j,i}$ via contracting poly-vector fields with the holomorphic volume form $\Omega_X$ on $X$.
	\begin{equation}
		\begin{aligned}
			\mathrm{PV}^{j,i}(X) & \cong \Omega^{3-j,i},\\
			\alpha &\mapsto  \alpha \vee \Omega_X.
		\end{aligned}
	\end{equation}
	The differential $\pa$ on forms $ \Omega^{\sbullet,\sbullet}$ defines an operator on $\mathrm{PV}^{\sbullet,\sbullet}(X) $ via the above isomorphism
	\begin{equation}
		\pa: \mathrm{PV}^{j,i}(X)  \to \mathrm{PV}^{j - 1,i}(X) .
	\end{equation}
	The $\pa$ operator on polyvector fields is not a derivation with respect to the wedge product. The failure of it being a derivation is measured by the following bracket
	\begin{equation}
		\{ \alpha, \beta \} := \pa(\alpha\wedge \beta) - ( \pa\alpha)\wedge \beta -(-1)^{|\alpha|} \alpha \wedge (\pa \beta),
	\end{equation}
	which coincides with the Schouten-Nijenhuis bracket on polyvector fields (up to a sign). The fundamental algebraic structures of polyvector fields on Calabi-Yau geometry can be
	summarized by saying that the tuple  $\{\mathrm{PV}^{\sbullet,\sbullet}(X), \pab, \wedge , \pa, \{-,-\}\}$ defines a differential graded Gerstenhaber-Batalin-Vilkovisky algebra. The BV theory of Kodaira-Spencer gravity, or BCOV theory, is built on this algebraic structure.
	
	In a local coordinate, the algebra of polyvector fields can be written as
	\begin{equation}
		\mathrm{PV}^{\sbullet,\sbullet}(\C^3) = C^{\infty}(\C^3)[\eta^i,d\bar{z}_i].
	\end{equation}
	The differential $\pa$ can be given by the expressions
	\begin{equation}
		\pa = \sum_{i = 1}^3\frac{\pa}{\pa \eta^i} \frac{\pa}{\pa z_i}.
	\end{equation}
	The bracket can be given by the expression
	\begin{equation}
		\{\alpha,\beta\} =  \sum_{i = 1}^3\frac{\pa \alpha}{\pa \eta^i} \frac{\pa \beta }{\pa z_i} + (-1)^{|\alpha|}  \frac{\pa \alpha}{\pa z_i}  \frac{\pa \beta}{\pa \eta^i}.
	\end{equation}
	
	\paragraph{Kodaira-Spencer theory}
	In the original formulation, Kodaira-Spencer theory has BV fields described by divergence free polyvector fields
	\begin{equation}
	\mu \in	\ker \pa \subset \mathrm{PV}^{\sbullet,\sbullet}(X)[2].
	\end{equation}
	The Lagrangian takes the form
	\begin{equation}\label{act_KS}
		\frac{1}{2} \int_X (\pa^{-1}\mu) \wedge \pab \mu + \frac{1}{6} \int_X \mu\wedge \mu \wedge \mu.
	\end{equation}
	The equation of motion can be obtained by varying the action functional, which is
	\begin{equation}
		\pab \mu + \frac{1}{2}\{\mu,\mu\} = 0.
	\end{equation}

One can see that the BV formalism of Kodaira-Spencer theory is encoded in the dg Lie algebra $(\ker \pa ,\pab,\{-,-\})$, together with an ill-defined pairing $\int_X \pa^{-1}\otimes\mathrm{Id}$. Though the kinetic term for Kodaira-Spencer theory is ill-defined, this theory has a well-defined propagator which allows one to analyze the theory in perturbation theory.

There is a consistent truncation of Kodaira-Spencer theory by simply dropping the fields in $\mathrm{PV}^{0,\sbullet}$ and $\mathrm{PV}^{2,\sbullet}$. We are left with fields in $\ker \pa \subset \mathrm{PV}^{1,\sbullet}(X)[2]$ and action functional of the same form as \ref{act_KS}. This theory is called the $(1,0)$ Kodaira-Spencer theory, or type I Kodaira Spencer gravity because its relationship to the ordinary B-model is similar to the relationship of physical type I string theory to the IIB string theory \cite{Costello:2019jsy}. It is a pleasant feature shown in \cite{Costello:2018zrm} that the full Kodaira-Spencer theory is equivalent to an Abelian Chern-Simons theory valued in $\Pi\C^2$ coupled with the $(1,0)$ Kodaira-Spencer theory. Therefore, for the analysis of KK reduction of Kodaira-Spencer theory, we only need to focus on the fields in $\mathrm{PV}^{1,\sbullet}(X)$.

In \cite{Costello:2012cy}, a different formulation of Kodaira-Spencer theory, called BCOV theory, is proposed, where the constraint $\pa \mu = 0 $ is imposed homologically. This theory produces a better behaved BV theory and admits rigorous perturbation quantization \cite{Costello:2012cy}. However, in this paper, we still proceed with the old formulation of Kodaira-Spencer gravity for simplicity.

	\subsection{Some useful definitions}
	The space of Polyvector fields on $\C^3\backslash \C$ is given by 
	\begin{equation}
		\mathrm{PV}^{\sbullet,\sbullet}(\C^3\backslash \C) = \C^{\infty}(\C^3\backslash\C)[\pa_z,\pa_{z_1},\pa_{z_2},d\bar{z},d\bar{z}_1,d\bar{z}_2].
	\end{equation}
	We extend the isomorphism \ref{K} to an isomorphism
	\begin{equation}
		K: \mathrm{PV}^{\sbullet,\sbullet}(\C^3\backslash \C) \to \Omega_{3d}^{\sbullet}(\C\times \R_{ > 0})\otimes\Omega_b^{0,\sbullet}(S^3)[\pa_z,\eta_1,\eta_2],
	\end{equation}
	where we simply maps $\pa_{z_i}$ to $\eta_i$. Using this isomorphism, the BV algebra structure on polyvector fields induces a BV algebra structure on $\Omega_{3d}^{\sbullet}\otimes\Omega_b^{0,\sbullet}(S^3)[\pa_z,\eta_1,\eta_2]$. For example, the BV operator is defined by $K\circ\pa\circ K^{-1}$.
	
	 By identifying the space of polyvector fields $\mathrm{PV}^{\sbullet,\sbullet}$ with $\Omega^{\sbullet,\sbullet}_{Dol}$ using the volume form, we can further identify the above complex with $\Omega^{(\sbullet),\sbullet}_{3d}(\C\times \R_{ > 0})\otimes \Omega_{b}^{\sbullet,\sbullet}(S^{3})$. However, it will be more convenient to use the space $\Omega_{3d}^{\sbullet}(\C\times \R_{ > 0})\otimes\Omega_b^{0,\sbullet}(S^3)[\pa_z,\eta_1,\eta_2]$ when we deal with the open-closed coupling in later sections.
	
	We are interested in the $SU(2)$ decomposition of the above complex. For this purpose, we construct the following two $SU(2)$ invariant vector fields
	\begin{equation}
		\begin{aligned}
			\eta & = \bar{w}_2 \eta_1- \bar{w}_1\eta_2,\\
			\hat{\eta} & = w_1 \eta_1 + w_2 \eta_2.
		\end{aligned}
	\end{equation}
	Note that $	\eta$ is the basis of the bundle $T^{\C}S^3 \cap T^{(1,0)}\C^2$ where $\C^2$ is the ambient space of $S^3$. The vector fields $\pa_z$,  $\eta$ and $\hat{\eta}$ generate the whole polyvector fields because
		\begin{equation}
		\begin{aligned}
			\eta_1 & = w_2 \eta +  \bar{w}_1\hat{\eta},\\
			\eta_2 & = -w_1 \eta - \bar{w}_2 \hat{\eta},
		\end{aligned}
	\end{equation}
	and 
	\begin{equation}
		\eta\wedge\hat{\eta} = \eta_1\wedge\eta_2.
	\end{equation}

	The vector fields $\eta$ and $\hat{\eta}$ act on the complex $\Omega_{3d}^{\sbullet}\otimes\Omega_b^{0,\sbullet}(S^3)$ via the Schouten bracket. These actions are important because they determine the interactions of the Kodaira-Spencer theory and its coupling with the holomorphic Chern-Simons theory. 
	
First, we analyze the action of the two vector fields $\eta$ and $\hat{\eta}$ on $\Omega_b^{0,\sbullet}(S^3)$. This can be obtained from the $SU(2)$ decomposition of $\Omega_b^{0,\sbullet}(S^3)$. Since both vector fields are $SU(2)$ invariant, on each irreducible subspace they act as either zero or scalar multiples of the identity onto an irreducible subspace of the same $SU(2)$ representation.
 For $\eta$, we have
	\begin{equation}
		\eta: \begin{array}{l}
		\mathcal{H}_{0,\bar{j}} \to 0\\
		\mathcal{H}_{j,\bar{j}} \overset{\simeq}{\to} \mathcal{H}_{j- \frac{1}{2},\bar{j} + \frac{1}{2}}
	\end{array}.
\end{equation}
The action of $\eta$ on $\Omega_b^{0,\sbullet}(S^3)$ is completely characterized by its action on the highest weight vector of each space $\mathcal{H}_{j,\bar{j}}$
\begin{equation}
	\eta(w_1^{2j} \bar{w}_2^{2\bar{j}}) = 2j w_1^{2j  - 1}\bar{w}_2^{2\bar{j} + 1}.
\end{equation}
Using this vector $\eta$ we can define another action on $\Omega_b^{0,\sbullet}(S^3)$ by $\pa(-\wedge \eta)$. We can check that $\pa(-\wedge \eta) = \eta(-)$.

We see that the operator $\eta$ has a very similar behavior as the homotopy operator $h$ defined in Section \ref{sec:SDR} except that $\eta$ is of degree $0$. It is useful to define a degree zero operator $\tilde{h} : \oplus\mathcal{H}_{j,\bar{j}} \to \oplus\mathcal{H}_{j,\bar{j}} $ as follows 
\begin{equation}\label{def_tildh}
	\tilde{h}(Y)  = h(Y\epsilon),\quad \text{for any } Y \in \oplus\mathcal{H}_{j,\bar{j}}.
\end{equation}
Then the operator $\eta$ and $\tilde{h}$ are related as follows
\begin{equation}
	\eta|_{\mathcal{H}_{j,\bar{j}}} = (2j)(2\bar{j}+1) \tilde{h}|_{\mathcal{H}_{j,\bar{j}}}.
\end{equation} 
This relation will be useful later. We can use it to reduce computation in Kodaira-Spencer theory into computation in holomorphic Chern-Simons theory.

Now we consider the other $SU(2)$ invariant vector $\hat{\eta}$. The action of $\hat{\eta}$ on $\Omega_{b}^{0,0}(S^3)$ is given by
\begin{equation}
	\hat{\eta}: \begin{array}{l}
		\mathcal{H}_{j,j} \to 0\\
		\mathcal{H}_{j,\bar{j}} \overset{\simeq}{\to} \mathcal{H}_{j,\bar{j}}\quad \text{ for }j \neq \bar{j}
	\end{array}.
\end{equation}
It act on the highest weight vector of $\mathcal{H}_{j,\bar{j}}$ by the constant $ \hat{\eta}(w_1^{2j} \bar{w}_2^{2\bar{j}}) = (2j - 2\bar{j}) w_1^{2j}\bar{w}_2^{2\bar{j}}$. Therefore we have
\begin{equation}
	\hat{\eta}|_{\mathcal{H}_{j,\bar{j}}} = (2j - 2\bar{j})\mathrm{Id}_{\mathcal{H}_{j,\bar{j}}}.
\end{equation}
We can also consider the operation defined by $\pa(-\wedge \hat{\eta}) $. Unlike $\eta$, the operations $\pa(-\wedge \hat{\eta}) \neq \hat{\eta}(-)$. We find that 
\begin{equation}
	\begin{aligned}
		&\pa(-\wedge \hat{\eta})|_{\mathcal{H}_{j,\bar{j}}} = (2j - 2\bar{j} + 2)\mathrm{Id}_{\mathcal{H}_{j,\bar{j}}},\\
		&\pa(-\wedge \hat{\eta})|_{\mathcal{H}_{j,\bar{j}}\epsilon} = (2j - 2\bar{j})\mathrm{Id}_{\mathcal{H}_{j,\bar{j}}}.
	\end{aligned}
\end{equation}

In fact, the action of $\hat{\eta}$ on $\Omega_{3d}^{\sbullet}$ is also nontrivial. For example, $\hat{\eta}(t)$ is computed by $(z_1\pa_{z_1} + z_2\pa_{z_2})(r) = \frac{1}{2}r$. We have
\begin{equation}
	\hat{\eta}(f(t)) = \frac{1}{2}t\pa_tf(t).
\end{equation}
This means that $\hat{\eta}$ will have a nontrivial action on the topological $t$ direction. Terms like this will be annoying in our later formulation of the KK theory. However, we can proceed as we deal with the Poisson BF theory. We first pass to the $\hat{d}$ cohomology without introducing any higher product. By passing to the $\hat{d}$ cohomology, the action of $\eta$ and $\hat{\eta}$ on $\mathcal{O}_\C$ is trivial. As a consequence, we can forget the action of $\eta$ and $\hat{\eta}$ on $\Omega_{3d}^{\sbullet}$ and only consider their action on $\Omega_b^{0,\sbullet}(S^3)$. This greatly simplifies our later analysis of the KK theory.

	\subsection{KK reduction}
	\label{sec:KK_KS}
	It was shown in \cite{Costello:2018zrm} that the $PV^{0,\sbullet}$ and $PV^{2,\sbullet}$ part of Kodaira-Spencer fields can be reformulated as a (super) abelian holomorphic Chern-Simons theory. So it suffices to focus on the $PV^{1,\sbullet}$ fields. Kodaira-Spencer theory is more subtle than our previous examples because fields are subject to the constraint $\pa\bm{\mu} = 0$. This equation cut out a subspace of $\Omega_b^{0,\sbullet}(S^3)\pa_z \oplus \Omega_b^{0,\sbullet}(S^3)\eta\oplus\Omega_b^{0,\sbullet}(S^3)\hat{\eta}$ that forms the tower of KK fields. One can proceed as in our previous examples. The Dolbeault differential and the Schouten bracket of polyvector fields defined a $L_\infty$ structure on the cohomology of the subspace subject to the $\pa$ constraint. However, this approach will end up with cumbersome notation. To simplify the discussion, we directly write down the KK fields that solve the $\pa$ constraint. We find the following fields
	\begin{equation}\label{KK_KS_1}
\begin{aligned}
		\bm{\beta} = & \sum_{p+q\geq 1}\frac{(p+q + 1)!}{p!q!}(\bm{\beta}[p,q]\bar{w}_1^p\bar{w}_2^q\epsilon \pa_z  + \frac{1}{p+q}\pa_z\bm{\beta}[p,q]\bar{w}_1^p\bar{w}_2^q\epsilon \hat{\eta}) ,\\
		\bm{\alpha} = &(p+q)\bm{\alpha}[p,q]w_1^pw_2^q\eta - \sum_{p+q\geq 1}\bm{\alpha}[p,q]\eta(w_1^{p}w_2^q)\hat{\eta}.
\end{aligned}
	\end{equation}
Note that the KK tower start from the modes $p + q = 1$.

We can check that both of the fields are in the $\pacr$ cohomology and the kernel of $\pa$. The corresponding action functional can be obtained from the  Kodaira-Spencer action functional as follows
\begin{equation}
	\int \Tr_{S^3}\left( (\pa^{-1}\bm{\beta}) \hat{d} \bm{\alpha} + \bm{\beta} \bm{\alpha} \bm{\alpha}\right) .
\end{equation}
With the help of the decomposition into KK towers, we can explicit compute the $\pa^{-1}$ operation on the fields:
\begin{equation}
	\begin{aligned}
		\pa^{-1}\bm{\beta} = &\frac{(p+q + 1)!}{p!q!}(\frac{1}{p+q}\bm{\beta}[p,q]\bar{w}_1^p\bar{w}_2^q\epsilon  \pa_z \wedge \hat{\eta}) ,\\
		\pa^{-1}\bm{\alpha} = &\bm{\alpha}[p,q]w_1^{p}w_2^q\eta\wedge \hat{\eta} .
	\end{aligned}
\end{equation}
This feature displays the major advantage of working with the KK theory. The KK theory utilizes the $SU(2)$ action on fields that allows us to solve the $\pa^{-1}$ operation, which is annoying in the analysis of the original $6d$ theory. We can expand the action functional explicitly in terms of the KK modes as follows
\begin{equation}
	\sum \int \bm{\beta}[p,q]\hat{d}\bm{\alpha}[p,q] + \sum(ps - qr)\int \bm{\alpha}[p,q] \bm{\alpha}[r,s] \bm{\beta}[p+r - 1,q+s - 1].
\end{equation}
The coefficient in front of the quadratic term is computed by
\begin{equation}
\frac{(p+q + 1)!}{p!q!}\Tr_{S^3}(\bar{w}_1^p\bar{w}_2^qw_1^{p}w_2^q)  = 1.
\end{equation}
The coefficient in front of the cubic term is computed by
\begin{equation}
\begin{aligned}
	&\frac{(p+q+s+r - 1)!}{(p+r - 1)!(s+m - 1)!}\Tr_{S^3}(\eta(w_1^{p}w_2^q)\hat{\eta}(w_1^rw_2^s)\bar{w}_1^{p+r - 1}\bar{w}_2^{q+s - 1}) - (p\leftrightarrow r q\leftrightarrow s) \\
	&= (ps - qr).
\end{aligned}
\end{equation}
We see that the action functional takes the same form as the $3d$ holomorphic topological theory. As we will see later, this feature allows us to easily compute the boundary chiral algebra.

The coefficient in front of the cubic term is the same as the structure constant of the Lie algebra $\mathrm{Ham}(\C^2)$. Therefore, this part of Kodaira-Spencer theory is the same as the holomorphic twist of $3d$ $\mathcal{N} = 2$ theory associated to a vector multiplet valued in $\mathrm{Ham}(\C^{2})$. Of course, higher structures exist and should give rise to higher order interaction terms. We briefly discuss them in the next section.

Similarly, we have the following fields

\begin{equation}
	\begin{aligned}
	\tilde{\bm{\beta}} =	& \sum_{p,q\geq0} \frac{(p+q+1)!}{p!q!}\tilde{\bm{\beta}}[p,q]\bar{w}_1^p\bar{w}_2^q\epsilon \eta, \\
	\tilde{\bm{\alpha}} =	& \sum_{p,q\geq0}\pa_z\tilde{\bm{\alpha}}[p,q]w_1^pw_2^q\hat{\eta} - (p+q+2)\tilde{\bm{\alpha}}[p,q]w_1^{p}w_2^q\pa_z.
	\end{aligned}
\end{equation}
This KK tower starts from the modes $p,q = 0$. The corresponding action functional is given by
\begin{equation}
	\int (\pa^{-1}\tilde{\bm{\beta}}) \hat{d} \tilde{\bm{\alpha} }+ \tilde{\bm{\beta}} \tilde{\bm{\alpha}} \tilde{\bm{\alpha}}.
\end{equation}

We can compute the $\pa^{-1}$ operation on the fields:
\begin{equation}
	\begin{aligned}
		\pa^{-1}\tilde{\bm{\beta}} = &\frac{(p+q + 1)!}{p!q!}(\frac{1}{p+q + 2}\tilde{\bm{\beta}}[p,q]\bar{w}_1^p\bar{w}_2^q\epsilon  \eta\wedge \hat{\eta}), \\
		\pa^{-1}\tilde{\bm{\alpha}} = &\tilde{\bm{\alpha}}[p,q]w_1^{p}w_2^q\pa_z\wedge \hat{\eta} .
	\end{aligned}
\end{equation}

This allows us to expand the action functional explicitly as follows
\begin{equation}
\begin{aligned}
		\sum \int \tilde{\bm{\beta}}[p,q]\hat{d}\tilde{\bm{\alpha}}[p,q]  + (p+q+2)\int\bm{\beta}[p+r ,q+s] \bm{\alpha}[p,q] \pa_z\bm{\alpha}[r,s] \\
		 -(r+s + 2)\int \bm{\beta}[p+r ,q+s]\pa_z\bm{\alpha}[p,q] \bm{\alpha}[r,s] .
\end{aligned}
\end{equation}
Due to the $\pa_z$ that is present in the action functional, this theory does not come from any $3d$ $\mathcal{N} = 2$ theory. However, it still shares many similar features with the holomorphic twisted theory and can be analyzed in a similar fashion. 

There are also coupling between the two part of Kodaira-Spencer fields, given as follows
\begin{equation}
	\int \Tr_{S^3} \bm{\beta} \bm{\alpha} \tilde{\bm{\alpha}} + \tilde{\bm{\beta}}\bm{\alpha} \tilde{\bm{\alpha}}.
\end{equation}

We emphasize that the subspace of $PV^{1,\sbullet}$ cut out by the constraint $\pa\bm{\mu} = 0$ does not have a canonical choice of basis. In other words, the KK theory of Kodaria-Spencer gravity can be formulated by different representatives of KK fields that have different action functional. The different choices of basis should be related by a recombination of the KK modes. We made our above choice to make the boundary chiral algebra compatible with the celestial chiral algebra studied in \cite{Costello:2022wso}.

\subsection{Higher order interaction}
Like our previous examples, higher order interactions for the KK theory of Kodaira-Spencer gravity exist and can be obtained from homotopy transfer. However, we need to be careful here because the homotopy operator in this case is not the homotopy operator $h$ defined in Section \ref{sec:SDR} anymore. The reason for this is that the CR differential takes a slightly different form on $\mathrm{PV}^{\sbullet,\sbullet}$ under the basis $\eta,\hat{\eta}$. We can compute the following
\begin{equation}
	\pacr(\pa_z) = \pacr(\hat{\eta}) = 0,\;\; \pacr(\eta) = \epsilon\hat{\eta}.
\end{equation}
This implies that the CR differential $\pacr$ act on $\Omega_{b}^{0,\sbullet}(S^3)\eta$ as follows
\begin{equation}
	\begin{aligned}
		\pacr: \mathcal{H}_{j,\bar{j}}\eta &\to \begin{array}{l  l}
			\mathcal{H}_{j,\bar{j}}\epsilon \hat{\eta}& \bar{j} = 0,\\
			\mathcal{H}_{j+ \frac{1}{2},\bar{j} - \frac{1}{2}}\epsilon\eta \oplus \mathcal{H}_{j,\bar{j}}\epsilon \hat{\eta}& \bar{j} \geq \frac{1}{2},
		\end{array}\\
	Y\eta &\mapsto \pacr(Y)\eta + Y\epsilon\hat{\eta}.
	\end{aligned}
\end{equation}

Accordingly, we get a new homotopy operator $\mathds{h}$. Since $\pacr$ act on $\Omega_{b}^{0,0}(S^3)\pa_z$ and $\Omega_{b}^{0,0}(S^3)\hat{\eta}$ in the same way as it act on  $\Omega_{b}^{0,0}(S^3)$, the new homotopy operator $\mathds{h}$ acts on $\Omega_{b}^{0,1}(S^3)\pa_z$ and $\Omega_{b}^{0,1}(S^3)\hat{\eta}$ in the same way as $h$ act on them. The only difference is the action of $\mathds{h}$ on $\Omega_{b}^{0,1}(S^3)\eta$, given as follows:
\begin{equation}\label{homotopy_KS}
	\begin{aligned}
		\mathds{h}: \mathcal{H}_{j,\bar{j}}\epsilon\eta &\to \begin{array}{l  l}
			0 & j =  0,\\
			\mathcal{H}_{j -  \frac{1}{2},\bar{j} + \frac{1}{2}}\epsilon\eta & j  = \frac{1}{2},\\
			\mathcal{H}_{j -  \frac{1}{2},\bar{j} + \frac{1}{2}}\epsilon\eta\oplus \mathcal{H}_{j -1 ,\bar{j} + 1}\epsilon \hat{\eta}& j \geq 1,
		\end{array}\\
		Y\epsilon\eta&\mapsto \tilde{h}(Y)\eta - \tilde{h}^2(Y)\hat{\eta}.
	\end{aligned}
\end{equation}
Then higher order interaction of the KK theory is computed by terms like
\begin{equation}
	\int \pa^{-1}\bm{\beta} \{\bm{\alpha},\mathds{h}\{\bm{\alpha},\bm{\beta}\}\} + \dots
\end{equation}
We leave it to future work to study these terms in detail.

\subsection{Coupling to holomorphic Chern-Simons theory}
\label{sec:coup_OC}
BCOV theory can be coupled to the holomorphic Chern-Simons theory. The coupled theory has a unique quantization \cite{Costello:2015xsa} and gives us a concrete realization of open-closed string field theory. In this paper, we are interested in the leading order coupling between the Kodaria spencer fields and the holomorphic Chern-Simons theory. In particular, the coupling between an elements $\bm{\mu} = \bm{\mu}^i\pa_{z_i} \in \mathrm{PV}^{1,\sbullet}(X)\cap \ker \pa$ and the holomorphic Chern-Simons is given by
	\begin{equation}\label{coup-KS-HCS}
\frac{1}{2}	\int \Omega_X \Tr( \pmb{\EuScript{A} } \wedge \bm{\mu}^i(\pa_{z_i} \pmb{\EuScript{A} })) .
\end{equation}

We are interested in the coupling after the KK reduction. If we substitute $\bm{\mu}$ with $\bm{\alpha}$ and $\bm{\beta}$ in the above coupling, we get the following terms
\begin{equation}\label{coup_OC_0}
\begin{aligned}
	&\frac{1}{2} \int \Tr( \bm{\beta}[p + r,q+ s]\mathbf{A}[p,q]\pa_{z}\mathbf{A}[r,s]  )  + \frac{1}{2}\frac{r+s}{p+q+r+s} \int \Tr( \pa_z\bm{\beta}[p + r,q+ s]\mathbf{A}[p,q]\mathbf{A}[r,s]  ) \\
	&+ \frac{1}{2} (ps - qr)\int \Tr( \bm{\alpha}[r,s]\mathbf{B}[p+ r - 1,q+s - 1]\mathbf{A}[p,q])
\end{aligned}
\end{equation}

 If we substitute $\bm{\mu}$ with $\tilde{\bm{\alpha}}$ and $\tilde{\bm{\beta}}$, we get the following terms
 \begin{equation}
 	\begin{aligned}
 		& - \frac{1}{2}\frac{(ps - qr)}{p+q +r+s}\int \Tr(\tilde{\bm{\beta}}[p+r-1,q+s - 1]\mathbf{A}[p,q]\mathbf{A}[r,s]) \\
 		& \frac{1}{2}(p+q)\int \Tr( \pa_z\tilde{\bm{\alpha}}[r,s]\mathbf{B}[p+ r ,q+s]\mathbf{A}[p,q]) - \frac{1}{2}(r+s +2) \int \Tr( \tilde{\bm{\alpha}}[r,s]\mathbf{B}[p+ r ,q+s]\pa_z\mathbf{A}[p,q]  ).
 	\end{aligned}
 \end{equation} 

Just like our previous examples, there are higher order interactions coming from homotopy transfer. \footnote{There are also higher order couplings between BCOV theory and holomorphic Chern-Simons theory. They should also appear in the KK theory but should be distinguished from the higher order interaction discussed in this section that comes from the homotopy transfer of leading order coupling.} They can be computed by the same method of summing over trees as before, replacing the $2$-brackets by the action of Kodaira-Spencer fields on the Chern-Simons form. The simplest one is the following
\begin{equation}
	I_{OC} = \int \mathbf{B} [\mathbf{A}, h(\bm{\beta} \mathbf{A})].
\end{equation}
Using the expansion \ref{KK_KS_1} of the Kodaira-Spencer fields, we have
\begin{equation}\label{coup_OC_3}
\begin{aligned}
	 I  =& \sum_{\substack{u + \tilde{u} = p+r - 1\\v+\tilde{v} = q+s - 1}}	(m_3)^{p,q;r,s}_{u,v} \int \Tr \mathbf{B}[\tilde{u},\tilde{v}]\left[\mathbf{A}[p,q],\left(\bm{\beta}[u,v] \pa_z\mathbf{A}[r,s] +\pa_z\bm{\beta}[u,v] \frac{r+s}{u+v}\mathbf{A}[r,s] \right)\right] \\
	  &+ (m_3)^{r,s;p,q}_{u,v} \int \Tr \mathbf{B}[\tilde{u},\tilde{v}]\left[\left( \pa_z\mathbf{A}[p,q]\bm{\beta}[u,v] + \frac{p+q}{u+v}\mathbf{A}[p,q]\pa_z\bm{\beta}[u,v]\right), \mathbf{A}[r,s] \right].\\
\end{aligned}
\end{equation}
There are also coupling of the form $\int \mathbf{B} [\mathbf{A}, h(\bm{\alpha} \mathbf{B})] +\dots$. Finding their explicit expansion requires us to compute expression of the form $pM(\eta(w_1^rw_2^s) ,\hat{\eta}hM(w_1^pw_2^q , \bar{w}_1^u\bar{w}_2^v\epsilon)) $. We leave them to future work.

\subsection{Coupling to holomorphic BF theory}
\label{sec:coup_HBF}
Kodaria-Spencer gravity can also be coupled to the holomorphic BF theory. The coupling takes the same form as the coupling with holomorphic Chern-Simons theory
\begin{equation}
\frac{1}{2}	\int \Omega_X \Tr(\pmb{\EuScript{A}} \wedge \bm{\mu}^i(\pa_{z_i} \pmb{\EuScript{A}})) .
\end{equation}
The only difference is that $\pmb{\EuScript{A}}$ is the field of holomorphic BF theory now. According to \ref{sec:KK_HBF}, after KK reduction, the field $\pmb{\EuScript{A}}$ decompose into KK fields $\mathbf{A}[p,q] $ and $\tilde{\mathbf{B}}[p,q]$. Since the coupling takes the same form as in the last section, it is straightforward to write down the first-order coupling of the KK fields. We have
\begin{equation}
	\begin{aligned}
		& \frac{1}{2}\int \Tr( \bm{\beta}[p + r,q + s]\mathbf{A}[p,q]\pa_{z}\mathbf{A}[r,s]  )  +  \frac{1}{2} \frac{r+s}{p+q+r+s} \int \Tr( \pa_z\bm{\beta}[p + r,q+ s]\mathbf{A}[p,q]\mathbf{A}[r,s]  ) \\
		&+ \frac{1}{2} (ps - qr)\int \Tr( \bm{\alpha}[r,s]\tilde{\mathbf{B}}[p+ r - 1,q+s - 1]\mathbf{A}[p,q]  ).
	\end{aligned}
\end{equation}
Similarly, we have the coupling with the $\tilde{\bm{\alpha}},\tilde{\bm{\beta}}$ fields
\begin{equation}
	\begin{aligned}
	& -\frac{1}{2}\frac{(ps - qr)}{p+q +r+s}\int \Tr(\tilde{\bm{\beta}}[p+r-1,q+s - 1]\mathbf{A}[r,s]\mathbf{A}[p,q]) \\
	&\frac{1}{2}(p+q)\int \Tr( \pa_z\tilde{\bm{\alpha}}[r,s]\tilde{\mathbf{B}}[p+ r ,q+s]\mathbf{A}[p,q]  ) - \frac{1}{2}(r+s +2) \int \Tr( \tilde{\bm{\alpha}}[r,s]\tilde{\mathbf{B}}[p+ r ,q+s]\pa_z\mathbf{A}[p,q]  ) .
\end{aligned}
\end{equation} 

Higher order couplings also exists and we leave them to future work.

	\section{Theories in deformed geometry}
	\label{sec:def}
	
	\subsection{The deformed geoemtry}
	In the twisted holography setup, we place a stack of $N$ topological B-branes wrapping the complex line $\C = \{z_i = 0\}$ in $\C^3$. This brane configuration sources non-trivial flux and induces gravitational backreaction. The backreaction of such a stack of branes is analyzed in \cite{Costello:2018zrm}. We briefly review the results here.
	
	Solving the equation of motion of Kodaira-Spencer gravity with a localized source on the brane gives us the following Beltrami differential 
	\begin{equation}\label{bel_bk}
		\beta_0 = \frac{2N}{8\pi^2} \frac{\bar{z}_1d\bar{z}_2 - \bar{z}_2d\bar{z}_1}{(|z_1|^2 + |z_2|^2)^2} \pa_z.
	\end{equation}

This Beltrami differential defines a new complex structure on $\C^3\backslash\C$. Holomorphic functions of this new complex structure satisfy 
\begin{equation}
	(\pab + \beta) F= 0,\;
\end{equation}
where $\beta$ act on $F$ as a vector field. Solving this equation, we obtained the new holomorphic coordinates, which are $z_1,z_2$ and
\begin{equation}
\begin{aligned}
	u_1 &= z_1z - \frac{N\bar{z}_2}{|z_1|^2 + |z_2|^2} ,\\
		u_2 &= z_2z + \frac{N\bar{z}_1}{|z_1|^2 + |z_2|^2} .\\
\end{aligned}
\end{equation}
These coordinates are not independent. They satisfy
\begin{equation}
	u_1z_1 - u_1z_2 = N.
\end{equation}
This is the equation defining the deformed conifold $X_N$ as a submanifold of $\C^4$. If we collect the coordinates into a single matrix
\begin{equation}
g = \begin{pmatrix}
	z_1&z_2\\u_1&u_2.
\end{pmatrix}
\end{equation}
Then the defining equation of the deformed conifold becomes
\begin{equation}
	\det g = N.
\end{equation}
Therefore the deformed geometry is also equivalent to $SL_2(\C)$.
	\subsection{Holomorphic Chern-Simons in deformed geometry, I}
	\label{sec:def_diff_HCS}
	In this section, we analyze the KK theory of holomorphic Chern-Simons in the deformed geometry. One way to understand the effect of the new geometry on the holomorphic Chern-Simons theory is through the open-closed coupling. Note that new complex structure of $SL_2(\C)$ is completely encoded in the Beltrami differential $\beta_0 = \frac{2N}{8\pi^2} \frac{\bar{z}_1d\bar{z}_2 - \bar{z}_2d\bar{z}_1}{(|z_1|^2 + |z_2|^2)^2} \pa_z$, which can be regarded as a background value of the dynamical closed string fields. The interaction of complex structure moduli with the open string field is described by the coupling introduced in Section \ref{sec:coup_OC}
	\begin{equation}
			\int \Omega_X \Tr(\frac{1}{2} \pmb{\EuScript{A} } \wedge \bm{\mu}^i(\pa_{z_i} \pmb{\EuScript{A} })).
	\end{equation}
	Therefore, working in the new geometry is equivalent to turning on a background Beltrami differential $\beta_0$ in the above open-closed coupling.
	
	Via the coordinate transform \ref{K} we rewrite the Beltrami differential $\beta_0$ as $N\epsilon\pa_z$, where we absorbed a constant factor which can be restored from a rescaling. We see that by turning on this background Beltrami differential, the open-closed coupling term gives us the following:
	\begin{equation}\label{back_action_1}
		N\int \Tr(  \bm{\mathcal{A}}\wedge \epsilon \pa_z \bm{\mathcal{A}}),
	\end{equation}
where $\bm{\mathcal{A}} \in \Omega_{3d}^{\sbullet}\otimes\Omega_{b}^{0,0}(S^3)\otimes \mathfrak{g}$ is the KK fields of holomorphic Chern-Simons theory before we pass to the $\pacr$ cohomology. This term describes the deformed action functional for holomorphic Chern-Simons theory in the $SL_2(\C)$ geometry.

	We would like to understand the corresponding deformation for the KK theory after we pass to the $\pacr$ cohomology. A naive answer is that we simply restrict the above term to the fields $\mathbf{A}$. This will introduce a term $N\int\Tr(\mathbf{A}[0,0]\epsilon \pa_z\mathbf{A}[0,0])$, which is a Chern-Simons deformation to the zero KK modes. This is compatible with the observation in \cite{Costello:2018zrm}. However, by integrating out massive fields, there are higher order corrections to this zero mode answer.  
	
	 We can see this by analyzing the complex of KK fields. Note that adding the deformation \ref{back_action_1} is equivalent to add the differential $N\epsilon\pa_z$ to the complex $(\Omega_{3d}^{\sbullet}\otimes \Omega_b^{0,\sbullet}(S^3), \hat{d}+ \pacr)$. Therefore, the new theory on deformed geometry is described by the following complex
	\begin{equation}\label{Dol_SL2_K}
		(\Omega_{3d}\otimes \Omega_b^{0,\sbullet}(S^3), \hat{d} + \pacr + N\epsilon\pa_z ).
	\end{equation}
	Due to the presence of the term $N\epsilon\pa_z$, new differential will be generated after we take the $\pacr$ cohomology of the above complex. It will be a laborious work to follow the standard procedure of spectral sequence to obtain the right answer. Here, we use the homological perturbation lemma (we provide more details of homological perturbation lemma in \ref{apx:Hom_trans}), which gives us the correct differential on $\Omega_{3d}^{\sbullet}\otimes H_{b}^{0,\sbullet}(S^3)$ to make it quasi-isomorphic to the above complex \ref{Dol_SL2_K}. We have 
	\begin{equation}
		(\Omega_{3d}^{\sbullet}\otimes H_{b}^{0,\sbullet}(S^3),\hat{d} + D),
	\end{equation}
	where the differential $D$ is obtained by homological perturbation lemma, given by the following formula
	\begin{equation}
		D = (1 -  N \epsilon \pa_z h)^{-1}\epsilon \pa_z =  N \epsilon \pa_z +  N^2 \epsilon \pa_z h \epsilon \pa_z + \dots
	\end{equation}
	Using the degree $0$ operator $\tilde{h}$ defined in \ref{def_tildh}, the above expression can be simplified to 
	\begin{equation}
		D = \sum_{k \geq 1} N^k\epsilon \tilde{h}^{k - 1} \pa_z^{k}.
	\end{equation}
	Given this new differential, we find the following deformed action functional
	\begin{equation}
		I^{(1) }_O = \int \frac{1}{2}\Tr(\mathbf{A}D\mathbf{A}) = \sum_{k\geq 1} I^{(1,k)}_O,
	\end{equation}
	where
	\begin{equation}
		I^{(1,k)}_O = \frac{1}{2}  N^{k}\int \Tr(\mathbf{A}\epsilon \tilde{h}^{k- 1} \pa_z^{k}  \mathbf{A}). 
	\end{equation}
	
	We can simplify the above expression by analyzing the operator $\tilde{h}^k$. Note that the action of $\tilde{h}^k$ takes the following form
	\begin{equation}
		\tilde{h}^k: \begin{array}{l}
			\mathcal{H}_{j,\bar{j}} \to 0,\\
			\mathcal{H}_{j,\bar{j}} \overset{\simeq}{\to} \mathcal{H}_{j- \frac{k}{2},\bar{j} + \frac{k}{2}},
		\end{array}
		\begin{array}{l}
			\text{ for } j < \frac{k}{2},\\
			\text{ for } j \geq \frac{k}{2}.
		\end{array}
	\end{equation}
	When non-zero, $\tilde{h}^k$ acts as the following constant
	\begin{equation}
		\label{const_hk}
		\tilde{h}^k|_{\mathcal{H}_{j,\bar{j}}}  = \sqrt{\frac{(2j - k)!(2\bar{j})!}{(2j)!(2\bar{j} + k)!}}.
	\end{equation}
	We are interested in the action of $\tilde{h}^k$ on $\mathcal{H}_{k, 0}$. It is easy to find that
	\begin{equation}\label{hk_k}
		\tilde{h}^k(e^{(k)}_i) = \frac{1}{k!}\bar{e}^{(k)}_{i}.
	\end{equation}
	Transform this result into the unnormalized basis $\{w_1^pw_2^q\}$, we obtain
	\begin{equation}\label{O1k}
		I^{(1,k)}_O =  \sum_{p+q = k -1}N^k \frac{(-1)^p}{(k - 1)!} \frac{p!q!}{k!}\frac{1}{2} \int \Tr(\mathbf{A}[p,q] dz\pa_z^{k} \mathbf{A}[q,p]) .
	\end{equation}
These interaction terms describe part of the deformation of the KK theory of holomorphic Chern-Simons in the deformed geometry.

\subsection{An alternative approach}

In this section, we briefly comment on performing the KK reduction directly on the deformed conifold. The deformed conifold has a $SU(2)_R$ symmetry by matrix multiplication from the right. The quotient 
\begin{equation}
	SL_2(\C) \to SL_2(\C)/SU(2)_R
\end{equation}
is precisely the hyperbolic space -- the Euclidean version of $AdS_3$.

The quotient space is parameterized by coordinates $\rho = gg^{\dagger}$. We have
\begin{equation}
	\rho = \begin{pmatrix}
		r^2 & \bar{z}r^2\\
		zr^2&\frac{N^2}{16\pi^4}\frac{1}{r^2} + |z|^2r^2
	\end{pmatrix},
\end{equation}
where $r = (z_1\bar{z}_1 + z_2\bar{z}_2)^{\frac{1}{2}}$. Therefore, we can also use the coordinates $(z,\bar{z},r)$ for the quotient. 

KK reduction of field theory on $SL_2(\C)$ amounts to analyzing the Dolbeault complex
\begin{equation}\label{Dol_SL2}
	(\Omega^{0,\sbullet}(SL_2(\C)) ,\pab).
\end{equation}
	Recall that $SL_2(\C)$ is obtained as a deformation of $\C^{3}\backslash\C$. The spaces of functions (not necessarily holomorphic) on them and the decomposition of the Dolbeault complexes under the $SU(2)_R$ action are the same. The only difference is their complex structure, encoded in the Dolbeault differential $\pab$. Note that the new complex structure of $SL_2(\C)$ can be obtained by adding the Beltrami differential \ref{bel_bk} to the Dolbeault differential $\pab$ of $\C^3\backslash\C$. Therefore, the Dolbeault complex on $SL_2(\C)$ is equivalent to the complex obtained by adding the Beltrami differential \ref{bel_bk} to the Dolbeault complex \ref{Dol_C3} of $\C^3\backslash\C$. This is precisely the complex \ref{Dol_SL2_K} we find in the last section.
	
	In summary, by performing KK reduction directly on $SL_2(\C)$, we will end up with the same answer as in the last section using open-closed coupling. This is expected given the self consistency of the open-closed string theory.
	\subsection{Holomorphic Chern-Simons in deformed geometry, II}
	\label{sec:def_pro_HCS}
	In the last section, we discussed how the quadratic terms of the KK theory are modified in the deformed geometry. These terms correspond to a deformed differential of the complex of KK fields. In fact, all the products and higher products constructed in Section \ref{sec:all_mn} that build the action functional of KK theory are deformed in the new geometry. For example, we commented on the higher order open-closed coupling of the KK theory in Section \ref{sec:coup_OC}. By letting $\bm{\beta}[0,0] = N \epsilon \pa_z$ in the formula \ref{coup_OC_3}, we find a deformed action of the form $\int BA\pa_zA$.
	
	To correctly obtain all deformation, we need to use all the data in the special deformation retract (SDR). Recall from Section \ref{sec:SDR} that the SDR  we used to build the product and higher product on CR cohomology is given by the following 
		\begin{equation}
		h\curved (\Omega^{\sbullet}_{3d} \otimes \Omega_b^{0,\sbullet}(S^3), \hat{d} + \pacr )\overset{p}{\underset{i}\rightleftarrows} (\Omega_{3d}\otimes H_{b}^{0,\sbullet}(S^3),\hat{d}).
	\end{equation}
	Using homological perturbation theory, we obtain the new SDR after adding $N\epsilon\pa_z$ to the differential
	\begin{equation}
		h'\curved (\Omega^{\sbullet}_{3d} \otimes \Omega_b^{0,\sbullet}(S^3), \hat{d} + \pacr + N \epsilon\pa_z )\overset{p'}{\underset{i'}\rightleftarrows} (\Omega_{3d}\otimes H_{b}^{0,\sbullet}(S^3),\hat{d} + D).
	\end{equation}
The differential $D$ is given in the previous section. The other operations $h',p',i'$ are given by
	\begin{equation}
		\begin{aligned}
			h' & = \sum_{k \geq 0}(N^k \tilde{h}^k \pa_z^k)h,\\
			p' & = p(\sum_{k \geq 0}N^k \tilde{h}^k \pa_z^k),\\
			i' & =  \sum_{k \geq 0}(N^k \tilde{h}^k \pa_z^k)i.
		\end{aligned}
	\end{equation}
	The new $A_\infty$ structure on $\Omega_{3d}\otimes H_{b}^{0,\sbullet}(S^3)$ is given by the homotopy transfer theorem. They can be computed by the same method as in Section \ref{sec:A_trans}, but now with the new maps $i',p',h'$.
	
	For instance, the deformed product $m_2'$ is defined by $m_2'(-,-) = p'\circ M(i'(-),i'(-))$. For the purpose of obtaining the corresponding deformed action, it suffices to compute it on $(\Omega^{\sbullet}_{3d}\otimes H^{0,0}_{CR}(S^3))^{\otimes 2} $. In this case, $m_2'$ is simplified to
	\begin{equation}
		m_2'(f_1Y_1,f_2Y_2) = \sum_{k_1,k_2\geq 0} N^{k_1+k_2}(\pa_z^{k_1}f_1)(\pa_z^{k_2}f_2)p M(\tilde{h}^{k_1}Y_1,\tilde{h}^{k_2}Y_2),\quad f_i \in \Omega^{\sbullet}_{3d},Y_i \in H^{0,0}_{CR}(S^3).
	\end{equation}
	We denote the map $m_2^{(k_1,k_2)}(Y_1,Y_2)= p M(\tilde{h}^{k_1}Y_1,\tilde{h}^{k_2}Y_2)$, then 
	\begin{equation}
		m_2'(f_1Y_1,f_2Y_2) = \sum_{k_1,k_2\geq 0} N^{k_1+k_2}(\pa_z^{k_1}f_1)(\pa_z^{k_2}f_2)m_2^{(k_1,k_2)}(Y_1,Y_2).
	\end{equation}
	We have that $m_2^{(0,0)} = m_2$ by definition. We illustrate the computation of $m_2^{(1,0)}$ and  $m_2^{(0,1)}$. To compute $m_2^{(1,0)}(w_1^pw_2^{q},w_1^rw_2^{s})$, we first compute $\tilde{h}(w_1^pw_2^{q})$. Using the definition \ref{def_tildh} of $\tilde{h}$ and expand the harmonic polynomial into polynomials of $w_i,\bar{w}_i$, we find
	\begin{equation}
		\tilde{h}(w_1^pw_2^{q}) = \frac{1}{p+q}(pw_1^{p-1}w_2^{q}\bar{w}_2 - qw_1^pw_2^{q- 1}\bar{w}_1).
	\end{equation}
	Therefore,
	\begin{equation}
		m_2^{(1,0)}(w_1^pw_2^{q},w_1^rw_2^{s})  = \frac{ps - qr}{(p+q)(p+q+r+s )} w_1^{p+r - 1}w_2^{q+s - 1}.
	\end{equation}
Similarly, $m_2^{(0,1)}(w_1^pw_2^{q},w_1^rw_2^{s}) = -\frac{ps - qr}{(r+s)(p+q+r+s )} w_1^{p+r - 1}w_2^{q+s - 1}$. Using these deformed products, we obtain the first-order deformed action functional
\begin{equation}\label{defO_2_1}
\begin{aligned}
		I^{(2,1)}_O = &N\sum_{p,q,r,s} \frac{ps - qr}{(p+q+r+s )} \left( \frac{1}{p+q}\int \Tr \mathbf{B}[p+r - 1,q+s - 1][\pa_z\mathbf{A}[p,q], \mathbf{A}[r,s]] \right.\\
		&- \left.\frac{1}{r+s}\int \Tr \mathbf{B}[p+r - 1,q+s - 1][\mathbf{A}[p,q], \pa_z\mathbf{A}[r,s]]\right). \\
\end{aligned}
\end{equation} 

In order to compute all other $m_2^{(k_1,k_2)}$, one can use our formula \ref{pro_Har_arb} for the product of two arbitrary $S^3$ harmonics. Here, we provide a different method. Note that $m_2^{(k_1,k_2)}$ is compatible with the $SU(2)$ action and send $\mathcal{H}_{j_1,0}\otimes \mathcal{H}_{j_2,0}$ to $\mathcal{H}_{j_1+j_2-k_1-k_2,0}$. Therefore the corresponding matrix elements must be proportional to the Clebsch-Gordan coefficients in the orthonormal basis. To determine the constant of proportionality it suffices to compute one non-zero value of $m_2^{(k_1,k_2)}$. First, using \ref{const_hk} we find that $\tilde{h}^k(e^{(j)}_m) = \sqrt{\frac{(2j - k)!}{(2j)!k!}}e^{(j - \frac{k}{2},\frac{k}{2})}_m$. The easiest case of a non-zero product is the product between the highest weight vector $e^{(j - \frac{k}{2},\frac{k}{2})}_j$ and the lowest weight vector. We have
\begin{equation}
\begin{aligned}
		&m_2^{(k_1,k_2)}(e^{(j_1)}_{j_1},e^{(j_2)}_{-j_2}) = \sqrt{\frac{(2j_1 - k_1)!(2j_2 - k_2)!}{k_1!k_2!(2j_1)!(2j_2)!}}m_2(e^{(j_1 - \frac{k_1}{2},\frac{k_1}{2})}_{j_1},e^{(j_2 - \frac{k_2}{2},\frac{k_2}{2})}_{-j_2})\\
		& =(-1)^{k_2}\frac{\sqrt{(2j_1 + 1)(2j_2 + 1)}(2j_1 - k_1)!(2j_2 - k_2)!}{k_1!k_2!(2j_1+2j_2 - k_1-k_2+1)!}\sqrt{\frac{(2j_1 + 2j_2 - 2k_1 - 2k_2 +1)!}{(2j_1 - k_1 - k_2)!(2j_2 - k_1 - k_2)!}}e^{(j_1 + j_2 - k_1 - k_2)}_{j_1 - j_2}.
\end{aligned}
\end{equation}
This implies
 \begin{equation}
\begin{aligned}
	 	m_2^{(k_1,k_2)}(e^{(j_1)}_{m_1},e^{(j_2)}_{m_2}) = &(-1)^{k_2}\sqrt{\frac{(2j_1 + 1)(2j_2 + 1)(2j_1 + 2j_2 - 2k_1 - 2k_2)!}{(2j_1+2j_2 - k_1-k_2+1)!(2j_1)!(2j_2)!(2j_1 - k_1 - k_2)!(2j_2 - k_1 - k_2)!}}\\
	 	&\times \frac{(2j_1 - k_1)!(2j_2 - k_2)!}{k_1!k_2!}C^{j_1,j_2,j_1+j_2 - k_1 - k_2}_{m_1,m_2,m_1+m_2}e^{(j_1 + j_2 - k_1 - k_2)}_{m_1+m_2}.
\end{aligned}
 \end{equation}
Transforming this result into the unnormalized basis $\{w_1^pw_2^q\}$ (see formula (2.18) in \cite{Pope:1989sr}), we have
\begin{equation}
	m_2^{(k_1,k_2)}(w_1^pw_2^{q},w_1^{r}w_2^{s}) = \frac{(-1)^{k_2}R_k(p,q,r,s)w_1^{p+r - k}w_2^{q+s - k}}{k_1!k_2![p+q]_{k_1}[r+s]_{k_2}[p+q+r+s-k+1]_{k}}, \quad k = k_1+k_2,
\end{equation}
where we used the descending Pochhammer symbol $[a]_n = \frac{a!}{(a - n)!}$. The constant $R_k(p,q,r,s)$ \footnote{This constant also appears in the definition of $W_\infty$ algebra \cite{Pope:1989sr}, and we are following the notation of \cite{Bittleston:2022jeq} } is defined by
\begin{equation}\label{const_R}
	R_k(p,q,r,s) : = \sum_{i}(-1)^i\binom{k}{i}[p]_{k-i}[q]_i[r]_i[s]_{k-i}.
\end{equation}

Using this deformed product, we can write down the deformed cubic action of all orders:
	\begin{equation}
\begin{aligned}
	I^{(2,k)}_O = &N^k\sum_{p,q,r,s}\sum_{k_1+k_2 = k}\frac{(-1)^{k_2}R_k(p,q,r,s)}{k_1!k_2![p+q]_{k_1}[r+s]_{k_2}[p+q+r+s-k+1]_{k}} \\
	&\times\int \Tr \mathbf{B}[p+r - k,q+s - k][\pa_z^{k_1}\mathbf{A}[p,q], \pa_z^{k_2}\mathbf{A}[r,s]].\\
\end{aligned}
	\end{equation}
We can check that by taking $k = 1$ in the above formula, we reproduce the first order deformed action $I^{(2,1)}_O$ in \ref{defO_2_1}.

As we have mentioned, the whole $A_\infty$ algebra structure on $\Omega_{3d}^{\sbullet}\otimes H_b^{0,\sbullet}(S^3)$ is deformed in the new geometry. They can be computed by the same method presented in this paper. Understanding the deformed $A_\infty$ structure allows us to have a complete expansion of the action functional in the deformed $SL_2(\C)$ geometry. This can, in principle, give us the full tree level boundary chiral algebra.

\subsection{Kodaira-Spencer gravity in deformed geometry}
	Analogous to the holomorphic Chern-Simons theory, we can also understand Kodaira-Spencer gravity in deformed geometry as a deformation of the theory on flat space. There are two differences in the analysis. Firstly, we add the following differential to the complex:
	\begin{equation}
		\{N\epsilon \pa_z,-\}.
	\end{equation}
Secondly, we use the homotopy operator $\mathds{h}$ defined in \ref{homotopy_KS} in the homological perturbation. After passing to the CR cohomology, the deformed differential is given by
\begin{equation}
\begin{aligned}
		\mathds{D} &= (1 - \{N\epsilon \pa_z,-\}\circ \mathds{h})^{-1}\circ \{N\epsilon \pa_z,-\}\\
		&= \sum_{k \geq 1} \mathds{D}^{(k)},
\end{aligned}
\end{equation}
where
\begin{equation}
	  \mathds{D}^{(k)} = (\{N\epsilon \pa_z,-\} \circ \mathds{h} )^{k - 1}\circ \{N\epsilon \pa_z,-\} .
\end{equation}
This deformed differential corresponds to the following deformed quadratic action
\begin{equation}
 	\int \pa^{-1}\bm{\alpha}\mathds{D}\bm{\alpha}.
\end{equation}
To compute this deformed action, we need to compute the action of $\mathds{D}$ on the fields. We have
\begin{equation}
	\begin{aligned}
		\mathds{D}^{(k)} (f(z)Y(w_i,\bar{w}_i)\pa_z ) &= N^k\pa_z^k(f)\tilde{h}^{k-1}(Y)\epsilon\pa_z , \\
		\mathds{D}^{(k)} (f(z)Y(w_i,\bar{w}_i)\hat{\eta} )&= N^k\pa_z^k(f)\tilde{h}^{k-1}(Y)\epsilon\hat{\eta} - 2kN^k\pa_z^{k-1}(f)\tilde{h}^{k-1}(Y)\epsilon\pa_z,\\ 
		 \mathds{D}^{(k)} (f(z)Y(w_i,\bar{w}_i)\eta)&= N^k\pa_z^k(f)\tilde{h}^{k-1}(Y)\epsilon\eta - (k - 1)N^k\pa^k_z(f)\tilde{h}^k(Y)\epsilon\hat{\eta}\\
		 & + k(k - 1)N^k\pa_z^{k-1}(f)\tilde{h}^k(Y)\pa_z.
	\end{aligned}
\end{equation}
Thus we can compute
\begin{equation}\label{Dk_alpha}
\begin{aligned}
		\mathds{D}^{(k)}(\bm{\alpha}) = \sum_{p,q}N^k(p+q)& \left( k\pa^k_z\bm{\alpha}[p,q]\tilde{h}^k(w_1^pw_2^q)\epsilon\hat{\eta} -  \pa^k_z\bm{\alpha}[p,q]\tilde{h}^{k-1}(w_1^pw_2^q)\epsilon\eta \right. \\
		&\left. -  k(k+1)\pa^{k-1}_z\bm{\alpha}[p,q]\tilde{h}^k(w_1^pw_2^q)\epsilon\pa_z\right) .
\end{aligned}
\end{equation}
Using the above formula, we can expand the deformed action $\int \Tr\pa^{-1}\bm{\alpha}\mathds{D}\bm{\alpha}$. The constant coefficient in front of the term $\int \bm{\alpha}[q,p] dz\pa_z^{k - 1} \bm{\alpha}[p,q]$ is computed by $(p+q)k(k+1)\Tr_{S^3}(w_1^qw_2^p\tilde{h}^k(w_1^pw_2^q))$. In fact, this is already computed in \ref{hk_k},\ref{O1k}. Therefore, we have the following deformed quadratic term
\begin{equation}
	I^{(1,k)}_C =  \sum_{p+q = k}N^k (-1)^q \frac{p!q!}{((k - 1)!)^2}\frac{1}{2} \int \bm{\alpha}[q,p] dz\pa_z^{k - 1} \bm{\alpha}[p,q].
\end{equation}

Using the same method as we analyze the deformed cubic action for holomorphic Chern-Simons, we obtain the following deformed cubic action for Kodaira-Spencer gravity
\begin{equation}
		I^{(2,k)}_C = \sum_{k_1+k_2 = k}\int \Tr \bm{\beta}(\mathds{h}\circ\mathds{D}^{(k_1)}  (\bm{\alpha}))(\mathds{h}\circ\mathds{D}^{(k_2)}  (\bm{\alpha})),
\end{equation}
where we assumed that $\mathds{h}\circ\mathds{D}^{(k)} = \mathrm{Id}$ for $k = 0$. Using the expansion \ref{Dk_alpha} of $ \mathds{D}^{(k)}(\bm{\alpha}) $, we have
\begin{equation}\label{expan_hda}
	\begin{aligned}
		\mathds{h}\circ \mathds{D}^{(k)}(\bm{\alpha}) = \sum_{p,q}N^k(p+q)& \left( (k+1)\pa^k_z\bm{\alpha}[p,q]\tilde{h}^{k+1}(w_1^pw_2^q)\hat{\eta} -  \pa^k_z\bm{\alpha}[p,q]\tilde{h}^{k}(w_1^pw_2^q)\eta \right. \\
		&\left. -  k(k+1)\pa^{k-1}_z\bm{\alpha}[p,q]\tilde{h}^{k+1}(w_1^pw_2^q)\pa_z\right) .
	\end{aligned}
\end{equation}
We can expand the deformed cubic action using the above formula. The constant coefficient in front of the term $\int\bm{\beta}[p+r - k -1,q+s - k - 1]\bm{\alpha}[p,q]\bm{\alpha}[r,s]$ can be computed using the deformed product $m_2^{(k_1+1,k_2)}(w_1^pw_2^q,w_1^rw_2^s)$ and $m_2^{(k_1,k_2+1)}(w_1^pw_2^q,w_1^rw_2^s)$ obtained in the last section. We have the following
\begin{equation}\label{coup_C_2k}
\begin{aligned}
		I^{(2,k)}_C = &\sum_{k_1+k_2 = k}\frac{(-1)^{k_2}(p+q)(r+s)R_{k+1}(p,q,r,s)}{k_1!k_2![p+q+r+s-k]_{k+1}}\left( \frac{1}{[p+q]_{k_1 + 1}[r+s]_{k_2}}+ \frac{1}{[p+q]_{k_1 }[r+s]_{k_2+1}}\right) \\
		&\times \int\bm{\beta}[p+r - k -1,q+s - k - 1]\pa_z^{k_1}\bm{\alpha}[p,q]\pa_z^{k_2}\bm{\alpha}[r,s].
\end{aligned}
\end{equation}

\subsection{Open-closed coupling in deformed geometry}
The open-closed coupling terms studied in Section \ref{sec:coup_OC} are also deformed in the $SL_2(\C)$ background. To understand the deformation, we first understand the flat space coupling as coming from a bracket of the complex $\ker \pa \cap  \mathrm{PV}^{1,\sbullet}(X) \oplus \Omega^{0,\sbullet}(X)$
\begin{equation}
\begin{aligned}
	\{-,-\}: \mathrm{PV}^{1,\sbullet}(X) \otimes \Omega^{0,\sbullet}(X) &\to  \Omega^{0,\sbullet}(X), \\
	 \bm{\mu}^i\pa_{z_i} \otimes \bm{\gamma} &\mapsto \bm{\mu}^i\pa_{z_i}(\bm{\gamma}) .
\end{aligned}
\end{equation}

\begin{remark}
	Only the leading order coupling of the $\mathrm{PV}^{1,\sbullet}(X) $ part of Kodaira-Spencer fields and the holomorphic Chern-Simons can be understood as a $2$-bracket. Other parts of the coupling correspond to $L_\infty$ maps on the space of fields \cite{Costello:2015xsa}. 
\end{remark}

Given this bracket that encodes (part of) the open-closed coupling, we can use the same techniques as before to compute the deformed action. For example, we have the following deformed coupling terms with fields $\mathbf{B},\mathbf{A},\bm{\alpha}$
\begin{equation}
	I_{OC}^{(2,k)}[\mathbf{B},\mathbf{A},\bm{\alpha}] = \sum_{k_1+k_2 = k} \int \Tr \mathbf{B}\left( \mathds{h}\circ\mathds{D}^{k_1}(\bm{\alpha}) (N^{k_2}\tilde{h}^{k_2}\pa_z^{k_2} \mathbf{A})\right) .
\end{equation}
Using the expansion \ref{expan_hda} for $\mathds{h}\circ\mathds{D}^{(k_1)}(\bm{\alpha}) $ and the deformed product  $m_2^{(k_1+1,k_2)}(w_1^pw_2^q,w_1^rw_2^s)$ and $m_2^{(k_1,k_2+1)}(w_1^pw_2^q,w_1^rw_2^s)$ obtained in the previous section, we can easily expand the above deformed action as follows
\begin{equation}
	\begin{aligned}
		I_{OC}^{(2,k)}[\mathbf{B},\mathbf{A},\bm{\alpha}] = &\sum_{k_1+k_2 = k}\left(\frac{r+s - 2k_2}{p+q - k_1}+1\right)\frac{(p+q)(-1)^{k_2}R_{k+1}(p,q,r,s)}{k_1!k_2![p+q]_{k_1}[r+s]_{k_2}[p+q+r+s - k]_{k+1}} \\
		&\times \int\Tr \mathbf{B}[p+r - k-1,q+s-k-1]\pa_z^{k_1}\bm{\alpha}[p,q]\pa_z^{k_2}\mathbf{A}[r,s]\\
		&+ \sum_{\substack{k_1+k_2 = k\\k_1\geq1,k_2 \leq k -1}}\frac{(p+q)(-1)^{k_2}R_{k+1}(p,q,r,s)}{(k_1-1)!k_2![p+q]_{k_1+1}[r+s]_{k_2}[p+q+r+s - k]_{k+1}} \\
		&\times \int\Tr \mathbf{B}[p+r - k-1,q+s-k-1]\pa_z^{k_1 - 1}\bm{\alpha}[p,q]\pa_z^{k_2+1}\mathbf{A}[r,s].
	\end{aligned}
\end{equation}
By a change of summation variable: $k_1 \rightarrow k_1 + 1,k_2\rightarrow k_2- 1$, the second term in the above formula can be rewritten as 
\begin{equation}
	\frac{-k_2(r+s - k_2 + 1)}{(p+q-k_1)(p+q - k_1 - 1)}\frac{(p+q)(-1)^{k_2}R_{k+1}(p,q,r,s)}{k_1!k_2![p+q]_{k_1}[r+s]_{k_2}[p+q+r+s - k]_{k+1}} \int\Tr \mathbf{B}[...]\pa_z^{k_1 }\bm{\alpha}[p,q]\pa_z^{k_2}\mathbf{A}[r,s].
\end{equation}
Since the coefficient in the above formula vanishes for $k_2 = 0$, the summation range can be taken to be all $k_1+k_2 = k$. Thus we can rewrite the deformed coupling as follows
\begin{equation}\label{coup_OC_2k1}
	\begin{aligned}
		I_{OC}^{(2,k)}[\mathbf{B},\mathbf{A},\bm{\alpha}] = &\sum_{k_1+k_2 = k}\frac{(p+q-k-1)}{(p+q - 1)}\frac{(p+q)(-1)^{k_2}R_{k+1}(p,q,r,s)}{k_1!k_2![p+q - 2]_{k_1}[r+s]_{k_2}[p+q+r+s - k - 1]_{k}} \\
		&\times \int\Tr \mathbf{B}[p+r - k-1,q+s-k-1]\pa_z^{k_1}\bm{\alpha}[p,q]\pa_z^{k_2}\mathbf{A}[r,s].\\
	\end{aligned}
\end{equation}
Similarly, we can write down the deformed coupling between the fields $\bm{\beta},\mathbf{A}$
\begin{equation}\label{coup_OC_2k2}
\begin{aligned}
		I_{OC}^{(2,k)}[\mathbf{A},\mathbf{A},\bm{\beta}] = &\sum_{k_1+k_2 = k} \frac{(-1)^{k_2}R_k(p,q,r,s)}{k_!!k_2![p+q]_{k_1}[r+s]_{k_2}[p+q+r+s-k+1]_k}\\
		&\times \left(  \int \bm{\beta}[p+r-k,q+s-k]\pa_z^{k_1}\mathbf{A}[p,q]\pa_z^{k_2+1}\mathbf{A}[r,s] \right.\\
		& \left.+ \frac{r+s - 2k_2}{p+q+r+s - 2k}\int \pa_z\bm{\beta}[p+r-k,q+s-k]\pa_z^{k_1}\mathbf{A}[p,q]\pa_z^{k_2}\mathbf{A}[r,s] \right).
\end{aligned}
\end{equation}
There are also coupling between the fields $\mathbf{A},\mathbf{B}$ and the other parts of the Kodaira-Spencer fields. We leave them to future analysis.

\subsection{Moyal-Weyl product and higher spin}
In our previous analysis, we observe that the constant $R_k(p,q,r,s)$ \ref{const_R} plays an important role in the construction of the cubic action for various KK theories in the deformed geometry. In this section, we briefly recall its connection with higher spin theory.

First, we recall the definition of the Moyal-Weyl product, which is a non-commutative deformation of the commutative product of functions on $\R^{2n}$ given by a constant Poisson tensor. In our case, we focus on the Moyal-Weyl deformation of $\C[w_1,w_2]$ equipped with the Poisson tensor $\epsilon_{ij}\pa_i\pa_j$. The Moyal-Weyl product $\star$ is given by 
\begin{equation}
	f\star g : = m(e^{\frac{\hbar}{2}\epsilon_{ij}\pa_i\otimes\pa_j}f\otimes g),\quad f, g \in \C[w_1,w_2],
\end{equation}
where $m$ is the commutative product map $m(a\otimes b) = ab$, and the exponential is treated as a power series. Expanding the exponential to the first order, one can find
\begin{equation}
	(w_1^pw_2^q)\star (w_1^rw_2^s) = w_1^{p+r}w_2^{q+s} + \frac{\hbar}{2}(ps - qr)w_1^{p+r - 1}w_2^{q+s - 1} + \mathcal{O}(\hbar^2).
\end{equation}
It is straightforward to expand the exponential to all orders, and we find that the Moyal-Weyl product is given by 
\begin{equation}
		(w_1^lw_2^m)\star (w_1^rw_2^s) = \sum_{k}(\frac{\hbar}{2})^k\frac{1}{k!}R_k(p,q,r,s)w_1^{p+r - k}w_2^{q+s - k}.
\end{equation}

In fact, The Moyal-Weyl product and the constant $R_k(p,q,r,s)$ appear in many higher spin related contexts. The global symmetry algebra of higher spin theory- the higher spin algebra $\mathfrak{hs}$ can be defined using the Moyal-Weyl algebra. Vasiliev's formulation of higher spin equation is constructed based on the Moyal-Weyl algebra, which controls its first-order interaction. The constant $R_k(p,q,r,s)$ also appears in the definition of $\mathcal{W}_\infty$ algebra, which is the chiral algebra of higher spin current \cite{Pope:1989ew}. 

Indeed, our KK theory obtained by compactification is naturally a higher spin theory in three dimension. When we perform the compactification from $SL_2(\C)$, the KK theory can be understood as a higher spin theory on $SL_2(\C)/SU(2) $, which is precisely the Euclidean $AdS_3$. Therefore, it is not a coincidence that the same constant $R_k(p,q,r,s)$ appears in the cubic interactions of the KK theory.

The KK theory we obtained contains an infinite series of effective interactions beyond cubic interactions. This is the same for the Vasiliev's theory of higher spin. Though Vasiliev's higher spin theory does not have an action functional, the $L_\infty$ algebra controlling the equation of motion is nontrivial and contains an infinite tower of higher brackets. The $2$-bracket is, of course, defined by the Moyal bracket. The $3$-bracket is first constructed in \cite{Vasiliev:1988sa} and is observed in \cite{Sharapov:2017yde} to correspond to the Feigin–Felder–Shoikhet cocycle \cite{feigin2005hochschild} of Weyl algebra. This $3$-bracket can also be represented as a specific correlator in a topological quantum mechanics of a free particle on the circle \cite{Li:2018rnc}. It will be an interesting problem to see if the quadratic vertex of our KK theory given by a $3$-bracket on the fields can have a similar interpretation.

	\section{Twisted holography from boundary chiral algebra}
	\label{sec:twist_hol}
	In this section, we revisit a model of twisted holography proposed in \cite{Costello:2018zrm}. This model captures a protected subsector of the canonical example of $AdS_5/CFT_4$ duality \cite{Maldacena:1997re,Witten:1998qj}. The “gravity” side of the duality is the string field theory of $B$-model topological string on a $SL_2(\C)$ background. The “gauge theory” side of the duality involves a large $N$ chiral algebra of gauged $\beta\gamma$ systems. In this paper, we present nontrivial results in matching the chiral algebras from both sides of the duality. 
	
	\subsection{Chiral algebra on topological D-branes}
	In this section, we introduce the large $N$ chiral algebra studied in \cite{Costello:2018zrm}. This is the world volume theory on a stack of $N$ topological B-branes wrapping $\C \subset \C^3$ with the inclusion of $K|K$ space-filling branes. This chiral algebra can also be regarded as coming from the $4d$ $\mathcal{N} = 2$ superconformal field theory through the procedure of \cite{Beem:2013sza}. In one word, this is the "twisted" version of the gauge theory side of the AdS/CFT correspondence.
	
	The chiral algebra involves of the following system
	\begin{itemize}
		\item A $bc$ system $(b(z),c(z))$ valued in $\mathfrak{gl}_N$: 
		\begin{equation}
			b(z)c(w) \sim \frac{1}{z - w}.
		\end{equation}
		\item  a $\beta\gamma$ system (or sometime called symplectic boson) $(Z^1(z),Z^2(z))$ valued in the adjoint $\mathfrak{gl}_N$:
		\begin{equation}
			Z^1(z)Z^2(w) \sim \frac{1}{z - w}.
		\end{equation}
		\item a $\beta\gamma$ system $(I(z),J(z))$ valued in the fundamental representation $\C^{K|K}\otimes \C^N$:
		\begin{equation}
			I(z)J(w) \sim \frac{1}{z - w}
		\end{equation}.
	\end{itemize}
The BRST operator takes the form
\begin{equation}
	Q_{BRST} = \oint \Tr cZ^1Z^2  + \Tr IcJ + \frac{1}{2}\Tr bc^2.
\end{equation}

	In the large $N$ limit, the BRST reduction is exactly computed in \cite{Costello:2018zrm}. The BRST cohomology comprises polynomials of the following single-trace operators. 
	\begin{equation}
		\begin{aligned}
			A^{(n)} &= \Tr(Z^{(i_1}Z^{i_1}\dots Z^{i_n )}),\\
			B^{(n)} &= \Tr(bZ^{(i_1}Z^{i_1}\dots Z^{i_n)}),\\
			C^{(n)} & = \Tr(\pa c Z^{(i_1}Z^{i_1}\dots Z^{i_n)}),\\
			D^{(n)} & = \epsilon_{ij}\Tr(\pa Z^{i} Z^{(j}Z^{i_1}\dots Z^{i_n)}) + \dots ,\\
			E^{(n)}_{\mathfrak{t}} & = \Tr (\mathfrak{t}IZ^{(i_1}Z^{i_1}\dots Z^{i_n )}J) .
		\end{aligned}
	\end{equation}
	
	In the twisted holography setup, the $A,B,C,D$ towers of operators correspond to the boundary operators of the Kodaira-Spencer theory, and the $E$ tower of operators correspond to the boundary operators of the holomorphic Chern-Simons. Following our form of the twisted holography conjecture, we will see that this large $N$ chiral algebra can be matched with the boundary chiral algebra of the corresponding KK theory with a proper choice of boundary condition.
	
	\subsection{Boundary algebras from holomorphic Chern-Simons}
	The “gravity” side of the duality is given by the Kodaira-Spencer gravity coupled with holomorphic Chern-Simons theory in the deformed geometry $SL_2(\C)$. First, we focus on the holomorphic Chern-Simons part.
	
	Recall that our $3d$ fields have mode expansion
	\begin{equation}
\begin{aligned}
			\mathbf{A}(t,z,\bar{z}) & = \sum_{p,q\geq 0}\mathbf{A}[p,q](t,z,\bar{z})w_1^pw_2^q,\\
	\mathbf{B}(t,z,\bar{z}) &= \sum_{p,q\geq 0}\frac{(p+q+1)!}{p!q!}\mathbf{B}[p,q](t,z,\bar{z})\bar{w}_1^p\bar{w}_2^q\epsilon.
\end{aligned}
	\end{equation}
	
They are related to the original $6d$ fields as follows
	\begin{equation}
			\pmb{\EuScript{A}} =  \sum_{p,q\geq0} \mathbf{A}[p,q](r, z,\bar{z})z_1^pz_2^q + \frac{(p+q+1)!}{p!q!}\mathbf{B}[p,q](r,z,\bar{z})\frac{\bar{z}_1^p\bar{z}_2^q}{r^{2(p+q)}} \frac{\epsilon_{ij}\bar{z}_id\bar{z}_j}{r^4}.
	\end{equation}
where $r = (z_1\bar{z}_1+ z_2\bar{z}_2)^{\frac{1}{2}}$. We can see that the natural boundary condition we get by requiring the original $6d$ fields to extend to $r = 0$ is to impose the following condition
\begin{equation}
\mathbf{B}[p,q]|_{t = 0} = 0 \text{ for all }p,q.
\end{equation}
This is the Neumann boundary condition for the $3d$ KK theory. The corresponding boundary algebra should reproduce the chiral algebra of the original $6d$ theory restricted along $\C$. We are more interested in the transverse boundary condition. This is the Dirichlet boundary condition given by
\begin{equation}
	\mathbf{A}[p,q]|_{t = 0} = 0 \text{ for all }p,q
\end{equation} 
According to the argument we presented in the introduction, the boundary algebra for this boundary condition should coincide with the universal defect algebra on $\C$. Then one of the prediction of twisted holography implies that this boundary chiral algebra is the same as the large $N$ chiral algebra on B-brane, generated by the $E$ towers of fields.
\begin{remark}
	The above statement is actually not precise. Because the OPE of open string operators will eventually involve closed string operators, and the holomorphic Chern-Simons theory alone suffers from anomaly which requires the introduction of Kodaira-Spencer theory to cancel. A precise matching should involve the inclusion of Kodaira-Spencer gravity and their coupling. We provide some examples of matching the coupling terms and the corresponding B-brane OPE in \ref{sec:OPE_OC}.
\end{remark}

For the Dirichlet boundary condition, the (perturbative) boundary local operators are generated by the lowest component $B$ of $\mathbf{B}$ and its derivatives:
\begin{equation}
	\pa_z^n B[p,q](z).
\end{equation}
  It is easy to see that there is a one-to-one correspondence of boundary local operators and the large $N$ single trace operators from the $E$ tower operators. Let
  \begin{equation}
  	E^{(p,q)} = I Z^{(i_1}Z^{i_2}\dots Z^{i_{n})} J ,\quad \text{where } \{i_1,\dots i_n\}  = \{\overbrace{1,\dots,1}^{p},\overbrace{2,\dots,2}^{q} \}.
  \end{equation}
Then we expect the following correspondence
  \begin{equation}
\pa_z^n B[p,q](z) \Leftrightarrow \pa_z^n E^{(p,q)}(z).
  \end{equation}
In the next section, we provide some nontrivial results showing that OPE's of the corresponding operators are also the same.

In fact, our $3d$ KK theories are holomorphic topological cousins of the $3d$ Chern-Simons theory. In this formulation, the twisted holography duality are reminiscent of the familiar holography duality between the Chern-Simons theory and the WZW model \cite{Witten:1988hf}.

	\subsection{Matching the OPE}
	In this section, we compare the OPE of the boundary chiral algebra and the large $N$ chiral algebra. On the one side, the OPE for the chiral algebra on B-brane is computed by wick contraction of the matrix fields. On the other side, the boundary OPE is computed by Feynman diagrams through bulk interaction vertices. We will show that different wick contraction patterns have gravity interpretations as different bulk interaction vertices.

			It is convenient to organize the $E$ tower of fields into a generating function
		\begin{equation}
			E^{(n)}(\lambda;z) : = I(z)Z(\lambda;z)^nJ(z),
		\end{equation}
		where 
		\begin{equation}
			Z(\lambda;z) := Z^1(z)\lambda_1+ Z^2(z)\lambda_2.
		\end{equation}
	In this notation, the OPE between the $Z$ fields can be written as
	\begin{equation}
		Z(\lambda;z)Z(\lambda';0) \sim \frac{[\lambda\lambda']}{z},
	\end{equation}
where $[\lambda\lambda'] = \lambda_1\lambda'_2 - \lambda_2\lambda'_1$.
		The generating field $E^{(n)}(x;z)$ is related to the symmetrized operator $E^{(p,q)}(z)$ as follows 
		\begin{equation}
			E^{(n)}(\lambda;z) = \sum_{p+q = n}\binom{p+q}{p}E^{(p,q)}(z)\lambda_1^p\lambda_2^q.
		\end{equation}

	\subsubsection{Quadratic vertex and two-point function}
	\label{sec:OPE_HCS_2}
	In this section, we analyze the tree-level boundary OPE from the quadratic vertices of the KK theory. First, we briefly review how to perform such kind of computations in the $3d$ holomorphic topological theories. We refer to \cite{Costello:2020ndc} for a detailed analysis.
	
	The kinetic term $\int \Tr \mathbf{B} \hat{d} \mathbf{A}$ of the $3d$ theory gives us a propagator connecting the superfields $\mathbf{B}$ and $\mathbf{A}$
	\begin{center}
		\begin{tikzpicture}[>=stealth]
			\draw[->] (0,0) -- (0.8,0);
			\draw (0.8,0) -- (1.5,0);
			\draw (0,0) node[left] (B) {$\mathbf{B}$};
			\draw (1.5,0) node[right] (A) {$\mathbf{A}$};
		\end{tikzpicture}
	\end{center}

In the presence of boundary, the corresponding propagator is obtained using the method of reflection \cite{Costello:2020ndc}. Omitting the color factor, we have
\begin{equation}
	P_{\pa}(z,t;z',t') = \frac{1}{2}(P(z,t;z',t') - P(z,t;z',-t')),
\end{equation}
	where $P(z,t;z',t')$ is the pure bulk propagator
	\begin{equation}
		P(z,t;z',t') = \frac{1}{8\pi i}\frac{(\bar{z} - \bar{z}')d(t - t') - \frac{1}{2}(t - t')d(\bar{z} - \bar{z}')}{(|z - z'|^2 + (t - t')^2 )^{\frac{3}{2}}},
	\end{equation}
	The action functional of the KK theory of holomorphic Chern-Simons on $SL_2(\C)$ consists of the flat space action functional and deformation terms coming from the deformed geometry. For the quadratic vertices, the flat space part $\int \Tr \mathbf{B}[p,q] \hat{d} \mathbf{A}[p,q]$ is the kinetic term which gives us the propagator connecting the fields $\mathbf{B}[p,q]$ and $\mathbf{A}[p,q]$. The remaining part coming from $SL_2(\C)$ deformation is analyzed in \ref{sec:def_diff_HCS}. We have
	\begin{equation}
	I^{(1,k)}_O =  \sum_{p+q = k -1}N^k \frac{(-1)^p}{(k - 1)!} \frac{p!q!}{k!}\frac{1}{2} \int \Tr(\mathbf{A}[p,q] dz\pa_z^{k} \mathbf{A}[q,p]) .
\end{equation}
	We have the following tree-level Feynman diagram associated with the vertex $I^{(1,k)}$ 

\begin{center}
		    \begin{tikzpicture}[>=stealth]
		\draw (0,-1.5) -- (0,1.5);
		\draw (1.2,0) node[cross,label=right:$I^{(1,p+q+1)_O}$] (O) {} ;
		\draw (O) circle (0.13);
		\draw (0,1) node[left] (B1) {$\mathbf{B}[p,q]$};
		\draw (0,-1) node[left] (B2) {$\mathbf{B}[q,p]$};
		\draw[->] (B2) .. controls (0.3,-1) and (0.5,-0.9) ..  (0.8,-0.6);
		\draw (0.8,-0.6).. controls (0.9,-0.5) and (1,-0.4) ..  (O);
		\draw[->] (B1) .. controls (0.3,1) and (0.5,0.9) ..  (0.8,0.6);
		\draw (0.8,0.6).. controls (0.9,0.5) and (1,0.4) ..  (O);
	\end{tikzpicture}
\end{center}
The corresponding Feynman integral is computed by
	\begin{equation}
	\int_{z' \in \C,s\geq 0}\frac{1}{(|z - z'|^2 + s^2)^{\frac{3}{2}}}\pa_{z'}^{p+q + 1}\frac{1}{(|z'|^2 + s^2)^{\frac{3}{2}}} \bar{z}s ds dz'd\bar{z}'.
\end{equation}
	The evaluation of this Feynman integral is given in Appendix \ref{apx:Feyn} and we find
	\begin{equation}
\frac{(p+q + 1)!}{z^{p+q+2}}.
	\end{equation}
Therefore, the corresponding OPE is given by
	\begin{equation}
		B[p,q](z) B[q,p](0) \sim  (-1)^qN^{p+q +1}\frac{p!q!}{(p+q)!} \frac{1}{z^{p+q+2}} ,
	\end{equation}
where we dropped the Lie algebra indices to simplify the notation. It is easy to restore them by adding the Killing form into the above formula.

On the B-brane side, this OPE should be matched with the full contraction of the $E$ tower operators, which is also the two-point function
	\begin{equation}
	\langle E^{(n)}(\lambda,z)E^{(n)}(\lambda',0)\rangle =  \frac{N^{n +1}[\lambda\lambda']^n}{z^{n+2}}.
\end{equation}
	Expanding $E^{(n)}(\lambda,z)$ into the symmetrized operator $E^{(p,q)}$, we find
	\begin{equation}
	\langle E^{(p,q)}(z)E^{(q,p)}(0)\rangle  =   (-1)^qN^{p+q +1}\frac{p!q!}{(p+q)!} \frac{1}{z^{p+q+2}}.
	\end{equation}
This gives the same result as in the boundary chiral algebra. In fact, one should understand this matching of two-point functions as fixing the normalization of the operators on the two sides. The real non-trivial matching start in the next section, when we analyze OPE from higher order vertices.
	
	\subsubsection{Cubic vertex and three-point function}
	In this section, we analyze the (deformed) cubic vertices and the corresponding chiral OPE. We proceed from the easiest cases to the most general case. As in the last section, we omit the Lie algebra indices. They can be easily restored by adding the structure constant $f_{ab}^c$ to our formula.
	
	\paragraph{flat space action}
	First, we look at the flat space cubic vertex $I^{(2,0)}_O$.
	\begin{equation}
		I^{(2,0)}_O = \int \Tr \mathbf{B}[p+r,q+s] [\mathbf{A}[p,q],\mathbf{A}[r,s]].
	\end{equation}
This vertex gives us the following tree-level Feynman diagram
\begin{center}
	\begin{tikzpicture}[>=stealth]
		\draw (0,-1.5) -- (0,1.5);
		\draw (1.2,0) node[cross,label=above right :$I^{(2,0)}_O$] (O) {} ;
		\draw (O) circle (0.13);
		\draw (0,1) node[left] (B1) {$B[p,q]$};
		\draw (0,-1) node[left] (B2) {$B[r,s]$};
		\draw[->] (B2) .. controls (0.3,-1) and (0.5,-0.9) ..  (0.8,-0.6);
		\draw (0.8,-0.6).. controls (0.9,-0.5) and (1,-0.4) ..  (O);
		\draw[->] (B1) .. controls (0.3,1) and (0.5,0.9) ..  (0.8,0.6);
		\draw (0.8,0.6).. controls (0.9,0.5) and (1,0.4) ..  (O);
		\draw (2.3,0) node[right] (B3) {$B[p+r,q+s]$};
		\draw[->] (O) -- (2,0);
		\draw(2,0) -- (B3);
	\end{tikzpicture}
\end{center}
The corresponding Feynman integral is the same as the integral in the last section but without $\pa_{z'}$ derivatives, which gives us the following OPE
\begin{equation}
B[p,q](z)B[r,s](0)\sim \frac{1}{z}B[p+r,q+s](0).
\end{equation}

On the B-brane side, this should be matched with the OPE of a single $I-J$ contraction. For example,
	\begin{equation}
\wick{(I Z^{i_1}\dots \c J )(z)  ( \c I Z^{j_1} \dots J)(0)} =  \frac{1}{z}I Z^{i_1}\dots  Z^{j_1} \dots J.
	\end{equation}
\begin{remark}
	We also have the contraction $\wick{( \c  I Z^{i_1}\dots J ) (  I Z^{j_1} \dots \c J)}$. However, this term gives us the same combinatorial factor and pole and should be attributed to the Lie algebra factor that we omitted. 
\end{remark}
For the symmetrized operators, we have
\begin{equation}
	\text{contraction $I-J$  : } \; E^{(p,q)}(z) E^{(r,s)}(0) = \frac{1}{z }E^{(p+r,q+s)}(0) + \dots.
\end{equation}
This OPE can also be written as
\begin{equation}
	E^{(n)}(\lambda;z)E^{(n')}(\lambda';0) \sim \frac{1}{z}  \frac{n'!}{(n+n')!}(\lambda\cdot\pa_{\lambda'})^nE^{(n+n')}(\lambda';0) + \dots
\end{equation}
where $\lambda\cdot\pa_{\lambda'} = \lambda_1\pa_{\lambda_1'} + \lambda_2\pa_{\lambda_2'}  $.

	We emphasize that there are secretly other terms in this OPE. In fact, after the $I,J$ contraction of the symmetrized operators, we obtain $IZ^{(i_1}\dots Z^{i_n)}Z^{(j_1}\dots Z^{j_n)}J$. This is, in general, not the symmetrized operator $IZ^{(i_1}\dots Z^{i_n}Z^{j_1}\dots Z^{j_n)}J$. However, we can always manipulate the final expression into a sum of the symmetrized operator $IZ^{(i_1}\dots Z^{i_n}Z^{j_1}\dots Z^{j_n)}J$ and other operators in the BRST representative.
	
	For example, an $I,J$ contraction of $IZ^1J$ and $IZ^2J$ is $IZ^1Z^2J$, which is not symmetrized. Using the BRST relation, we find that $I[Z^1,Z^2]J$ is cohomologous to $IJIJ$. Therefore the term $IZ^1Z^2J$ is cohomologous to $IZ^{(1}Z^{2)}J + \frac{1}{2}IJIJ$. The remaining terms like $IJIJ$ are also important, and we will come back to them in the next section. 

\paragraph{first order deformed action}
	
We can also consider the deformed cubic vertices. As a simple example, we consider the first order deformation $I_O^{(2,1)}$ in Section \ref{sec:def_pro_HCS}. Up to a total derivative, this interaction vertex is given by
	\begin{equation}
		\begin{aligned}
				I_O^{(2,1)}	= &N\sum \frac{ps - qr}{(p+q)(r+s)} \int \Tr \mathbf{B}[p+r - 1,q+s- 1][\pa_z\mathbf{A}[p,q], \mathbf{A}[r,s]]\\
			+ &N\sum \frac{ps - qr}{(p+q+r+s)}\frac{1}{r+s} \int \Tr \pa_z\mathbf{B}[p+r - 1,q+s- 1][\mathbf{A}[p,q], \mathbf{A}[r,s]].\\
		\end{aligned}
	\end{equation}
	
	The corresponding tree-level diagram gives us the following OPE
	\begin{equation}
\begin{aligned}
		B[p,q](z)B[r,s](0)\sim &\frac{N}{z^2} \frac{ps - qr}{(p+q)(r+s)}B[p+r - 1,q+s - 1](0)\\
		 &+ \frac{N}{z} \frac{ps - qr}{(p+q+r+s)}\frac{1}{r+s}\pa_zB[p+r - 1,q+s - 1](0).
\end{aligned}
	\end{equation}
	
	This should be matched with the OPE of a single $I-J$ contraction together with a $Z-Z$ contraction next to the $I-J$ contraction. We have
	\begin{equation}
			\wick{(I(z) Z(\lambda;z)\dots  \c2 Z(\lambda;z) \c1 J(z) )  ( \c1 I(0) \c2 Z(\lambda',0)  \dots J(0))} = \frac{N[\lambda\lambda']}{z^2} I(z)Z(\lambda,z)^{n-1}Z(\lambda',0)^{n' - 1}J(0).
	\end{equation}
Using BRST relation to symmetrize the operators and drop operators in other BRST representatives, we get
\begin{equation}
\begin{aligned}
		E^{(n)}(\lambda;z)E^{(n')}(\lambda';0) &\sim \frac{N[\lambda\lambda']}{z^2}  \frac{(n' - 1)!}{(n+n' - 2)!}(\lambda\cdot\pa_{\lambda'})^{n - 1}E^{(n+n' - 2)}(\lambda';0)\\
		&+ \frac{N[\lambda\lambda']}{z}  \frac{n}{n+n'}\frac{(n' - 1)!}{(n+n' - 2)!}(\lambda\cdot\pa_{\lambda'})^{n - 1}\pa_zE^{(n+n' - 2)}(\lambda';0) + \dots
\end{aligned}
\end{equation}
We can rewrite the above OPE using the symmetrized operator $E^{(p,q)}(z)$. The coefficient for the $\frac{1}{z^2}E$ term can be computed by
\begin{equation}
	\frac{\binom{p+q - 1}{q}\binom{r+s - 1}{r} - \binom{p+q - 1}{p}\binom{r+s - 1}{s}}{\binom{p+q}{p}\binom{r+s}{s}} = \frac{ps - qr}{(p+q)(r+s)}.
\end{equation}
The coefficient for the $\frac{1}{z}\pa_z E$ term have an extra $\frac{n}{n+n'} = \frac{p+q}{p+q+r+s}$ factor. Therefore, the corresponding OPE is given by
\begin{equation}
	\begin{aligned}
		E^{(p,q)}(z)E^{(r,s)}(0)\sim &\frac{N}{z^2} \frac{ps - qr}{(p+q)(r+s)}E^{(p+r - 1,q+s - 1)}(0)\\
		&+ \frac{N}{z} \frac{ps - qr}{(p+q+r+s)}\frac{1}{r+s}\pa_zE^{(p+r - 1,q+s - 1)}(0).
	\end{aligned}
\end{equation}

\paragraph{all orders deformed action}
Computation of the OPE corresponding to the higher order deformed action is similar. Recall from Section \ref{sec:def_pro_HCS} that the $k-$th order deformed cubic action $I^{(2,k)}_O$ takes the following form
	\begin{equation}
	\begin{aligned}
		I^{(2,k)}_O = &N^k\sum_{p,q,r,s}\sum_{k_1+k_2 = k}\frac{(-1)^{k_2}R_k(p,q,r,s)}{k_1!k_2![p+q]_{k_1}[r+s]_{k_2}[p+q+r+s-k+1]_{k}} \\
		&\times\int \Tr \mathbf{B}[p+r - k,q+s - k][\pa_z^{k_1}\mathbf{A}[p,q], \pa_z^{k_2}\mathbf{A}[r,s]].\\
	\end{aligned}
\end{equation}
In this case, we have $\pa_z$ derivatives of total order $k$. The Feynman integral contribute a constant $k!$ factor and a pole $\frac{1}{z^{k+1}}$. Therefore, the constant coefficient of the leading term $\frac{N^k}{z^{k+1}}B$ is given by
\begin{equation}
k!\sum_{k_1+k_2 = k} \frac{R_{k}(p,q,r,s)}{k_1!k_2![p+q]_{k_1}[r+s]_{k_2}[p+q+r+s - k + 1]_{k}} = \frac{R_k(p,q,r,s)}{[p+q]_k[r+s]_k},
\end{equation}
where we used the formula \ref{Chu_Vand_var} derived in the Appendix. This give the OPE
\begin{equation}
	B[p,q](z)B[r,s](0)\sim \frac{N^k}{z^{k+1}}\frac{R_k(p,q,r,s)}{[p+q]_k[r+s]_k}B[p+r - k,q+s - k](0) + \dots
\end{equation}
To compute other terms with derivatives of $B$, we move the $\pa_z$ derivatives from one $\mathbf{A}$ field to the other $\mathbf{A}$ field and the $\mathbf{B}$ field in the above cubic action. For example, $\mathbf{B}[\pa_z^{k_1}\mathbf{A},\pa_z^{k_2}\mathbf{A}]$ can be rewritten as $\sum_l(-1)^{k_2}\binom{k_2}{l}\pa_z^l\mathbf{B}[\pa_z^{k_1 + k_2 - l}\mathbf{A},\mathbf{A}]$ up to a total $\pa_z$ derivatives. We see that the term with $\pa_z^lB$ has pole $\frac{1}{z^{k - l+1}}$. The corresponding coefficient is computed by 
\begin{equation}
\begin{aligned}
		&(k - l)!\sum_{k_1+k_2 = k} \binom{k_2}{l}\frac{R_{k}(p,q,r,s)}{k_1!k_2![p+q]_{k_1}[r+s]_{k_2}[p+q+r+s - k + 1]_{k}} \\
		&= \frac{R_k(p,q,r,s)}{l![p+q]_{k - l}[r+s]_k[p+q+r+s - 2k + l + 1]_l},
\end{aligned}
\end{equation}
where we used the formula \ref{Chu_Vand_var_1}. The full OPE corresponding to the deformed cubic action $I^{(2,k)}_O$ is thus given by 
\begin{equation}\label{OPE_HCS_21k}
		B[p,q](z)B[r,s](0)\sim \sum_{l= 0}^k\frac{N^k}{z^{k - l+1}}\frac{R_k(p,q,r,s)\times \pa_z^lB[p+r - k,q+s - k](0)}{l![p+q]_{k - l}[r+s]_k[p+q+r+s - 2k + l + 1]_l}.
\end{equation}

	On the B brane side, this OPE should correspond to a single $I-J$ contraction together with $k$ adjacent $Z-Z$ contractions. Such contraction produce a $N^k$ factor and a pole $\frac{1}{z^{k+1}}$. We have
\begin{equation}
	\wick{(I(z) ... \c3 Z(\lambda;z)...\c2 Z(\lambda;z) \c1 J(z) ) ( \c1 I(0) \c2 Z(\lambda';0) ...\c3 Z(\lambda';0) ... J(0))} =  \frac{N^k}{z^{k+1}}I(z)Z(\lambda;z)^{n - k}Z(\lambda';0)^{n' - k}J(0).
\end{equation}
By symmetrizing the $Z$ operators, we can rewrite this OPE as follows
\begin{equation}
		E^{(n)}(\lambda;z)E^{(n')}(\lambda';0) \sim \frac{N^k[\lambda\lambda']^k}{z^{k+1}}  \frac{(n' - k)!}{(n+n' - 2k)!}(\lambda\cdot\pa_{\lambda'})^{n - k}E^{(n+n' - 2k)}(\lambda';0) \dots,
\end{equation}
where we omitted terms with derivatives of $E$. Expanding the operators into $E^{(p,q)}$, we can compute the constant coefficient by
\begin{equation}
\sum_{i = 0}^k	\frac{(-1)^i\binom{k}{i}\binom{p+q - k}{q - i}\binom{r+s - k}{r - i}}{\binom{p+q}{p}\binom{r+s}{r}} = \frac{R_k(p,q,r,s)}{[p+q]_k[r+s]_k},
\end{equation}
which follows from the definition \ref{const_R} of the constant $R_k(p,q,r,s)$. This matches the leading term of \ref{OPE_HCS_21k}. To obtain the coefficients of the derivative terms, we use the planar three point function
\begin{equation}
	\langle E^{(n)}(\lambda,z)E^{(n')}(\lambda',z')E^{(n'')}(\lambda'',z'')\rangle = \frac{N^{\frac{n+n' + n''}{2}+1}[\lambda\lambda']^{\frac{n+n' - n''}{2}}[\lambda'\lambda'']^{\frac{n'+n'' - n}{2}}[\lambda\lambda'']^{\frac{n+n'' - n'}{2}}}{(z - z')^{\frac{n+n' - n''}{2} +1}(z' - z'')^{\frac{n'+n'' - n'}{2} +1}(z - z'')^{\frac{n+n'' - n'}{2} +1}},
\end{equation}
which is computed via a direct computation of wick contraction. Expanding this three point function we find that
\begin{equation}
	\begin{aligned}
		&\langle E^{(n)}(\lambda,z)E^{(n')}(\lambda',z')E^{(n'')}(\lambda'',z'')\rangle\\
				= & \frac{N^{\frac{n+n' + n''}{2}+1}[\lambda\lambda']^{\frac{n+n' - n''}{2}}[\lambda'\lambda'']^{\frac{n'+n'' - n}{2}}[\lambda\lambda'']^{\frac{n+n'' - n'}{2}}}{(z - z')^{\frac{n+n' - n''}{2} +1}}\sum_{l \geq 0}\frac{(\frac{n+n'' - n'}{2} + l)!}{(\frac{n+n'' - n'}{2})!l!}(-1)^l\left(\frac{z - z'}{z'-z''}\right)^l \frac{1}{(z' - z'')^{n'' + 2}} \\
		= & \sum_{l \geq 0}\frac{N^{\frac{n+n' + n''}{2}+1}[\lambda\lambda']^{\frac{n+n' - n''}{2}}}{(z - z')^{\frac{n+n' - n''}{2} - l +1}}  \frac{[\frac{n+n'' - n'}{2} + l]_l(\frac{n' + n'' - n}{2})!}{[n'' + l+1]_ln''!} (\lambda\cdot\pa_{\lambda'})^{\frac{n+n'' - n'}{2}}[\lambda'\lambda'']^{n''}\frac{\pa_{z'}^l}{l!}\frac{1}{(z' - z'')^{n''+2}}
	\end{aligned}
\end{equation}
Using the two-point function, we can extract from the above expression the following OPE
\begin{equation}
	E^{(n)}(\lambda;z)E^{(n')}(\lambda';0) \sim \sum_{l = 0}^k\frac{N^k[\lambda\lambda']^k}{z^{k - l +1}}  \frac{[n - k + l]_l(n' - k)!}{[n+n' - 2k + l+1]_l(n+n' - 2k)!}(\lambda\cdot\pa_{\lambda'})^{n - k}\frac{\pa_z^l}{l!}E^{(n+n' - 2k)}(\lambda';0).
\end{equation}
We can check that we get exactly the OPE of the boundary chiral algebra \ref{OPE_HCS_21k}.

	\subsubsection{Quartic vertex}
	\label{sec:OPE_HCS_4}
	The quartic vertex is constructed using the higher product $m_3$ that we studied in Section \ref{sec:A_trans}. We have the following Feynman diagram.
	
	\begin{center}
		\begin{tikzpicture}[>=stealth]
			\draw (0,-1.5) -- (0,1.5);
			\draw (1.2,0) node[cross,label=right :$I^{(3,0)}$] (O) {} ;
			\draw (O) circle (0.13);
			\draw (0,1) node[left] (B1) {$B[p,q]$};
			\draw (0,-1) node[left] (B2) {$B[r,s]$};
			\draw[->] (B2) .. controls (0.3,-1) and (0.5,-0.9) ..  (0.8,-0.6);
			\draw (0.8,-0.6).. controls (0.9,-0.5) and (1,-0.4) ..  (O);
			\draw[->] (B1) .. controls (0.3,1) and (0.5,0.9) ..  (0.8,0.6);
			\draw (0.8,0.6).. controls (0.9,0.5) and (1,0.4) ..  (O);
			\draw (2.4,1) node[right] (B3) {$B[u_1,v_1]$};
			\draw (2.4,-1) node[right] (B4) {$B[u_2,v_2]$};
			\draw[->] (O) .. controls (1.4,0.4) and (1.5,0.5) .. (1.6,0.6);
			\draw (1.6,0.6) .. controls (1.9,0.9) and (2.1,1) .. (B3);
			\draw[->] (O) .. controls (1.4,-0.4) and (1.5,-0.5) .. (1.6,-0.6);
			\draw (1.6,-0.6) .. controls (1.9,-0.9) and (2.1,-1) .. (B4);
		\end{tikzpicture}
	\end{center}
	The corresponding OPE is given by
	\begin{equation}\label{OPE_2to2_1}
			B_a[p,q](z)B_b[r,s](0) \sim \sum_{\substack{u_1 + u_2 = p+r - 1\\v_1+v_2 = q+s - 1}}\sum_{c,d,e,f}\frac{K^{fc}}{z} \left( (m_3)^{p,q;r,s}_{u_1,v_1}f_{bf}^ef_{ae}^d - (m_3)^{r,s;p,q}_{u_1,v_1}f_{af}^ef_{be}^d\right)  B_c[u_1,v_1]B_d[u_2,v_2](0).
	\end{equation}
where we used the shorthand $(m_3)^{p,q;r,s}_{u,v} : = (m_3)^{p,q;r,s}_{u,v;p+r-1-u,q+s-1-v}$. We included the Lie algebra factor here since it becomes important in the analysis. Though the above formula is for general Lie algebra, in the holographic setting we need to use the Lie algebra $\mathfrak{gl}(K|K)$ or $\mathfrak{gl}(K)$. Strictly speaking, the bulk anomaly of the open-closed coupled theory cancels only for the super Lie algebra $\mathfrak{gl}(K|K)$. However, the OPE matching at this level also works for $\mathfrak{gl}(K)$. For simplicity, we use $\mathfrak{gl}(K)$ here. The result can be easily generalized to $\mathfrak{gl}(K|K)$ by taking care of the $\pm$ sign. Note that $\mathfrak{gl}(K)$ has a canonical basis given by the elementary matrices $\{E_{a_1a_2}\}_{1\leq a_1,a_2 \leq K}$. Therefore we replace the indices $a$ by $a_1a_2$ in the above formula. The Killing form $K^{a_1a_2;b_1b_2}$ is given by 
\begin{equation}
K^{a_1a_2;b_1b_2} = \delta_{a_1b_2}\delta_{a_2b_1}.
\end{equation}
The structure constant can be extracted from the commutation relation
\begin{equation}
	[E_{a_1a_2},E_{b_1b_2}] = \delta_{a_2b_1}E_{a_1b_2} - \delta_{a_1b_2}E_{b_1a_2}.
\end{equation}
Then the formula \ref{OPE_2to2_1} can be expanded as
	\begin{equation}
	\begin{aligned}
		B_{a_1a_2}[p,q](z)&B_{b_1b_2}[r,s](0) \sim\frac{1}{z}\sum_{\substack{u_1 + u_2 = p+r - 1\\v_1+v_2 = q+s - 1}}((m_3)^{r,s;p,q}_{u_1,v_1} - (m_3)^{p,q;r,s}_{u_1,v_1}) B_{a_1b_2}[u_1,v_1]B_{b_1a_2}[u_2,v_2](0) \\
		&+ \sum_{c} \left((m_3)^{p,q;r,s}_{u_1,v_1} \delta_{a_1b_2} B_{b_1c}[u_1,v_1]B_{ca_2}[u_2,v_2] -  (m_3)^{r,s;p,q}_{u_1,v_1}\delta_{a_2b_1}B_{a_1c}[u_1,v_1]B_{cb_2}[u_2,v_2] \right)   .
	\end{aligned}
\end{equation}

Naively, we expect that the corresponding wick contraction on the B-brane side is given by a single $Z-Z$ contraction
 	\begin{equation}
 	\wick{(I ... \c Z^{i_k}... J )(z)  (  I ...  \c Z^{j_l} ... J)(0)} =  \frac{1}{z}(I Z^{j_1}... Z^{j_{l - 1}}Z^{i_k+1}... J)(0) (I Z^{i_1}... Z^{i_{k - 1}}Z^{j_{l+1}}  Z^{j_1} ... J)(0).
 \end{equation}
However, as we have mentioned in the last section, after a single $I-J$ contraction, we need to manipulate the final expression into BRST representative. In this process, we can trade the commutator $[Z^1,Z^1]$ with $\sum J_aI_a$. Therefore, OPE of the form $(I\dots J)(z) (I\dots J)(0) \sim \frac{1}{z} (I\dots J) (I\dots J)$ can also be generated in the single $I-J$ wick contraction studied in the last section.

It is a tedious work to analyze the combinatorial factor of the wick contractions in the general situation. To simplify the discussion, we study some special cases.

The simplest example we can look at is the OPE of $I_{a_1}Z^1J_{a_2}$ and $I_{b_1}Z^2J_{b_2}$. A single $Z-Z$ contraction gives us
\begin{equation}
	\wick{(I_{a_1} \c Z^1 J_{a_2} )(z) (I_{b_1} \c Z^2 J_{b_2} )(0)} \sim  \frac{1}{z} (I_{a_1}J_{b_2}) (I_{b_1}J_{a_2})(0).
\end{equation}
A single $I-J$ contraction gives us
\begin{equation}
\begin{aligned}
		\wick{(I_{a_1}  Z^1 \c J_{a_2} )(z) (\c I_{b_1} Z^2 J_{b_2} )(0)}& + 	\wick{(\c I_{a_1}  Z^1 J_{a_2} )(z) ( I_{b_1} Z^2 \c J_{b_2} )(0)} \\
		& \sim  -\frac{1}{2z}\sum_{c}(\delta_{a_2b_1} (I_{a_1}J_c) (I_cJ_{b_2})(0) + \delta_{a_1b_2}( I_{b_1}J_c) (I_cJ_{a_2})),
\end{aligned}
\end{equation}
where we omitted the operator $IZ^{(1}Z^{2)}J$ that analyzed in the last section.

By computing $m_3$ using \ref{m3}, we find that $(m_3)^{1,0;0,1}_{0,0} = -\frac{1}{2}$ and  $(m_3)^{0,1;1,0}_{0,0} = \frac{1}{2}$ . Therefore we have the following OPE
	\begin{equation}
	\begin{aligned}
		B_{a_1a_2}[1,0](z)B_{b_1b_2}[0,1](0) &\sim\frac{1}{z} \Bigl(  B_{a_1b_2}[0,0]B_{b_1a_2}[0,0](0) \\
		&- \frac{1}{2}\sum_{c}\delta_{a_2b_1}B_{a_1c}[0,0]B_{cb_2}[0,0](0) - \delta_{a_1b_2} B_{b_1c}[0,0]B_{ca_2}[0,0](0) \Bigr)  .
	\end{aligned}
\end{equation}
This matches exactly the corresponding OPE on B-brane.

As a more nontrivial example, we use the formula \ref{m3} to compute the following value of $m_3$
\begin{equation} \label{m3_10}
\begin{aligned}
		(m_3)^{1,0;r,s}_{u,v} &= -\frac{r!s!(u+v+1)!(r+s - u-v-1)!}{(r+s+1)!u!v!(r-u)!(s-v-1)!},\\
		(m_3)^{r,s;1,0}_{u,v} &= \frac{r!s!(u+v)!(r+s - u-v)!}{(r+s+1)!u!v!(r-u)!(s-v-1)!}.
\end{aligned}
\end{equation}
Therefore, we have the following OPE 
	\begin{equation}\label{OPE_hCS_2to2_1rs}
	\begin{aligned}
		&B_{a_1a_2}[1,0](z)B_{b_1b_2}[r,s](0)\\
		\sim&   \frac{1}{z}\sum_{u,v}\binom{r+s}{r}^{-1}\binom{u+v}{u}\binom{r+s - u-v-1}{r-u} \Bigl(  B_{a_1b_2}[u,v]B_{b_1a_2}[r-u,s-v-1](0)\\
	-	 & \frac{1}{2}\sum_{c}\frac{r+s - u-v}{r+s+1}\delta_{a_2b_1}B_{a_1c}[u,v]B_{cb_2}[r-u,s-v-1](0)\\
		 -& \frac{1}{2}\sum_{c}\frac{u+v}{r+s+1} \delta_{a_1b_2} B_{b_1c}[u,v]B_{ca_2}[r-u,s-v-1](0) \Bigr)  .
	\end{aligned}
\end{equation}

On the B-brane side, we can compute a single $Z-Z$ contraction as follows
\begin{equation}\label{OPE_ZZ}
	E_{a_1a_2}^{(1)}(\lambda;z)E_{b_1b_2}^{(n')}(\lambda',0) \sim \sum_{n''}\frac{[\lambda\lambda']}{z}E^{(n'')}_{a_1b_2}(\lambda',0)E^{(n' - n'' - 1)}_{b_1a_2}(\lambda',0).
\end{equation}
Expanding the above formula with $E_{b_1b_2}^{(n')}(\lambda',0) = \sum_{r+s = n'}\binom{r+s}{r}E_{b_1b_2}^{(r,s)}(0)\lambda_1'^r\lambda_2'^s$, we find 
	\begin{equation}
		E_{a_1a_2}^{(1,0)}(z)B_{b_1b_2}^{(r,s)}(0)  \sim \frac{1}{z}\sum_{u,v}\binom{r+s}{r}^{-1}\binom{u+v}{u}\binom{r+s - u-v-1}{r-u} E_{a_1b_2}^{(u,v)}B_{b_1a_2}^{(r-u,s-v-1)}(0)  .
\end{equation}
This matches with the first term of the boundary OPE of \ref{OPE_hCS_2to2_1rs}. To match the remaining terms, we consider the single $I-J$ contraction. As in previous discussion, we use the BRST relation and find that $I\dots [Z(\lambda),Z(\lambda')] \dots J $ is cohomologous to $[\lambda\lambda']I\dots J I\dots J$. Therefore, after dropping the symmetrized operator that is already studied in the last section, we have the following OPE
\begin{equation}
\begin{aligned}
			E_{a_1a_2}^{(1)}(\lambda;z)E_{b_1b_2}^{(n')}(\lambda',0) &\sim -\frac{[\lambda\lambda']}{z}\sum_{n''}\sum_c\left(\frac{n'-n''}{n'+1 }\delta_{a_2b_1} E^{(n'')}_{a_1c}(\lambda',0)E^{(n' - n'' - 1)}_{cb_2}(\lambda',0) \right.\\
	 &\left. + \frac{n''+1}{n'+1 }\delta_{a_1b_2} E^{(n'')}_{b_1c}(\lambda',0)E^{(n' - n'' - 1)}_{ca_2}(\lambda',0)\right) .
\end{aligned}
\end{equation}
Expanding this formula, we find 
	\begin{equation}
\begin{aligned}
		&E_{a_1a_2}^{(1,0)}(z)B_{b_1b_2}^{(r,s)}(0)  \sim - \frac{1}{z}\sum_{u,v}\sum_c\binom{r+s}{r}^{-1}\binom{u+v}{u}\binom{r+s - u-v-1}{r-u} \\
		&\times \left(\frac{r+s - u-v}{r+s+1}\delta_{a_2b_1}E_{a_1c}^{(u,v)}B_{cb_2}^{(r-u,s-v-1)}(0) + \frac{u+v+1}{r+s+1}\delta_{a_1b_2}E_{b_1c}^{(u,v)}B_{ca_2}^{(r-u,s-v-1)}(0) \right) .
\end{aligned}
\end{equation}
This gives the remaining terms of the boundary OPE of \ref{OPE_hCS_2to2_1rs}.

As we have discussed, the deformed geometry will introduce deformation to the higher operations and gives us deformed interaction vertices. We expect that the deformed quartic action $I^{(3,k)}_O$ corresponds to the wick contraction of $k + 1$ adjacent $Z-Z$ fields, and also the wick contraction of a single $I-J$ together with $k$ adjacent $Z-Z$ fields. It will be interesting to explore these OPE.

\subsection{Comments on higher order interaction}
\label{sec:twist_higer}
In this section we briefly analyze OPE coming from higher order vertices. First, we study the quintic interaction vertices. By looking at the Lie algebra factor, we see that this interaction term produce the following terms in the OPE of $B_{a_1a_2}(z) B_{b_1b_2}(0)$
\begin{equation}\label{OPE_quin}
\begin{aligned}
		\sum_{c,c'}\delta_{a_2b_1} B_{a_1c} B_{cc'}B_{c'b_2}&,\quad \sum_{c,c'}\delta_{a_1b_2}B_{b_1 c}B_{cc'}B_{c'a_2},\\
		\sum_{c}B_{a_1c}B_{cb_2}B_{b_1a_2}&,\quad \sum_{c}B_{a_1b_2}B_{b_1c}B_{ca_2}.
\end{aligned}
\end{equation}

Naively, wick contraction of two $IZ\dots J$ operators cannot produce operators with three or more pairs of $I,J$ fields. However, as we have analyzed in the last section, we trade the commutator $[Z^1,Z^2]$ with $\sum_{c}J_cI_c$ in this process of symmetrize the adjoint fields. This will gives OPE that produce three or more pairs of $I,J$ fields as desired. For example, a single $Z-Z$ contraction of $I_{a_1}\dots J_{a_2}$ and $I_{b_1}\dots J_{b_2}$ gives $I_{a_1}\dots J_{b_2} I_{b_1}\dots J_{a_2}$. Symmetrizing this operator gives the following two possibilities with three pairs of $I,J$ fields.
\begin{equation}
	\sum_{c}(I_{a_1}\dots J_c) (I_c\dots J_{b_2}) (I_{b_1}\dots J_{a_2}),\quad \sum_{c}(I_{a_1}\dots J_{b_2})( I_{b_1}\dots J_c) (I_c\dots J_{a_2}).
\end{equation}
They match the operators in the second line of \ref{OPE_quin}. We can also consider a single $I-J$ contraction. This will gives us the operators $\delta_{a_2b_1}  I_{a_1}\dots J_{b_2}$ and $\delta_{a_1b_2}  I_{b_1}\dots J_{a_2}$. To symmetrize this operator we need to use $[Z^1,Z^2]\sim \sum_{c}J_cI_c$ more than once in general. This gives the following two possibilities with three pairs of $I,J$ fields. 
\begin{equation}
	\sum_{c,c'}\delta_{a_2b_1} ( I_{a_1}\dots J_c) (I_{c}\dots J_{c'}) (I_{c'}\dots J_{b_2}),\quad \sum_{c,c'}\delta_{a_1b_2}  (I_{b_1}\dots J_c) (I_{c}\dots J_{c'}) (I_{c'}\dots J_{a_2}).
\end{equation}
They match the operators in the first line of \ref{OPE_quin}. So far, we have only considered the matching of Lie algebra indices in the OPE. It will be a nontrivial and interesting prediction of twisted holography that the $A_\infty$ structure on the Cauchy-Riemann cohomology can reproduce the combinatorial factor coming from matrix contraction and symmetrization of the operators.

More generally, we expect that the full tree-level OPE for the KK theory on deformed geometry takes the following schematic form
\begin{equation}
\begin{aligned}
		&B_{a_1a_2}[p,q](z)B_{b_1b_2}[r,s](0) \sim \sum_{n\geq 1}\sum_{k\geq 0}\sum_{\substack{u_1 + \dots u_{n} = p+r -k -n+1\\v_1+\dots v_{n} = q+s - k - n+1}}	\frac{N^k}{z^{k+1}}\times (\text{some constants})\times \\
		&\sum_{c_1,\dots,c_{n-1}}\delta_{a_2b_1}B_{a_1c_1}[u_1,v_1]B_{c_1c_2}[u_2,v_2]\dots B_{c_{n-1}b_2}[u_{n},v_{n}] + \delta_{a_1b_2}B_{b_1c_1}[u_1,v_1]B_{c_1c_2}[u_2,v_2]\dots B_{c_{n-1}a_2}[u_{n},v_{n}]\\
		& \sum_{c_1,\dots,c_{n-2}}\sum_{l = 1}^{n-1} B_{a_1c_1}[u_1,v_1]B_{c_1c_2}[u_2,v_2]\dots B_{c_{l-1}b_2}[u_{l},v_{l}]B_{b_1c_l}[u_{l+1},v_{l+1}]B_{c_1c_{l+1}}[u_{l+2},v_{l+2}]\dots B_{c_{n-2}a_2}[u_{n},v_{n}]\\
		& + \text{derivatives}.
\end{aligned}
\end{equation}

Under the normalization $B[p,q] \to N^{-\frac{p+q + 2}{2}}B[p,q]$, we find that the above OPE's are all of order $\frac{1}{N}$. Dropping the $\frac{1}{N^2}$ and higher terms in the full OPE provides us with an interesting limit of the holography chiral algebra beyond the planar limit. In fact, this limit gives us a Poisson vertex algebra. Matching this Poisson vertex algebra structure is a very nontrivial but also approachable test of the twisted holography.

	\subsection{Chiral algebra from abelian Chern-Simons}
	\label{sec:OPE_abe}
	As explained in \cite{Costello:2018zrm}, the $\mathrm{PV}^{2,0}$ and $\mathrm{PV}^{0,2}$ part of Kodaira-Spencer theory can be reformulated as a super abelian Chern-Simons theory valued in $\Pi\C^2$ \footnote{This notation simply means the abelian Lie super algebra of $2$-dimension in degree $1$.}. This abelian Lie algebra has bilinear pairing given by the symplectic form on $\C^2$. Therefore, we can directly apply our previous results about the holomorphic Chern-Simons theory. After the KK reduction, we have the following field content
	\begin{equation}
	\begin{aligned}
		\mathbf{A}'  &\in \Omega_{3d}^{\sbullet}\otimes \Pi\C^2[w_1,w_2][1],\\
		\mathbf{B}'  &\in \Omega_{3d}^{\sbullet,(1)}\otimes \Pi\C^2[\bar{w}_,\bar{w}_2].
	\end{aligned}
\end{equation}
Since $\Pi \C^{2}$ is an abelian algebra, all brackets and higher brackets $l_k$ vanish for this theory. The only nontrivial terms are the quadratic action and its deformation discussed in \ref{sec:def_diff_HCS}. We immediately obtain the action functional for this theory
\begin{equation}
 \sum_{p,q}	\int \epsilon_{ab}\mathbf{B}'_a[p,q]\hat{d}\mathbf{A}'_b[p,q] +  \frac{(-1)^qN^{p+q+1}}{(p+q+1)!} \frac{p!q!}{(p+q)!}\int \epsilon_{ab}\mathbf{A}'_a[p,q] \pa_z^{p+q +1} \mathbf{A}'_b[q,p].
\end{equation}

Taking the Dirichlet boundary condition, the boundary chiral algebra is generated by 
\begin{equation}
	\pa_z^kB'_{a}[p,q], \;\; a = 1,2.
\end{equation}

As we have discussed, the quadratic term gives the following OPE
\begin{equation}
	B'_{1}[p,q](z)B'_{2}[q,p](0) \sim (-1)^qN^{p+q +1}\frac{p!q!}{(p+q)!} \frac{1}{z^{p+q+2}}.
\end{equation}

On the B brane side, this part of boundary algebra should be matched with the $B^{(n)}$ and $C^{(n)}$ towers of fields.
\begin{equation}
	\begin{aligned}
		\Tr(bZ^{(i_1}Z^{i_1}\dots Z^{i_n)}),\\
		\Tr(\pa c Z^{(i_1}Z^{i_1}\dots Z^{i_n)}).\\
	\end{aligned}
\end{equation}
The corresponding OPE is given by the full wick contraction as follows
	\begin{equation}
	\wick{\Tr(\c4 b \c3 Z^{i_1} \c2 Z^{i_2}\dots \c1 Z^{i_n} )(z)  \Tr(\c4 \pa c \c1 Z^{\bar{i}_n} \dots \c2 Z^{\bar{i}_2}\c3 Z^{\bar{i}_1})(0)} =  (-1)^{\# \text{ of }2 \text{ in }\{i_n\}}\frac{N^{n +1}}{z^{n+2}}
\end{equation}
where $\bar{i} = 2$ if $i = 1$ and $\bar{i} = 1$ if $i = 2$.
Taking into account of the combinatorial factor of symmetrized trace, we have
	\begin{equation}
	\wick{\Tr(\c4 b \c3 Z^{(i_1}\dots \c2 Z^{i_n)} )(z)  \Tr(\c4 \pa c \c2 Z^{(\bar{i}_n} \dots \c3 Z^{\bar{i}_1)}))(0)} =   (-1)^qN^{p+q +1}\frac{p!q!}{(p+q)!} \frac{1}{z^{p+q+2}},
\end{equation}
where $p$ is the number of $1$ in $\{i_n\}$ and $q$ is the number of $2$ in $\{i_n\}$. This is the same as the corresponding boundary OPE.

Since there is no other interaction vertex in the abelian theory itself, we don't have any other OPE in this sector of fields. This is compatible with the Large N chiral algebra. In fact, any contractions of $ \Tr(b Z^{i_1}\dots )$ and $ \Tr(\pa c \dots Z^{j_1})$ will not produce operators in the same towers of fields. The nontrivial terms in the OPE thus only involve other towers of fields. On the gravity side, these OPE come from BCOV interactions and open-closed couplings. 

	\subsection{Chiral algebra from Kodaira-Spencer gravity}
	In this section, we analyze the boundary chiral algebra for the $(\bm{\alpha},\bm{\beta})$ part of Kodaira Spencer fields defined in Section \ref{sec:KK_KS}. Similar to the analysis of holomorphic Chern-Simons, we choose the boundary condition such that 
	\begin{equation}
		\bm{\alpha}|_{t = 0} = 0.
	\end{equation}
	 The corresponding boundary chiral algebra is generated by the boundary fields $\beta[p,q](z)$ and all their derivatives $\pa_{z}^n\beta[p,q]$. 
	
On the B-brane side, the corresponding single trace operator is given by the $A$ towers of fields. Let
\begin{equation}
	A^{(p,q)} = \Tr Z^{(i_1}Z^{i_2}\dots Z^{i_{n})},\quad \text{where } \{i_1,\dots i_n\}  = \{\overbrace{1,\dots,1}^{p},\overbrace{2,\dots,2}^{q} \}.
\end{equation}
We show in this section that OPE's of $A^{(p,q)}$ operators are matched with that of the boundary $\beta[p,q]$ operators. It is convenient to define the generating fields as follows
\begin{equation}
	A^{(n)}(\lambda;z) : = \Tr Z(\lambda;z)^n.
\end{equation}
They are related to $A^{(p,q)}$ by 
\begin{equation}
	A^{(n)}(\lambda;z) = \sum_{p+q = n}\binom{p+q}{l}A^{(p,q)}(z)\lambda_1^l\lambda_2^m.
\end{equation}
Now we consider the boundary chiral algebra for the $\beta$ fields. First, we consider the quadratic action. The Feynman diagram and the corresponding Feynman integral are the same as in Section \ref{sec:OPE_HCS_2}. We can immediately write down the corresponding OPE using the deformed quadratic action $I^{(1,k)}_C$ \ref{O1k}. We have
\begin{equation}\label{OPE_beta_1k}
	\beta[p,q](z)\beta[q,p](0) \sim (-1)^q\frac{p!q!}{(p+q - 1)!}\frac{N^{p+q}}{z^{p+q}}.
\end{equation}

On the B-brane side, this OPE corresponds to a full contraction, which is also the two-point correlation function of two $A$ tower operators. Using the generating fields, we have
	\begin{equation}
\langle	A^{(n)}(\lambda;z)A^{(n)}(\lambda';0) \rangle = nN^{n}\frac{[\lambda\lambda']^n}{z^{n}}.
	\end{equation}
Expanding $A^{(n)}(\lambda;z)$ into $A^{(p,q)}(z)$ we get exactly the OPE \ref{OPE_beta_1k}.

As a next step, we consider the cubic action in flat space obtained in Section \ref{sec:KK_KS}. We find the following OPE
\begin{equation}\label{OPE_beta_20}
	\beta[p,q](z)\beta[r,s](0) \sim (ps - qr)\beta[p+r-1,q+s - 1](0).
\end{equation} 
Similarly, we can proceed to consider deformed cubic action $I_C^{(2,k)}$ obtained in \ref{coup_C_2k}. 
\begin{center}
	\begin{tikzpicture}[>=stealth]
		\draw (0,-1.5) -- (0,1.5);
		\draw (1.2,0) node[cross,label=above right :$I^{(2,k)}_C$] (O) {} ;
		\draw (O) circle (0.13);
		\draw (0,1) node[left] (B1) {$\beta[p,q]$};
		\draw (0,-1) node[left] (B2) {$\beta[r,s]$};
		\draw[->] (B2) .. controls (0.3,-1) and (0.5,-0.9) ..  (0.8,-0.6);
		\draw (0.8,-0.6).. controls (0.9,-0.5) and (1,-0.4) ..  (O);
		\draw[->] (B1) .. controls (0.3,1) and (0.5,0.9) ..  (0.8,0.6);
		\draw (0.8,0.6).. controls (0.9,0.5) and (1,0.4) ..  (O);
		\draw (2.3,0) node[right] (B3) {$\beta[p+r - k - 1,q+s - k - 1]$};
		\draw[->] (O) -- (2,0);
		\draw(2,0) -- (B3);
	\end{tikzpicture}
\end{center}
The leading order (without any $z$ derivatives) coefficient is computed as follows
\begin{equation}
	\begin{aligned}
		&\sum_{k_1+k_2 = k}\frac{k!(p+q)(r+s)R_{k+1}(p,q,r,s)}{k_1!k_2![p+q+r+s-k]_{k+1}}\left( \frac{1}{[p+q]_{k_1 + 1}[r+s]_{k_2}}+ \frac{1}{[p+q]_{k_1 }[r+s]_{k_2+1}}\right)\\
		& = \sum_{k_1+k_2 = k}\frac{k!R_{k+1}(p,q,r,s)}{k_1!k_2![p+q+r+s-k]_{k+1}}\left( \frac{r+s}{[p+q - 1]_{k_1}[r+s]_{k_2}}+ \frac{p+q}{[p+q]_{k_1 }[r+s - 1]_{k_2}}\right)\\
		&= \frac{R_{k+1}(p,q,r,s)}{[p+q - 1]_k[r+s- 1]_k}.
	\end{aligned}
\end{equation}
Therefore, we have the following OPE
\begin{equation}\label{OPE_beta_2k}
	\beta[p,q](z)\beta[r,s](0) \sim \frac{N^k}{z^{k+1}}\frac{R_{k+1}(p,q,r,s)}{[p+q - 1]_k[r+s- 1]_k}\beta[p+r - k -1,q+s - k - 1] (0)+\dots
\end{equation} 
Note that by letting $k = 0$ in the above formula, th result is the same as the flat space OPE \ref{OPE_beta_20}.

The corresponding B-brane OPE is given by $k+1$ adjacent contraction of $Z$ fields. For example, we have the following contraction between $A^{(n)}(\lambda;z)$ and $A^{(n')}(\lambda';0)$
\begin{equation}
\begin{aligned}
		&\wick{\Tr(Z(\lambda;z) \mycdots \c3 Z(\lambda;z)\c2\mycdots \c1 Z(\lambda;z)  )  \Tr( \c1 Z(\lambda';0) \c2\mycdots \c3 Z(\lambda';0)\mycdots Z(\lambda';0))}\\
		 =  &N^k\frac{[\lambda\lambda']^{k+1}}{z^{k+1}}\Tr Z(\lambda;z)^{n- k - 1}Z(\lambda';0)^{n'-k-1}.
\end{aligned}
\end{equation}
There are $nn'$ such contractions in total. As in the previous section, we use the BRST relation to symmetrize the operator and discard operators in different BRST representatives. Omitting terms with derivatives of $A$ field, we have 
\begin{equation}
	A^{(n)}(\lambda;z)A^{(n')}(\lambda';0) \sim nn'\frac{N^k[\lambda\lambda']^{k+1}}{z^{k+1}}\frac{(n'  - k - 1)!}{(n+n' - k - 2)!}(\lambda\cdot\pa_{\lambda'})^{n - k - 1}A^{(n+n' - k-2)}(\lambda';0) + \dots 
\end{equation}
Expanding $A^{(n)}(\lambda;z)$ into $A^{(p,q)}(z)$ in the above formula, we get exactly the OPE \ref{OPE_beta_2k}.

We leave it to future work to analyze the terms with derivatives.

We can also look at the coupling between holomorphic Chern-Simons theory and the Kodaira-Spencer gravity studied in Section \ref{sec:coup_OC}. When we take the Lie algebra to be the abelian Lie algebra $\Pi\C^2$ as in the last section, this coupling actually reflects the self-interaction of Kodaira-Spencer gravity. For example, from the coupling \ref{coup_OC_0} we obtain the following
\begin{equation}
	\frac{1}{2}\int \bm{\beta}[p+r,q+s]\mathbf{A}_1[p,r]\pa_z\mathbf{A}_2[r,s] + \dots
\end{equation}
This gives us the following boundary OPE
\begin{equation}
	B_1'[p,q](z)B_2'[r,s] \sim \frac{1}{z^2} \beta[p+r,q+s]
\end{equation}
On the B brane side, this OPE is given by a contraction of $\pa c$ and $b$ fields
	\begin{equation}
	\wick{\Tr(\c b  Z^{(i_1}\dots  Z^{i_n)} )(z)  \Tr(\c \pa c Z^{(i'_1} \dots  Z^{i'_{n'})}))(0)} \sim \frac{1}{z^2}\Tr Z^{(i_1}Z^{i_n}\dots Z^{i'_1}  Z^{i'_{n'})}.
\end{equation}
We will study OPE of this type in more detail in the next section.

	\subsection{Chiral from open-closed coupling}
	\label{sec:OPE_OC}
	In this section, we analyze the OPE coming from the coupling between holomorphic Chern-Simons and the Kodaira-Spencer gravity. 
	\paragraph{closed + open  $\rightarrow$ open}
	First, we look at the flat space coupling between $\mathbf{B},\mathbf{A}$ and $\bm{\alpha}$ fields \ref{coup_OC_0}. This coupling gives us the following OPE
	\begin{equation}\label{bdy_OPE_OC_21}
	\beta[p,q](z)	B_a[r,s](0) \sim \frac{ps - qr}{z}B_a[p+r - 1,q+s - 1](0).
	\end{equation}
Similarly, we can consider the deformed coupling $I_{OC}^{(2,k)}[\mathbf{B},\mathbf{A},\bm{\alpha}]$ in \ref{coup_OC_2k1}. 
\begin{center}
	\begin{tikzpicture}[>=stealth]
		\draw (0,-1.5) -- (0,1.5);
		\draw (1.2,0) node[cross,label=above right :$I^{(2,k)}_{OC}$] (O) {} ;
		\draw (O) circle (0.13);
		\draw (0,1) node[left] (B1) {$\beta[p,q]$};
		\draw (0,-1) node[left] (B2) {$B_a[r,s]$};
		\draw[->] (B2) .. controls (0.3,-1) and (0.5,-0.9) ..  (0.8,-0.6);
		\draw (0.8,-0.6).. controls (0.9,-0.5) and (1,-0.4) ..  (O);
		\draw[->] (B1) .. controls (0.3,1) and (0.5,0.9) ..  (0.8,0.6);
		\draw (0.8,0.6).. controls (0.9,0.5) and (1,0.4) ..  (O);
		\draw (2.3,0) node[right] (B3) {$B_a[p+r - k - 1,q+s - k - 1]$};
		\draw[->] (O) -- (2,0);
		\draw(2,0) -- (B3);
	\end{tikzpicture}
\end{center}
The coefficient for the leading order term (without $z$ derivatives term) is computed by
\begin{equation}
\begin{aligned}
	& \sum_{k_1+k_2 = k}\frac{(p+q-k-1)}{(p+q - 1)}\frac{k!R_{k+1}(p,q,r,s)}{k_1!k_2![p+q - 2]_{k_1}[r+s]_{k_2}[p+q+r+s - k - 1]_{k}} \\
	& = \frac{R_{k+1}(p,q,r,s)}{[p+q - 1]_k[r+s]_k}.
\end{aligned}
\end{equation} 
Therefore we have the following OPE
	\begin{equation}\label{bdy_OPE_OC_21k}
	\beta[p,q](z)	B_a[r,s](0) \sim \frac{N^{k}}{z^{k+1}}\frac{R_{k+1}(p,q,r,s)}{[p+q - 1]_k[r+s]_k}B_a[p+r - k - 1,q+s - k - 1]+ \dots\;,
\end{equation}
where we omitted terms with derivatives of $B$.
		
On the B-brane side, this correspond to the wick contraction of $k+1$ adjacent $Z$ fields between $A^{(n)}(\lambda;z)$ and $E^{(n')}(\lambda',0)$ operators. For example,
\begin{equation}
	\begin{aligned}
		&\wick{\Tr(Z(\lambda;z) \mycdots \c3 Z(\lambda;z)\c2\mycdots \c1 Z(\lambda;z) )  I(0) \c1 Z(\lambda;0) \c2\mycdots \c3 Z(\lambda;0)\mycdots Z(\lambda;0)J(0)}\\
		=  &N^k\frac{[\lambda\lambda']^{k+1}}{z^{k+1}} I(0) Z(\lambda;z)^{n- k - 1}Z(\lambda';0)^{n'-k-1} J(0).
	\end{aligned}
\end{equation}
There are $n(n' - k)$ such contractions. As in the previous section, we use the BRST relation to symmetrize the operator and discard operators in different BRST representatives. Omitting terms with derivatives of $E$ field, we have 
\begin{equation}
	A^{(n)}(\lambda;z)E^{(n')}_{a_1a_2}(\lambda';0) \sim n(n' - k )\frac{N^{k}[\lambda\lambda']^{k+1}}{z^{k+1}}\frac{(n'  - k - 1)!}{(n+n' - k - 2)!}(\lambda\cdot\pa_{\lambda'})^{n - k - 1}E^{(n+n' - 2k - 2)}_{a_1a_2}(\lambda';0) + \dots
\end{equation}  
Expanding the generating fields into component, we find that this OPE is exactly the boundary OPE \ref{bdy_OPE_OC_21k}

\paragraph{open + open  $\rightarrow$ closed}
We can also consider the coupling between $\mathbf{A},\mathbf{A}$ and $\bm{\beta}$ fields. The flat space coupling gives us the following OPE
\begin{equation}
	B_a[p,q](z)B_b[r,s](0) \sim -\frac{K_{ab}}{z^2} \beta[p+r,q+s](0) + \dots
\end{equation}
The deformed coupling $I_{OC}^{(2,k)}[\mathbf{A},\mathbf{A},\bm{\beta}]$ \ref{coup_OC_2k2}gives us the following OPE
\begin{equation} \label{bdy_OPE_OC_2k}
		B_a[p,q](z)B_b[r,s](0) \sim -\frac{N^kK_{ab}}{z^{k+2}} \frac{(k+1)R_{k}(p,q,r,s)}{[p+q]_k[r+s]_k} \beta[p+r - k,q+s - k](0) + \dots
\end{equation}
where we omitted terms with derivatives of $\beta$.

On the B-brane side, this OPE corresponds to the wick contraction of two $I-J$ fields and $k$ pairs of $Z$ fields. The contracted $Z$ fields must be alongside one of the $I$ or $J$ fields. As a illustration, we list all possible contractions for $2$ pair of $Z$ fields
 \begin{equation}
\wick{(\c4 I \dots \c3 Z \c2 Z  \c1 J )(z) (\c1 I \c2 Z \c3 Z \dots \c4 J)(0)}\quad \wick{(\c4 I  \c3 Z \dots \c2 Z  \c1 J )(z) (\c1 I \c2 Z \dots \c3 Z  \c4 J)(0)} \quad \wick{(\c4 I  \c3 Z \c2 Z \dots   \c1 J )(z) (\c1 I  \dots \c2 Z \c3 Z  \c4 J)(0)}
 	\end{equation}
 For contraction of $k$ pairs of $Z$ fields, there are $k+1$ numbers of such contractions in total. This gives us the following
\begin{equation}
	E^{(n)}_{a_1a_2}(\lambda;z)E^{(n')}_{b_1b_2}(\lambda';0) \sim - \frac{N^{k}[\lambda\lambda']^{k}}{z^{k+2}}\delta_{a_1b_2}\delta_{a_2b_1}(k+1)\frac{(n'  - k )!}{(n+n' - k )!}(\lambda\cdot\pa_{\lambda'})^{n - k}A^{(n+n' - 2k)}(\lambda';0) + \dots
\end{equation}
Expanding this OPE we get exactly \ref{bdy_OPE_OC_2k}.

Finally, we briefly analyze boundary OPE corresponding to the quartic open-closed coupling. 
	\begin{center}
	\begin{tikzpicture}[>=stealth]
		\draw (0,-1.5) -- (0,1.5);
		\draw (1.2,0) node[cross,label=right :$I^{(3,0)}_{OC}$] (O) {} ;
		\draw (O) circle (0.13);
		\draw (0,1) node[left] (B1) {$B_a[p,q]$};
		\draw (0,-1) node[left] (B2) {$B_b[r,s]$};
		\draw[->] (B2) .. controls (0.3,-1) and (0.5,-0.9) ..  (0.8,-0.6);
		\draw (0.8,-0.6).. controls (0.9,-0.5) and (1,-0.4) ..  (O);
		\draw[->] (B1) .. controls (0.3,1) and (0.5,0.9) ..  (0.8,0.6);
		\draw (0.8,0.6).. controls (0.9,0.5) and (1,0.4) ..  (O);
		\draw (2.4,1) node[right] (B3) {$\beta[u,v]$};
		\draw (2.4,-1) node[right] (B4) {$B_c[p+r-u-1,q+s - v- 1]$};
		\draw[->] (O) .. controls (1.4,0.4) and (1.5,0.5) .. (1.6,0.6);
		\draw (1.6,0.6) .. controls (1.9,0.9) and (2.1,1) .. (B3);
		\draw[->] (O) .. controls (1.4,-0.4) and (1.5,-0.5) .. (1.6,-0.6);
		\draw (1.6,-0.6) .. controls (1.9,-0.9) and (2.1,-1) .. (B4);
	\end{tikzpicture}
\end{center}
The coupling \ref{coup_OC_3} lead us the following boundary OPE
\begin{equation}
	B_a[p,q](z)B_b[r,s](0) \sim \frac{f_{ab}^c}{z^2} \sum_{u,v}\left( (m_3)^{p,q;r,s}_{u,v} - (m_3)^{r,s;p,q}_{u,v} \right) \beta[u,v] B_c[p+r-u-1,q+s - v- 1](0) + \dots
\end{equation}
In the special case $p = 1,q= 0$, this is computed in \ref{m3_10} and gives us the following
\begin{equation}\label{bdy_OPE_OOCO_1}
\begin{aligned}
			B_a[1,0](z)B_b[r,s](0) \sim &\frac{f_{ab}^c}{z^2}\sum_{u,v}\binom{r+s}{r}^{-1}\binom{u+v}{u} \binom{r+s - u-v-1}{r-u}\\
			&\times \beta[u,v] B_c[p+r-u-1,q+s - v- 1](0)\dots
\end{aligned}
\end{equation}

The corresponding wick contraction on B-brane is given by a single $I-J$ contraction, together with a single $Z-Z$ contraction that at least one $Z$ is not adjacent to the contracted $I-J$. This gives us the following
\begin{equation}
	E^{(1)}_{a_1a_2}(\lambda,z)E^{(n')}_{b_1b_2}(\lambda',z) \sim \sum_{n'' \geq 1}\frac{[\lambda\lambda']}{z^2}A^{(n'')}(\lambda',0)(\delta_{a_2b_1} E^{(n' - n'' - 1)}_{a_1b_2}(\lambda',0) - \delta_{a_1b_2}E^{(n' - n'' - 1)}_{a_2b_1}(\lambda',0)) + \dots
\end{equation}
Expanding the generating fields into component, we get the same constant coefficient as the expansion of \ref{OPE_ZZ}. We can check that this OPE match precisely the boundary OPE \ref{bdy_OPE_OOCO_1}.

\subsection{Comments on loops and non-planar contributions}
In this section, we briefly analyze a simple loop diagram that contributes to the OPE. We consider the following simplest loop diagrams.
\begin{center}
	\begin{tikzpicture}[>=stealth]
		\draw (0,-1.5) -- (0,1.5);
		\draw (1.2,0.7) node[cross] (I1) {} ;
		\draw (1.2,1) node[left] (B1) {\scalebox{0.55}{$A^a[2,0]$}};
		\draw (1.1,0.42) node[left] (B1) {\scalebox{0.55}{$\pa_zA^b[u - 1,v+1]$}};
		\draw (1.4,0.42) node[right] (B1) {\scalebox{0.55}{$\beta[u ,v ]$}};
		\draw (I1) circle (0.13);
		\draw (1.2,-0.7) node[cross] (I2) {} ;
		\draw (1.1,-0.7) node[below] (B1) {\scalebox{0.55}{$A^b[0,2]$}};
		\draw (1.1,-0.5) node[left] (B1) {\scalebox{0.55}{$B_b[u - 1,v+1]$}};
		\draw (1.4,-0.5) node[right] (B1) {\scalebox{0.55}{$\alpha[u,v]$}};
		\draw (I2) circle (0.13);
		\draw (0,1) node[left] (B1) {$B_a[2,0]$};
		\draw (0,-1) node[left] (B2) {$B_b[0,2]$};
		\draw (2.3,0.7) node[right] (B3) {$B_c[0,0]$};
		\draw[->] (B1) .. controls (0.2,0.97) and (0.3,0.94) ..  (0.5,0.9);
		\draw(0.5,0.9) .. controls (0.8,0.83) and (1,0.75) ..  (I1);
		\draw[->] (B2) .. controls (0.2,-0.97) and (0.3,-0.94) ..  (0.5,-0.9);
		\draw(0.5,-0.9) .. controls (0.8,-0.83) and (1,-0.75) ..  (I2);
		\draw[->] (I1) .. controls (1.5,0.5) and (1.6,0.3) ..  (1.65,0);
		\draw (1.65,0) .. controls (1.6,-0.3) and (1.5,-0.5) ..  (I2);
		\draw (I1) .. controls (0.9,0.5) and (0.8,0.3) ..  (0.75,0);
		\draw[->] (I2).. controls (0.9,-0.5) and (0.8,-0.3)  ..  (0.75,0) ;
		\draw[->] (I1) -- (2,0.7);
		\draw(2,0.7) -- (B3);
	\end{tikzpicture}
\end{center}

By a dimensional analysis \cite{Costello:2020ndc}, one can show that the Feynman integral gives us a pole of order $3$. The Lie algebra factor gives us $f_{ab}^c$. Therefore the corresponding OPE takes the following form
\begin{equation}
	B_a[2,0](z)B_b[0,2](0) \sim \frac{f_{ab}^c}{z^3} B_c[0,0](0).
\end{equation}
We will see in Section \ref{sec:cele_loop} that the summation over the internal KK modes in the above diagram actually diverges. We assume that a proper regularization can be employed to make sense of the summation. 

There is a natural candidate for the corresponding OPE on the B brane side. We consider the following non-planar wick contraction
\begin{equation}
	\wick{(I_{a_1} \c3 Z^1 \c2 Z^1 \c1 J_{a_2} )(z)  ( \c1 I_{b_1} \c3 Z^{2} \c2 Z^2 J_{b_2})(0)} =  \frac{1}{z^3}\delta_{a_2b_1}I_{a_1} J_{b_2} (0).
\end{equation}
If we contract instead $I_{a_1}-J_{b_2}$, we get the term $\frac{1}{z^3}\delta_{a_1b_2}I_{a_2} J_{b_1} (0)$. Together with the above term, they reproduce exactly the boundary 1-loop OPE.

As we have discussed in Section \ref{sec:twist_higer}, the tree-level OPE actually defined a limit of the full chiral algebra and corresponds to the structure of a Poisson vertex algebra (or coisson algebra). Computing loop Feynman diagrams in the holomorphic topological theory can be understood mathematically as solving the deformation quantization problem of coisson algebra outlined in \cite{tamarkin2003deformations}.
\begin{equation*}
	\framebox[1.1\width]{Coisson algebra}
		\overset{\text{"deformation quantization"}}{\Longrightarrow}
	\framebox[1.1\width]{Full chiral algebra}\overset{?}{\leftrightsquigarrow}\framebox[1.1\width]{Full B-brane chiral algebra}
\end{equation*}
Suppose we can find a consistent regularization to deal with the divergence in summing the KK modes, then it will be interesting to match loops contribution to the boundary OPE with non-planar wick contractions of the B-brane chiral algebra.

	\section{Celestial holography from boundary chiral algebra}
	\label{sec:cele_holo}
	\subsection{Celestial holography and twistor correspondence}
	In this section, we review the various relationships between celestial holography and twistor correspondence following \cite{Costello:2022wso}.
	
	Recall that the twistor space $\mathbb{PT}$ of flat Euclidean spacetime can be identified with the total space of the rank $2$  bundle
	\begin{equation}
		\mathcal{O}(1)\oplus \mathcal{O}(1) \to \mathbb{CP}^1.
	\end{equation}
More invariantly, let $S_+$,$S_-$ denote the two spinor representation of $\mathrm{Spin}(4) = SU(2)\times SU(2)$. Then we can identify $\mathbb{PT}$ with the total space of the bundle $\mathcal{O}(1)\otimes S_{-} \to \mathbb{P}(S_+)$. The action of $\mathrm{Spin}(4)$ on $\mathbb{PT}$ is thus evident.

There is a non-holomorphic fibration 
\begin{equation}
	\pi : \mathbb{PT} \to \R^4
\end{equation}
called twistor fibration. It has the property that for any $u \in \R^4$, $\pi^{-1}(u) \cong \mathbb{CP}_u^1$ and is called a twistor line. The twistor fibration can be explicitly written in a local coordinate. Let $z$ be a coordinates on $\mathbb{CP}^1$ and $v_i$, $i = 1,2$ be the coordinates on the $\mathcal{O}(1)$ fiber. Then a point $(u_1,\dots,u_4) \in \R^4$ correspond to the twistor line $\mathbb{CP}_u^1$ defined by
\begin{equation}
	\begin{aligned}
		&v_1 = x_1 + ix_2 + z (x_3 - ix_4),\\
		&v_2 = x_3 + ix_4 -z(x_1 - ix_2),
	\end{aligned}
\end{equation}
which is also called the incidence relations.

The original Penrose-Ward correspondence \cite{ward_wells} states that there is a natural bijection between vector bundle on $\R^4$ equipped with self-dual connection and holomorphic vector bundles on $\mathbb{PT}$.

A more refined version of Penrose-Ward correspondence (see \cite{Boels:2006ir,Movshev:2008fu}) states that the self-dual Yang-Mills theory on $\R^4$ is equivalent to the holomorphic BF theory on $\mathbb{PT}$.

There are many variations of Penrose-Ward correspondence for holomorphic theory on twistor theory and four dimensional quantum field theories. We list some of them:
\begin{enumerate}
	\item Holomorphic Chern-Simons theory on twistor space and four dimensional WZW model \cite{Donaldson,Losev:1995cr,Bittleston:2020hfv}
	\item Holomorphic Chern-Simons theory on super twistor space $\mathbb{PT}^{3|4}$ and four dimensional self-dual $\mathcal{N} = 4$ SYM \cite{Witten:2003nn}.
	\item Holomorphic Poisson BF theory on twistor space and self-dual Einstein gravity \cite{Penrose:1976js,Mason:2007ct}.
\end{enumerate}

The relationships between celestial holography and twistor correspondence start by looking at the algebra of the twistor theory $\mathrm{Obs}_{6d}|_{\pi^{-1}(0)}$ restricted to the twistor line $\pi^{-1}(0) \cong \mathbb{CP}^1$. This is a dg chiral algebra, and its Koszul dual chiral algebra $(\mathrm{Obs}_{6d}|_{\pi^{-1}(0)})^{!}$ plays the fundamental role in this story. As is explained in \cite{Costello:2022wso}, we have the following dictionary between the chiral algebra $(\mathrm{Obs}_{6d}|_{\pi^{-1}(0)})^{!}$ and the four dimensional theory
\begin{enumerate}
	\item The generators of the chiral algebra are in bijection with single-particle conformal primary states of the four-dimensional theory.
	\item Conformal blocks of the chiral algebra are in bijection with local operators in the four-dimensional theory.
	\item Scattering amplitudes of the $4d$ theory in the presence of a chosen local operator at a fixed
	position are equal to correlation functions of the chiral algebra defined using the corresponding conformal block.
\end{enumerate}

In this section, we apply our results of the KK theory to compute the Koszul dual chiral algebra $(\mathrm{Obs}_{6d}|_{\pi^{-1}(0)})^{!}$ in holomorphic BF theory and holomorphic Poisson BF theory. We work in the open patch $\C^3 \subset \mathbb{PT}$ of twistor space. Note that the twistor line $\pi^{-1}(0)$ is exactly the boundary $\mathbb{CP}^1$ at $r = 0$ in our previous study. Therefore the chiral algebra $(\mathrm{Obs}_{6d}|_{\pi^{-1}(0)})^{!}$ can be computed by the boundary chiral algebra of the KK theory with transversal boundary condition. This is the same boundary condition we analyzed in the twisted holography chiral algebra.
	\subsection{Chiral algebra for holomorphic BF theory}
	\label{sec:cele_HBF}
	In this section, we study the (tree-level) boundary OPE for the holomorphic BF theory.
	
	The $3d$ KK theory of holomorphic BF theory have fields $\mathbf{A},\tilde{\mathbf{A}}, \mathbf{B},\tilde{\mathbf{B}}$. The original $6d$ fields are related to the KK fields as follows
		\begin{equation}
\begin{aligned}
			\pmb{\EuScript{A}} &=  \sum_{p,q\geq0} \mathbf{A}[p,q]z_1^pz_2^q + \frac{(p+q+1)!}{p!q!}\tilde{\mathbf{B}}[p,q]\frac{\bar{z}_1^p\bar{z}_2^q}{r^{2(p+q)}} \frac{\epsilon_{ij}\bar{z}_id\bar{z}_j}{r^4},\\
					\pmb{\EuScript{B}}& =  \sum_{p,q\geq0} \tilde{\mathbf{A}}[p,q]z_1^pz_2^q + \frac{(p+q+1)!}{p!q!}\mathbf{B}[p,q]\frac{\bar{z}_1^p\bar{z}_2^q}{r^{2(p+q)}} \frac{\epsilon_{ij}\bar{z}_id\bar{z}_j}{r^4}.
\end{aligned}
	\end{equation}
	We choose the boundary condition such that $\mathbf{A},\tilde{\mathbf{A}} = 0$ at the boundary. The boundary chiral algebra is generated by fields $B,\tilde{B}$ and their derivatives. In the celestial holography context, the $6d$ geometry is not deformed. The flat space action functional for the KK theory has all the information we need to compute the boundary chiral algebra. Given the interactions \ref{def_BF}, the tree-level Feynman diagrams computing the OPE of two boundary $B$ fields read as follows
\begin{center}
	\begin{tikzpicture}[scale=0.6][>=stealth]
		\draw (0,-1.5) -- (0,1.5);
		\draw (1.2,0) node[crosss] (O) {} ;
		\draw (O) circle (0.13);
		\draw (0,1) node[left] (B1) {$B$};
		\draw (0,-1) node[left] (B2) {$B$};
		\draw[->] (B2) .. controls (0.3,-1) and (0.5,-0.9) ..  (0.8,-0.6);
		\draw (0.8,-0.6).. controls (0.9,-0.5) and (1,-0.4) ..  (O);
		\draw[->] (B1) .. controls (0.3,1) and (0.5,0.9) ..  (0.8,0.6);
		\draw (0.8,0.6).. controls (0.9,0.5) and (1,0.4) ..  (O);
		\draw (2.3,0) node[right] (B3) {$B$};
		\draw[->] (O) -- (2,0);
		\draw(2,0) -- (B3);
	\end{tikzpicture}\quad
		\begin{tikzpicture}[scale=0.6][>=stealth]
	\draw (0,-1.5) -- (0,1.5);
	\draw (1.2,0) node[crosss] (O) {} ;
	\draw (O) circle (0.13);
	\draw (0,1) node[left] (B1) {$B$};
	\draw (0,-1) node[left] (B2) {$B$};
	\draw[->] (B2) .. controls (0.3,-1) and (0.5,-0.9) ..  (0.8,-0.6);
	\draw (0.8,-0.6).. controls (0.9,-0.5) and (1,-0.4) ..  (O);
	\draw[->] (B1) .. controls (0.3,1) and (0.5,0.9) ..  (0.8,0.6);
	\draw (0.8,0.6).. controls (0.9,0.5) and (1,0.4) ..  (O);
	\draw (2.4,1) node[right] (B3) {$B$};
	\draw (2.4,-1) node[right] (B4) {$\tilde{B}$};
	\draw[->] (O) .. controls (1.4,0.4) and (1.5,0.5) .. (1.6,0.6);
	\draw (1.6,0.6) .. controls (1.9,0.9) and (2.1,1) .. (B3);
	\draw[->] (O) .. controls (1.4,-0.4) and (1.5,-0.5) .. (1.6,-0.6);
	\draw (1.6,-0.6) .. controls (1.9,-0.9) and (2.1,-1) .. (B4);
\end{tikzpicture}\quad
		\begin{tikzpicture}[scale=0.6][>=stealth]
	\draw (0,-1.5) -- (0,1.5);
	\draw (1.2,0) node[crosss] (O) {} ;
	\draw (O) circle (0.13);
	\draw (0,1) node[left] (B1) {$B$};
	\draw (0,-1) node[left] (B2) {$B$};
	\draw[->] (B2) .. controls (0.3,-1) and (0.5,-0.9) ..  (0.8,-0.6);
	\draw (0.8,-0.6).. controls (0.9,-0.5) and (1,-0.4) ..  (O);
	\draw[->] (B1) .. controls (0.3,1) and (0.5,0.9) ..  (0.8,0.6);
	\draw (0.8,0.6).. controls (0.9,0.5) and (1,0.4) ..  (O);
	\draw (2.4,1) node[right] (B3) {$B$};
	\draw (2.4,-1) node[right] (B4) {$\tilde{B}$};
	\draw[->] (O) .. controls (1.4,0.4) and (1.5,0.5) .. (1.6,0.6);
	\draw (1.6,0.6) .. controls (1.9,0.9) and (2.1,1) .. (B3);
	\draw[->] (O) .. controls (1.4,-0.4) and (1.5,-0.5) .. (1.6,-0.6);
	\draw (1.6,-0.6) .. controls (1.9,-0.9) and (2.1,-1) .. (B4);
	\draw (2.3,0) node[right] (B3) {$\tilde{B}$};
	\draw[->] (O) -- (2,0);
	\draw(2,0) -- (B3);
	\draw (3.5,0) node[right] (D) {$\cdots$};
\end{tikzpicture}
\end{center}
The corresponding boundary OPE takes the following schematic form 
	\begin{equation}
			B(z)B(0) \sim \frac{1}{z} \left( B(0) + (m_3)B\tilde{B}(0) + (m_4)B\tilde{B}\tilde{B}(0) \dots \right).
	\end{equation}
More precisely, the first term 
	\begin{equation}
	B_a(z)B_b(0) \sim \frac{f^{c}_{ab}}{z}  B_c(0)
\end{equation}
is the Kac-Moody chiral OPE. It is shown in \cite{Costello:2022wso} that the corresponding correlation function reproduces the Parke-Taylor formula. The second term is given by the quadratic interaction, a similar term also appears in the twisted holography chiral algebra in Section \ref{sec:OPE_HCS_4}. More explicitly, this term can be written as follows
	\begin{equation}
	B_a[p,q](z)B_b[r,s](0) \sim \sum_{\substack{u_1 + u_2 = p+r - 1\\v_1+v_2 = q+s - 1}}\sum_{c,d,e,f}\frac{K^{fc}}{2z} \left( (m_3)^{p,q;r,s}_{u_1,v_1}f_{bf}^ef_{ae}^d - (m_3)^{r,s;p,q}_{u_1,v_1}f_{af}^ef_{be}^d\right)  B_c[u_1,v_1]\tilde{B}_d[u_2,v_2](0),
\end{equation}
where we used the shorthand $(m_3)^{p,q;r,s}_{u,v} : = (m_3)^{p,q;r,s}_{u,v;p+r-1-u,q+s-1-v}$, which can be computed by the formula \ref{m3} and \ref{m3_2}. A special case of this formula is obtained in \cite{Costello:2022wso}.

Using the full action functional of the KK theory \ref{sec:flat_full} (though we write the action for holomorphic Chern-Simons, it is easy to generalize it to holomorphic BF theory), one can explicitly write down the remaining OPE. 
\begin{equation}
\begin{aligned}
&	B_a[p,q](z)B_b[r,s](0)\sim \sum_{n \geq 2} \frac{1}{(n-2)!} \sum_{\substack{u_1 + \dots u_{n-1} = p+r -  n +2\\v_1+\dots v_{n-1} = q+s - n+2}}\sum_{c_1,\dots,c_{n-1}}\sum_{\sigma \in S_{n-1}}(m_n)^{p,q;r,s}_{u_{\sigma(1)},v_{\sigma(1)};\dots;u_{\sigma(n-1)},v_{\sigma(n-1)}}\\
	&\times K(t_a,[...[[t_b,t^{c_{\sigma(1)}}],t^{c_{\sigma(2)}}],...,t^{c_{\sigma(n-1)}}])  B_{c_1}[u_1,v_1] \tilde{B}_{c_2}[u_2,v_2] \cdots \tilde{B}_{c_{n-1}}[u_{n-1},v_{n-1}].
\end{aligned}
\end{equation}
We recollect our results from Section \ref{sec:all_mn} that compute the constant $(m_n)^{p,q;r,s}_{u_{1},v_{1};\dots;u_{n-1},v_{n-1}}$. First, we change our variable as follows
\begin{equation}
	\begin{cases}
		&p = j_1 + m_1\\&q = j_1 - m_1
	\end{cases},\quad
	\begin{cases}
		&r = j_2 + m_2\\&s = j_2 - m_2
	\end{cases},\quad
	\begin{cases}
		&u_i = \bar{j}_i  - \bar{m}_i\\&v_i = \bar{j}_i + \bar{m}_i
	\end{cases}.
\end{equation}
Note that the constraint $u_1 + \dots u_{n-1} = p+r -  n +2$, $v_1+\dots v_{n-1} = q+s - n+2$ is equivalent to 
\begin{equation}
	\bar{j}_1 +\bar{j}_2 + \dots + \bar{j}_{n-1} = j_1 + j_2 - n + 2,\quad \bar{m}_1+ \bar{m}_2 + \dots \bar{m}_n = -(m_1 + m_2).
\end{equation}
Then we have
\begin{equation}
	\begin{aligned}
		&(m_n)^{p,q;r,s}_{u_{1},v_{1};\dots;u_{n-1},v_{n-1}} = \sqrt{\frac{(j_1 - m_1)!(j_1 + m_1)!(j_2 - m_2)!(j_1 + m_2)!}{(2j_1 + 1)!(2j_2 +1)!}}\left( \prod_{l = 1}^{n-1}\sqrt{\frac{(2\bar{j}_l + 1)!}{(\bar{j}_l + \bar{m}_l)!(\bar{j}_l - \bar{m}_l)!}}\right) 	 \\
		&\sum_{\substack{i_1,\dots,i_{n-2}\\i_{l-1}\leq i_l\leq \min\{2j_2 - l,2\bar{J}_l + l - 1\}}}\sqrt{2j_2 - n+3}\prod_{l = 2}^{n - 1} \sqrt{\frac{(2j_2 - l + 3)(2\bar{j}_l + 1)(2J_{l - 1} - 2i_{l-1} + 1)}{(2j_2 - l + 2 - i_{l-1})(2\bar{J}_{l-1}+ l - 1 - i_{l-1})}}\\
	&C^{j_2,\bar{j}_1;j_2+\bar{j}_1 - i_1}_{m_2,\bar{m}_1;m_2+\bar{m}_1}\left.\left( \prod_{l = 2}^{n - 1}\begin{Bmatrix}
		\bar{J}_{l-1} + \frac{l-1}{2}&j_2 - \frac{l-1}{2}& J_{l-1} - i_{l-1}\\ J_l - i_l& \bar{j}_l& \bar{J}_{l} + \frac{l-1}{2}
	\end{Bmatrix} C^{J_{l-1} - i_{l-1},\bar{j}_{l},J_{l} - i_{l}}_{m_2+\bar{M}_{l-1},m_{l},m_2+\bar{M}_{l}}\right)\right|_{i_{n - 1} = 2j_2 - n + 2} .
	\end{aligned}
\end{equation}
where 
\begin{equation}
	\begin{aligned}
		&\bar{J}_l = \bar{j}_1+ \dots+ \bar{j}_l,\\
		&J_l = j_2 + \bar{J}_l,\\
		&\bar{M}_l = \bar{m}_1 + \dots + \bar{m}_l.
	\end{aligned}
\end{equation}

Though our current computation of the boundary OPE is completely at the tree-level, the interaction vertices we used are effective interactions that already contain Feynman diagrams in themselves. These tree-level results should reproduce loop level computations in the Koszul dual picture. Indeed, the $BB \sim \frac{1}{z}B\tilde{B}$ OPE given above is from a one loop computation in \cite{Costello:2022wso}. More generally, the $BB \sim B\tilde{B}^{n}$ OPE should correspond to a $n$ loop computation in the Koszul dual picture and also a $n$ loop result in the form factor computation (However they are not the full loop results).

Similarly, The tree-level Feynman diagrams for a boundary $B$ field and a $\tilde{B}$ field consist of the following
\begin{center}
	\begin{tikzpicture}[scale=0.6][>=stealth]
		\draw (0,-1.5) -- (0,1.5);
		\draw (1.2,0) node[crosss] (O) {} ;
		\draw (O) circle (0.13);
		\draw (0,1) node[left] (B1) {$B$};
		\draw (0,-1) node[left] (B2) {$\tilde{B}$};
		\draw[->] (B2) .. controls (0.3,-1) and (0.5,-0.9) ..  (0.8,-0.6);
		\draw (0.8,-0.6).. controls (0.9,-0.5) and (1,-0.4) ..  (O);
		\draw[->] (B1) .. controls (0.3,1) and (0.5,0.9) ..  (0.8,0.6);
		\draw (0.8,0.6).. controls (0.9,0.5) and (1,0.4) ..  (O);
		\draw (2.3,0) node[right] (B3) {$\tilde{B}$};
		\draw[->] (O) -- (2,0);
		\draw(2,0) -- (B3);
	\end{tikzpicture}\quad
	\begin{tikzpicture}[scale=0.6][>=stealth]
		\draw (0,-1.5) -- (0,1.5);
		\draw (1.2,0) node[crosss] (O) {} ;
		\draw (O) circle (0.13);
		\draw (0,1) node[left] (B1) {$B$};
		\draw (0,-1) node[left] (B2) {$\tilde{B}$};
		\draw[->] (B2) .. controls (0.3,-1) and (0.5,-0.9) ..  (0.8,-0.6);
		\draw (0.8,-0.6).. controls (0.9,-0.5) and (1,-0.4) ..  (O);
		\draw[->] (B1) .. controls (0.3,1) and (0.5,0.9) ..  (0.8,0.6);
		\draw (0.8,0.6).. controls (0.9,0.5) and (1,0.4) ..  (O);
		\draw (2.4,1) node[right] (B3) {$\tilde{B}$};
		\draw (2.4,-1) node[right] (B4) {$\tilde{B}$};
		\draw[->] (O) .. controls (1.4,0.4) and (1.5,0.5) .. (1.6,0.6);
		\draw (1.6,0.6) .. controls (1.9,0.9) and (2.1,1) .. (B3);
		\draw[->] (O) .. controls (1.4,-0.4) and (1.5,-0.5) .. (1.6,-0.6);
		\draw (1.6,-0.6) .. controls (1.9,-0.9) and (2.1,-1) .. (B4);
	\end{tikzpicture}\quad
	\begin{tikzpicture}[scale=0.6][>=stealth]
		\draw (0,-1.5) -- (0,1.5);
		\draw (1.2,0) node[crosss] (O) {} ;
		\draw (O) circle (0.13);
		\draw (0,1) node[left] (B1) {$B$};
		\draw (0,-1) node[left] (B2) {$\tilde{B}$};
		\draw[->] (B2) .. controls (0.3,-1) and (0.5,-0.9) ..  (0.8,-0.6);
		\draw (0.8,-0.6).. controls (0.9,-0.5) and (1,-0.4) ..  (O);
		\draw[->] (B1) .. controls (0.3,1) and (0.5,0.9) ..  (0.8,0.6);
		\draw (0.8,0.6).. controls (0.9,0.5) and (1,0.4) ..  (O);
		\draw (2.4,1) node[right] (B3) {$\tilde{B}$};
		\draw (2.4,-1) node[right] (B4) {$\tilde{B}$};
		\draw[->] (O) .. controls (1.4,0.4) and (1.5,0.5) .. (1.6,0.6);
		\draw (1.6,0.6) .. controls (1.9,0.9) and (2.1,1) .. (B3);
		\draw[->] (O) .. controls (1.4,-0.4) and (1.5,-0.5) .. (1.6,-0.6);
		\draw (1.6,-0.6) .. controls (1.9,-0.9) and (2.1,-1) .. (B4);
		\draw (2.3,0) node[right] (B3) {$\tilde{B}$};
		\draw[->] (O) -- (2,0);
		\draw(2,0) -- (B3);
		\draw (3.5,0) node[right] (D) {$\cdots$};
	\end{tikzpicture}
\end{center}
The corresponding boundary OPE is given by
\begin{equation}
\begin{aligned}
		&B_a[p,q](z)\tilde{B}_b[r,s](0)\sim \frac{f^c_{ab}}{z}\tilde{B}_c[p+r,q+s](0)\\
		+&\sum_{n \geq 3} \frac{1}{(n-1)!} \sum_{\substack{u_1 + \dots u_{n-1} = p+r -  n +2\\v_1+\dots v_{n-1} = q+s - n+2}}\sum_{c_1,\dots,c_{n-1}}\sum_{\sigma \in S_{n-1}}(m_n)^{p,q;r,s}_{u_{\sigma(1)},v_{\sigma(1)};\dots;u_{\sigma(n-1)},v_{\sigma(n-1)}}\\
		& K(t_a,[...[[t_b,t^{c_{\sigma(1)}}],t^{c_{\sigma(2)}}],...,t^{c_{\sigma(n-1)}}])  \tilde{B}_{c_1}[u_1,v_1] \tilde{B}_{c_2}[u_2,v_2] \cdots \tilde{B}_{c_{n-1}}[u_{n-1},v_{n-1}].
\end{aligned}
\end{equation}

As is explained in \cite{Costello:2022wso}, holomorphic BF theory alone suffers from an anomaly on twistor space. As a result, we might eventually get into trouble with the boundary chiral algebra at the full quantum level. For holomorphic BF theory, the anomaly can be canceled by coupling with the free limit of Kodaira-Spencer theory via a Green-Schwarz mechanism \cite{Costello:2019jsy}. In four dimension, the Kodaira-Spencer field introduces an axion field coupled with the self-dual gauge theory. For the boundary chiral algebra, the Kodaira Spencer fields introduce two towers of boundary generators. We can easily write their OPE via the coupling of KK theory we obtained in Section \ref{sec:coup_HBF}
\begin{equation}
\begin{aligned}
		B_a[p,q](z)\beta[r,s](0) &\sim \frac{ps-qr}{z}\tilde{B}_a[p+r-1,q+s - 1],\\
		B_a[p,q](z)\tilde{\beta}[r,s](0) &\sim  - \frac{(p+q+r+s+2)}{z^2}\tilde{B}_a[p+r,q+s] + \frac{r+s+2}{z}\pa_z\tilde{B}_a[p+r,q+s],\\
		B_a[p,q]B_b[r,s] &\sim -\frac{K_{ab}}{z^2}\beta[p+r,q+s] + \frac{r+s}{p+r+q+s}\frac{K_{ab}}{z}\pa_z\beta[p+r,q+s]\\
		&- \frac{ps - qr}{p+q+r+s}\frac{K_{ab}}{z}\tilde{\beta}[p+r-1,q+s-1].
\end{aligned}
\end{equation} 
Under a rescaling $\beta[r,s] \to (r+s)\beta[r,s],\tilde{\beta}[r,s] \to (r+s+2)\tilde{\beta}[r,s]$, we reproduce the OPE in \cite{Costello:2022wso} computed using Koszul duality \footnote{In \cite{Costello:2022wso}, the authors used a different notation for the fields, which is related to ours by $B \to J,\tilde{B} \to \tilde{J},\beta \to  E,\tilde{\beta} \to F$.} 

It won't be hard to generalize our method to compute all the coupling of the KK theory of Kodaira-Spencer gravity and holomorphic BF theory. This should lead to a complete understanding of the "tree-level" correction to the above OPE. We leave this problem to future work.

	\subsection{Comments on loop corrections}
	\label{sec:cele_loop}
	
	So far, we have only considered tree-level boundary OPE. Loop corrections to the boundary OPE are also important. In this section, we briefly analyze some loop effects. We will see that they does not behave so well in the KK theory. Divergence happens when we sum over the infinite KK tower and appropriate regularization is needed.
	
	As an example, we consider the following loop diagrams.
	\begin{center}
		\begin{tikzpicture}[>=stealth]
	\draw (0,-1.5) -- (0,1.5);
	\draw (1.2,0.7) node[cross] (I1) {} ;
	\draw (1.2,1) node[left] (B1) {\scalebox{0.55}{$A^a[p,q]$}};
	\draw (1.1,0.42) node[left] (B1) {\scalebox{0.55}{$B_d[u,v]$}};
	\draw (1.3,0.42) node[right] (B1) {\scalebox{0.55}{$A^e[r+u,s+v]$}};
	\draw (I1) circle (0.13);
	\draw (1.2,-0.7) node[cross] (I2) {} ;
	\draw (1.1,-0.7) node[below] (B1) {\scalebox{0.55}{$A^b[r,s]$}};
	\draw (1.1,-0.5) node[left] (B1) {\scalebox{0.55}{$A^d[u,v]$}};
	\draw (1.3,-0.5) node[right] (B1) {\scalebox{0.55}{$B_e[r+u,s+v]$}};
	\draw (I2) circle (0.13);
	\draw (0,1) node[left] (B1) {$B_a[p,q]$};
	\draw (0,-1) node[left] (B2) {$B_b[r,s]$};
	\draw (2.3,0.7) node[right] (B3) {$\tilde{B}_c[p+r - 1,q + s - 1]$};
	\draw[->] (B1) .. controls (0.2,0.97) and (0.3,0.94) ..  (0.5,0.9);
	\draw(0.5,0.9) .. controls (0.8,0.83) and (1,0.75) ..  (I1);
	\draw[->] (B2) .. controls (0.2,-0.97) and (0.3,-0.94) ..  (0.5,-0.9);
	\draw(0.5,-0.9) .. controls (0.8,-0.83) and (1,-0.75) ..  (I2);
	\draw[->] (I1) .. controls (1.5,0.5) and (1.6,0.3) ..  (1.65,0);
	\draw (1.65,0) .. controls (1.6,-0.3) and (1.5,-0.5) ..  (I2);
	\draw (I1) .. controls (0.9,0.5) and (0.8,0.3) ..  (0.75,0);
	\draw[->] (I2).. controls (0.9,-0.5) and (0.8,-0.3)  ..  (0.75,0) ;
	\draw[->] (I1) -- (2,0.7);
	\draw(2,0.7) -- (B3);
\end{tikzpicture}\;\;
\begin{tikzpicture}[>=stealth]
	\draw (0,-1.5) -- (0,1.5);
	\draw (1.2,0.7) node[cross] (I1) {} ;
	\draw (I1) circle (0.13);
	\draw (1.2,-0.7) node[cross] (I2) {} ;
	\draw (I2) circle (0.13);
	\draw (0,1) node[left] (B1) {$B_a[p,q]$};
	\draw (0,-1) node[left] (B2) {$B_b[r,s]$};
	\draw (2.3,-0.7) node[right] (B3) {$\tilde{B}_c[p+r - 1,q+s - 1]$};
	\draw[->] (B1) .. controls (0.2,0.97) and (0.3,0.94) ..  (0.5,0.9);
	\draw(0.5,0.9) .. controls (0.8,0.83) and (1,0.75) ..  (I1);
	\draw[->] (B2) .. controls (0.2,-0.97) and (0.3,-0.94) ..  (0.5,-0.9);
	\draw(0.5,-0.9) .. controls (0.8,-0.83) and (1,-0.75) ..  (I2);
	\draw[->] (I1) .. controls (1.5,0.5) and (1.6,0.3) ..  (1.65,0);
	\draw (1.65,0) .. controls (1.6,-0.3) and (1.5,-0.5) ..  (I2);
	\draw (I1) .. controls (0.9,0.5) and (0.8,0.3) ..  (0.75,0);
	\draw[->] (I2).. controls (0.9,-0.5) and (0.8,-0.3)  ..  (0.75,0) ;
	\draw[->] (I2) -- (2,-0.7);
	\draw(2,-0.7) -- (B3);
\end{tikzpicture}
	\end{center}
The corresponding Feynman integral is actually difficult to evaluate and we won't compute it in this paper. Nevertheless, simply by a dimensional analysis \cite{Costello:2020ndc}, one can show that it gives us a pole of order $2$. So we expect to have OPE of the following form
\begin{equation}
	B(z)B(0) \sim \frac{1}{z^2}\tilde{B}(0).
\end{equation}
The corresponding color factor and the KK modes factor can be evaluated by contracting the cubic and quartic interaction vertices. The first diagram gives us the following 
\begin{equation}\label{coef_one_loop}
	\left( \sum_{u,v}\sum_{d,e,f,g}	(m_3)^{p,q;r+u,s+v}_{u,v} K^{df}f_{ef}^gf_{ag}^c + (m_3)^{r+u,s+v;p,q}_{u,v}K^{df}f_{af}^gf_{eg}^c\right)f_{bd}^e.
\end{equation}
The first term gives the Lie algebra constant $2h^\vee f_{ab}^c $, where $h^\vee$ is the dual coxter number. To compute the constant factor from the KK modes, it will be easier to label the KK modes using quantum mechanics notation. We change our notation by the following
\begin{equation}
	\begin{cases}
		&p = j_1 + m_1\\&q = j_1 - m_1
	\end{cases},\quad
\begin{cases}
	&r = j_2 + m_2\\&s = j_2 - m_2
\end{cases},\quad
\begin{cases}
	&u = \bar{j}  - \bar{m}\\&v = \bar{j} + \bar{m}
\end{cases}.
\end{equation}
Recall that we denoted $N(j,m) = \sqrt{\frac{(j - m)!(j+m)!}{(2j+1)!}}$. Then we have
\begin{equation}
	\begin{aligned}
		&\sum_{u,v}	(m_3)^{p,q;r+u,s+v}_{u,v} = \sum_{\bar{j},\bar{m}}\sum_i\frac{(-1)^{\bar{j} - \bar{m}}N(j_1,m_1)N(j_2 + \bar{j},m_2 - \bar{m})}{N(\bar{j},\bar{m})N(j_1+j_2 - 1,m_1 + m_2)}\begin{Bmatrix}
			j_2 + \bar{j} - \frac{1}{2}&\bar{j} + \frac{1}{2}&j_2 + 2\bar{j} - i\\j_1 + j_2 - 1&j_1&j_1 + j_2 + \bar{j} - \frac{1}{2}
		\end{Bmatrix}\\
	&(-1)^{2j_1 + 2j_2  -i + 1}\sqrt{\frac{(2j_1 + 1)(2j_2 + 2\bar{j})(2j_2 + 2\bar{j} + 1)(2j_2 + 4\bar{j} - 2i + 1)}{(2j_2 +2\bar{j} - i)(2\bar{j} - i +1)}} C^{j_2 + \bar{j},\bar{j};j_2+2\bar{j} - i}_{m_2 - \bar{m},\bar{m};m_2}C^{j_2+2\bar{j} - i,j_1;j_1+j_2 - 1}_{m_2,m_1;m_1+m_2}.\\
	\end{aligned}
\end{equation}
Note that $\frac{N(j_2 + \bar{j},m_2 - \bar{m})}{N(j_2,m_2)N(\bar{j},\bar{m})} = (-1)^{\bar{j} + \bar{m}} \sqrt{2\bar{j} + 1} C^{j_2 + \bar{j},\bar{j};j_2}_{m_2 - \bar{m},\bar{m};m_2}$. Moreover, $\sum_{\bar{m}}C^{j_2 + \bar{j},\bar{j};j_2}_{m_2 - \bar{m},\bar{m};m_2}  C^{j_2 + \bar{j},\bar{j};j_2+2\bar{j} - i}_{m_2 - \bar{m},\bar{m};m_2} =  \delta_{j_2,j_2 + 2\bar{j} - i}$. Therefore, summation over $i$ in the above formula can be simplified to letting $i = 2\bar{j}$. We also have
\begin{equation}
	\begin{Bmatrix}
		j_2 + \bar{j} - \frac{1}{2}&\bar{j} + \frac{1}{2}&j_2 \\j_1 + j_2 - 1&j_1&j_1 + j_2 + \bar{j} - \frac{1}{2}
	\end{Bmatrix} = \frac{(-1)^{2j_1 + 2j_2 + 2\bar{j} - 1}\sqrt{(2j_1)(2\bar{j} + 1)}}{\sqrt{(2j_1 + 2j_2)(2j_1 + 2j_2 - 1)(2j_2 + 2\bar{j})(2j_2 + 2\bar{j} + 1)}}.
\end{equation}
Consequently,
\begin{equation}
	\begin{aligned}
	&\sum_{u,v}	(m_3)^{p,q;r+u,s+v}_{u,v}  \\
	& =\sum_{\bar{j}\in\frac{1}{2}\Z_{\geq 0}}(2\bar{j} + 1)\frac{N(j_1,m_1)N(j_2,m_2)}{N(j_1+j_2 - 1,m_1 + m_2)}\sqrt{\frac{(2j_1)(2j_1 + 1)(2j_2  + 1)}{(2j_2)(2j_1+2j_2)(2j_1 +2j_2 - 1)}}C^{j_2,j_1;j_1+j_2 - 1}_{m_2,m_1;m_1+m_2}\\
	& = \left( \sum_{n \geq 0}(n + 1)\right) \frac{(ps - qr)}{(r+s)(p+q + r +s)}.
	\end{aligned}
\end{equation}
Unfortunately, the summation $ \sum_{n \geq 0}(n + 1)$ does not converge. Nevertheless, it still gives us the dependence on the $p,q,r,s$ indices. 

The second term in \ref{coef_one_loop} gives a Lie algebra factor $h^\vee f_{ab}^c$. The KK modes part is computed by 
\begin{equation}
	\begin{aligned}
		&\sum_{u,v}	(m_3)^{r+u,s+v;p,q}_{u,v} = 	 \sum_{\bar{j},\bar{m}}\sum_i\frac{(-1)^{2j_1 + 2j_2 + 2\bar{j} - i + 1}N(j_1,m_1)N(j_2,m_2)}{N(j_1+j_2 - 1,m_1 + m_2)}\begin{Bmatrix}
			j_1 - \frac{1}{2}&\bar{j} + \frac{1}{2}&j_1 + \bar{j} - i\\j_1 + j_2 - 1&j_2 + \bar{j}&j_1 + j_2 + \bar{j} - \frac{1}{2}
		\end{Bmatrix}\\
		&\sqrt{\frac{(2j_2 + 2\bar{j} + 1)(2j_1)(2j_1 + 1)(2j_1 + 2\bar{j} - 2i + 1)(2\bar{j} + 1)}{(2j_1  - i)(2\bar{j} - i +1)}} C^{j_2 + \bar{j},\bar{j};j_2}_{m_2 - \bar{m},\bar{m};m_2} C^{j_1,\bar{j};j_1+\bar{j} - i}_{m_1,\bar{m};m_1+\bar{m}}C^{j_1+\bar{j} - i,j_2 + \bar{j};j_1+j_2 - 1}_{m_1+\bar{m},m_2 - \bar{m};m_1+m_2}\\	
		& = \sum_{\bar{j}}\sum_i\frac{(-1)^{2\bar{j} - i + 1}N(j_1,m_1)N(j_2,m_2)}{N(j_1+j_2 - 1,m_1 + m_2)}\sqrt{\frac{(2j_2 + 2\bar{j} + 1)(2j_1)(2j_1 + 1)(2j_2 + 1)(2\bar{j} + 1)}{(2j_1  - i)(2\bar{j} - i +1)}} C^{j_1,j_2;j_1+j_2-1}_{m_1,m_2;m_1+m_2}\\
		&(2j_1 + 2\bar{j} - 2i + 1)\begin{Bmatrix}
			j_1 - \frac{1}{2}&\bar{j} + \frac{1}{2}&j_1 + \bar{j} - i\\j_1 + j_2 - 1&j_2 + \bar{j}&j_1 + j_2 + \bar{j} - \frac{1}{2}
		\end{Bmatrix}\begin{Bmatrix}
			\bar{j}&j_2 + \bar{j}&j_2\\j_1+j_2 - 1&j_1&j_1 + \bar{j} - i
		\end{Bmatrix}.
	\end{aligned}
\end{equation}
In this case, summation over $\bar{j}$ is also divergent. We let
\begin{equation}
		\sum_{u,v}	(m_3)^{r+u,s+v;p,q}_{u,v}  = D_{p+q,r+s}\times(ps - qr),
\end{equation}
where $D_{p+q,r+s}$ is a constant not determined. The second Feynman diagrams need no extra computation. We simply exchange the Lie algebra indices $a \leftrightarrow b$ and the KK modes $(p,q) \leftrightarrow (r,s)$ in the above formulas. In summary, the one loop computation gives us the following OPE
\begin{equation}
	B_a[p,q] B_b[r,s] \sim D_{p+q,r+s}'\frac{(ps - qr) h^\vee f^c_{ab}}{z^2}B_c[p+r-1,q + s - 1],
\end{equation}
where $D_{p+q,r+s}'$ are regularization constants that cannot be determined by our method. Note that this OPE is derived in \cite{Costello:2022upu} that correspond to the one loop splitting amplitudes of self-dual Yang-Mills. The coefficient of this OPE is also constrained in \textit{loc. cit.} by the associativity of the chiral algebra after adding the axion part. 

It is an interesting problem to compute other one loop Feynman diagrams. They take the following form
	\begin{center}
	\begin{tikzpicture}[scale=0.9][>=stealth]
		\draw (0,-1.5) -- (0,1.5);
		\draw (1.2,0.7) node[cross] (I1) {} ;
		\draw (I1) circle (0.13);
		\draw (1.2,-0.7) node[cross] (I2) {} ;
		\draw (I2) circle (0.13);
		\draw (0,1) node[left] (B1) {$B$};
		\draw (0,-1) node[left] (B2) {$B$};
		\draw (2.3,-0.7) node[right] (B3) {$\tilde{B}$};
		\draw (2.3,0.7) node[right] (B4) {$\tilde{B}$};
		\draw[->] (B1) .. controls (0.2,0.97) and (0.3,0.94) ..  (0.5,0.9);
		\draw(0.5,0.9) .. controls (0.8,0.83) and (1,0.75) ..  (I1);
		\draw[->] (B2) .. controls (0.2,-0.97) and (0.3,-0.94) ..  (0.5,-0.9);
		\draw(0.5,-0.9) .. controls (0.8,-0.83) and (1,-0.75) ..  (I2);
		\draw[->] (I1) .. controls (1.5,0.5) and (1.6,0.3) ..  (1.65,0);
		\draw (1.65,0) .. controls (1.6,-0.3) and (1.5,-0.5) ..  (I2);
		\draw (I1) .. controls (0.9,0.5) and (0.8,0.3) ..  (0.75,0);
		\draw[->] (I2).. controls (0.9,-0.5) and (0.8,-0.3)  ..  (0.75,0) ;
		\draw[->] (I2) -- (2,-0.7);
		\draw(2,-0.7) -- (B3);
		\draw[->] (I1) -- (2,0.7);
		\draw(2,0.7) -- (B4);
	\end{tikzpicture}\;
	\begin{tikzpicture}[scale=0.9][>=stealth]
	\draw (0,-1.5) -- (0,1.5);
	\draw (1.2,0.7) node[cross] (I1) {} ;
	\draw (I1) circle (0.13);
	\draw (1.2,-0.7) node[cross] (I2) {} ;
	\draw (I2) circle (0.13);
	\draw (0,1) node[left] (B1) {$B$};
	\draw (0,-1) node[left] (B2) {$B$};
	\draw (2.3,0.5) node[right] (B4) {$\tilde{B}$};
	\draw (2.3,0.9) node[right] (B5) {$\tilde{B}$};
	\draw[->] (B1) .. controls (0.2,0.97) and (0.3,0.94) ..  (0.5,0.9);
	\draw(0.5,0.9) .. controls (0.8,0.83) and (1,0.75) ..  (I1);
	\draw[->] (B2) .. controls (0.2,-0.97) and (0.3,-0.94) ..  (0.5,-0.9);
	\draw(0.5,-0.9) .. controls (0.8,-0.83) and (1,-0.75) ..  (I2);
	\draw[->] (I1) .. controls (1.5,0.5) and (1.6,0.3) ..  (1.65,0);
	\draw (1.65,0) .. controls (1.6,-0.3) and (1.5,-0.5) ..  (I2);
	\draw (I1) .. controls (0.9,0.5) and (0.8,0.3) ..  (0.75,0);
	\draw[->] (I2).. controls (0.9,-0.5) and (0.8,-0.3)  ..  (0.75,0) ;
	\draw[->] (I1) -- (2,0.57);
	\draw  (2,0.57)-- (B4);
	\draw[->] (I1) -- (2,0.83);
	\draw  (2,0.83)-- (B5);
\end{tikzpicture}\;
	\begin{tikzpicture}[scale=0.9][>=stealth]
	\draw (0,-1.5) -- (0,1.5);
	\draw (1.2,0.7) node[cross] (I1) {} ;
	\draw (I1) circle (0.13);
	\draw (1.2,-0.7) node[cross] (I2) {} ;
	\draw (I2) circle (0.13);
	\draw (0,1) node[left] (B1) {$B$};
	\draw (0,-1) node[left] (B2) {$B$};
	\draw (2.3,-0.7) node[right] (B3) {$\tilde{B}$};
	\draw (2.3,0.5) node[right] (B4) {$\tilde{B}$};
	\draw (2.3,0.9) node[right] (B5) {$\tilde{B}$};
	\draw[->] (B1) .. controls (0.2,0.97) and (0.3,0.94) ..  (0.5,0.9);
	\draw(0.5,0.9) .. controls (0.8,0.83) and (1,0.75) ..  (I1);
	\draw[->] (B2) .. controls (0.2,-0.97) and (0.3,-0.94) ..  (0.5,-0.9);
	\draw(0.5,-0.9) .. controls (0.8,-0.83) and (1,-0.75) ..  (I2);
	\draw[->] (I1) .. controls (1.5,0.5) and (1.6,0.3) ..  (1.65,0);
	\draw (1.65,0) .. controls (1.6,-0.3) and (1.5,-0.5) ..  (I2);
	\draw (I1) .. controls (0.9,0.5) and (0.8,0.3) ..  (0.75,0);
	\draw[->] (I2).. controls (0.9,-0.5) and (0.8,-0.3)  ..  (0.75,0) ;
	\draw[->] (I2) -- (2,-0.7);
	\draw(2,-0.7) -- (B3);
	\draw[->] (I1) -- (2,0.57);
	\draw  (2,0.57)-- (B4);
	\draw[->] (I1) -- (2,0.83);
	\draw  (2,0.83)-- (B5);
	\draw (3,0) node[right] {$...$};
\end{tikzpicture}
\end{center}
The Feynman integrals are the same for all the one loop diagrams and produce a $\frac{1}{z^2}$ pole. Therefore, we expect that one loop corrections of the boundary algebra take the following schematic form
\begin{equation}
	B(z)B(0) \sim \frac{1}{z^2}(\tilde{B}(0) + \tilde{B}\tilde{B}(0) + \dots ).
\end{equation}
The precise coefficient can be accessed by contracting the corresponding higher order interactions vertices. We should expect that they all diverge when we sum over the KK modes of the internal line. On the one hand, we might hope that there is a regularization procedure that can make scenes of all the divergence. On the other hand, as our previous example suggests, the divergent answer still tell us some useful information about the OPE coefficient up to the regularization constant. We might hope that the associativity of the chiral algebra is strong enough to determine all the one loop OPE.

	\subsection{Chiral algebra from Poisson BF theory}
	The boundary chiral algebra for Poisson BF theory can also be obtained from the interactions vertices of the KK theory. Using the expression \ref{KK_Pois}, we have the following tree-level boundary OPE
	\begin{equation}
\begin{aligned}
			&w[p,q](z)w[r,s] \sim \frac{pr-qs}{z} w[p+r - 1,q+s - 1] + \\
			&\frac{1}{z} \sum_{n \geq 3} \frac{1}{2(n-2)!} \sum_{\substack{u_1 + \dots u_{n-1} = p+r -  2n +3\\v_1+\dots v_{n-1} = q+s - 2n+3}}\left( \sum_{\sigma \in S_{n-1}}(\pi_n)^{p,q;r,s}_{u_{\sigma(1)},v_{\sigma(1)},\dots,u_{\sigma(n-1)},v_{\sigma(n-1)}}\right)w[u_1,v_1]\tilde{w}[u_2,v_2]\dots \tilde{w}[u_{n-1},v_{n-1}]
\end{aligned}
	\end{equation} 
and 
	\begin{equation}
\begin{aligned}
		&w[p,q](z)\tilde{w}[r,s] \sim \frac{pr-qs}{z} \tilde{w}[p+r - 1,q+s - 1] + \\
		& \frac{1}{z}\sum_{n \geq 3} \frac{1}{(n-1)!} \sum_{\substack{u_1 + \dots u_{n-1} = p+r -  2n +3\\v_1+\dots v_{n-1} = q+s - 2n+3}}\left( \sum_{\sigma \in S_{n-1}}(\pi_n)^{p,q;r,s}_{u_{\sigma(1)},v_{\sigma(1)},\dots,u_{\sigma(n-1)},v_{\sigma(n-1)}}\right)\tilde{w}[u_1,v_1]\dots \tilde{w}[u_{n-1},v_{n-1}].
\end{aligned}
\end{equation} 
We compute the constant $(\pi_3)^{p,q;r,s}_{u_1,v_1;u_2,v_2}$ in Appendix \ref{apx:Poi_3}. The leading order term corresponds to the Kac-Moody algebra for the Lie algebra of Hamiltonian vector fields \cite{Costello:2022wso}. OPE's corresponding to the $\pi_3$ terms are also studied in \cite{Bittleston:2022jeq} as loop corrections.

Poisson BF theory have anomaly on twistor space. This is analyzed in detail in \cite{Bittleston:2022nfr}, where the authors also proposed a twistor theory to cancel the anomaly. It will be interesting to study the corresponding KK theory. 

One can also study loop correction to the above tree-level OPE. A naive computation of the loop diagram should diverge as in holomorphic BF theory. However, as shown in \cite{Bittleston:2022jeq}, associativity imposes strong constraints on the OPE. It will be interesting to study the constraint of associativity on higher order corrections to the OPE.

\section{Applications to $4d$ theories}
\label{sec:4d}
	\subsection{$4d$ holomorphic theories}
	\label{sec:4dKK}
	Our results on the dimensional reduction can also be applied to $4d$ holomorphic theory build on the Dolbeault complex $(\Omega^{k,\sbullet}(\C^2) , \pab)$. We can perform KK reduction along the unit sphere of $\C^2$.
	\begin{center}
		\begin{tikzpicture}[scale=0.7]
			\node (0) at (0,0) {$S^3$};
			\node (1) at (2,0) {$\C^2\backslash\{0\}$};
			\node (2) at (2,-2) {$ \R_{ > 0}$};
			\draw [->][shorten >=1pt,shorten <=1pt] (1) -- (2);
			\draw [->][shorten >=1pt,shorten <=1pt] (0) -- (1);
		\end{tikzpicture}
	\end{center}
Analogous to Proposition \ref{K}, we have the following isomorphism
	\begin{equation}
		(\Omega^{0,\sbullet}(\C^2\backslash \{0\}) , \pab, \wedge ) \approx (\Omega^{\sbullet}(\R_{ > 0})\otimes \Omega_b^{0,\sbullet}(S^3) , d_t + \pacr, \wedge ),
	\end{equation}
	and an $A_\infty$ quasi isomorphism
	\begin{equation}
		(\Omega^{\sbullet}(\R_{ > 0})\otimes \Omega_b^{0,\sbullet}(S^3) , d_t + \pacr, \wedge ) \to (\Omega^{\sbullet}(\R_{ > 0})\otimes H_{b}^{0,\sbullet}(S^3) , d_t , \{m_n\}_{n \geq 2} ),
	\end{equation}
	
	This tells us that for $4d$ holomorphic theory, the KK reduction along the unit sphere will produce a topological quantum mechanical system on $\R_{> 0 }$.
	
	There is a natural boundary condition on $t = 0$, chosen so that the fields extend to the whole $\C^2$.  Under this boundary condition, the boundary operators at $t = 0$ should reproduce the local operators of the original $4d$ theory restricted at the origin.
	
	The factorization algebra structure on the half line $\R_{\geq 0 }$ should produce an $A_\infty$ algebra together with an $A_\infty$ module. 
	\begin{center}
		\begin{tikzpicture}[scale=0.7]
			
			\def\r{2}
			
			\draw (0,0) node[circle, fill, inner sep=1,label=above:$O_1$] (O1) {} ;
			\draw (\r/2,\r/2) node (O2) {$\mathcal{O}_2$};
			
			\draw (O1) circle (\r);
			\draw[dashed] (O1) ellipse (\r{} and \r/3);
			
			\draw (0,-2*\r) node[circle, fill, inner sep=1,label=above:$O_1$] (O3) {} -- (\r,-2*\r)
			node[circle, fill,inner sep=1,label=above:$O_2$] (O4) {} -- (2*\r,-2*\r);
			\draw (0.5*\r,-2*\r) node[above] {$\leftarrow$};
			\draw (0,-1.3*\r) node {$\Downarrow$};
		\end{tikzpicture}
		\end{center}
	We expect the following dictionary between the $1d$ KK theory and the original $4d$ pictures
\begin{center}
	\begin{tikzpicture}[scale=0.7]
		\node (10) at (-4,0) {$ 1d$ theory};
		\node (11) at (4,0) {$4d$ chiral algebra };
		\node (20) at (-4,-1.5) {boundary module $M_\pa$};
		\node (21) at (4,-1.5) {vacuum module $V$};
		\node (30) at (-4,-3) {bulk algebra $A$};
		\node (31) at (4,-3) {mode algebra $\oint_{S^3} V$};
		\node (4) at (0,-1.5) {$\cong$ };
		\node (5) at (0,-3) {$\cong$ };
	\end{tikzpicture}
\end{center}
		
	\subsection{$4d$ holomorphic BF theory}
	As an example, we consider the $4d$ holomorphic BF theory. This is also the holomorphic twist of $4d$ $\mathcal{N} = 1$ SYM studied in \cite{Johansen:1994aw,BGKWWY}. In the BV formalism, this theory has field content
	\begin{equation}
	\begin{aligned}
		&\pmb{\EuScript{A}}^{4d} \in \Omega^{0,\sbullet}(X,\mathfrak{g})[1],\\
		&\pmb{\EuScript{B}}^{4d}\in \Omega^{2,\sbullet}(X,\mathfrak{g}^*).
	\end{aligned}
\end{equation}
The BV action functional is given by
\begin{equation}\label{4d_BF}
	\int \Tr(\pmb{\EuScript{B}}^{4d}(\pab\pmb{\EuScript{A}}^{4d} + [\pmb{\EuScript{A}}^{4d} ,\pmb{\EuScript{A}}^{4d} ] ).
\end{equation}

As we have discussed, the Dolbeault complex of $\C^3\backslash \C$ decomposes into de Rham complex of $\R_{>0}$ and the tangential Cauchy Riemann complex. Performing (cohomological) KK reduction is analogous to our previous $6d$ examples. The field $\pmb{\EuScript{A}}^{4d}$ give us the following KK tower
	\begin{equation}
	\begin{aligned}
		&\mathbf{Q} \in	\Omega^{\sbullet}(\R_{>0})\otimes \mathfrak{g}[w_1,w_2][1],\\
		& \tilde{\mathbf{P}} \in	\Omega^{\sbullet}(\R_{>0})\otimes \mathfrak{g}[\bar{w}_1,\bar{w}_2]\epsilon.\\
	\end{aligned}
\end{equation}
Similarly, the field $\pmb{\EuScript{B}}^{4d}$ give us the following KK tower
		\begin{equation}
		\begin{aligned}
		& \mathbf{P} \in \Omega^{\sbullet}(\R_{>0})\otimes \mathfrak{g}^*[\bar{w}_1,\bar{w}_2]\epsilon[-1],\\
		& \tilde{\mathbf{Q}} \in \Omega^{\sbullet}(\R_{>0})\otimes \mathfrak{g}^*[w_1,w_2].
	\end{aligned}
\end{equation}
The BV action functional gives us the one dimension Chern-Simons (BF) theory studied in \cite{gwilliam2014one}, associated to the $L_\infty$ algebra $\otimes H_{b}^{0,\sbullet}(S^3)\otimes\mathfrak{g}$. Up to cubic term, the action functional can be directly obtained from the $4d$ action \ref{4d_BF}
\begin{equation}
	\int \Tr \mathbf{P} (d_t\mathbf{Q} + [\mathbf{Q},\mathbf{Q}])  +  \tilde{\mathbf{P}} (d_t \tilde{\mathbf{Q}} +  [\mathbf{Q}, \tilde{\mathbf{Q}}]) .
\end{equation}
Higher order effective interactions take the same form as the $6d$ cases:
		\begin{equation}\label{KK_4d_int}
	I^{\geq 3}  = \sum_{n = 3}^{\infty}\int \Tr( \mathbf{P} \wedge l_n(\mathbf{Q},\mathbf{Q},\underbrace{\tilde{\mathbf{P}} ,\dots,\tilde{\mathbf{P}} }_{n - 2}))+ \Tr(\tilde{\mathbf{Q}} \wedge l_n(\mathbf{Q} ,\underbrace{\tilde{\mathbf{P}} ,\dots,\tilde{\mathbf{P}} }_{n-1})).
\end{equation}
An explicit expansion of the above interaction is similar to \ref{act_full}.

The natural boundary condition we should choose is the Neumann boundary condition
\begin{equation}
\mathbf{P}|_{t = 0} = 0,\;\;	\tilde{\mathbf{P}}|_{t = 0} = 0.
\end{equation}
The corresponding boundary algebra is generated by the bottom components $Q,\tilde{Q}$ of the BV fields $\mathbf{Q}, \tilde{\mathbf{Q}}$. The action functional up to cubic term gives us the following (zeroth order) boundary BRST differential
\begin{equation}
	Q_{\text{BRST}}^{0}Q = [Q,Q],\;\;Q_{\text{BRST}}^{0}\tilde{Q} = [Q,\tilde{Q}].
\end{equation}
We find that the space of boundary operators can be identified with the following \CE complex 
\begin{equation}
	C^*(\mathfrak{g}[w_1,w_2],\mathrm{Sym}\,\mathfrak{g}[w_1,w_2]^*).
\end{equation}
This coincides with the classical local operators of the $4d$ holomorphic BF theory. 

At the quantum level, there are many corrections to classical result. First, as is standard in topological quantum mechanics model, the graded commutative product of the bulk algebra is deformed into the Weyl/Clifford product $\star$. Its (graded) commutator is given by
\begin{equation}
	[P[p,q],Q[p,q]]_\star = 1,\;\; [\tilde{P}[p,q],\tilde{Q}[p,q]]_\star = 1.
\end{equation}

The KK modes $Q,\tilde{Q}$ correspond to the $4d$ bulk local operators, and the KK modes $P,\tilde{P}$ correspond to the $S^3$ integration of descent operators. Therefore, the above commutation relation is precisely the mode expansion of $\lambda$ bracket of the $4d$ theory. 
\begin{equation}
	\text{Weyl/Clifford product }\rightsquigarrow \lambda-\text{ bracket }.
\end{equation}

It will be interesting to explore other quantum effects induced by the higher order effective interaction.

\subsection{A minimal model for higher Kac-Moody algebra}

In fact, the $L_\infty$ algebra 
\begin{equation}
	H_b^{0,\sbullet}(S^3)\otimes \mathfrak{g}
\end{equation}
we constructed in Section \ref{sec:flat_full} can be regarded as a minimal model of the higher dimensional ($4d$ in our case) Kac-Moody algebra
\begin{equation}
	\mathrm{R}\Gamma(\C^2\backslash\{0\},\mathcal{O}\otimes \mathfrak{g})
\end{equation}
introduced in \cite{FAONTE2019389} and \cite{Gwilliam:2018lpo}. Indeed, it is shown in \cite{FAONTE2019389} that the tangential Cauchy Riemann complex $\Omega_{b}^{0,\sbullet}(S^3)$ is isomorphic to the Jouanolou model of $\mathrm{R}\Gamma(\C^2\backslash\{0\},\mathcal{O})$. So $\Omega_b^{0,\sbullet}(S^3)\otimes \mathfrak{g}$ is actually a dg Lie algebra model of $\mathrm{R}\Gamma(\C^2\backslash\{0\},\mathcal{O})$. 

In \cite{FAONTE2019389}, the authors also considered a central extension of the higher Kac-Moody algebra. Translating to our notation, this central extension can be expressed as follows. Let $\theta \in \mathrm{Sym}^{3}(\mathfrak{g}^\vee)^{\mathfrak{g}}$ be a degree $3$ $\mathfrak{g}$-invariant polynomial on $\mathfrak{g}$. Then we have the following cocycle
\begin{equation}
	\gamma_{\theta}(f_0\otimes x_0,f_1\otimes x_1,f_2\otimes x_0) = \theta(x_0,x_1,x_2)\Tr_{S^3}(f_0\{f_1,f_2\}_{\pi}),
\end{equation}
where $\{-,-\}_{\pi}$ is the bracket on $\Omega_b^{0,\sbullet}(S^3)$ induced by the Poisson bracket defined in Section \ref{sec:KK_Pois}. This cocycle induced an $L_\infty$ central extension of $\Omega_b^{0,\sbullet}(S^3)\otimes \mathfrak{g}$. It will be interesting to look at the transferred structure on $H_b^{0,\sbullet}(S^3)\otimes \mathfrak{g}$, which will give a minimal model of the central extended $4d$ Kac-Moody algebra. According to \cite{Gwilliam:2018lpo}, this algebra can be regarded as the symmetry algebra of the holomorphic theory and its KK theory analyzed in the last section.

		\section{Other examples}
		\label{sec:other}
		\subsection{Theories on ADE singularity}
		There is a family of variants of the B-model twisted holography by considering B-model on ADE singularity $\C\times (\C^2/\Gamma)$ \cite{Costello:2018zrm}. By placing branes on $\C\times \{0\}$, the "gravity" side will becomes B-model on $SL_2(\C)/\Gamma$.
		\begin{prop}
			Let $\Gamma$ be ADE type subgroup. Then $H_{b}^{0,\sbullet}(S^3)^{\Gamma}$ is a $A_\infty$ subalgebra of $H_{b}^{0,\sbullet}(S^3)$.
		\end{prop}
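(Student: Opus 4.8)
The plan is to exploit the fact that the finite group $\Gamma \subset SU(2)$ acts on $S^3$ by the standard (left, say) action and that this action is by biholomorphisms of the ambient $\C^2$, hence preserves \emph{all} the structure we have built on the tangential Cauchy--Riemann complex. First I would observe that $\Gamma$ acts on $\Omega_b^{0,\sbullet}(S^3)$ as a group of differential graded algebra automorphisms: the action commutes with $\pacr$ (since $\pacr$ was defined via the $SU(2)$-equivariant embedding $S^3\hookrightarrow\C^2$, and $\Gamma$ acts by restriction of linear maps on $\C^2$) and is multiplicative for the wedge product. Since $\Gamma$ is finite, averaging gives a $\Gamma$-equivariant projector, and the invariants $\Omega_b^{0,\sbullet}(S^3)^{\Gamma}$ form a dg subalgebra whose cohomology is $H_b^{0,\sbullet}(S^3)^{\Gamma}$ (finiteness of $\Gamma$ is what lets cohomology commute with taking invariants).

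Next I would check that the special deformation retract constructed in Section \ref{sec:SDR} can be taken $\Gamma$-equivariantly. The homotopy $h$, the projection $p$ and the inclusion $i$ were all defined using the $SU(2)$-harmonic decomposition and Schur's lemma, so they are in particular $SU(2)$-equivariant, a fortiori $\Gamma$-equivariant. Restricting the SDR to $\Gamma$-invariants therefore yields an SDR
\begin{equation}
	h\curved (\Omega_b^{0,\sbullet}(S^3)^{\Gamma},\pacr)\overset{p}{\underset{i}\rightleftarrows} (H^{0,\sbullet}_{b}(S^3)^{\Gamma}, 0).
\end{equation}
Because the homotopy transfer formula $m_n=\sum_{T\in PBT_n}(\pm)m_T$ is built purely out of $i$, $p$, $h$ and the product $M$, all of which are $\Gamma$-equivariant, the transferred $A_\infty$ operations on $H_b^{0,\sbullet}(S^3)$ restrict to the transferred $A_\infty$ operations obtained from the invariant SDR. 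In other words, $H_b^{0,\sbullet}(S^3)^{\Gamma}$ is closed under every $m_n$, and the restricted operations are precisely the homotopy transfer of the dg algebra $(\Omega_b^{0,\sbullet}(S^3)^{\Gamma},\pacr,\wedge)$; hence the inclusion $H_b^{0,\sbullet}(S^3)^{\Gamma}\hookrightarrow H_b^{0,\sbullet}(S^3)$ is a strict morphism of $A_\infty$ algebras, so the former is an $A_\infty$ subalgebra.

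I would then spell out, for concreteness, what the invariant subalgebra is: using $H_b^{0,0}(S^3)=\C[w_1,w_2]$ and $H_b^{0,1}(S^3)=\C[\bar w_1,\bar w_2]\epsilon$ from Section \ref{sec:SDR}, together with the fact that $\Gamma$ acts on the pair $(w_1,w_2)$ through its given two-dimensional representation (and on $(\bar w_1,\bar w_2)$ through the conjugate, and fixes $\epsilon$ since $\epsilon$ is $SU(2)$-invariant), we get $H_b^{0,0}(S^3)^{\Gamma}=\C[w_1,w_2]^{\Gamma}$ and $H_b^{0,1}(S^3)^{\Gamma}=\C[\bar w_1,\bar w_2]^{\Gamma}\epsilon$, the coordinate ring of the Kleinian singularity $\C^2/\Gamma$ and a module over it. The only genuinely delicate point is the equivariance of the homotopy transfer data --- that is, making sure that the choices of representatives defining $i,p,h$ really are $\Gamma$-equivariant and not merely equivariant up to homotopy; but this is immediate here because everything was pinned down by $SU(2)$-representation theory rather than by an arbitrary choice, so no separate equivariant-perturbation argument is needed. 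The rest is bookkeeping with the tree formula.
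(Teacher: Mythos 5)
Your argument is correct and is essentially the paper's own proof: both rest on the observation that $h$ and the product $M$ (and hence every tree operation $m_T$) are $SU(2)$-equivariant by construction, so each $m_n$ commutes with $\Gamma$ and the invariants are closed under all higher products. The additional remarks you make (the invariant SDR, the identification of the invariants with $\C[w_1,w_2]^{\Gamma}$, and noting that no equivariant-perturbation argument is needed) are consistent elaborations but not required for the statement.
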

	\begin{proof}
		Recall that $\Gamma$ are subgroup of $SU(2)$ that acts on $\Omega_{b}^{0,\sbullet}(S^3)$. By definition, the operator $h$ and $M$ are compatible with the $SU(2)$ action. Therefore, for any $g \in \Gamma$, it commute with the operator $h$ and satisfy
\begin{equation}
	M(ga,gb) = gM(a,b).
\end{equation}
By the construction of the higher product $m_n$, we have
\begin{equation}
	gm_n(a_1,\dots,a_n) = m_n(ga_1,\dots,ga_n ).
\end{equation}
	\end{proof}
As a simplest example, for $A_1$ type singularity $\Gamma = \Z_2$, we have $H_{b}^{0,\sbullet}(S^3)^{\Gamma} = \C[u,v,w]/(uv = w^2) \oplus \C[\bar{u},\bar{v},\bar{w}]/(\bar{u}\bar{v} = \bar{w}^2)\epsilon$. It will be interesting to explore the corresponding KK theory and the boundary chiral algebra.

	\subsection{Theories on superspace}
		In this section, we briefly comment on the KK reduction of $6d$ holomorphic theories on superspace. We mainly consider holomorphic theories on $\C^{3|N}\backslash\C$. The key to our analysis is the following Dolbeault complex
		\begin{equation}
			(\Omega^{0,\sbullet}(\C^{3}\backslash\C)\otimes\C[\theta_i]_{i=1,\dots,N},\pab ).
		\end{equation}
		Our previous results directly apply to this case. We have the following quasi-isomorphic complex
		\begin{equation}
			(\Omega^{\sbullet}_{3d}(\C\R_{ > 0})\otimes H_{b}^{0,\sbullet}(S^3)\otimes\C[\theta_i]_{i=1,\dots,N},\hat{d} ).
		\end{equation}
		The $A_\infty$ structure on $H_{b}^{0,\sbullet}(S^3)\otimes\C[\theta_i]$ is induced from the $A_\infty$ structure on $H_{b}^{0,\sbullet}(S^3)$. Let
		\begin{equation}
			\{\theta_A\} = \{1,\theta_i,\theta_i\wedge\theta_j, \dots \}
		\end{equation} 
		be a basis of $\C[\theta_i]$. Then we can write the $A_\infty$ structure on $H_{b}^{0,\sbullet}(S^3)\otimes\C[\theta_i]$ explicitly as
		\begin{equation}
			m_n(a_1\theta_{A_1},a_2\theta_{A_2},\dots,a_n\theta_{A_n}) = m_n(a_1,a_2,\dots,a_n)\theta_{A_1}\wedge \theta_{A_2} \dots \wedge \theta_{A_n}.
		\end{equation}
		
		KK theory can be built on this $A_\infty$ algebra. For instance, if we consider holomorphic Chern-Simons theory on $\C^{3|N}\backslash\C$. The resulting KK theory has field content
		\begin{equation}
			\begin{aligned}
				\mathbf{A} \in \Omega^{\sbullet}_{3d}(\C\R_{ > 0})\otimes \C[w_1,w_2]\otimes\C[\theta_i]\otimes \mathfrak{g}[1],\\
				\mathbf{B} \in \Omega^{1,\sbullet}_{3d}(\C\R_{ > 0})\otimes \C[\bar{w}_1,\bar{w}_2]\otimes\C[\theta_i]\otimes \mathfrak{g}.
			\end{aligned}
		\end{equation}
		The BV action functional is given by
		\begin{equation}
			\int_{3d}\int_{\C^{0|N}}\Tr \mathbf{B}(\hat{d}\mathbf{A} + \frac{1}{2}[\mathbf{A},\mathbf{A}]),
		\end{equation}
		with deformation
		\begin{equation}
			\sum_{n = 3}^{\infty}\frac{1}{2(n - 1)!}\int_{3d}\int_{\C^{0|N}} \Tr(\mathbf{B}\wedge l_n(\mathbf{A},\mathbf{A},\mathbf{B},\dots,\mathbf{B})).
		\end{equation}
		The Berezinian integral $\int_{\C^{0|N}}$ pairs $\theta_A$ with its Hodge dual $*\theta_A$. Therefore it will be convenient to label the fields by
		\begin{equation}
			\begin{aligned}
				\mathbf{A}& = \sum_A \mathbf{A}_A\theta_A,\\
				\mathbf{B}& = \sum_A \mathbf{B}_A(*\theta_A).
			\end{aligned}
		\end{equation}
		In this way, the propagator pairs $\mathbf{B}_A$ with $\mathbf{A}_A$. One easily reads off the chiral algebra structure (at tree-level) from the action functional. If we take the Dirichlet boundary condition, the boundary chiral algebra is generated by fields $B_A$. Omitting the color factor, the tree-level boundary OPE can be schematically written as follows
		\begin{equation}
			B_A(z)B_B(0) \sim m^{C}_{AB} \frac{1}{z} B_C(0) + (m_3)m^{C_1C_2}_{AB}\frac{1}{z} B_{C_1}B_{C_2} +  \dots (m_n)m^{C_1\dots C_n}_{AB} \frac{1}{z}B_{C_1}\dots B_{C_n}+ \dots
		\end{equation}
		where the constants $m^{C_1\dots C_n}_{AB}$ are defined by
		\begin{equation}
			\begin{aligned}
				m_{AB}^C &= \int_{\C^{0|N}}\theta_A\wedge\theta_B\wedge*\theta_C\\
				m_{AB}^{C_1C_2}& = \int_{\C^{0|N}}\theta_A\wedge\theta_B\wedge*\theta_{C_1}\wedge*\theta_{C_2}\\
				&\dots\\
				m_{AB}^{C_1\dots C_n}& = \int_{\C^{0|N}}\theta_A\wedge\theta_B\wedge*\theta_{C_1}\wedge\dots\wedge*\theta_{C_n}.\\
			\end{aligned}
		\end{equation}

	\appendix

\addtocontents{toc}{\protect\setcounter{tocdepth}{0}}

\section{tangential Cauchy Riemann complex}

\begin{defn}
	Let $M$ be an real manifold and $ T^{\C}M : = TM \otimes_{\R}\C$ its complexified tangent bundle. A CR structure on $M$ is a subbundle $\mathbb{L} \subset  T^{\C}M $ such that
	\begin{enumerate}
		\item[$\sbullet$] $\mathbb{L}\cap \bar{\mathbb{L}} = \{0\}$.
		\item[$\sbullet$] $[\mathbb{L},\mathbb{L}] \subset \mathbb{L}$, that is, $\mathbb{L}$ is an integrable distribution.
	\end{enumerate}
\end{defn}

Suppose $M$ is a real submanifold of $\C^n$ locally defined by real valued functions $\{\rho_i: \C^n \to \R\}_{i = 1,\dots d}$ that satisfy the independence condition:
\begin{equation}
	\pab \rho_1 \wedge \pab \rho_2 \dots \pab \rho_d \neq 0.
\end{equation}
Then $M$ is a CR manifold called embedded CR submanifold.

We introduce the tangential Cauchy Riemann complex only for embedded CR submanifold. 

First we define $\Omega^{p,q}(\C^n)|_M$ be the restriction of the bundle $\Omega^{p,q}(\C^n)$ to $M$. Suppose $\{\rho_i: \C^n \to \R\}_{i = 1,\dots d}$ is a local defining system of $M$. We consider 
\begin{equation}
	I^{p,q} := \left\lbrace  \begin{array}{c}
		\text{the ideal in } \Omega^{p,q}(\C^n) \text{ which is generated by} \\
		\rho_i,\pab \rho_i, i = 1,\dots d 
	\end{array}\right\rbrace.
\end{equation} 

Then we define the tangential Cauchy Riemann complex as the following complex of bundle
\begin{equation}
	\Omega_b^{p,q}(M) = \{\text{The orthogonal complement of } I^{p,q} \text{ in } \Omega^{p,q}(\C^n)|_M\}.
\end{equation}
The  tangential Cauchy Riemann differential is defined as follows. For an open subset $U \subset M$ and $f \in \Omega_b^{p,q}(U)  $, Let $\tilde{U}$ be an open subset in $\C^n$ with $U = \tilde{U}\cap M$. We choose a $\tilde{f} \in  \Omega^{p,q}(\tilde{U})$ such that $p_M(\tilde{f}|_M) = f$, where $p_M: \Omega^{p,q}(\C^n)|_M \to \Omega_b^{p,q}(M) $ is the orthogonal projection map. We define
\begin{equation}
	\pacr f : = p_M(\pab \tilde{f}).
\end{equation}
One can check that this definition is independent of the choice of $\tilde{f}$.

	\section{Harmonic polynomials on $S^3$}
	\label{apx:Har}
	
	In this appendix, we review some basic facts about harmonic polynomials. Though much of the results hold in other dimensions (see e.g \cite{knapp2013lie}), we focus on $S^3$.

	Let $V_N$ be the space of polynomials in $z_1,z_2,\bar{z}_1,\bar{z}_2$ that are homogeneous of degree $N$. Let $V_{p,q}$ be the space of homogeneous polynomial of bi-degree $(p,q)$ in $z_1,z_2$ and $\bar{z}_1,\bar{z}_2$ respectively. We have $V_N = \oplus_{p+q = N}V_{p,q}$. We consider the Laplacian
	\begin{equation}
		\Delta = \frac{\pa^2}{\pa \bar{z}_1 \pa z_1} + \frac{\pa^2}{\pa \bar{z}_1 \pa z_1},
	\end{equation}
	and define the space of harmonic polynomials
	\begin{equation}
		H_{p,q} = \{f \in V_{p,q}\mid \Delta f = 0 \}.
	\end{equation}
	We emphasize that we used a different notation in the main text, where used half-integer to label the space of harmonic polynomials $\mathcal{H}_{j,\bar{j}} = H_{2j,2\bar{j}}$. 
	
	For any homogeneous polynomial $f = \sum_{k}c_kz_1^{k_1}z_2^{k_2}\bar{z}_1^{\bar{k}_1}\bar{z}_2^{\bar{k}_2}$, we define a differential operator $\pa_f$ as follows
	\begin{equation}
		\pa_f = \sum_{k}c_k \frac{\pa^{k_1+k_2+\bar{k}_1+\bar{k}_2}}{\pa z_1^{k_1} \pa z_2^{k_2} \pa\bar{z}_1^{\bar{k}_1} \pa \bar{z}_2^{\bar{k}_2} }.
	\end{equation}
Denote $||z||^2 = z_1\bar{z}_z + z_2\bar{z}_2$. We have $\Delta = \pa_{||z||^2}$.

Suppose we have $f\in V_{p,q}$ and $g\in V_{q,p}$, then $\pa_f\bar{g}$ is a constant. We define an inner product $\langle\hspace{-2pt}\langle f,g\rangle\hspace{-2pt}\rangle = \pa_f\bar{g}$. Under this inner product, we have
\begin{equation}
\langle\hspace{-2pt}\langle z_1^{k_1}z_2^{k_2}\bar{z}_1^{\bar{k}_1}\bar{z}_2^{\bar{k}_2},\bar{z}_1^{k_1}\bar{z}_2^{k_2}z_1^{\bar{k}_1}z_2^{\bar{k}_2}\rangle\hspace{-2pt}\rangle = k_1!k_2!\bar{k}_1!\bar{k}_2!.
\end{equation}
It follows that this inner product is Hermitian and $SU(2)$ invariant. A useful property of this inner product is that 
\begin{equation}
\langle\hspace{-2pt}\langle f, \pa_gh\rangle\hspace{-2pt}\rangle = \pa_f \pa_{\bar{g}} \bar{h} = \langle\hspace{-2pt}\langle f \bar{g},  h\rangle\hspace{-2pt}\rangle.
\end{equation}
As a consequence, we have
\begin{equation}
	\langle\hspace{-2pt}\langle f, \Delta g\rangle\hspace{-2pt}\rangle = \langle\hspace{-2pt}\langle ||z||^2 f, g\rangle\hspace{-2pt}\rangle.
\end{equation}

\begin{prop}
	\begin{enumerate}
		\item  Under the Hermitian form $\langle\hspace{-2pt}\langle -, -\rangle\hspace{-2pt}\rangle$, the orthogonal complement of $||z||^2V_{p - 1,q - 1}$ in $V_{p,q}$ is $H_{p,q}$.
		\item  We have an orthogonal direct sum decomposition
		\begin{equation}
			V_{p,q} = H_{p,q} \oplus ||z||^2H_{p - 1,q - 1} \oplus ||z||^4H_{p - 1,q - 1} \oplus \dots
		\end{equation}
	\end{enumerate}
\end{prop}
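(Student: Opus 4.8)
The plan is to prove both statements together, deducing the orthogonal decomposition (part 2) from the orthogonal complement statement (part 1) by a straightforward induction on $p+q$. The key algebraic input is the adjointness relation $\langle\hspace{-2pt}\langle ||z||^2 f, g\rangle\hspace{-2pt}\rangle = \langle\hspace{-2pt}\langle f, \Delta g\rangle\hspace{-2pt}\rangle$ already established in the excerpt, together with the non-degeneracy of the Hermitian form $\langle\hspace{-2pt}\langle -,-\rangle\hspace{-2pt}\rangle$ on each $V_{p,q}$ (which is visible from the explicit formula $\langle\hspace{-2pt}\langle z_1^{k_1}z_2^{k_2}\bar z_1^{\bar k_1}\bar z_2^{\bar k_2}, \bar z_1^{k_1}\bar z_2^{k_2} z_1^{\bar k_1} z_2^{\bar k_2}\rangle\hspace{-2pt}\rangle = k_1!k_2!\bar k_1!\bar k_2!$, i.e. the monomial basis is orthogonal with positive ``norms'' in the appropriate conjugate pairing).

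For part 1, first I would observe that multiplication by $||z||^2$ is an injective map $V_{p-1,q-1}\to V_{p,q}$ (it has an obvious left inverse on monomials, or one notes $\Delta(||z||^2 f)$ has a positive-definite component), so $||z||^2 V_{p-1,q-1}$ is a genuine subspace of $V_{p,q}$. Then the adjointness relation shows immediately that the orthogonal complement of $||z||^2 V_{p-1,q-1}$ in $V_{p,q}$ is exactly $\{g\in V_{p,q} : \langle\hspace{-2pt}\langle ||z||^2 f,g\rangle\hspace{-2pt}\rangle = 0 \ \forall f\in V_{p-1,q-1}\} = \{g : \langle\hspace{-2pt}\langle f, \Delta g\rangle\hspace{-2pt}\rangle = 0 \ \forall f\in V_{p-1,q-1}\}$. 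Since $\Delta g\in V_{p-1,q-1}$ and the pairing restricted to $V_{p-1,q-1}$ is non-degenerate (pairing $V_{p-1,q-1}$ against $V_{q-1,p-1}$, but after the built-in conjugation this is exactly the non-degeneracy just noted), this last condition forces $\Delta g = 0$, i.e. $g\in H_{p,q}$. Conversely every harmonic $g$ is orthogonal to $||z||^2 V_{p-1,q-1}$ by the same computation. Hence $V_{p,q} = H_{p,q}\oplus ||z||^2 V_{p-1,q-1}$ as an orthogonal direct sum.

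For part 2, I would then simply iterate: apply part 1 to $V_{p,q}$, getting $V_{p,q} = H_{p,q}\oplus ||z||^2 V_{p-1,q-1}$; apply it again to $V_{p-1,q-1}$, multiply by $||z||^2$, and so on, stopping when one of the indices reaches $0$ (at which point $V_{p-k,q-k}$ with $\min(p,q)-k<0$ is zero, or $V_{|p-q|,0}=H_{|p-q|,0}$ is already harmonic). Orthogonality of the successive summands follows because multiplication by $||z||^{2}$ is, up to the adjoint relation, compatible with the pairing: $\langle\hspace{-2pt}\langle ||z||^{2a} h, ||z||^{2b} h'\rangle\hspace{-2pt}\rangle$ for harmonic $h\in H_{p-a,q-a}$, $h'\in H_{p-b,q-b}$ with $a<b$ can be rewritten via adjointness as a pairing involving $\Delta^{a}(\cdots)$ hitting a harmonic polynomial of too-low degree, hence zero; I would spell this out with one clean induction rather than case analysis.

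The only real subtlety — and the step I would be most careful about — is making precise what ``orthogonal complement'' and ``non-degeneracy'' mean here, since $\langle\hspace{-2pt}\langle -,-\rangle\hspace{-2pt}\rangle$ is a \emph{sesquilinear} form that genuinely pairs $V_{p,q}$ with $V_{q,p}$ (the bar in $\pa_f\bar g$), not a symmetric bilinear form on a single space. One must check that the restriction of this pairing to $V_{p-1,q-1}$ (against $V_{q-1,p-1}$) is non-degenerate so that ``$\langle\hspace{-2pt}\langle f,\Delta g\rangle\hspace{-2pt}\rangle = 0$ for all $f$ $\Rightarrow$ $\Delta g = 0$'' is legitimate; this is exactly the positive-definiteness recorded by the monomial formula above, so it is true, but it deserves an explicit sentence. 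Everything else is a formal consequence of the adjointness identity and dimension counting, so I expect no further obstacle.
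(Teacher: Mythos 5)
Your proof is correct and takes essentially the same route as the paper: both arguments rest on the adjointness identity $\langle\hspace{-2pt}\langle ||z||^2 f, g\rangle\hspace{-2pt}\rangle = \langle\hspace{-2pt}\langle f,\Delta g\rangle\hspace{-2pt}\rangle$ to establish orthogonality of $H_{p,q}$ and $||z||^2V_{p-1,q-1}$, and then obtain part 2 by iterating part 1. The only difference is how complementarity is closed: the paper asserts surjectivity of $\Delta: V_{p,q}\to V_{p-1,q-1}$ and counts dimensions, while you use non-degeneracy (positive-definiteness) of the form to identify the orthogonal complement directly — equivalent finishes, since your non-degeneracy argument is precisely what proves the surjectivity the paper asserts without proof.
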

\begin{proof}
	1. First we prove that $||z||^2V_{p - 1,q - 1}$ is orthogonal to $H_{p,q}$. Let $h \in H_{p,q}$ and $||z||^2 f \in ||z||^2V_{p - 1,q - 1}$. We have
	\begin{equation}
		\langle\hspace{-2pt}\langle ||z||^2 f,  h\rangle\hspace{-2pt}\rangle = \langle\hspace{-2pt}\langle  f, \Delta h \rangle\hspace{-2pt}\rangle = 0.
	\end{equation}
For $p,q \geq 1$, $\Delta:V_{p,q} \to V_{p-1,q-1}$ is surjective. Moreover $H_{p,q} = \ker \Delta|_{V_{p,q}}$. Therefore, $\dim V_{p,q} = \dim H_{p,q} + \dim V_{p-1,q-1}$. As a consequence, we have the following direct sum decomposition
\begin{equation}
	V_{p,q} = H_{p,q} \oplus ||z||^2V_{p - 1,q - 1}.
\end{equation}

2. By induction.
\end{proof}

The above result tells us that for any polynomial $f \in V_{p,q}$, $f$ have the following decomposition into harmonic polynomials
\begin{equation}
	f = h_0 + ||z||^2h_1 + ||z||^4h_2 \dots
\end{equation}
where $h_i \in H_{p - i,q-i}$.

\begin{cor}
	The restriction to $S^3$ of every polynomial is a sum of restrictions to $S^3$ of harmonic polynomials.
\end{cor}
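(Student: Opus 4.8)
The final statement to prove is the Corollary: the restriction to $S^3$ of every polynomial is a sum of restrictions to $S^3$ of harmonic polynomials. Let me think about how to prove this, given the Proposition just before it.

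---

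The plan is to deduce the Corollary directly from the preceding Proposition, which gives the orthogonal decomposition $V_{p,q} = H_{p,q} \oplus \|z\|^2 H_{p-1,q-1} \oplus \|z\|^4 H_{p-2,q-2} \oplus \dots$ for the space of bihomogeneous polynomials. First I would reduce to the bihomogeneous case: an arbitrary polynomial $f$ in $z_1,z_2,\bar z_1,\bar z_2$ is a finite sum of bihomogeneous components $f_{p,q} \in V_{p,q}$, so it suffices to treat a single $f_{p,q}$, and then the general statement follows by adding up the resulting harmonic decompositions. So fix $f \in V_{p,q}$.

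Next, apply part (2) of the Proposition to write $f = h_0 + \|z\|^2 h_1 + \|z\|^4 h_2 + \dots$, a finite sum with $h_k \in H_{p-k,q-k}$ (the sum terminates once $p-k$ or $q-k$ becomes negative). Now restrict to $S^3$, which is the locus $\|z\|^2 = z_1\bar z_1 + z_2\bar z_2 = 1$. On $S^3$ every factor of $\|z\|^2$ becomes $1$, so
\begin{equation}
    f|_{S^3} = h_0|_{S^3} + h_1|_{S^3} + h_2|_{S^3} + \dots,
\end{equation}
which exhibits $f|_{S^3}$ as a finite sum of restrictions of harmonic polynomials $h_k \in H_{p-k,q-k}$. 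This completes the proof.

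This argument is essentially immediate once the Proposition is in hand, so there is no real obstacle here — the only thing to be slightly careful about is bookkeeping: noting that the decomposition in the Proposition is a \emph{finite} sum (so that restricting term by term is legitimate and produces a finite sum), and observing that the key geometric input is simply that $\|z\|^2 \equiv 1$ on $S^3$, which collapses the powers of $\|z\|^2$. One could also phrase the whole thing more invariantly in terms of the half-integer labels $\mathcal{H}_{j,\bar j} = H_{2j,2\bar j}$ used in the main text, matching the statement made in the remark of Section~\ref{sec:HCS} that any homogeneous polynomial of bidegree $(2j,2\bar j)$ equals $h_0 + (w_1\bar w_1 + w_2\bar w_2) h_1 + \dots$ with $h_k \in \mathcal{H}_{j-k/2,\bar j - k/2}$; this is literally the Proposition restated, and the Corollary is its restriction to the unit sphere.
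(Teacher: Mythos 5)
Your proof is correct and follows exactly the route the paper intends: the Corollary is stated without a separate proof precisely because it is immediate from the decomposition $f = h_0 + \|z\|^2 h_1 + \|z\|^4 h_2 + \dots$ of part (2) of the Proposition together with the fact that $\|z\|^2 \equiv 1$ on $S^3$. Your reduction to bihomogeneous components and the remark about finiteness of the sum are exactly the (routine) bookkeeping the paper leaves implicit.
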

Since the space of polynomials is dense in $L^2(S^3)$, we have the following
\begin{cor}
		\begin{equation}
		L^2(S^3) = \bigoplus_{p,q\geq 0} H_{p,q}.
	\end{equation}
\end{cor}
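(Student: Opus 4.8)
The plan is to establish the decomposition $L^2(S^3) = \bigoplus_{p,q\geq 0} H_{p,q}$ by combining two ingredients already assembled in this appendix: the orthogonal algebraic decomposition of polynomials into harmonic pieces, and a density argument. First I would recall the preceding corollary, which states that the restriction to $S^3$ of every polynomial is a finite sum of restrictions of harmonic polynomials. Concretely, for any $f \in V_{p,q}$ we have the harmonic expansion $f = h_0 + \|z\|^2 h_1 + \|z\|^4 h_2 + \dots$ with $h_i \in H_{p-i,q-i}$; since $\|z\|^2 = 1$ on $S^3$, each summand restricts to an element of (the restriction of) some $H_{p-i,q-i}$. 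Thus the span of the restrictions of all harmonic polynomials equals the span of the restrictions of all polynomials.

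The second step is density: the algebra of polynomials in $z_1,z_2,\bar z_1,\bar z_2$, restricted to $S^3$, is dense in $L^2(S^3)$. I would invoke the Stone--Weierstrass theorem here. The relevant subalgebra of $C(S^3)$ generated by $z_i,\bar z_i$ separates points of $S^3$, contains the constants, and is closed under complex conjugation (since we include both $z_i$ and $\bar z_i$). Hence it is uniformly dense in $C(S^3)$, and therefore dense in $L^2(S^3)$ with respect to the surface measure $d\sigma_{S^3}$, because $C(S^3)$ is itself $L^2$-dense. Combining with the first step, the (restrictions of) harmonic polynomials span a dense subspace of $L^2(S^3)$.

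The final step is to upgrade this to an orthogonal Hilbert-space direct sum. The Hermitian form $\langle a,b\rangle = \int_{S^3} d\sigma_{S^3}\, \bar a b$ of \eqref{sur_pair} is $SU(2)$-invariant, and the spaces $H_{p,q}$ (equivalently $\mathcal{H}_{j,\bar j}$) carry distinct $SU(2)$ representation data for distinct $(p,q)$ on the sphere. I would argue that distinct $H_{p,q}$ are mutually orthogonal under the surface integral: this follows because $H_{p,q}$ and $H_{p',q'}$ lie in inequivalent $SU(2)$-isotypic components (distinguished by both the total degree $p+q$ and the difference $p-q$, which fixes the eigenvalue of the generator measuring the $U(1)$ charge), so Schur's lemma forces the $SU(2)$-invariant pairing between them to vanish. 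Together with the established density of $\bigoplus_{p,q} H_{p,q}$, orthogonality gives that the closure of the algebraic direct sum is all of $L^2(S^3)$, which is exactly the claimed Hilbert-space decomposition.

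The main obstacle I anticipate is making the orthogonality statement fully rigorous, since on $S^3$ the polynomials $H_{p,q}$ are not literally orthogonal as abstract polynomials but only after restriction to the sphere, where $\|z\|^2=1$ identifies different homogeneity classes. The clean way to handle this is to pass through the $SU(2)$ harmonic decomposition of the main text: each $\mathcal{H}_{j,\bar j}$ restricts to an irreducible $SU(2)$-representation on the sphere with a definite spin content, and the invariant pairing then annihilates inequivalent summands by Schur's lemma. I would verify that the labeling $(j,\bar j)$—or the weights under the diagonal torus—indeed separates the summands so that no two distinct $H_{p,q}$ become identified after restriction; this bookkeeping, rather than any deep analytic input, is where care is needed.
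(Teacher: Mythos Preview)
Your approach is essentially the same as the paper's, which simply invokes the density of polynomials in $L^2(S^3)$ together with the preceding corollary; you supply the Stone--Weierstrass justification and an explicit orthogonality argument that the paper leaves implicit. The paper also remarks that the result follows alternatively from the Peter--Weyl theorem for $SU(2)$.
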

\begin{remark}
	This fact can also be deduced from the Peter-Weyl theorem for $SU(2)$.
\end{remark}
We further obtain the harmonic decomposition of tangential Cauchy Riemann complex on $S^3$
	\begin{equation}
\begin{aligned}
		\Omega_b^{0,0}(S^3)&= \bigoplus_{p,q \geq 0} H_{p,q},\\
	\Omega_b^{0,1}(S^3)& = \bigoplus_{p,q \geq 0} H_{p,q} \epsilon.
\end{aligned}
\end{equation}

	\section{Homotopy algebra and Homotopy transfer}
	Since this paper heavily uses techniques from homotopy algebra. We briefly review this topic in this appendix. We recommend the survey \cite{vallette2014algebra+} for a detailed review.
	
	\subsection{Convention and Koszul sign rule} 
	First, we fix the convention for our discussion. We work with $\mathbb{Z}$-graded $\C$-vector space
\begin{equation}
		V = \bigoplus\limits_{n\in \mathbb{Z}}V_n.
\end{equation}
	The grading $n$ is related to the ghost number in physics. The degree of an element $v \in V_n$ is denoted by $|v| = n$, and such a $v$ is called a homogeneous element.
		
	For $V$ and $W$ two graded vector spaces,  the tensor product $V\otimes W$ and the Hom space $\Hom(V,W)$ has the following grading
	\begin{equation*}
		(V\otimes W)_n = \bigoplus_{i+j = n} V_{i}\otimes W_j, \quad \Hom(V,W)_{n} = \bigoplus_{i}\Hom(V_i,W_{i+n}).
	\end{equation*}
	
	We denote the Koszul sign braiding on tensor products to be
	\begin{equation*}
		\begin{aligned}
			\tau_{V,W}:V\otimes W &\to W\otimes V,\\
			v\otimes w &\mapsto (-1)^{|v||w|} w\otimes v.
		\end{aligned}
	\end{equation*}
	The above sign rule induces naturally a sign rule for the action of the symmetric group $S_n$ on the $n$-th tensor product $V^{\otimes n}$
	\begin{equation*}
		\sigma:\;v_1\otimes v_2\otimes \cdots \otimes v_n \to \epsilon(\sigma,v)v_{\sigma(1)}\otimes v_{\sigma(2)} \otimes \dots v_{\sigma(n)},
	\end{equation*}
	where $\epsilon(\sigma,v)$ is called the Koszul sign.
	
		For $V$ a $\mathbb{Z}$ graded vector space, we denote $V[n]$ the degree $n$-shifted space such that
	\begin{equation}
		V[n]_{m} := V_{n+m}.
	\end{equation}
	We also use the notation of suspension $sV$ and desuspension $s^{-1}V$ as follows
	\begin{equation}
		sV := V[1],\quad s^{-1}V: = V[-1].
	\end{equation}
	We can also regard $s$ as a degree $-1$ linear map $s: V \to V[1]$. For a homogeneous $a \in V$, we have $sa \in V[1]$ and $|sa| = |a| - 1$. Similarly, $s^{-1}$ can be regarded as a degree $1$ linear map, such that $s^{-1}s = ss^{-1} = 1$. 
	
	\subsection{Homotopy algebra}
	In this appendix, we review the definition of various homotopy algebras including $A_\infty$, $C_\infty$ and $L_\infty$ algebras. 
	\paragraph{$A_\infty$ algebra}
	\begin{defn}
		An $A_\infty$ algebra is a graded vector space $A = \{A_n\}_{n\in \Z}$ with a collection of multi-linear operations
		\begin{equation}
			m_n:A^{\otimes n} \to A\; \text{ of degree } n - 2 \text{ for all } n\geq 1,
		\end{equation}
		which satisfy the following relations:
		\begin{equation}\label{A_inf_rel}
			\sum_{k = 1}^{n}\sum_{j = 0}^{n - k}(-1)^{jk + (n-j-k)}m_{n - k + 1}\circ(\mathrm{id}^{\otimes j} \otimes m_k \otimes \mathrm{id}^{\otimes n - j - k } ) = 0.
		\end{equation}
	\end{defn}
	
	Let's demonstrate the above relations for small values of $n$:
	\begin{enumerate}
		\item $n = 1$. We have $m_1\circ m_1 = 0$, which means that $m_1$ is a differential on $A$. We also denote $d = m_1$.
		\item $n = 2$. We have
		\begin{equation}
			d m_2(x_1, x_2) = m_2(dx_1, x_2) + (-1)^{|x_1|}m_2(x_1, dx_2).
		\end{equation}
		This relation implies $m_1$ is a derivation with respect to the binary product $m_2$.
		\item $n = 3$. The relation yields
		\begin{equation}
			\begin{aligned}
				m_2&(m_2(x_1,x_2),x_3) - m_2(x_1,m_2(x_2,x_3)) =\\
				& dm_3(x_1,x_2,x_3) + m_3(dx_1,x_2,x_3) + m_3(x_1,dx_2,x_3)  + m_3(x_1,x_2,dx_3) .
			\end{aligned}
		\end{equation}
	\end{enumerate}
	
	An $A_\infty$ algebra with $m_k = 0$ for $k \geq 3$ is also called a differential graded associative (dga) algebra. For example, the tangential Cauchy-Riemann complex $(\Omega_{b}^{0,\sbullet}(S^3),\pacr, \cdot)$ is a dga algebra

	There is an equivalent definition of $A_\infty$ algebra in terms of coderivation. We introduce the reduced tensor coalgebra
	\begin{equation}
		\bar{T}^c(V) = \bigoplus_{n\geq 1}V^{\otimes n},
	\end{equation}
with comultiplication given by 
\begin{equation}
	\bar{\Delta}(v_1\otimes v_2\otimes \dots \otimes v_n) = \sum_{i = 1}^{n-1}(v_1\otimes \dots  \otimes v_i)\otimes (v_{i+1}\otimes \dots \otimes v_n).
\end{equation}
Recall that a coderivation on a coalgebra $(C,\Delta)$ is a map $L: C \to C$ such that $\Delta\circ L = (L\otimes 1 + 1\otimes L)\Delta$.

For the (reduced) tensor coalgebra $\bar{T}^c(V) $, a coderivation on it is completely determined by its projection $p_V\circ L : \bar{T}^c(V)  \to \bar{T}^c(V) \to V$. To see this, we first notice that $p_V\circ L$ is given by a set of maps $L_k  \in \Hom(V^{\otimes k},V),k\geq 1$. Given this set of maps, the coderivation is uniquely given by 
	\begin{equation}\label{coder_tensorco}
	L = \sum_{i\geq 1}^n\sum_{j = 0}^{n-i} \mathds{1}^{\otimes j}\otimes L_{i}\otimes \mathds{1}^{n - i - j}.
\end{equation}
The structure of an $A_\infty$ algebra on $A$ can be compactly organized into the structure of a square zero coderivation on $\bar{T}^c(sA)$.
\begin{prop}\label{A_inf_coder}
			The following data are equivalent
		\begin{itemize}
			\item  A collection of linear maps $m_k: A^{\otimes k} \to A$ of degree $2 - k$ satisfying $A_\infty$ relation.
			\item A degree $1$ coderivation $b$ on $\bar{T}^c(A[1])$ satisfying $b^2 = 0$.
		\end{itemize}
\end{prop}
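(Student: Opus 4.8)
The plan is to establish the bijection by making the translation between the two sets of data completely explicit and then showing that the condition $b^2 = 0$ is equivalent to the full family of $A_\infty$ relations \eqref{A_inf_rel}. First I would set up the dictionary: given linear maps $m_k : A^{\otimes k} \to A$ of degree $2-k$, pass to the suspension $sA = A[1]$ and define shifted maps $b_k : (sA)^{\otimes k} \to sA$ by $b_k = s \circ m_k \circ (s^{-1})^{\otimes k}$. A routine sign check (using the Koszul rule as fixed in the conventions appendix) shows that $b_k$ has degree $+1$ for every $k$, which is exactly why the suspension is introduced — it homogenizes the degrees of the $m_k$. Then, using formula \eqref{coder_tensorco}, one assembles the unique coderivation $b$ on $\bar T^c(sA)$ whose corad projection $p_{sA}\circ b$ is $\sum_k b_k$. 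Conversely, any degree $1$ coderivation $b$ on $\bar T^c(sA)$ is determined by its corad projection, which decomposes into components $b_k : (sA)^{\otimes k}\to sA$, and desuspending gives maps $m_k = s^{-1}\circ b_k\circ s^{\otimes k}$ of degree $2-k$. These two assignments are manifestly mutually inverse, so the content is entirely in matching the two "square-zero-type" conditions.

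Next I would compute $b^2$ explicitly. Since $b$ is a coderivation, $b^2$ is again a coderivation (the bracket of two coderivations is a coderivation, and here $[b,b] = 2b^2$ because $b$ has odd degree), so $b^2 = 0$ on all of $\bar T^c(sA)$ if and only if its corad projection $p_{sA}\circ b^2$ vanishes. Writing out $p_{sA}\circ b^2$ in terms of the components $b_k$ using \eqref{coder_tensorco}, one finds on the summand $(sA)^{\otimes n}$ precisely the expression
\begin{equation}
\sum_{k=1}^{n}\sum_{j=0}^{n-k} b_{n-k+1}\circ(\mathrm{id}^{\otimes j}\otimes b_k\otimes \mathrm{id}^{\otimes n-j-k}),
\end{equation}
with no extra signs, since the $b_k$ all have degree $+1$ and the Koszul signs are absorbed into the definition of $b$ as a coderivation on a tensor coalgebra. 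Desuspending this identity — i.e. conjugating by the appropriate powers of $s$ and $s^{-1}$ and carefully tracking the Koszul signs produced when $s^{-1}$ is moved past homogeneous tensor factors — reproduces exactly the signed $A_\infty$ relation \eqref{A_inf_rel}. Thus $b^2 = 0 \iff$ all $A_\infty$ relations hold, which completes the proof.

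The main obstacle, and the only genuinely delicate point, is the sign bookkeeping in the desuspension step: one must verify that the sign $(-1)^{jk + (n-j-k)}$ appearing in \eqref{A_inf_rel} is precisely what is generated by commuting the desuspension maps $s^{-1}$ past the internal operation $b_k$ and the identity factors, together with the degree shifts. I would handle this by first checking it for $n = 1,2,3$ by hand (these already appear worked out after the definition of $A_\infty$ algebra and can be used as sanity checks), and then giving the general argument by a clean induction on the position $j$ of the inner operation, or alternatively by invoking the standard fact that a coderivation on $\bar T^c(sA)$ is exactly the bar-construction repackaging of an $A_\infty$ structure, so that the signs are forced and need only be verified once. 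Everything else — that $\bar\Delta$ is coassociative, that coderivations on a cofree-type coalgebra are cogenerated by their corad projection, and that $b_k$ has the claimed degree — is routine and can be cited from the references, e.g. \cite{vallette2014algebra+}.
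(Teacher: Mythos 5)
Your proposal is correct and follows essentially the same route as the paper, which defines $b_k = s\circ m_k\circ (s^{-1})^{\otimes k}$, assembles the coderivation $b$ via the cogeneration formula \ref{coder_tensorco}, and identifies $b^2=0$ with the $A_\infty$ relations \ref{A_inf_rel}; the paper only sketches this and defers the sign verification to the literature. Your additional observations — that $b^2$ is itself a coderivation of even degree so it suffices to check its corad projection, and that the sign $(-1)^{jk+(n-j-k)}$ arises from desuspension — are exactly the details the paper omits, and they are handled correctly.
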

\begin{proof}
	We only sketch the proof here and refer to \cite{getzler1990algebras} for more details. Given linear maps $m_k:A^{\otimes k} \to A$, we define maps $b_k : (sA)^{\otimes k} \to  sA$ by 
	\begin{equation}
		b_k = s\circ m_k\circ (s^{-1})^{\otimes k}.
	\end{equation}
The maps $b_k$ further define a coderivation $b$ on $\bar{T}^c(A[1])$ through \ref{coder_tensorco}. One can check that the requirement $b^2 = 0$ is equivalent to the $A_\infty$ relations \ref{A_inf_rel}.
\end{proof}
	
	\paragraph{$C_\infty$ algebra}
	In this paper, the dga algebras that we studied satisfy additional properties of being graded commutative. 
	\begin{equation}
		m_2(a,b) = (-1)^{|a||b|} m_2(b,a).
	\end{equation}
	Such algebras are called differential graded commutative (dgc) algebra. The homotopy version of dgc algebra is called $C_\infty$ algebra, which we now define.
	
	A $(p,q)$-shuffle is a permutation $\sigma \in S_{p+q}$ such that
	\begin{equation}
		\sigma(1)<\sigma(2) < \dots < \sigma(p),\;\;\sigma(p + 1)<\sigma(p + 2) < \dots < \sigma(p+q).
	\end{equation}
	We denote by $Sh(p,q)$ the subset of $(p,q)$-shuffles in $S_{p+q}$.
	
	We have introduced the reduced tensor coalgebra $\bar{T}^c(V) = \bigoplus_{n\geq 1}V^{\otimes n}$. It becomes a Hopf algebra when equipped with the multiplication map called shuffle product 
	\begin{equation}
		sh((a_1,\dots,a_p)\otimes (a_{p+1},\dots a_{p+q})) = \sum_{\sigma \in Sh(p,q)}\epsilon(\sigma,a)(a_{\sigma^{-1}(1)},a_{\sigma^{-1}(2)},\dots a_{\sigma^{-1}(p+q)}).
	\end{equation}
	
	\begin{defn}
		A $C_{\infty}$-algebra structure on a graded vector space $A = \{A_n\}_{n\in \Z}$ is an $A_\infty$ structure $(A,\{m_n\}_{n\geq 1})$ such that the set of maps $\{b_k = s\circ m_k \circ (s^{-1})^{\otimes k}, k \geq 1\}$ vanish on the image of the shuffle product $sh:T^c(sA)\otimes T^c(sA) \to T^c(sA)$.
	\end{defn}
For example, the element $sa\otimes sb + (-1)^{(|a|+1)(|b| + 1)}sb\otimes sa$ is in the image of the shuffle product. Vanishing of $b_2$ on this element is the same as the graded commutativity of $m_2$.

\paragraph{$L_\infty$ algebra}
	
	We also introduce the notion of $L_\infty$ algebra.
	\begin{defn}
		Let $\mathfrak{g} = \{\mathfrak{g}^n\}_{n \in \Z}$ be a graded vector space. An $L_\infty$ structure on $\mathfrak{g}$ is a collection of multi-linear maps
		\begin{equation}
			l_n:\mathfrak{g}^{\otimes n} \to \mathfrak{g}\; \text{ of degree } n - 2 \text{ for all } n\geq 1,
		\end{equation}
		that are graded skew-symmetric:
		\begin{equation}
			l_n(x_{\sigma^{-1}(1)},\dots,x_{\sigma^{-1}(n)}) = (-1)^{\sigma}\epsilon(\sigma,x) l_n(x_1,\dots,x_n),\;\;\text{ for all } \sigma \in S_n,
		\end{equation}
		and satisfy the following relations:
		\begin{equation}
			\sum_{k = 1}^{n}(-1)^{k}\sum_{\sigma \in Sh(k,n - k)}(-1)^{\sigma}\epsilon(\sigma,x) l_{n - k - 1}(l_{k}(x_{\sigma^{-1}(1)},\dots,x_{\sigma^{-1}(k)}), x_{\sigma^{-1}(k + 1)},\dots,x_{\sigma^{-1}(n)}) = 0.
		\end{equation}
	\end{defn}
		Let us analyze the defining relations for small values of $n$:
\begin{enumerate}
	\item $n = 1$. The relation is $l_1\circ l_1 = 0$, which means that $l_1$ is a differential on $\mathfrak{g}$.
	\item $n = 2$. We have
	\begin{equation}
		l_1(l_2(x_1,x_2)) = l_2(l_1(x_1),x_2) + (-1)^{|x_1|}l_2(x_1,l_1(x_2))
	\end{equation}
	which says that $l_1$ is a derivation with respect to the binary map $l_2$.
	\item $n = 3$. The relations yields
	\begin{equation}
		\begin{aligned}
			& l_2(l_2(x_1,x_2),x_3) + (-1)^{(|x_1|+|x_2|)|x_3|}l_2(l_2(x_3,x_1),x_2)  + (-1)^{(|x_2|+|x_3|)|x_1|}l_2(l_2(x_2,x_3),x_1) \\
			&= l_1l_3(x_1,x_2,x_3) + l_3(l_1(x_1),x_2,x_3) + (-1)^{|x_1|}l_3(x_1,l_1(x_2),x_3) + (-1)^{|x_1| + |x_2|}l_3(x_1,x_2,l_1(x_3)).
		\end{aligned}
	\end{equation}
	which says that $l_2$ satisfies Jacobi identities up to homotopy given by $l_3$.
\end{enumerate}

There is a similar characterization of $L_\infty$ algebra in terms of a coderivation. Instead of the tensor coalgebra, we consider the reduced symmetric coalgebra $\bar{S}^c(V)$ where
\begin{equation*}
	\bar{S}^c(V) = \bigoplus_{n \geq 1}\mathrm{Sym}^n(V).
\end{equation*}
		The coproduct $\bar \Delta: \bar S^c(V) \to \bar S^c(V)\otimes \bar S^c(V)$  is defined by
\begin{equation}
	\bar \Delta(v_1\cdot v_2\dots  v_n) = \sum_{i = 1}^{n-1}\sum_{\sigma \in \text{Sh}(i,n-i)}\epsilon(\sigma,v)(v_{\sigma^{-1}(1)}\cdot v_{\sigma^{-1}(2)}\dots v_{\sigma^{-1}(i)})\otimes(v_{\sigma^{-1}(i+1)}\cdot\cdot\cdot v_{\sigma^{-1}(n)}).
\end{equation}
Then we have
\begin{prop}
			The following data are equivalent
		\begin{itemize}
			\item  A collection of linear maps $l_k: \mathfrak{g}^{\otimes k} \to \mathfrak{g}$ of degree $2 - k$  satisfying $L_\infty$ relation.
			\item  A degree $1$ coderivation $Q$ on $\bar S^c(\mathfrak{g}[1])$ satisfying $Q^2 = 0$.
		\end{itemize}
\end{prop}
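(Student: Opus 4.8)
The plan is to mirror the proof of Proposition~\ref{A_inf_coder} (the $A_\infty$ case), replacing the reduced tensor coalgebra $\bar{T}^c(A[1])$ by the reduced symmetric coalgebra $\bar{S}^c(\mathfrak{g}[1])$ and keeping careful track of Koszul and suspension signs. First I would recall that, just as for the tensor coalgebra, a coderivation $Q$ on $\bar{S}^c(V)$ with $V = \mathfrak{g}[1]$ is uniquely determined by its corestriction $p_V \circ Q : \bar{S}^c(V) \to V$, which amounts to a collection of linear maps $Q_k : \mathrm{Sym}^k(V) \to V$ for $k \geq 1$; conversely, any such collection extends uniquely to a coderivation via the formula
\begin{equation}
	Q(v_1 \cdots v_n) = \sum_{k=1}^n \sum_{\sigma \in Sh(k,n-k)} \epsilon(\sigma,v)\, Q_k(v_{\sigma^{-1}(1)} \cdots v_{\sigma^{-1}(k)}) \cdot v_{\sigma^{-1}(k+1)} \cdots v_{\sigma^{-1}(n)}.
\end{equation}
This is the symmetric-coalgebra analogue of \eqref{coder_tensorco}, and it is proved by the same argument: the comultiplication $\bar\Delta$ is cocommutative and coassociative, so a coderivation is a ``derivation of the comultiplication,'' and the universal property forces the formula above once the corestriction is fixed.

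Next I would set up the dictionary between the maps. Given $L_\infty$ brackets $l_k : \mathfrak{g}^{\otimes k} \to \mathfrak{g}$ of degree $2-k$, define $Q_k : (s\mathfrak{g})^{\otimes k} \to s\mathfrak{g}$ by $Q_k = (-1)^{?}\, s \circ l_k \circ (s^{-1})^{\otimes k}$, where the sign is the standard one coming from commuting the desuspensions past the inputs (the same convention used implicitly in Proposition~\ref{A_inf_coder}); the point of the shift is that $Q_k$ has degree $+1$ for all $k$, so the induced coderivation $Q$ on $\bar{S}^c(\mathfrak{g}[1])$ is homogeneous of degree $1$. The graded skew-symmetry of the $l_k$ translates precisely into the statement that $Q_k$ is well-defined on $\mathrm{Sym}^k(s\mathfrak{g})$ (symmetric powers in the Koszul-signed sense correspond to graded-antisymmetric maps after the shift). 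Then I would compute $Q^2$: since $Q^2$ is again a coderivation, it is determined by $p_V \circ Q^2$, and unwinding the composition using the extension formula produces exactly the sum over $(k, n-k)$-shuffles appearing in the $L_\infty$ relations. So $Q^2 = 0$ if and only if $p_V \circ Q^2 = 0$ if and only if the full tower of $L_\infty$ relations holds. This gives both directions simultaneously, since every step in the correspondence $(l_k) \leftrightarrow (Q_k) \leftrightarrow Q$ is a bijection.

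The main obstacle I anticipate is purely bookkeeping: getting the Koszul signs in the shuffle sums and the suspension/desuspension signs to match up so that the coefficient in $p_V \circ Q^2$ is literally $\sum_{k}(-1)^k \sum_{\sigma \in Sh(k,n-k)} (-1)^\sigma \epsilon(\sigma,x)\, l_{n-k-1}(l_k(\dots),\dots)$ rather than that expression with some extra sign depending on $n$ and $k$. I would handle this by choosing the decalage (\emph{d\'ecalage}) convention consistently from the start — i.e.\ absorbing all signs into the definition $Q_k = s \circ l_k \circ (s^{-1})^{\otimes k}$ and letting the ``naked'' coderivation relation $Q^2=0$ be the clean statement — and then, if desired, translating back to the signed form of the $L_\infty$ relations at the very end via the standard decalage lemma. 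Since the excerpt only asks for the equivalence as stated (and the $A_\infty$ analogue is asserted with a reference to \cite{getzler1990algebras}), I would likewise cite the standard references (e.g.\ \cite{getzler1990algebras}) for the sign verification and present only the structural bijection in detail.
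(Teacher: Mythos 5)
Your proposal is correct and follows exactly the route the paper intends: the paper itself gives no proof for the $L_\infty$ version, presenting it only as the symmetric-coalgebra analogue of Proposition about $A_\infty$ algebras, whose proof is likewise just the sketch "define $b_k = s\circ m_k\circ(s^{-1})^{\otimes k}$, extend to a coderivation via the corestriction, and check $b^2=0$ is the defining relation," with the sign bookkeeping deferred to \cite{getzler1990algebras}. Your treatment of the shuffle formula for coderivations on $\bar S^c(V)$ and the d\'ecalage signs is the standard completion of that sketch, so there is nothing to object to.
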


\subsection{Homological perturbation lemma}
We introduce an important technical tool called the homological perturbation lemma. We refer to \cite{2004math3266C} for a more detailed discussion.

Let us first consider the following homotopy data of chain complexes.
\begin{defn}
	A special deformation retract (SDR) from a cochain complex $(A,d_A)$ to $(H,d_H)$ consists of the following data
	\begin{equation}\label{SDR_data}
		h\curved (A,d_A)\overset{p}{\underset{i}\rightleftarrows} (H,d_H),
	\end{equation}
	where $i, p$ are cochain maps and $h$ is a degree $-1$ map on $A$, such that
	\begin{equation}
		i\circ p-\mathds{1}_A  = d_A\circ h + h\circ d_A,\;\quad p\circ i  = \mathds{1}_{H},
	\end{equation}
	and
	\begin{equation}
		h\circ i = 0,\; p\circ h = 0,\; h\circ h = 0.
	\end{equation}
\end{defn}

Consider a perturbation $\delta$ to the differential on $A$:
\begin{equation}
	d_A' = d_A + \delta,\;\;d_A'^2 = 0
\end{equation}
The perturbation is called small if $(1 - \delta  h) $ is invertible.
\begin{lem}
	(Homological perturbation lemma) Given a SDR data as \ref{SDR_data} and a small perturbation, there is a new SDR: 
	\begin{equation}\label{SDR_data}
		h\curved (A,d_A')\overset{p'}{\underset{i'}\rightleftarrows} (H,d_H')
	\end{equation}
	where the maps above are defined by
	\begin{equation}
		\begin{aligned}
			d_H' &= d_H+ p(1 - \delta h)^{-1}\delta i,\\
			h' &= h + h(1 - \delta h)^{-1}\delta h,\\
			p' &= p + p(1 - \delta h)^{-1}\delta h,\\
			i' &= i + h(1 - \delta h)^{-1}\delta i.
		\end{aligned}
	\end{equation}
\end{lem}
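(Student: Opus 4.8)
The plan is to give the standard direct verification (following \cite{2004math3266C}). First I would introduce the resummed operator $A:=(1-\delta h)^{-1}\delta$, so that the four formulas in the statement read $d_H'=d_H+pAi$, $i'=i+hAi$, $p'=p+pAh$ and $h'=h+hAh$. Smallness of $\delta$ is exactly what makes $A$ well defined; in every application in this paper (the $SL_2(\C)$ deformation of Section~\ref{sec:def}, and the various KK reductions) one checks separately that $1-\delta h$ is invertible because the homotopy $h$ --- and likewise $\tilde h$ or $\mathds{h}$ --- strictly shifts the $SU(2)$ bidegree, so $(\delta h)^n$ eventually kills any fixed harmonic mode and $A=\sum_{n\ge 0}(\delta h)^n\delta$ is a locally finite sum. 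I would then record the elementary recursions $A=\delta+\delta hA=\delta+Ah\delta$ and the two-sided identity $A=\delta(1-h\delta)^{-1}$, all immediate from the geometric series. The goal is then to show that $h'\curved(A,d_A+\delta)\overset{p'}{\underset{i'}\rightleftarrows}(H,d_H')$ satisfies all the axioms of a special deformation retract.

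Second, I would dispose of the purely formal identities: the side conditions $h'h'=0$, $h'i'=0$, $p'h'=0$, and $p'i'=1_H$. Substituting the defining formulas and expanding, every term either contains one of the vanishing composites $hh=0$, $hi=0$, $ph=0$, or collapses using $pi=1_H$ together with the recursions for $A$; this is routine bookkeeping and does not involve the differentials.

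Third comes the substantive part: the splitting identity $i'p'-1_A=(d_A+\delta)h'+h'(d_A+\delta)$, together with the chain-map identities $i'd_H'=(d_A+\delta)i'$ and $p'(d_A+\delta)=d_H'p'$. Once the latter two are known, $(d_H')^2=0$ follows for free from surjectivity of $p'$ (namely $(d_H')^2p'=d_H'p'(d_A+\delta)=p'(d_A+\delta)^2=0$), so only these three identities genuinely need work. For each one I would expand both sides, feed in the original homotopy equation $ip-1_A=d_Ah+hd_A$, the flatness relation $d_A\delta+\delta d_A+\delta^2=0$, and the recursions for $A$, and then reorganize the graded commutator $d_AA+Ad_A$ into the combination dictated by the right-hand side. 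The only real obstacle is managing the Koszul signs correctly (recall $d_A$ and $\delta$ have degree $+1$ while $h$ has degree $-1$) and checking that differentiating the Neumann series term by term is legitimate under the smallness hypothesis; apart from that the argument is entirely formal.
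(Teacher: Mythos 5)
The paper itself gives no proof of this lemma: it states the formulas and defers entirely to the cited reference \cite{2004math3266C}, so there is no in-paper argument to compare yours against. Your outline is the standard direct verification from that reference, and it is sound. You correctly isolate the resummed operator $A=(1-\delta h)^{-1}\delta$, the recursions $A=\delta+\delta hA=\delta+Ah\delta$, and the two-sided identity $A=\delta(1-h\delta)^{-1}$; the side conditions $h'h'=h'i'=p'h'=0$ and $p'i'=1_H$ do indeed collapse purely from $hh=hi=ph=0$ and $pi=1_H$ without touching the differentials; and your reduction of $(d_H')^2=0$ to the two chain-map identities via $p'i'=1_H$ is correct. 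The only caveat is that the three substantive identities (the perturbed homotopy equation and the two chain-map conditions) are described as a plan rather than carried out; but the ingredients you list — the original homotopy equation $ip-1_A=d_Ah+hd_A$, the flatness relation $d_A\delta+\delta d_A+\delta^2=0$, and the recursions for $A$ — are exactly what the computation uses, so filling in the expansion is genuinely routine. Your remark that smallness holds in the paper's applications because $h$, $\tilde h$, $\mathds{h}$ strictly shift the harmonic bidegree, making $\sum_n(\delta h)^n$ locally finite, is also the right justification and is the point the paper itself leaves implicit.
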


The homological perturbation lemma can be regarded as a substitution of the spectral sequence techniques, which provides explicit formulae.
	
	\subsection{Homotopy transfer}
	\label{apx:Hom_trans}
	 Given a dga algebra (or an $A_\infty$ algebra in general) and a chain complex quasi-isomorphic to it, homotopy transfer theorem \cite{Kadeishvili1980ONTH} gives the complex an $A_\infty$ structure. In particular, one gets an $A_\infty$ structure on the cohomology of a dga algebra. We emphasize that there are different approaches to construct this $A_\infty$ structure. In this appendix, we take the approach using homological perturbation lemma \cite{berglund2014homological}. 
	
	Given a dga algebra $(A,d,\cdot)$. Suppose we can find a SDR to its cohomology $H = H^{\sbullet}(A)$
		\begin{equation}
		h\curved (A,d)\overset{p}{\underset{i}\rightleftarrows} (H,d_H = 0).
	\end{equation}
Recall that the dga algebra structure on $A$ is equivalent to a differential $b$ on $\bar{T}^c(sA)$. Therefore, we first extend the above SDR to the corresponding tensor coalgebra
\begin{prop}
	The following is a SDR
			\begin{equation}
		Th^s\curved (\bar{T}^c(sA), Td^s)\overset{Tp^s}{\underset{Ti^s}\rightleftarrows} (\bar{T}^c(sH),0),
	\end{equation}
where the differential $Td^s  $ is defined by $Td^s  = \sum_{n \geq 1} \sum_{i = 0}^{n-1} \mathds{1}^{i}\otimes (s\circ d\circ s^{-1}) \otimes \mathds{1}^{n - i - 1}$. The projection and inclusion maps are defined by $Tp^s = \sum_{n\geq1}(s\circ p\circ s^{-1})^{\otimes n}$ and $	Ti^s = \sum_{n\geq 1}(s\circ i\circ s^{-1})^{\otimes n}$. The deformation retract is defined as
\begin{equation*}
	Th^s = \sum_{n\geq 1} \sum_{i = 0}^{n-1} \mathds{1}^{\otimes i} \otimes (s\circ h\circ s^{-1}) \otimes (s\circ i\circ p\circ s^{-1})^{\otimes n - i - 1}.
\end{equation*}
\end{prop}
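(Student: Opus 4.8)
The plan is to verify the six defining relations of a special deformation retract directly, after two simplifications that reduce the statement to elementary algebra. First I would absorb the suspension: writing $\tilde d = s\circ d\circ s^{-1}$, $\tilde h = s\circ h\circ s^{-1}$, $\tilde i = s\circ i\circ s^{-1}$, $\tilde p = s\circ p\circ s^{-1}$, invertibility of $s$ (with $s^{-1}s = \mathds{1}$) shows these satisfy the same relations as $d,h,i,p$, namely $\tilde d^2=0$, $\tilde d\tilde h+\tilde h\tilde d=\tilde i\tilde p-\mathds{1}$, $\tilde p\tilde i=\mathds{1}$, $\tilde h\tilde i=\tilde p\tilde h=\tilde h^2=0$, together with $\tilde d\tilde i=\tilde p\tilde d=0$ (the last two because the differential on $H$ is zero, so $i,p$ being cochain maps forces $d\circ i=0=p\circ d$); moreover $\tilde i\circ\tilde p=s\circ i\circ p\circ s^{-1}$, so $Td^s,Tp^s,Ti^s,Th^s$ are obtained from $\tilde d,\tilde h,\tilde i,\tilde p$ by exactly the tensor formulas written in the statement. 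It therefore suffices to prove the claim for an arbitrary SDR $h\curved(A,d)\overset{p}{\underset{i}\rightleftarrows}(H,0)$ and the operators $Td$, $Tp=\bigoplus_n p^{\otimes n}$, $Ti=\bigoplus_n i^{\otimes n}$, $Th=\bigoplus_n H_n$ with $H_n=\sum_{k=0}^{n-1}\mathds{1}^{\otimes k}\otimes h\otimes(ip)^{\otimes(n-1-k)}$ on $\bar{T}^c(A)=\bigoplus_{n\geq 1}A^{\otimes n}$.

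The second simplification is that $Td$, $Tp$, $Ti$, $Th$ all preserve the grading by tensor length, so each of the six relations may be checked on a single summand $A^{\otimes n}$. There the side relations are immediate from the slotwise facts above: in $Tp\circ Th$ the map $p$ meets $h$ (giving $ph=0$), in $Th\circ Ti$ the map $i$ meets $h$ (giving $hi=0$), in $Th\circ Th$ each term contains in some slot one of $h^2$, $(hi)p$, or $i(ph)$ (all zero), and $Tp\circ Ti=\bigoplus_n(pi)^{\otimes n}=\mathds{1}$; similarly $Tp\circ Td=0$ and $Td\circ Ti=0$ slotwise, so $Tp$ and $Ti$ are cochain maps to the zero-differential complex.

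The one computation of substance is the homotopy relation $Ti\,Tp-\mathds{1}=Td\,Th+Th\,Td$. On $A^{\otimes n}$ I would expand $Td\,H_n+H_n\,Td$ by the position $j$ of the differential in $Td$ and the position $k+1$ of $h$ in $H_n$: if $j>k+1$ the differential lands on an $ip$ slot and vanishes, since $d\circ(ip)=(di)p=0$ and $(ip)\circ d=i(pd)=0$; if $j=k+1$ slot $k+1$ acquires the factor $dh+hd=ip-\mathds{1}$; and if $j\leq k$ the differential and $h$ act in disjoint slots, so the two contributions (one from $Td\,H_n$, one from $H_n\,Td$) are the same underlying operator with opposite Koszul sign and cancel. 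Collecting the surviving $j=k+1$ terms gives $\sum_{k=0}^{n-1}\mathds{1}^{\otimes k}\otimes(ip-\mathds{1})\otimes(ip)^{\otimes(n-1-k)}=\sum_{k=0}^{n-1}(G_k-G_{k+1})$ with $G_k:=\mathds{1}^{\otimes k}\otimes(ip)^{\otimes(n-k)}$, which telescopes to $G_0-G_n=(ip)^{\otimes n}-\mathds{1}^{\otimes n}=(Ti\,Tp-\mathds{1})\big|_{A^{\otimes n}}$.

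The main obstacle is the sign analysis in the $j\leq k$ case: one must confirm that $d$ placed in slot $j$ and $h$ placed in slot $k+1\neq j$ graded-commute on $A^{\otimes n}$, i.e.\ $d_j h_{k+1}=-h_{k+1}d_j$, so that their contributions to the anticommutator cancel term by term. This is the graded interchange law: tracking that $d$ traverses slots $1,\dots,j-1$ and $h$ traverses slots $1,\dots,k$, the only asymmetry is the $\pm 1$ incurred because $h$ lowers, and $d$ raises, the degree of the single slot it occupies, and since $j\neq k+1$ that slot is never traversed by the other operator — leaving exactly the sign $(-1)^{|d||h|}=-1$. I would either carry out this slot-localized bookkeeping once, or, to keep signs packaged, prove the two-factor tensor trick $h_{A\otimes B}=\mathds{1}_A\otimes h_B+h_A\otimes(i_Bp_B)$, $i_{A\otimes B}=i_A\otimes i_B$, $p_{A\otimes B}=p_A\otimes p_B$ as a standalone lemma and induct on $n$ via $A^{\otimes n}=A^{\otimes(n-1)}\otimes A$, which reproduces precisely the stated $H_n$ and matches the treatment in \cite{berglund2014homological}.
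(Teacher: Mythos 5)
Your proof is correct: the paper states this proposition without proof (it is the standard ``tensor trick'' of homological perturbation theory, deferred to the cited reference \cite{berglund2014homological}), and your slotwise verification --- the reduction to the unsuspended case, the immediate side relations, the Koszul-sign cancellation of the $j\le k$ cross terms, and the telescoping of $\sum_{k}\mathds{1}^{\otimes k}\otimes(ip-\mathds{1})\otimes(ip)^{\otimes(n-1-k)}$ to $(ip)^{\otimes n}-\mathds{1}^{\otimes n}$ --- is exactly the standard argument the paper implicitly invokes. There is nothing to contrast, and no gaps beyond the routine check that $(Td^s)^2=0$, which follows from the same sign bookkeeping you already carry out.
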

	
The product $\cdot$ on the dga algebra $A$ defined a map $b_2:(sA)^{\otimes 2} \to sA$ and extend to a map $\delta : \bar{T}^c(sA) \to \bar{T}^c(sA)$. Together with the differential $Td^s   $, the sum $b = Td^s   + \delta: \bar{T}^c(sA) \to \bar{T}^c(sA)$ encode the dga algebra structure $A$ in the sense of Proposition \ref{A_inf_coder}. Now we can regard $\delta $ as a perturbation to the differential and apply the homological perturbation lemma. We have the following new SDR 
\begin{equation}
		h'\curved (\bar{T}^c(sA), Td^s + \delta)\overset{p'}{\underset{i'}\rightleftarrows} (\bar{T}^c(sH),b').
\end{equation} 
The homological perturbation lemma provides us a formula for all the maps $h',p',i'$. However, only the differential $b_H$ matter to us as it encodes the transferred $A_\infty$ structure on the cohomology $H$. We have
\begin{equation}\label{coder_trans}
	b'= Tp^s\circ(1 - \delta \circ Th^s)^{-1}\circ\delta\circ Ti^s = \sum_{n \geq 0}Tp^s\circ( \delta \circ Th^s)^{n}\circ\delta\circ Ti^s .
\end{equation}

If we further expand the above formula into components, we find the usual tree description of the transferred $A_\infty$ structure on $H$. 	Let $\mathrm{PBT_n}$ be the set of planar binary rooted trees with $n$ leaves. We consider the following construction that assigns each $T \in \mathrm{PBT}_n$ an $n$ array operation $m_T$ on $H$. The operation $m_T$ is obtained by putting $i$ on the leaves, $m$ on the vertices, $h$ on the internal edges and $p$ on the root. Then we consider
\begin{equation}\label{mn_trans}
	m_n = \sum_{T \in PBT_n} (\pm) m_T,
\end{equation}
where the $(\pm)$ sign can be tracked by a careful analysis of the Koszul sign rule in \ref{coder_trans}.
	
	\begin{theorem}
		The operations $\{m_n\}_{n \geq 2}$ defined on $H$ by the formulae \ref{mn_trans} form an $A_\infty$-algebra structure on $H$.

		Moreover, the transferred $A_\infty$-algebra $(H,\{m_n\}_{n \geq 2})$ is $A_\infty$ quasi-isomorphic to the dg algebra $(A,d_A,\cdot)$.
	\end{theorem}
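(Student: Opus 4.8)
The plan is to deduce the theorem from the homological perturbation lemma applied to the extended special deformation retract on the bar construction, so that essentially no direct manipulation of the $A_\infty$ relations is required. First I would recall that, by Proposition \ref{A_inf_coder}, the dga structure $(A,d_A,\cdot)$ is the same datum as a degree $1$ square-zero coderivation $b = Td^s + \delta$ on $\bar{T}^c(sA)$, where $Td^s$ is the extension of the differential and $\delta$ is the extension of $b_2 = s\circ(\cdot)\circ(s^{-1})^{\otimes 2}$; note $\delta$ strictly lowers tensor weight by one. The structural input is the extended contraction
\begin{equation*}
Th^s\curved (\bar{T}^c(sA), Td^s)\overset{Tp^s}{\underset{Ti^s}\rightleftarrows} (\bar{T}^c(sH),0),
\end{equation*}
whose side conditions ($Th^s\circ Ti^s=0$, $Tp^s\circ Th^s=0$, $Th^s\circ Th^s=0$) I would check once from those of the original SDR $h,i,p$, using crucially that $d_H=0$.

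Next I would verify that $\delta$ is a \emph{small} perturbation: with respect to the increasing filtration of $\bar{T}^c(sA)$ by tensor weight, $\delta$ strictly decreases weight while $Th^s$ preserves it, so $\delta\circ Th^s$ is locally nilpotent and $(1-\delta\circ Th^s)^{-1}=\sum_{n\ge 0}(\delta\circ Th^s)^n$ converges on each weight-bounded subspace. The homological perturbation lemma then produces a new contraction
\begin{equation*}
h'\curved (\bar{T}^c(sA), Td^s + \delta)\overset{p'}{\underset{i'}\rightleftarrows} (\bar{T}^c(sH),b'),
\end{equation*}
with $b' = Tp^s\circ(1-\delta\circ Th^s)^{-1}\circ\delta\circ Ti^s$. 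Being the differential of a complex, $b'$ is automatically square zero, and it inherits degree $1$ from $b$. The two substantive points are then: (i) $b'$ is a \emph{coderivation} of $\bar{T}^c(sH)$, and (ii) $i'$ (hence $p'$) is a \emph{morphism of coalgebras}. Both follow from the general fact that, on a cofree (tensor) coalgebra, when the initial data of a small perturbation are a coderivation together with coalgebra maps, and the homotopy is of the specific form of $Th^s$ above, the perturbation lemma returns again a coderivation and coalgebra maps; concretely one expands $b'$ and $i'$ term by term in $(\delta\circ Th^s)^n$ and checks compatibility with the coproduct weight-by-weight. Granting this, Proposition \ref{A_inf_coder} converts $b'$ into operations $\{m_n\}_{n\ge 1}$ on $H$ obeying the $A_\infty$ relations with $m_1=d_H=0$, which is the first assertion. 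Expanding the components of $b' = \sum_{n\ge 0}Tp^s\circ(\delta\circ Th^s)^n\circ\delta\circ Ti^s$ and desuspending reproduces exactly the sum over planar binary rooted trees \ref{mn_trans}, the $(\pm)$ being the Koszul sign incurred when moving the degree $-1$ maps $s, s^{-1}$ past one another. For the quasi-isomorphism clause, $i'$ is a coalgebra morphism whose linear term is $i$, hence an $A_\infty$ morphism $(H,\{m_n\})\to(A,d_A,\cdot)$ with quasi-isomorphic linear part, i.e.\ an $A_\infty$ quasi-isomorphism.

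The main obstacle I anticipate is precisely step (i)–(ii): showing that the homological perturbation lemma preserves the coderivation/coalgebra-morphism structure rather than merely yielding an abstract deformation retract of complexes. This is where cofreeness of $\bar{T}^c$ and the exact shape of $Th^s$ — in particular the factors $s\circ i\circ p\circ s^{-1}$ sitting to the right of the single $s\circ h\circ s^{-1}$, which are what make the side conditions hold — are used, and where the sign bookkeeping is heaviest. By comparison, smallness of $\delta$ and the translation of $b'$ into the tree formula are routine.
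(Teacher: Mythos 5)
Your proposal is correct and follows essentially the same route as the paper: the paper also extends the SDR to the tensor coalgebra $\bar{T}^c(sA)$, treats the coderivation $\delta$ induced by the product as a small perturbation, applies the homological perturbation lemma to obtain $b' = Tp^s\circ(1-\delta\circ Th^s)^{-1}\circ\delta\circ Ti^s$, and expands this into the tree formula \ref{mn_trans}. The points you flag as the main obstacle (that $b'$ is again a coderivation and $i'$ a coalgebra morphism) are exactly the parts the paper delegates to the cited literature rather than verifying explicitly, so your sketch is if anything slightly more complete on that front.
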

	
	In the example of our study, the tangential Cauchy-Riemann complex $(\Omega_{b}^{0,\sbullet}(S^3),\pacr,\cdot)$ is graded commutative. We are interested in the transferred structure for dgc algebra. This scenario is analyzed in \cite{2006math10912Z}. For $(A,d,\cdot)$ a dgc algebra, if we regard it as a dga algebra, the $A_\infty$ structure constructed by \ref{mn_trans} actually defines a $C_\infty$ structure. 
	
	For homotopy transfer of dg Lie algebra and $L_\infty$ algebra, a similar result can be established. We start with a dg Lie algebra $(L,d,[-,-])$ and consider the transferred structure on its cohomology $\mathfrak{g} = H^{\sbullet}(L)$. Suppose we are given the following SDR 
		\begin{equation}
	h\curved (L,d)\overset{p}{\underset{i}\rightleftarrows} (H,d_H = 0).
\end{equation}
The tensor trick can be extended to the symmetric case
			\begin{equation}
	Sh^s\curved (\bar{S}^c(sL), Sd^s)\overset{Sp^s}{\underset{Si^s}\rightleftarrows} (\bar{S}^c(s\mathfrak{g}),0),
\end{equation}
where the differential $Sd^s  $ is defined by $Sd^s  = \sum_{n \geq 1} \sum_{i = 0}^{n-1} \mathds{1}^{i}\otimes (s\circ d\circ s^{-1}) \otimes \mathds{1}^{n - i - 1}$. The projection and inclusion maps are defined by $Sp^s = \sum_{n\geq1}(s\circ p\circ s^{-1})^{\otimes n}$ and $Si^s = \sum_{n\geq 1}(s\circ i\circ s^{-1})^{\otimes n}$. The deformation retract is defined as
\begin{equation*}
	Sh^s = \sum_{n\geq 1}\frac{1}{n!}\sum_{\sigma \in S_n} \sigma^{-1}\left( \sum_{i = 0}^{n-1} \mathds{1}^{\otimes i} \otimes (s\circ h\circ s^{-1}) \otimes (s\circ i\circ p\circ s^{-1})^{\otimes n - i - 1}\right)\sigma.
\end{equation*}
The Lie bracket $[-,-]$ on $L$ defined a map $Q_2:(sL)^{\otimes 2} \to sL$ and extend to a map $\delta : \bar{S}^c(sL) \to \bar{T}^c(sL)$. We add this differential to the above SDR as a perturbation. Then we have a new SDR, with a new differential on $\bar{S}^c(s\mathfrak{g})$ given by the following
\begin{equation}
	Q' = Sp^s\circ(1 - \delta \circ Sh^s)^{-1}\circ\delta\circ Si^s = \sum_{n \geq 0}Sp^s\circ( \delta \circ Sh^s)^{n}\circ\delta\circ Si^s .
\end{equation}
We can expand the above formula into components. This gives us the usual tree description of the transferred $L_\infty$ structure on $\mathfrak{g}$. Let $\mathrm{BT_n}$ be the set of binary rooted trees with $n$ leaves. In this case, we need to consider trees not necessarily planar, which means edges can cross each other. We consider the following construction that assigns each $T \in \mathrm{BT}_n$ an $n$ array operation $l_T$ on $H$. The operation $l_T$ is obtained by putting $i$ on the leaves, $[-,-]$ on the vertices, $h$ on the internal edges and $p$ on the root. We consider
\begin{equation}\label{ln_trans}
	l_n = \sum_{T \in BT_n} (\pm) l_T.
\end{equation}
Then the operations $\{l_n\}_{n \geq 2}$ defined an $L_\infty$-algebra structure on $\mathfrak{g}$. Moreover, the $L_\infty$ algebra $(\mathfrak{g},l_2,l_3,\dots)$ is $L_\infty$ quasi-isomorphic to the dg Lie algebra $(L,d,[-,-])$. 

\section{Computation of (higher) products and brackets}

\subsection{Product of $S^3$ harmonics}
\label{apx:pro_Har}
In this section, we compute the product of two arbitrary $S^3$ harmonics. We first recall the formula \ref{Har_to_Pol} that decomposes a harmonic polynomial into sum of monomials
\begin{equation}
	e^{(j,\bar{j})}_{m} = \sum_{l}\lambda_{j,\bar{j},0}^{-1}C^{j,\bar{j};j+\bar{j}}_{m -l,l;m} e^{(j)}_{m - l} \bar{e}^{\bar{j}}_{l},
\end{equation}
where
\begin{equation}
	\lambda_{j,\bar{j},k} = (-1)^k\sqrt{\frac{(2j+1)!(2\bar{j} + 1)!}{k!(2j + 2\bar{j} - k + 1)!}}.
\end{equation}
Then we can write
\begin{equation}
			M(e^{(j_1,\bar{j}_1)}_{m_1},e^{(j_2,\bar{j}_2)}_{m_2}) = \sum_{l_1,l_2}\lambda_{j_1,\bar{j}_1,0}^{-1}\lambda_{j_2,\bar{j}_2,0}^{-1}C^{j_1,\bar{j}_1;j_1+\bar{j}_1}_{m_1 - l_1, l_1;m_1} C^{j_2,\bar{j}_2;j_2+\bar{j}_2}_{m_2 - l_2, l_2;m_2} M(e^{(j_1)}_{m_1 - l_1}\bar{e}^{(\bar{j}_1)}_{l_1},e^{(j_2)}_{m_2 - l_2}\bar{e}^{(\bar{j}_2)}_{l_2}).
\end{equation}
To compute $M(e^{(j_1)}_{m_1 - l_1}\bar{e}^{(\bar{j}_1)}_{l_1},e^{(j_2)}_{m_2 - l_2}\bar{e}^{(\bar{j}_2)}_{l_2})$, we consider the product $e^{(j_1)}_{m_1 - l_1}e^{(j_2)}_{m_2 - l_2}$ and $\bar{e}^{(\bar{j}_1)}_{l_1}\bar{e}^{(\bar{j}_2)}_{l_2}$ separately. We find
\begin{equation}
\begin{aligned}
		 &M(e^{(j_1)}_{m_1 - l_1}\bar{e}^{(\bar{j}_1)}_{l_1},e^{(j_1)}_{m_1 - l}\bar{e}^{(\bar{j}_1)}_{l_2}) \\
		 = &\sqrt{\frac{(2j_1+1)(2j_2+1)(2\bar{j}_1+1)(2\bar{j}_2+1)}{(2j_1 + 2j_2+1)(2\bar{j}_1+2\bar{j}_2 + 1)}} C^{j_1,j_2;j_1+j_2}_{m_1-l_1,m_2 - l_2,m_1+m_2-l_1-l_2}C^{\bar{j}_1,\bar{j}_2;\bar{j}_1+\bar{j}_2}_{l_1,l_2,l_1+l_2} M(e^{(j_1+j_2)}_{m_1+m_2-l_1-l_2},\bar{e}^{(\bar{j}_1+\bar{j}_2)}_{l_1+l_2})\\
		  = &\sum_{k} \lambda_{j_1+j_2,\bar{j}_1+\bar{j}_2,k}\sqrt{\frac{(2j_1+1)(2j_2+1)(2\bar{j}_1+1)(2\bar{j}_2+1)}{(2j_1 + 2j_2+1)(2\bar{j}_1+2\bar{j}_2 + 1)}} \\
		 & \times C^{j_1,j_2;j_1+j_2}_{m_1-l_1,m_2 - l_2;m_1+m_2-l_1-l_2}C^{\bar{j}_1,\bar{j}_2;\bar{j}_1+\bar{j}_2}_{l_1,l_2;l_1+l_2} C^{j_1+j_2,\bar{j}_1+\bar{j}_2;j_1+j_2+\bar{j}_1+\bar{j}_2 - k}_{m_1+m_2-l_1-l_2,l_1+l_2;m_1+m_2}e^{(j_1+j_2- \frac{k}{2},\bar{j}_1+\bar{j}_2 - \frac{k}{2})}_{m_1+m_2}.
\end{aligned}
\end{equation}
Therefore
\begin{equation}
\begin{aligned}
		& M(e^{(j_1,\bar{j}_1)}_{m_1},e^{(j_2,\bar{j}_2)}_{m_2})\\
	= &\sum_{k}\sum_{l_1,l_2}\lambda_{j_1,\bar{j}_1,0}^{-1}\lambda_{j_2,\bar{j}_2,0}^{-1} \lambda_{j_1+j_2,\bar{j}_1+\bar{j}_2,k}\sqrt{\frac{(2j_1+1)(2j_2+1)(2\bar{j}_1+1)(2\bar{j}_2+1)}{(2j_1 + 2j_2+1)(2\bar{j}_1+2\bar{j}_2 + 1)}} \\
\times& C^{j_1,\bar{j}_1;j_1+\bar{j}_1}_{m_1 - l_1, l_1;m_1} C^{j_2,\bar{j}_2;j_2+\bar{j}_2}_{m_2 - l_2, l_2;m_2}  C^{j_1,j_2;j_1+j_2}_{m_1-l_1,m_2 - l_2;m_1+m_2-l_1-l_2}C^{\bar{j}_1,\bar{j}_2;\bar{j}_1+\bar{j}_2}_{l_1,l_2;l_1+l_2} C^{j_1+j_2,\bar{j}_1+\bar{j}_2;j_1+j_2+\bar{j}_1+\bar{j}_2 - k}_{m_1+m_2-l_1-l_2,l_1+l_2;m_1+m_2}e^{(j_1+j_2- \frac{k}{2},\bar{j}_1+\bar{j}_2 - \frac{k}{2})}_{m_1+m_2}\\
= &\sum_{k}\lambda_{j_1,\bar{j}_1,0}^{-1}\lambda_{j_2,\bar{j}_2,0}^{-1} \lambda_{j_1+j_2,\bar{j}_1+\bar{j}_2,k}\sqrt{(2j_1+1)(2j_2+1)(2\bar{j}_1+1)(2\bar{j}_2+1)(2j_1 + 2\bar{j}_1+1)(2j_2+2\bar{j}_2 + 1)} \\
\times &\begin{Bmatrix}
	j_1&j_2&j_1+j_2\\\bar{j}_1&\bar{j}_2&\bar{j}_1+\bar{j}_2\\j_1+\bar{j}_1&j_2+\bar{j}_2&j_1+j_2+\bar{j}_1+\bar{j}_2 - k
\end{Bmatrix}C^{j_1+\bar{j}_1,j_2+\bar{j}_2;j_1+j_2+\bar{j}_1+\bar{j}_2 - k}_{m_1,m_2;m_1+m_2}e^{(j_1+j_2- \frac{k}{2},\bar{j}_1+\bar{j}_2 - \frac{k}{2})}_{m_1+m_2},
\end{aligned}
\end{equation}
where $\begin{Bmatrix}
	j_1&j_2&j_3\\j_4&j_5&j_6\\j_7&j_8&j_9
\end{Bmatrix}$ is the Wigner $9-j$ symbol.

In our study of the higher product on the CR cohomology, a constantly appearing computation is the product of the form $M(e^{(j_1 - \frac{i}{2},\bar{j}_1 - \frac{i}{2})}_{m_1},\bar{e}^{(\bar{j_2})}_{m_2})$. One can use the above general formula to compute this. Here, we derive an alternative formula that is more succinct. The key is that we use a variation of \ref{Har_to_Pol} to expand the harmonics polynomial $e^{(j_1 - \frac{i}{2},\bar{j}_1 - \frac{i}{2})}_{m_1}$
\begin{equation}
	e^{(j_1 - \frac{i}{2},\bar{j}_1 - \frac{i}{2})}_{m_1} = \sum_{l}\lambda_{j_1,\bar{j}_1,i}^{-1} C^{j_1,\bar{j}_1;j_1+\bar{j}_1 - i}_{m_1 - l,l;m_1}e^{(j_1)}_{m_1 - l}\bar{e}^{(\bar{j}_1)}_{l}.
\end{equation}
Therefore, we have
\begin{equation}\label{Prod_har_2}
\begin{aligned}
		&M(e^{(j_1 - \frac{i}{2},\bar{j}_1 - \frac{i}{2})}_{m_1},\bar{e}^{(\bar{j}_2)}_{m_2}) = \sum_{l}\lambda_{j_1,\bar{j}_1,i}^{-1}\sqrt{\frac{(2\bar{j}_1 + 1)(2\bar{j}_2 + 1)}{(2\bar{j}_1 + 2\bar{j}_2 + 1)}} C^{j_1,\bar{j}_1;j_1+\bar{j}_1 - i}_{m_1 - l,l;m_1} C^{\bar{j}_1,\bar{j}_2;\bar{j}_1+\bar{j}_2}_{l,m_2;l+m_2}M(e^{(j_1)}_{m_1 - l},\bar{e}^{(\bar{j}_1+\bar{j}_2)}_{l+m_2} )\\
		& = \sum_{k\geq 0}\sum_{l}\lambda_{j_1,\bar{j}_1,i}^{-1}\lambda_{j_1,\bar{j}_1+\bar{j}_2;k}\sqrt{\frac{(2\bar{j}_1 + 1)(2\bar{j}_2 + 1)}{(2\bar{j}_1 + 2\bar{j}_2 + 1)}} C^{j_1,\bar{j}_1;j_1+\bar{j}_1 - i}_{m_1 - l,l;m_1} C^{\bar{j}_1,\bar{j}_2;\bar{j}_1+\bar{j}_2}_{l,m_2;l+m_2}C^{j_1,\bar{j_1}+\bar{j}_2;j_1+\bar{j}_1+\bar{j}_2 - k}_{m_1 - l,m_2+l;m_1+m_2}e^{(j_1 - \frac{k}{2},\bar{j}_1+\bar{j}_2 - \frac{k}{2})}_{m_1 + m_2}\\
		& = \sum_{k\geq 0} (-1)^{2(j_1+\bar{j}_1 + \bar{j}_2) - k}\lambda_{j_1,\bar{j}_1,i}^{-1}\lambda_{j_1,\bar{j}_1+\bar{j}_2;k}\sqrt{(2\bar{j}_1 + 1)(2\bar{j}_2 + 1)(2j_1 + 2\bar{j}_1 - 2i + 1)}\\
		&\times \begin{Bmatrix}
			\bar{j}_1&j_1&j_1+\bar{j}_1 - i\\j_1+\bar{j}_1 + \bar{j}_2 - k&\bar{j}_2& \bar{j}_1+\bar{j}_2
		\end{Bmatrix}C^{j_1+\bar{j}_1 - i,\bar{j}_2;j_1+\bar{j}_1+\bar{j}_2 - k}_{m_1,m_2;m_1+m_2}e^{(j_1 - \frac{k}{2},\bar{j}_1+\bar{j}_2 - \frac{k}{2})}_{m_1 + m_2}.\\
\end{aligned}
\end{equation}
Though we write the summation range as $k \geq 0$, the Wigner $6j$ symbol actually constraint it such that $k \geq i$ and $k \leq \min\{2j_1,2\bar{j}_1+2\bar{j}_2\}$.

\subsection{$3$-brackets of Poisson BF theory}
\label{apx:Poi_3}

In this Appendix, we give a general formula for the $3$-bracket in the Poisson BF theory. We compute the constant
\begin{equation}
\begin{aligned}
			(\pi_3)^{p,q;r,s}_{u_1,v_1,u_{2},v_{2}} &=   \frac{(u_1+v_1+1)!(u_2+v_2+1)!}{u_1!v_1!u_2!v_2!}\times\Tr_{S^3}(w_1^pw_2^q\times p\{ h\{w_1^rw_2^s,\bar{w}_1^{u_1}\bar{w}_2^{v_1}\epsilon\},\bar{w}_1^{u_2}\bar{w}_2^{v_2}\epsilon\} )\\
			& = \frac{(u_1+v_1+1)!(u_2+v_2+1)!}{u_1!v_1!u_2!v_2!}\times\Tr_{S^3}(\bar{w}_1^{u_2}\bar{w}_2^{v_2}\epsilon\times p\{w_1^pw_2^q, h\{w_1^rw_2^s,\bar{w}_1^{u_1}\bar{w}_2^{v_1}\epsilon\}\} )
\end{aligned}
\end{equation}
Since $(\pi_3)^{p,q;r,s}_{u_1,v_1,u_{2},v_{2}}$ is only nonzero when $u_1+u_2 = p+r -3,v_1+v_2 = q+s - 3$, we denote $	(\pi_3)^{p,q;r,s}_{u,v} := (\pi_3)^{p,q;r,s}_{u,v,p+r - u - 3,q+s-v-3}$ throughout this section.

First, we need to give a $SU(2)$ decomposition of the two bracket $\{-,-\}_{\bar{\pi}}$ on the CR complex. From \ref{Poi_jjbar}, we see that the induced bracket on the CR complex restricted to $\mathcal{H}_{j,0}\otimes \mathcal{H}_{0,\bar{j}}\epsilon$ gives a map
\begin{equation}
\begin{aligned}
		\{-,-\} : \mathcal{H}_{j,0}\otimes \mathcal{H}_{0,\bar{j}}\epsilon &\to \mathcal{H}_{j-\frac{1}{2},0}\otimes \mathcal{H}_{0,\bar{j}+\frac{1}{2}}\epsilon\\
		& \cong \mathcal{H}_{j - \frac{1}{2},\bar{j}+\frac{1}{2}}\epsilon \oplus \mathcal{H}_{j - 1,\bar{j}}\epsilon \oplus \dots
\end{aligned}
\end{equation}
	We can analyze this map using the same techniques as in Section \ref{sec:pro_Har}.
		\begin{equation}
		\begin{aligned}
		&\{-,-\}\circ CG^{-1} (e^{(j + \bar{j} - k)}_{j + \bar{j}  - k}\epsilon)\\
		= & \sum_{l = 0}^{k}(-1)^{k} C^{j,\bar{j};j + \bar{j} - k}_{j,\bar{j} - k;j + \bar{j} - k}  \frac{k!}{(k - l)! l!  }\sqrt{\frac{(2j+1)(2\bar{j} + 1)!}{k!( 2\bar{j}-k)!}} \\
			&\times (2\bar{j} + 2)(lw_1\bar{w}_1 - (2j - l)w_2\bar{w}_2)w_1^{2j-l - 1}w_2^{l - 1}\bar{w}_1^{ k - l}\bar{w}_2^{ 2\bar{j} - k + l}\epsilon\\
			& = (-1)^{k+1}\sqrt{\frac{(2j+1)!(2\bar{j}+1)!}{k!(2j+2\bar{j} - k + 1)!}}\sqrt{(2\bar{j} - k + 1)( 2j - k)}(2\bar{j} + 2) e^{(j - \frac{1}{2} - \frac{k}{2},\bar{j} + \frac{1}{2} -  \frac{k}{2} )}_{j + \bar{j} - k}.
		\end{aligned}
	\end{equation}
Therefore, the map 	$\{-,-\} : \mathcal{H}_{j,0}\otimes \mathcal{H}_{0,\bar{j}}\epsilon \to \bigoplus_{k = 0}^{\min\{2j - 1,2\bar{j} + 1\}}\mathcal{H}_{j - \frac{1}{2} - \frac{k}{2},\bar{j}+\frac{1}{2} - \frac{k}{2}}\epsilon $ is given by
\begin{equation}\label{2Poi_general}
		\{e^{(j)}_m,\bar{e}^{(\bar{j})}_{\bar{m}}\epsilon\}  = \sum_{k} \tilde{\lambda}_{j,\bar{j},k}C^{j,\bar{j};j + \bar{j} - k}_{m,\bar{m};m+\bar{m}} e^{(j - \frac{1}{2} - \frac{k}{2},\bar{j}+\frac{1}{2} - \frac{k}{2})}_{m+\bar{m}}\epsilon,
\end{equation}
where
\begin{equation}
\begin{aligned}
		 \tilde{\lambda}_{j,\bar{j},k} &= (-1)^{k+1}\sqrt{\frac{(2j+1)!(2\bar{j}+1)!}{k!(2j+2\bar{j} - k + 1)!}}\sqrt{(2\bar{j} - k + 1)( 2j - k)}(2\bar{j} + 2) \\
	& = -\sqrt{(2\bar{j} - k + 1)( 2j - k)}(2\bar{j} + 2) \lambda_{j,\bar{j},k}.
\end{aligned}
\end{equation}
We emphasis that the bracket restricted to $\mathcal{H}_{j,0}\otimes \mathcal{H}_{0,\bar{j}}$ is different. We have
\begin{equation}
\{w_1^pw_2^q,\bar{w}_1^r\bar{w}_2^s\} = 	(r+s)(qw_1\bar{w}_1 -  pw_2\bar{w}_2 ) w_1^{p-1}w_2^{q - 1}\bar{w}_1^r\bar{w}_1^s\epsilon.
\end{equation}
As a result, we have
\begin{equation}
	\{e^{(j)}_m,\bar{e}^{(\bar{j})}_{\bar{m}}\}  = \sum_{k} -\sqrt{(2\bar{j} - k + 1)( 2j - k)}(2\bar{j} ) \lambda_{j,\bar{j},k}C^{j,\bar{j};j + \bar{j} - k}_{m,\bar{m};m+\bar{m}} e^{(j - \frac{1}{2} - \frac{k}{2},\bar{j}+\frac{1}{2} - \frac{k}{2})}_{m+\bar{m}}.
\end{equation}
To compute the constant $\pi^{p,q;r,s}_{u,v}$ we compute the map $ p\{w_1^pw_2^q, h\{w_1^rw_2^s,\bar{w}_1^{u_1}\bar{w}_2^{v_1}\epsilon\}\} $. First, we compute $h\{e^{(j_2)}_{m_2}, \bar{e}^{\bar{(j)}}_{\bar{m}}\epsilon\}$. Using \ref{2Poi_general}, we have
\begin{equation}
h\{e^{(j_2)}_{m_2}, \bar{e}^{\bar{(j)}}_{\bar{m}}\epsilon\} = \sum_{i} \tilde{\lambda}_{j_2,\bar{j},i}h_{j_2 - \frac{1}{2} - \frac{i}{2},\bar{j}+\frac{1}{2} - \frac{i}{2}}C^{j_2,\bar{j};j_2 + \bar{j} - i}_{m,\bar{m};m_2+\bar{m}} e^{(j_2 - 1 - \frac{i}{2},\bar{j}+1 - \frac{i}{2})}_{m_2+\bar{m}}.
\end{equation}
Then we compute $p\{e^{(j_1)}_{m_1},e^{(j_2 - 1 - \frac{i}{2},\bar{j}+1 - \frac{i}{2})}_{m_2 + \bar{m}}\}$. Using \ref{Prod_har_2} we have
\begin{equation}
	\begin{aligned}
		 &p\{e^{(j_1)}_{m_1},e^{(j_2 - 1 - \frac{i}{2},\bar{j}+1 - \frac{i}{2})}_{m_2+\bar{m}}\}= \sum_{m'}\lambda_{j_2-1,\bar{j}+1,i}^{-1}C^{j_2-1,\bar{j}+1;j_2+\bar{j} - i}_{m',m_2+\bar{m} - m';m_2+\bar{m}}p\{e^{(j_1)}_{m_1},e^{(j_2 - 1)}_{m'} \bar{e}^{(\bar{j}+1)}_{m_2+\bar{m} - m'}\}\\
		& = \sum_{m'}\lambda_{j_2-1,\bar{j}+1,i}^{-1}C^{j_2-1,\bar{j}+1;j_2+\bar{j} - i}_{m',m_2+\bar{m} - m';m_2+\bar{m}}\left(pM( \{e^{(j_1)}_{m_1},e^{(j_2 - 1)}_{m'}\},\bar{e}^{(\bar{j}+1)}_{m_2+\bar{m} - m'})+pM( e^{(j_2 - 1)}_{m'},\{e^{(j_1)}_{m_1},\bar{e}^{(\bar{j}+1)}_{m_2+\bar{m} - m'} \})\right).  \\
	\end{aligned}
\end{equation}
The first term in the above formula is given by
\begin{equation}
\begin{aligned}
		&pM( \{e^{(j_1)}_{m_1},e^{(j_2 - 1)}_{m'}\},\bar{e}^{(\bar{j}+1)}_{m_2+\bar{m} - m'})\\
		= &(-1)^{2\bar{j}}\sqrt{\frac{[2j_1 + 1]_2[2j_2 - 1]_2(2j_1 + 2j_2  - 2)(2\bar{j}+3)}{2j_1 + j_2 - 3}} C^{j_1,j_2 - 1;j_1 + j_2 - 2}_{m_1,m';m_1+m'}C^{j_1+j_2 - 2,\bar{j} + 1,j_1 + j_2 - \bar{j} - 3}_{m_1+m',m_2+\bar{m} - m';m_1+m_2 + \bar{m}} e^{(j_1+j_2 - \bar{j} - 3)}_{m_1+m_2+m_3}.
\end{aligned}
\end{equation}
The second term can be computed by
\begin{equation}
\begin{aligned}
		&pM( e^{(j_2 - 1)}_{m'},\{e^{(j_1)}_{m_1},\bar{e}^{(\bar{j}+1)}_{m_2+\bar{m} - m'} \})\\
		= & \sum_{k}  -\sqrt{(2\bar{j} - k + 3)( 2j_1 - k)}(2\bar{j} + 2)\lambda_{j_1,\bar{j} + 1,k} C^{j_1,\bar{j}+1,j_1+\bar{j}+1 - k}_{m_1,m_2+\bar{m} - m';m_1+m_2 + \bar{m} - m'}pM(e^{(j_2 - 1)}_{m'},e^{(j_1 - \frac{1}{2} - \frac{k}{2},\bar{j}+ 1 + \frac{1}{2} - \frac{k}{2})}_{m_1+m_2 + \bar{m} - m'})\\
		=& \sum_{k}  (-1)^{2(j_1+j_2) - k}\sqrt{(2\bar{j} - k + 3)( 2j_1 - k)}(2\bar{j} + 2)\lambda_{j_1,\bar{j} + 1,k} \lambda_{j_1 - \frac{1}{2},\bar{j}+\frac{3}{2},k}^{-1} \sqrt{2j_1(2j_2 + 1)(2\bar{j}+4)(2j_1 + 2\bar{j} - 2k + 3)}\\
		&\times \begin{Bmatrix}
			j_1 - \frac{1}{2}&\bar{j} + \frac{3}{2}&j_1 + \bar{j} + 1 - k\\j_1+j_2 - \bar{j} - 3&j_2 - 1 &j_1+j_2 - \frac{3}{2}
		\end{Bmatrix}  C^{j_1,\bar{j}+1,j_1+\bar{j}+1 - k}_{m_1,m_2+\bar{m} - m';m_1+m_2 + \bar{m} - m'} C^{j_1+\bar{j}+1 - k,j_2 - 1;j_1+j_2 - \bar{j} - 3}_{m_1+m_2+\bar{m} - m',m';m_1+m_2+\bar{m}} e^{(j_1+j_2 - \bar{j} - 3)}_{m_1+m_2+\bar{m}}.
\end{aligned}
\end{equation}
We find that 
\begin{equation}
	p\{e^{(j_1)}_{m_1},e^{(j_2 - 1 - \frac{i}{2},\bar{j}+1 - \frac{i}{2})}_{m_2+\bar{m}}\} = \lambda_{j_2-1,\bar{j}+1,i}^{-1}\Pi_{j_1,j_2,\bar{j};i} C^{j_2+\bar{j} - i,j_1;j_1+j_2 - \bar{j} - 3}_{m_2+\bar{m},m_1,m_1+m_2+\bar{m}} e^{(j_1+j_2 - \bar{j} - 3)}_{m_1+m_2+\bar{m}},
\end{equation}
where 
\begin{equation}
	\begin{aligned}
		&\Pi_{j_1,j_2,\bar{j};i} = (-1)^{2(j_1+j_2) - i}\sqrt{[2j_1 + 1]_2[2j_2 - 1]_2(2j_1 + 2j_2  - 2)(2\bar{j}+3)(2j_1 + 2\bar{j} - 2i +1)} \\
		&\times \begin{Bmatrix}
			j_2 - 1&\bar{j} + 1&j_2 + \bar{j} - i \\j_1+j_2 - \bar{j} - 3&j_1 &j_1+j_2 - 2
		\end{Bmatrix} \\
	& + \sum_{k}  (-1)^{2(j_1+j_2) - k}\sqrt{[2j_1+1]_2(2j_2 + 1)(2j_2+2\bar{j} - 2i +1)(2\bar{j} - k + 3)( 2j_1 - k)}(2\bar{j} + 2) \\
	&\times(2j_1 + 2\bar{j} - 2k + 3) \begin{Bmatrix}
		j_1 - \frac{1}{2}&\bar{j} + \frac{3}{2}&j_1 + \bar{j} + 1 - k\\j_1+j_2 - \bar{j} - 3&j_2 - 1 &j_1+j_2 - \frac{3}{2}
	\end{Bmatrix} \begin{Bmatrix}
	\bar{j} + 1 &j_2 - 1 &j_1 + \bar{j} -i\\j_1+j_2 - \bar{j} - 3&j_1 &j_1 + \bar{j} + 1 - k
\end{Bmatrix} .
	\end{aligned}
\end{equation}

We have
\begin{equation}
\begin{aligned}
		&(\pi_3)^{j_1+m_1,j_1-m_1;j_2+m_2,j_2 - m_2}_{\bar{j} - \bar{m},\bar{j}+\bar{m}} := \frac{(-1)^{\bar{j} + \bar{m}}N(j_1,m_1)N(j_2,m_2)}{N(\bar{j},\bar{m})N(j_1+ j_2 - \bar{j} - 3,m_1+m_2+\bar{m})}\\
		&\times \sum_{i} \tilde{\lambda}_{j_2,\bar{j},i}\lambda_{j_2-1,\bar{j}+1,i}^{-1}h_{j_2 - \frac{1}{2} - \frac{i}{2},\bar{j}+\frac{1}{2} - \frac{i}{2}}\Pi_{j_1,j_2,\bar{j};i}C^{j_2,\bar{j};j_2 + \bar{j} - i}_{m_2,\bar{m};m_2+\bar{m}} C^{j_2+\bar{j} - i,j_1;j_1+j_2 - \bar{j} - 3}_{m_2+\bar{m},m_1,m_1+m_2+\bar{m}}.
\end{aligned}
\end{equation}
This constant gives the quartic interaction of the KK theory of Poisson BF theory.

\section{Some identities involving Pochhammer symbols}
In this appendix, we review some identities involving Pochhammer symbols that are used in the calculation of holography chiral algebra. In the main text, we introduced the descending Pochhammer symbols
\begin{equation}
	[a]_n : = a(a - 1)\dots (a - n + 1) =  \frac{(a)!}{(a - n)!}.
\end{equation}
We also introduce the ascending Pochhammer symbol 
\begin{equation}
	(a)^{(n)} : = a(a + 1)\dots (a + n - 1) =  \frac{(a + n - 1)!}{(a - 1)!}.
\end{equation}
The descending and ascending Pochhammer symbols are related to one another by 
\begin{equation}
	(a)^{(n)} = [a+n-1]_n.
\end{equation}
The hypergeometric function $\prescript{~}{2}{F}_1$ is defined as a power series using the ascending Pochhammer symbol
\begin{equation}
	\prescript{~}{2}{F}_1(a,b,c,z) = \sum_{i = 0}^{\infty}\frac{(a)^{(i)} (b)^{(i)}}{(c)^{(i)}} \frac{1}{i!}z^i.
\end{equation}
The series terminates if either $a$ or $b$ is a nonpositive integer, in which case the function reduces to a polynomial:
\begin{equation}
	\prescript{~}{2}{F}_1(-k,b,c,z) = \sum_{i = 0}^{k}(-1)^i\binom{k}{i}\frac{ (b)^{(i)}}{(c)^{(i)}} z^i.
\end{equation}
The following result is important in obtaining various generalizations of the Chu–Vandermonde’s identity.
\begin{prop}[\cite{Favaro}]
	For any $k\geq 1$, $x,y \in \R_{+}$, and $a,b > 0$, we have 
	\begin{equation}\label{G_Chu_Vand}
		\sum_{i = 0}^k\binom{k}{i}x^{i}y^{n - i}(a)^{(i)}(b)^{(n - i)} = y^k(a+b)^{(k)}	\prescript{~}{2}{F}_1(-k,a,a+b,1 - \frac{x}{y} )	.
	\end{equation}
\end{prop}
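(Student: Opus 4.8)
The plan is to prove the identity
\begin{equation*}
\sum_{i=0}^{k}\binom{k}{i}x^{i}y^{k-i}(a)^{(i)}(b)^{(k-i)} \;=\; y^{k}(a+b)^{(k)}\,{}_2F_1\!\left(-k,a,a+b,1-\tfrac{x}{y}\right)
\end{equation*}
by means of Euler's integral representation of the hypergeometric function, which is available precisely under the stated positivity hypotheses $a,b>0$. (Here the summation range makes clear that the exponents in the left-hand side should read $k-i$ rather than $n-i$.) Concretely, I would start from
\begin{equation*}
{}_2F_1(-k,a;a+b;z) \;=\; \frac{\Gamma(a+b)}{\Gamma(a)\Gamma(b)}\int_{0}^{1}u^{a-1}(1-u)^{b-1}(1-zu)^{k}\,du ,
\end{equation*}
noting that since $-k$ is a nonpositive integer the series terminates, so both sides are polynomials in $z$; hence the identity is unconditional in $z$ and in particular valid at $z=1-x/y$ for arbitrary $x,y\in\mathbb{R}_{+}$, without any $|z|<1$ restriction.

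The key algebraic step is the simplification
\begin{equation*}
1-zu \;=\; 1-\left(1-\tfrac{x}{y}\right)u \;=\; \frac{y(1-u)+xu}{y},
\end{equation*}
so that $y^{k}(1-zu)^{k}=\bigl(y(1-u)+xu\bigr)^{k}$. Multiplying through by $y^{k}(a+b)^{(k)}$ and using $(a+b)^{(k)}\Gamma(a+b)=\Gamma(a+b+k)$, the right-hand side becomes
\begin{equation*}
\frac{\Gamma(a+b+k)}{\Gamma(a)\Gamma(b)}\int_{0}^{1}u^{a-1}(1-u)^{b-1}\bigl(y(1-u)+xu\bigr)^{k}\,du .
\end{equation*}
Then I expand $\bigl(y(1-u)+xu\bigr)^{k}=\sum_{i=0}^{k}\binom{k}{i}x^{i}y^{k-i}u^{i}(1-u)^{k-i}$ and integrate term by term, each summand producing a Beta integral $\int_{0}^{1}u^{a+i-1}(1-u)^{b+k-i-1}\,du=\Gamma(a+i)\Gamma(b+k-i)/\Gamma(a+b+k)$. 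The factors of $\Gamma(a+b+k)$ cancel, and $\Gamma(a+i)/\Gamma(a)=(a)^{(i)}$, $\Gamma(b+k-i)/\Gamma(b)=(b)^{(k-i)}$ reproduce exactly the left-hand side, completing the proof.

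There is no serious obstacle here; the computation is elementary once the integral representation is invoked. The only points worth a word of care are (i) justifying evaluation of the polynomial identity at $z=1-x/y$ for all $x,y>0$ despite the integral representation normally being quoted for $|z|<1$ — handled by the polynomial degeneracy of the terminating series — and (ii) the interchange of the finite sum with the integral, which is immediate. As an alternative route one may avoid the $\Gamma$-function machinery entirely: the left-hand side equals $k!$ times the coefficient of $t^{k}$ in $(1-xt)^{-a}(1-yt)^{-b}$, and the right-hand side matches this via the classical generating function for Jacobi-type polynomials; I would nonetheless keep the Euler-integral argument as the primary proof since it is self-contained and directly uses the hypothesis $a,b>0$.
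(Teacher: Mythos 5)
Your proof is correct. Note that the paper itself offers no proof of this proposition: it is stated with a citation to the literature and used only as a black box to derive the two corollaries that follow. So there is no in-paper argument to compare against; what you have done is supply the missing self-contained derivation. The Euler-integral route is clean and uses the hypotheses exactly as stated: $a,b>0$ is what legitimizes the Beta-integral representation, and your observation that the terminating series makes both sides polynomials in $z$ correctly disposes of the usual $|z|<1$ caveat, so evaluation at $z=1-x/y$ for arbitrary $x,y\in\R_{+}$ is justified. The algebraic pivot $y^{k}(1-zu)^{k}=\bigl(y(1-u)+xu\bigr)^{k}$ followed by binomial expansion and termwise Beta integration reproduces the left-hand side exactly, with $(a)^{(i)}=\Gamma(a+i)/\Gamma(a)$ handling the Pochhammer factors. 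You are also right that the exponent $n-i$ in the displayed statement is a typo for $k-i$ (the summation range and every subsequent use of the identity in the paper, e.g.\ the Chu--Vandermonde specialization at $x=y=1$, confirm this). Your alternative generating-function remark --- that the left-hand side is $k!$ times the coefficient of $t^{k}$ in $(1-xt)^{-a}(1-yt)^{-b}$ --- is likewise correct and would give an even more elementary proof, but the Euler-integral version you chose as primary is complete as written.
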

Taking $x,y = 1$ in the above formula we obtain the Chu–Vandermonde’s identity
\begin{equation}\label{Chu_Vand}
	\sum_{i = 0}^k\binom{n}{i}(a)^{(i)}(b)^{(k - i)} = (a+b)^{(k)}.
\end{equation}

\begin{cor}
		\begin{equation}\label{Chu_Vand_1}
		\sum_{i = l}^n\binom{i}{l}\binom{k}{i}(a)^{(i)}(b)^{(k - i)} =\binom{k}{l}\frac{ (a+b)^{(k)}}{(a+b)^{(l)}}(a)^{(l)}	.
	\end{equation}
\end{cor}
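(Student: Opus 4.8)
The plan is to derive this from the Chu--Vandermonde identity \eqref{Chu_Vand} by absorbing the extra binomial factor $\binom{i}{l}$ into a reindexed sum. First I would note that (reading the summation range as $i$ running up to $k$, since $(b)^{(k-i)}$ forces $i\le k$) the trinomial revision identity $\binom{i}{l}\binom{k}{i}=\binom{k}{l}\binom{k-l}{i-l}$ lets me write
\begin{equation}
\sum_{i=l}^{k}\binom{i}{l}\binom{k}{i}(a)^{(i)}(b)^{(k-i)}=\binom{k}{l}\sum_{i=l}^{k}\binom{k-l}{i-l}(a)^{(i)}(b)^{(k-i)}.
\end{equation}

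Next I would substitute $j=i-l$ and use the factorization of the ascending Pochhammer symbol $(a)^{(j+l)}=(a)^{(l)}\,(a+l)^{(j)}$, which follows directly from the definition $(a)^{(n)}=a(a+1)\cdots(a+n-1)$. This turns the right-hand side into
\begin{equation}
\binom{k}{l}(a)^{(l)}\sum_{j=0}^{k-l}\binom{k-l}{j}(a+l)^{(j)}(b)^{(k-l-j)}.
\end{equation}
Now the inner sum is exactly Chu--Vandermonde \eqref{Chu_Vand} with $k\mapsto k-l$, $a\mapsto a+l$, $b\mapsto b$, so it equals $(a+b+l)^{(k-l)}$.

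Finally I would rewrite $(a+b+l)^{(k-l)}$ in the stated form: since $(a+b)^{(m)}=\tfrac{(a+b+m-1)!}{(a+b-1)!}$, we get $(a+b+l)^{(k-l)}=\tfrac{(a+b+k-1)!}{(a+b+l-1)!}=(a+b)^{(k)}/(a+b)^{(l)}$, which yields the claimed formula. I do not anticipate a real obstacle here; the only points requiring care are bookkeeping ones --- confirming the intended upper limit of summation, and making sure the Pochhammer factorization and the final quotient identity are applied with the correct shifts --- rather than anything substantive.
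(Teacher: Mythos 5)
Your proof is correct, but it takes a genuinely different route from the paper's. The paper proves this corollary by specializing the generalized identity \eqref{G_Chu_Vand} at $x\to 1+x$, $y\to 1$, which packages the sum into $(a+b)^{(k)}\,\prescript{~}{2}{F}_1(-k,a,a+b,-x)$, and then extracts the coefficient of $x^l$ on both sides; the binomial $\binom{i}{l}$ appears on the left from expanding $(1+x)^i$, and the right-hand side comes from the terminating hypergeometric series. You instead work entirely at the level of the plain Chu--Vandermonde identity \eqref{Chu_Vand}: trinomial revision $\binom{i}{l}\binom{k}{i}=\binom{k}{l}\binom{k-l}{i-l}$, the shift $j=i-l$, the factorization $(a)^{(j+l)}=(a)^{(l)}(a+l)^{(j)}$, and one application of \eqref{Chu_Vand} with parameters $(k-l,a+l,b)$. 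Your argument is more elementary and self-contained (it never needs \eqref{G_Chu_Vand} or the hypergeometric function), while the paper's generating-function approach produces the whole family of coefficient identities at once from a single expansion. Two small bookkeeping points, both of which you already flagged: the upper summation limit in the statement should be $k$ rather than $n$; and your final step $(a+b+l)^{(k-l)}=(a+b)^{(k)}/(a+b)^{(l)}$ is best justified by telescoping the defining products directly (the factorial quotient notation presumes integer arguments, whereas the identity is stated for real $a,b>0$).
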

\begin{proof}
	Letting $x \to 1 + x, y \to 1$ in the formula \ref{G_Chu_Vand}, we obtain the following
	\begin{equation}
		\sum_{i = 0}^k\binom{k}{i}(1+x)^{i}(a)^{(i)}(b)^{(k - i)} = (a+b)^{(k)}	\prescript{~}{2}{F}_1(-n,a,a+b,-x )	.
	\end{equation}
Expanding both side into a series of $x$ we obtain the formula \ref{Chu_Vand_1}.
\end{proof}

\begin{cor}
	We have the following identity
	\begin{equation}\label{Chu_Vand_var}
			\sum_{i = 0}^k \binom{k}{i}\frac{1}{[a]_i[b]_{k - i}}  = \frac{[a+b-k+1]_k}{[a]_k[b]_k}.
	\end{equation}
\end{cor}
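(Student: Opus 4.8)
The plan is to reduce the claim to the Chu--Vandermonde identity \eqref{Chu_Vand} by clearing the denominator and repeatedly using the two elementary facts about Pochhammer symbols already recorded in the paper: the ``reversal'' relation $(x)^{(n)}=[x+n-1]_n$ (equivalently $[x]_n=(x-n+1)^{(n)}$), and the multiplicativity of the falling factorial under splitting the product, $[a]_k=[a]_i\,[a-i]_{k-i}$. All of these are polynomial identities in $a,b$, so I will treat $a,b$ as formal variables; the final division by $[a]_k[b]_k$ is then legitimate as an identity of rational functions.

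First I would multiply both sides of \eqref{Chu_Vand_var} by $[a]_k[b]_k$, so that the task becomes proving
\begin{equation}
\sum_{i=0}^k \binom{k}{i}\,\frac{[a]_k[b]_k}{[a]_i[b]_{k-i}} \;=\; [a+b-k+1]_k .
\end{equation}
Next I would simplify each summand: writing $[a]_k=[a]_i\,[a-i]_{k-i}$ and $[b]_k=[b]_{k-i}\,[b-k+i]_i$, the quotient collapses to
\begin{equation}
\frac{[a]_k[b]_k}{[a]_i[b]_{k-i}} \;=\; [a-i]_{k-i}\,[b-k+i]_i .
\end{equation}
Then I would convert both falling factorials to rising ones via $[x]_n=(x-n+1)^{(n)}$: this gives $[a-i]_{k-i}=(a-k+1)^{(k-i)}$ and $[b-k+i]_i=(b-k+1)^{(i)}$, so the left-hand side becomes $\sum_{i=0}^k\binom{k}{i}(a-k+1)^{(k-i)}(b-k+1)^{(i)}$.

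At this point the sum is precisely the left-hand side of the Chu--Vandermonde identity \eqref{Chu_Vand} with the two parameters replaced by $a-k+1$ and $b-k+1$ (the binomial sum is symmetric in $i\leftrightarrow k-i$, so the placement of the two rising factorials is immaterial). Applying \eqref{Chu_Vand} yields $\big((a-k+1)+(b-k+1)\big)^{(k)}=(a+b-2k+2)^{(k)}$, and one last application of the reversal relation, $(a+b-2k+2)^{(k)}=[a+b-2k+2+k-1]_k=[a+b-k+1]_k$, matches the desired right-hand side. There is no real obstacle here: the proof is purely formal bookkeeping, and the only point requiring a little care is getting the index shifts in the two splitting identities and the falling/rising conversions exactly right (and noting that everything is valid as a polynomial/rational-function identity, so it specializes to the integer values $a=p+q$, $b=r+s$ used in the main text).
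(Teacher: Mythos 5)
Your proof is correct and is essentially the paper's own argument: both reduce the sum to $\sum_{i}\binom{k}{i}(a-k+1)^{(k-i)}(b-k+1)^{(i)}$ (the paper via factorial ratios, you via the splitting $[a]_k=[a]_i[a-i]_{k-i}$) and then invoke Chu--Vandermonde \eqref{Chu_Vand} with shifted parameters. No substantive difference.
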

\begin{proof}
	\begin{equation}
		\begin{aligned}
			\sum_{i = 0}^k \binom{k}{i}\frac{1}{[a]_i[b]_{k - i}} &= \sum_{i = 0}^k \binom{k}{i}\frac{(a - k + k - i)!(b-k +i)!}{a!b!} \\
			& = \sum_{i = 0}^k \binom{k}{i}\frac{(a - k + 1)^{(k - i)}(a-k)!(b-k +1)^{(i)}(b-k)!}{a!b!}\\
			& = (a+b-2k + 2)^{(k)}\frac{(a - k)!(b-k)!}{a!b!}\\
			& = \frac{[a+b-k+1]_k}{[a]_k[b]_k},
		\end{aligned}
	\end{equation}
where we used the Chu–Vandermonde’s identity \ref{Chu_Vand} in the third line.
\end{proof}
\begin{cor}
		We have the following identity
	\begin{equation}\label{Chu_Vand_var_1}
		\sum_{i = l}^k \binom{i}{l}\binom{k}{i}\frac{1}{[a]_i[b]_{k - i}}  = \binom{k}{l} \frac{[a+b - k+1]_k}{[a]_{k-l}[b]_k[a+b - 2k+l+1]_l}.
	\end{equation}
\end{cor}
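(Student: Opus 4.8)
The plan is to obtain \ref{Chu_Vand_var_1} as an immediate consequence of the simpler identity \ref{Chu_Vand_var}, which is itself just a reorganization of the Chu--Vandermonde identity \ref{Chu_Vand}. The key observation is that the extra factor $\binom{i}{l}$ in the summand can be eliminated by a reindexing, after which the sum has exactly the shape already treated in \ref{Chu_Vand_var}. So no new combinatorial input is needed beyond the binomial absorption identity $\binom{i}{l}\binom{k}{i}=\binom{k}{l}\binom{k-l}{i-l}$ and elementary Pochhammer bookkeeping.

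Concretely, I would first use the absorption identity to pull $\binom{k}{l}$ out of the sum and substitute $i=j+l$, giving
\begin{equation}
\sum_{i=l}^{k}\binom{i}{l}\binom{k}{i}\frac{1}{[a]_i[b]_{k-i}}=\binom{k}{l}\sum_{j=0}^{k-l}\binom{k-l}{j}\frac{1}{[a]_{j+l}\,[b]_{k-l-j}}.
\end{equation}
Then use the factorization $[a]_{j+l}=[a]_l\,[a-l]_j$ to extract $1/[a]_l$ from the sum, reducing the remaining sum to $\sum_{j=0}^{k-l}\binom{k-l}{j}\big([a-l]_j[b]_{k-l-j}\big)^{-1}$, which is precisely the left-hand side of \ref{Chu_Vand_var} with $k$ replaced by $k-l$ and $a$ replaced by $a-l$. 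Applying \ref{Chu_Vand_var} evaluates this as $[a+b-k+1]_{k-l}\big([a-l]_{k-l}[b]_{k-l}\big)^{-1}$. Finally one recombines, using $[a]_l\,[a-l]_{k-l}=[a]_k$ and rewriting $[a+b-k+1]_{k-l}=[a+b-k+1]_k/[a+b-2k+l+1]_l$, to arrive at the asserted right-hand side after the evident simplifications.

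An equivalent route, which follows the proof of \ref{Chu_Vand_var} more literally, is to write $1/([a]_i[b]_{k-i})=(a-i)!\,(b-k+i)!/(a!\,b!)$, split off the $i$-independent prefactor via $(a-i)!=(a-k)!\,(a-k+1)^{(k-i)}$ and $(b-k+i)!=(b-k)!\,(b-k+1)^{(i)}$, and then apply the generalized Chu--Vandermonde identity \ref{Chu_Vand_1} to $\sum_{i\ge l}\binom{i}{l}\binom{k}{i}(b-k+1)^{(i)}(a-k+1)^{(k-i)}$, with the rising Pochhammer carrying exponent $i$ taken as the distinguished one; converting rising symbols back to falling ones via $(x)^{(n)}=[x+n-1]_n$ gives the same answer. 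I expect the only real difficulty to be this bookkeeping: tracking the parameter shifts ($a\mapsto a-l$, $b$ unchanged, $k\mapsto k-l$) and interconverting ascending and descending Pochhammer symbols consistently. There is no analytic content — once the reindexing $\binom{i}{l}\binom{k}{i}=\binom{k}{l}\binom{k-l}{i-l}$ is in place, the identity collapses onto \ref{Chu_Vand_var}, and the rest is routine manipulation of factorial quotients.
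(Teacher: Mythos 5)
Your reduction is the right idea, and your second route is literally the paper's own argument: the paper writes $\frac{1}{[a]_i[b]_{k-i}}=\frac{(a-k+1)^{(k-i)}(a-k)!\,(b-k+1)^{(i)}(b-k)!}{a!\,b!}$ and invokes Corollary \ref{Chu_Vand_1}, while your first route (absorb $\binom{i}{l}\binom{k}{i}=\binom{k}{l}\binom{k-l}{i-l}$, shift $i=j+l$, reduce to \ref{Chu_Vand_var} with $(k,a)\mapsto(k-l,a-l)$) is a clean equivalent. The problem is your final sentence: the recombination does \emph{not} produce the asserted right-hand side. Carrying out your own steps, $[a]_l\,[a-l]_{k-l}=[a]_k$ and $[a+b-k+1]_{k-l}=[a+b-k+1]_k/[a+b-2k+l+1]_l$ give
\begin{equation*}
\sum_{i=l}^{k}\binom{i}{l}\binom{k}{i}\frac{1}{[a]_i[b]_{k-i}}
=\binom{k}{l}\,\frac{[a+b-k+1]_k}{[a]_{k}\,[b]_{k-l}\,[a+b-2k+l+1]_l},
\end{equation*}
i.e.\ with $[a]_k[b]_{k-l}$ in the denominator rather than the stated $[a]_{k-l}[b]_k$. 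These are genuinely different, since the left-hand side is not symmetric under $a\leftrightarrow b$: at $k=l=1$ the sum equals $1/a$ while the printed right-hand side equals $1/b$; at $k=2$, $l=1$, $a=5$, $b=3$ the sum is $7/30$ while the printed formula gives $7/15$. So the corollary as printed is false, and any derivation that claims to reach it must contain a compensating slip.

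The source of the discrepancy is an $a\leftrightarrow b$ transposition in the application of Corollary \ref{Chu_Vand_1}: there the surviving factor $(a)^{(l)}$ is attached to the Pochhammer symbol that carries the exponent $(i)$, which in the present application is $(b-k+1)^{(i)}$, so the correct factor is $(b-k+1)^{(l)}$ and not the $(a-k+1)^{(l)}$ appearing in the paper's second line. Your second route, applied with "the rising Pochhammer carrying exponent $i$" as the distinguished slot exactly as you describe, therefore yields the corrected formula above, as does your first route. Note also that the main text uses the identity only in the corrected form: in the coefficient of $\pa_z^l B$ the roles are $a=r+s$, $b=p+q$, and the quoted answer has $[r+s]_k[p+q]_{k-l}=[a]_k[b]_{k-l}$ in the denominator. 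In short, your method is sound and the bookkeeping you outline is exactly what is needed, but you should state and prove the identity with $[a]_k[b]_{k-l}$ instead of asserting agreement with the printed right-hand side.
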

\begin{proof}
	\begin{equation}
\begin{aligned}
		\sum_{i = l}^k \binom{i}{l}\binom{k}{i}\frac{1}{[a]_i[b]_{k - i}}  & = \sum_{i = l}^k \binom{k}{i} \binom{i}{l}\frac{(a - k + 1)^{(k - i)}(a-k)!(b-k +1)^{(i)}(b-k)!}{a!b!}\\
		& = \binom{k}{l}\frac{(a - k)!(b-k)!}{a!b!}\frac{(a+b-2k + 2)^{(k)}}{(a+b - 2k + 2)^{(l)}}(a - k + 1)^{(l)}	\\
		& = \binom{k}{l} \frac{[a+b - k+1]_k}{[a]_{k-l}[b]_k[a+b - 2k+l+1]_l}.
\end{aligned}
\end{equation}
\end{proof}

	\section{tree-level Feynman integrals}
	\label{apx:Feyn}
	In this appendix, we evaluate the tree-level Feynman diagrams that appear in Section \ref{sec:twist_hol}.
	\begin{equation}
		I_k(z) = \int_{z' \in \C,s\geq 0}\frac{1}{(|z - z'|^2 + s^2)^{\frac{3}{2}}}\pa_{z'}^k\frac{1}{(|z'|^2 + s^2)^{\frac{3}{2}}} \bar{z}s ds dz'd\bar{z}'.
	\end{equation}

First we note that $z^{k+1}I_k(z)$ is a constant independent of $z$. To see this we shift $z \to \alpha z,\bar{z} \to \bar{\alpha}\bar{z}$. We also shift the integration variable by $z' \to \alpha z', \bar{z}' \to \bar{\alpha}\bar{z}',s\to |\alpha| s$. $z^{k+1}I_k(z)$ is invariant under this shift. Therefore $z^{k+1}I_k(z)$ is a constant and we have
\begin{equation}
	I_k(z) = \frac{I_k(1)}{z^{k+1}}.
\end{equation}
To efficiently evaluate all $I_k(1)$, we make the following generating function
\begin{equation}
	I(1,\lambda) : = \sum_{k = 0}^\infty\frac{\lambda^k}{k!}I_k(1).
\end{equation}
By definition we have
\begin{equation}
	I(1,\lambda)  = 	 \int_{z' \in \C,s\geq 0}\frac{1}{(|1 - z'|^2 + s^2)^{\frac{3}{2}}}\frac{1}{((z'+\lambda)\bar{z}' + s^2)^{\frac{3}{2}}} s ds dz'd\bar{z}'.
\end{equation}
	We use Schwinger parametrization to rewrite the above integral as follows
	\begin{equation}
		I(1,\lambda) = \frac{4}{\pi}\int_{u_1,u_2\geq 0}du_1du_2\int_{z' \in \C,s\geq 0} ds dz'd\bar{z}'(u_1u_2)^{\frac{1}{2}}se^{-u_1(|1 - z'|^2 + s^2) - u_2((z'+\lambda)\bar{z}' + s^2)}.
	\end{equation}
We first evaluate the integral over $s$ and find 
	\begin{equation}
	I(1,\lambda) = \frac{4}{\pi}\int_{u_1,u_2\geq 0}du_1du_2\int_{z' \in \C} dz'd\bar{z}'\frac{1}{2(u_1 + u_2)}(u_1u_2)^{\frac{1}{2}}e^{-u_1|1 - z'|^2  - u_2(z'+\lambda)\bar{z}'}.
\end{equation}
Evaluate the integral over $z' \in \C$ gives
\begin{equation}
I(1,\lambda) = 2\int_{u_1,u_2\geq 0}du_1du_2 \frac{1}{(u_1 + u_2)^2}(u_1u_2)^{\frac{1}{2}}e^{-\frac{u_1u_2(1 + \lambda)}{(u_1 + u_2)}}.
\end{equation}
We make the following change of variable to the Schwinger parameter
\begin{equation}
	u_1 = u\xi,\; u_2 = u(1 - \xi),\; \text{ for } u\geq 0,1\leq \xi \leq 1.
\end{equation}
Then we have
\begin{equation}
\begin{aligned}
	I(1,\lambda) & = 2\int_0^1d\xi\int_0^\infty du (\xi(1 - \xi))^{\frac{1}{2}}e^{-(1+ \lambda)u\xi(1 - \xi)}\\
	&= 2\int_0^1d\xi\frac{1}{(1+ \lambda)\sqrt{\xi(1 - \xi)}}\\
	& = \frac{2\pi}{1 + \lambda}.
\end{aligned}
\end{equation}
We find that
\begin{equation}
	I_k(z) = (-1)^k\frac{k!}{z^{k+1}}.
\end{equation}
More generally, we can consider the following Feynman integral
	\begin{equation}
	I_{l,k}(z) = \int_{z' \in \C,s\geq 0}\pa_{z'}^l\frac{1}{(|z - z'|^2 + s^2)^{\frac{3}{2}}}\pa_{z'}^k\frac{1}{(|z'|^2 + s^2)^{\frac{3}{2}}} \bar{z}s ds dz'd\bar{z}'.
\end{equation}
This can be evaluated using integration by part to move the $\pa_{z'}^l $ derivatives to the second position and using the previous result. We find
\begin{equation}
I_{l,k}(z) = (-1)^k\frac{(k+l)!}{z^{k+l+1}}.
\end{equation}
\bibliographystyle{JHEP}
\bibliography{KK3d}
	\Addresses
\end{document}